\newcommand\abs[1]{\lvert #1\rvert}
\tikzset{black node/.style={draw, circle, fill = black, minimum size = 5pt, inner sep = 0pt}}
\tikzset{white node/.style={draw, circle, fill = white, minimum size = 5pt, inner sep = 0pt}}
\tikzset{normal/.style = {draw=none, fill = none}}
\newtheorem{theorem}{Theorem}[section]
\newaliascnt{lemma}{theorem}
\newtheorem{lemma}[lemma]{Lemma}
\newaliascnt{proposition}{theorem}
\newtheorem{proposition}[proposition]{Proposition}
\newaliascnt{corollary}{theorem}
\newtheorem{corollary}[corollary]{Corollary}
\newaliascnt{conjecture}{theorem}
\newaliascnt{claim}{theorem}
\newtheorem{claim}[claim]{Claim}
\newaliascnt{observation}{theorem}
\newcommand*{\proofclaimname}{Proof}
\newenvironment{proofclaim}[1][\proofclaimname]{\begin{proof}[#1]}{\end{proof}}
\newcommand{\arxiv}[1]{\href{http://arxiv.org/abs/#1}{\tt arXiv:#1}}
\newcommand{\N}{\mathbb{N}}
\newcommand{\R}{\mathbb{R}}
\newcommand{\ceil}[1]{\left\lceil#1\right\rceil}
\newcommand{\dist}{\operatorname{dist}}
\newcommand{\last}{\operatorname{end}}
\DeclareMathOperator{\tw}{\mathbf{tw}}
\DeclareMathOperator{\girth}{\mathbf{girth}}
\newcommand{\intv}[2]{\left \{ #1, \dots, #2 \right \}}
\newcommand{\ep}{Erd\H{o}s-P\'osa}
\begin{document}

\title{Packing and covering induced subdivisions}\thanks{This research has been supported by the European Research Council (ERC) under the European Union's Horizon 2020 research and innovation program, ERC consolidator grant DISTRUCT, agreement No 648527. O-joung Kwon was also supported by the National Research Foundation of Korea (NRF) grant funded by the Ministry of Education (No. NRF-2018R1D1A1B07050294).}

\author[O.~Kwon]{O-joung Kwon}
\author[J.-F.~Raymond]{Jean-Florent Raymond}
\address[O.~Kwon]{\newline Department of Mathematics
\newline Incheon National University
\newline Incheon, South Korea}
\email{ojoungkwon@gmail.com}

\address[J.-F.~Raymond]{\newline Logic and Semantics Research Group
\newline Technische Universität Berlin
\newline Berlin, Germany}
\email{raymond@tu-berlin.de}

\date{}

\begin{abstract}
	A class $\mathcal{F}$ of graphs has the induced Erd\H{o}s-P\'osa property if there exists a function $f$ such that 
	for every graph $G$ and every positive integer $k$, $G$ contains either $k$ pairwise vertex-disjoint induced subgraphs that belong to $\mathcal{F}$, or a vertex set of size at most $f(k)$ hitting all induced copies of graphs in $\mathcal{F}$. Kim and Kwon (SODA'18) showed that for a cycle $C_{\ell}$ of length $\ell$,  the class of $C_{\ell}$-subdivisions has the induced Erd\H{o}s-P\'osa property if and only if $\ell\le 4$. 
	In this paper, we investigate whether or not the class of $H$-subdivisions has the induced Erd\H{o}s-P\'osa property for other graphs $H$.

        We completely settle the case when $H$ is a forest or a complete bipartite graph.
        Regarding the general case, we identify necessary conditions on $H$ for the class of $H$-subdivisions to have the induced Erd\H{o}s-P\'osa property.
	For this, we provide three basic constructions that are useful to prove that the class of the subdivisions of a graph does not have the induced Erd\H{o}s-P\'osa property.
	Among remaining graphs, we prove that if $H$ is either the diamond, the $1$-pan, or the $2$-pan, then 
	the class of $H$-subdivisions has the induced Erd\H{o}s-P\'osa property.
\end{abstract}

\maketitle

\section{Introduction}

All graphs in this paper are finite and without loops or parallel edges. In this paper we are concerned with the induced version of the \ep{} property.
This property expresses a duality between invariants of packing and covering related to a class of graphs. Its name originates from the following result.
\begin{theorem}[Erdős-Pósa Theorem, \cite{EP62}]\label{th:ep}
  There is a function $f(k)= \mathcal{O}(k \log k)$ such that 
  for every graph $G$ and every positive integer $k$, 
  $G$ contains either $k$ vertex-disjoint cycles, or a vertex set $X$ of size at most $f(k)$ such that $G - X$ has no cycle.
\end{theorem}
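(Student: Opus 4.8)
The plan is to prove the equivalent covering statement: there is a constant $c$ such that every graph $G$ with no $k$ vertex-disjoint cycles admits a vertex set of size at most $ck\log k$ meeting all cycles of $G$ (a \emph{feedback vertex set}). I would proceed by induction on $k$, the case $k=1$ being immediate since a cycle-free graph needs no vertices removed. For the inductive step, the first move is the standard reduction to minimum degree at least $3$: repeatedly delete vertices of degree at most $1$ and suppress vertices of degree exactly $2$, working in the class of multigraphs (so that a loop is a cycle of length $1$ and two parallel edges form a cycle of length $2$). These operations do not change the maximum number of disjoint cycles and turn a feedback vertex set of the reduced multigraph into one of $G$ of no larger size, so it suffices to treat a multigraph $G'$ of minimum degree at least $3$.

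The engine of the argument is the elementary fact that a multigraph of minimum degree at least $3$ on $n$ vertices contains a cycle of length at most $2\log_2 n+1$; this follows from a breadth-first-search ball-growth estimate, since if the girth exceeds $2r+1$ then a ball of radius $r$ is a tree with at least $3\cdot 2^{r}-2$ vertices. What the proof really needs, however, is the sharpening in which $\log n$ is replaced by $\log k$, namely: \emph{there is a constant $c_0$ such that every multigraph of minimum degree at least $3$ with no $k$ vertex-disjoint cycles contains a cycle of length at most $c_0\log(k+1)$} --- equivalently, minimum degree at least $3$ together with girth larger than $c_0\log(k+1)$ forces $k$ disjoint cycles.

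Granting this lemma, the induction closes cleanly. Pick in $G'$ a cycle $C$ of length at most $c_0\log(k+1)$. Since $G'$ has no $k$ disjoint cycles, $G'-V(C)$ has no $k-1$ disjoint cycles (otherwise adjoining $C$ would yield $k$ of them), so by the induction hypothesis $G'-V(C)$ has a feedback vertex set $Z$ with $\abs{Z}\le c(k-1)\log(k-1)$; then $Z\cup V(C)$ is a feedback vertex set of $G'$ of size at most $c(k-1)\log(k-1)+c_0\log(k+1)$, which is at most $ck\log k$ once $c$ is large enough relative to $c_0$. (In other words the recursion $f(k)=f(k-1)+\mathcal{O}(\log k)$ telescopes to $\mathcal{O}(\log k!)=\mathcal{O}(k\log k)$, matching the stated bound.) Undoing the reductions of the first step then gives the same bound for $G$.

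The main obstacle is the key lemma. For a large-girth multigraph $G'$ of minimum degree at least $3$, the natural attempt is to take a shortest cycle $C$ (so $\abs{V(C)}$ equals the girth), re-reduce $G'-V(C)$ to minimum degree at least $3$, and recurse with parameter $k-1$; large girth is exactly what makes the re-reduction ``clean'', since suppressing degree-$2$ vertices in a graph of girth at least $4$ creates no loops or parallel edges and hence no short cycles. The delicate point is vertices of large degree: removing $V(C)$ can push many vertices below degree $3$, and the resulting cascade of deletions could in principle destroy every remaining cycle, leaving nothing to recurse on. I would control this by analysing how a forest can attach to $C$: if $G'-V(C)$ were acyclic, each leaf of that forest must send at least two edges back to $V(C)$, and a sufficiently rich such attachment pattern lets one directly exhibit a second cycle meeting only part of $C$ --- and, iterating, $k$ disjoint cycles of bounded length --- while the complementary ``poor attachment'' case is precisely the regime in which the girth must be small, contradicting the hypothesis. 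Turning this dichotomy into a quantitatively tight bound, so that the girth estimate comes out as $\mathcal{O}(\log k)$ rather than something weaker, is where the real work lies and is what forces the $\mathcal{O}(k\log k)$ shape of $f$.
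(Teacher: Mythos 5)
This theorem is not proved in the paper; it is simply stated and cited to Erdős and Pósa~\cite{EP62}, so there is no proof of record to compare yours against. The paper does, however, later quote Simonovits's lemma (\autoref{t:simonovits}) — a cubic multigraph on at least $24k\log k$ vertices contains $k$ vertex-disjoint cycles — as the engine of one standard proof of \autoref{th:ep}, and that is precisely the ingredient your outline is missing.

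Your framework is the right one: reduce to a multigraph of minimum degree at least $3$, extract a short cycle, delete it, recurse, and telescope $f(k)=f(k-1)+\mathcal O(\log k)$. But the entire burden of the theorem sits in the "key lemma" you state and do not prove: every multigraph of minimum degree at least $3$ with no $k$ vertex-disjoint cycles contains a cycle of length $\mathcal O(\log k)$. The breadth-first ball-growth estimate only yields girth $\mathcal O(\log n)$; upgrading $\log n$ to $\log k$ is exactly the quantitative content of the Erdős--Pósa theorem. Even in the cubic case, your lemma amounts to combining the Moore-type bound (girth $>g$ and minimum degree $\geq 3$ force $\geq 2^{\Omega(g)}$ vertices) with a Simonovits-style counting argument (enough vertices in a cubic multigraph force $k$ disjoint cycles), and the passage from cubic to general minimum degree $\geq 3$ requires a further reduction you do not supply. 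Your closing sketch of a "rich versus poor attachment" dichotomy is a gesture toward the same difficulty, but it is never made precise, and that is where the proof has to actually happen. As written, the proposal correctly identifies the shape of the argument and the recursion that would deliver the $\mathcal O(k\log k)$ bound, but it assumes rather than proves the step that carries all the weight.
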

In general, we say that a class of graphs has the \emph{\ep{} property} if a similar statement holds: either we can find in a graph many occurrences of members of the class, or we hit them all with a small number of vertices.
Since the proof of \autoref{th:ep} by Paul Erdős and Lajós Pósa, the line of research of identifying new classes that have the \ep{} property has been very active (see surveys \cite{Reed97tree, Raymond2017recent}). These results are not only interesting because they express some duality between two parameters: they can also be used to design algorithms (see e.g. \cite{JGT:JGT3190120111, fomin2016hitting, Chatzidimitriou2017logopt}).

Several authors attempted to extend \autoref{th:ep} in various directions. One of them is to consider long cycles, i.e.\ cycles of length at least $\ell$ for some fixed integer $\ell\ge 3$.
\begin{theorem}[\cite{MOUSSET201721}, see also \cite{Birmele2007, JGT:JGT21776}]\label{th:longcycles}
  There is a function $f(k, \ell)= \mathcal{O}(k\ell + k \log k)$ such that
  for every graph $G$ and every positive integer $k$, $G$ contains either $k$ vertex-disjoint cycles of length at least $\ell$, or 
  a vertex set $X$ of size at most $f(k, \ell)$ such that $G - X$ has no such cycle.
\end{theorem}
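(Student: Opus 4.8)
The plan is to reduce the statement to $2$-connected graphs and then prove it by induction on $k$, removing only $\mathcal{O}(\ell+\log k)$ vertices at each step so that the $k$ steps accumulate to the claimed function. First I would carry out the reduction to the case where $G$ is $2$-connected and contains a cycle of length at least $\ell$ (otherwise $X=\emptyset$ works): every cycle lies in a single block, so one processes the blocks along the block-cut tree and glues the partial solutions together. This is routine and I will not detail it here.

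For $2$-connected $G$ the core is the following \emph{extraction lemma}: if $G$ is $2$-connected, contains a cycle of length at least $\ell$, and contains no $k$ pairwise vertex-disjoint cycles of length at least $\ell$, then there is a set $Z\subseteq V(G)$ with $\abs{Z}=\mathcal{O}(\ell+\log k)$ such that $G-Z$ contains no $k-1$ pairwise vertex-disjoint cycles of length at least $\ell$. Granting this, the theorem follows by induction on $k$: a cycle of length at least $\ell$ in $G$ that avoids $Z$ is such a cycle of $G-Z$, so if $X'$ hits all long cycles of $G-Z$ then $X'\cup Z$ hits all long cycles of $G$, which gives the recursion
\[
  f(k,\ell)\;\le\;f(k-1,\ell)+\mathcal{O}(\ell+\log k)\;=\;\mathcal{O}(k\ell+k\log k).
\]

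To prove the extraction lemma I would distinguish two regimes. In the ``long'' regime, where $\ell$ dominates, the aim is to find a long cycle together with a set of $\mathcal{O}(\ell)$ vertices whose removal separates it, and every long cycle lying close to it, from the rest of $G$; the relevant tool is the extremal bound of Erd\H{o}s and Gallai on the circumference---a graph with no cycle of length at least $c$ has only $\mathcal{O}(cn)$ edges, hence is sparse and admits small separators---applied to the part of $G$ that becomes long-cycle-free. In the ``short'' regime, where $\log k$ dominates and $\ell$ is bounded, the additive $\log k$ term is known to be unavoidable already for ordinary cycles, and it is fed into the recursion exactly as in the classical proof of \autoref{th:ep}.

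The main obstacle is the extraction lemma, and within it the interface between a single long cycle and the rest of the graph: a long cycle may be arbitrarily long and thread through $G$, so one cannot just delete its vertices. I would handle this by first replacing an arbitrary long cycle by one of minimum length and then running a breadth-first layering around it. Either some layer at depth roughly $\ell$ is still nonempty, in which case---using $2$-connectedness to produce two internally disjoint paths from a deep vertex back to the cycle---one reroutes to build a long cycle disjoint from a large part of a suitably chosen packing, contradicting its maximality; or the layering dies out within $\mathcal{O}(\ell)$ layers, and then, again invoking the sparsity of long-cycle-free subgraphs, one finds a small separator inside the resulting bounded-width annulus around the cycle. The truly delicate point is then to arrange the bookkeeping so that the contributions of the two regimes \emph{add} up to $\mathcal{O}(k\ell+k\log k)$ rather than multiplying to something like $\mathcal{O}(k\ell\log k)$ or $\mathcal{O}(k^{2}\ell)$; this requires a global amortized count over all $k$ steps rather than a worst-case estimate at each step. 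A softer alternative---large treewidth yields a large wall and hence $k$ disjoint long cycles, while bounded treewidth yields a hitting set by dynamic programming---establishes the \ep{} property but with a much worse function, so it does not give the stated bound.
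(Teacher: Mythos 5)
This statement (Theorem~\ref{th:longcycles}) is not proved in the paper at all: it is an external result imported from Mousset, Noever, \v{S}kori\'c, and Weissenberger~\cite{MOUSSET201721}, cited alongside the earlier Birmel\'e--Bondy--Reed~\cite{Birmele2007} and Fiorini--Herinckx~\cite{JGT:JGT21776} bounds. So there is no internal argument here to compare yours against; I can only assess your sketch on its own terms.

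Your high-level reduction scheme---block decomposition, then a per-step ``extraction lemma'' removing $\mathcal{O}(\ell+\log k)$ vertices to drop the packing number by one---is a coherent way to obtain the additive recursion $f(k,\ell)\le f(k-1,\ell)+\mathcal{O}(\ell+\log k)$, and your bookkeeping observation that the two regimes must \emph{add} rather than multiply is exactly the hard point. But the sketch of the extraction lemma has a genuine flaw: you assert that Erd\H{o}s--Gallai sparsity of long-cycle-free graphs lets you find a small separator in the ``bounded-width annulus'' around a minimal long cycle. Having $\mathcal{O}(\ell n)$ edges does not give small separators; a $3$-regular expander is sparse in this sense and yet has no sublinear separator, and nothing in your BFS layering rules this out (in case~(b) the annulus is within $\mathcal{O}(\ell)$ layers of the cycle but the cycle itself can have linearly many vertices, so the annulus is the whole graph). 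In case~(a), ``rerouting to contradict maximality'' is not set up: you never fix a maximal packing or explain which object the reroute enlarges, so the alleged contradiction has no referent. Finally, you explicitly flag that the amortization across the two regimes is ``truly delicate'' and leave it unresolved, which is the very thing that separates $\mathcal{O}(k\ell+k\log k)$ from the weaker $\mathcal{O}(k\ell\log k)$; the known proof in~\cite{MOUSSET201721} does not follow a one-step extraction of this kind and relies on substantially heavier structural machinery precisely to avoid that loss. As written, the sketch does not constitute a proof of the extraction lemma, and hence not of the theorem.
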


As every cycle contains an induced cycle, \autoref{th:ep} also holds if one replaces \emph{cycle} with \emph{induced cycle} in its statement. This is not so clear with \autoref{th:longcycles} since a long cycle in a graph does not always contain a long induced cycle.
In \cite{jansen2017approximation}, Jansen and Ma. Pilipczuk asked whether \autoref{th:longcycles} holds for induced cycles of length at least $\ell$ for $\ell=4$. This was recently proved to be true by Kim and the first author:

\begin{theorem}[\cite{Kim2017}]\label{th:kim17}
  There is a function $f(k) = \mathcal{O}(k^2 \log k)$ such that, for every graph $G$ and every positive integer $k$, either $G$ has $k$ vertex-disjoint induced cycles of length at least 4, or it contains a vertex set $X$ of size at most $f(k)$ such that $G - X$ has no such cycle.
\end{theorem}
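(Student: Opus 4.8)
The starting point is that a graph has no induced cycle of length at least $4$ precisely when it is chordal; so the statement asks for a function $f$ with $f(k)=\mathcal{O}(k^2\log k)$ such that every graph $G$ contains either $k$ pairwise vertex-disjoint \emph{holes} (induced cycles of length at least $4$) or a vertex set $X$ with $\abs{X}\le f(k)$ for which $G-X$ is chordal. I would organise the argument as a treewidth dichotomy: fix a function $g$ and distinguish whether $\tw(G)\ge g(k)$ or not, arranging matters so that large treewidth yields the $k$ disjoint holes while bounded treewidth yields the small chordal deletion set. A large clique is the one configuration that has to be dealt with beforehand.

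For the large-treewidth case I would extract a wall subdivision $W$ from $G$. Its subdivided bricks are cycles of $G$ of length at least $6$, and a sufficiently large wall contains $k$ of them pairwise at distance at least two; what can go wrong is that $G$ carries extra edges on $V(W)$, both inside a single brick and between two bricks. Provided $G$ has no large clique these extra edges cannot be everywhere dense --- a dense clique-free graph already contains many disjoint short holes, by a Ramsey--Tur\'an type estimate --- so either the extra-edge pattern is sparse in some large region, in which case I would clean $W$ there to an \emph{induced} subdivided sub-wall whose bricks are genuine holes of $G$, or it is dense, in which case the extra edges themselves produce the disjoint holes. The remaining possibility, a clique on more than $g(k)$ vertices, is handled by a separate reduction: since a hole meets any clique in at most two, necessarily consecutive, vertices, a large clique interacts with holes only through a bounded interface, which makes it amenable to simplification down to bounded size with controlled effect on the packing and covering numbers.

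It then remains to treat $G$ of treewidth less than $g(k)$. Fixing a tree decomposition of that width, I would carry out a bounded-treewidth analysis for families of connected subgraphs: each hole occupies a subtree of the decomposition, so repeatedly splitting at small separator bags shows that if $G$ has no $k$ disjoint holes then a number of bags bounded in terms of $k$ and $g(k)$ already meets every hole, and their union is the sought cover. The logarithmic factor in $f$ enters, as is usual for \ep{}-type theorems, through rounding the associated fractional cover, equivalently through using the classical \ep{} theorem as a subroutine; squeezing the final bound down to exactly $\mathcal{O}(k^2\log k)$ calls for carefully optimising the dichotomy, and I would not expect a crude version of this plan to reach it without further work. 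The main obstacle, in any case, is the large-treewidth step: a wall subdivision is merely a subgraph, so the chords of $G$ may destroy every brick, and the whole weight of the proof rests on certifying many disjoint holes despite those chords --- via the induced-wall cleaning when no large clique is present, and via the clique-trimming reduction otherwise; by comparison the bounded-treewidth case is standard.
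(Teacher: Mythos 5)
The theorem you set out to prove is, in this paper, a \emph{cited} result: Theorem~\ref{th:kim17} is attributed to Kim and Kwon \cite{Kim2017} and carries no proof here, so there is no ``paper's own proof'' in this source to line your plan up against. But the companion positive results in Sections~\ref{sec:panpan} and~\ref{sec:diamond} are explicitly modelled on that paper, and from them one can reconstruct the shape of the argument: it is iterative and separator-free, not a treewidth dichotomy. One fixes a shortest hole $C$, proves a local attachment lemma restricting how external vertices meet $C$ (the analogue of \autoref{l:oneneigh}), applies Gallai's $A$-path theorem (\autoref{t:gallaiapath}) with $A=V(C)$ to obtain either $\Theta(k\log k)$ disjoint $V(C)$-paths or an $O(k\log k)$-size set meeting them all, converts the path system into $k$ disjoint cycles via Simonovits' lemma (\autoref{t:simonovits}) and then into disjoint holes, or else adds $O(k\log k)$ vertices to the cover and recurses on $G-V(C)$; after at most $k$ rounds this yields $O(k^2\log k)$, with a clean polynomial-time algorithm as a by-product. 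Your wall-based treewidth dichotomy is a genuinely different route and, as written, it is not a proof.

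The decisive obstruction is the one you identify and then defer. A wall in $G$ is only a subgraph, and the chords of $G$ on $V(W)$ can destroy every brick as a \emph{hole}; your three-way fix (dense chords yield holes via a Ramsey--Tur\'an estimate; sparse chords allow cleaning $W$ to an induced sub-wall; a huge clique is trimmed away) is programmatic rather than established. The Ramsey--Tur\'an step would have to show that a clique-bounded, dense chord pattern on a wall gives $k$ pairwise \emph{vertex-disjoint} induced short holes with constants compatible with $O(k^2\log k)$; this is not an off-the-shelf fact, and the intermediate regime --- dense enough to ruin all bricks, too sparse for a disjoint packing --- is a genuine worry. The wall-cleaning step is, at the time in question, only available under heavy extra hypotheses such as excluding $K_{t,t}$ as a subgraph (cf.\ the Wei{\ss}auer reference at the end of this paper), and those theorems do not deliver an induced sub-wall with usable quantitative bounds. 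The clique-trimming reduction is only named: a hole does meet a clique in at most two adjacent vertices, but no reduction preserving $\upnu$ and $\uptau$ is given, and naively deleting or contracting inside a clique readily manufactures new induced $C_4$'s between former non-neighbours. Even the bounded-treewidth half is glossed: one does not get an $O(k\cdot g(k))$ hitting set for holes by ``repeatedly splitting at small separator bags'' without an actual balanced-separator recursion and an argument that holes cannot slip through. In its current form your plan is a research programme with its central lemma open, not a proof.
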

They also showed that the Erd\H{o}s-P\'osa type statement in \autoref{th:kim17} cannot be extended to induced cycles of length at least $\ell$  for fixed $\ell>4$ (even with a different order of magnitude for $f$).

The aim of this paper is to investigate if a statement as that of \autoref{th:kim17} holds for other induced structures.
In order to present it formally, we introduce some notions.
	A class $\mathcal{F}$ of graphs has the \emph{induced Erd\H{o}s-P\'osa property} if there exists a \emph{bounding function} $f\colon \N \to \R$ such that 
	for every graph $G$ and every positive integer $k$, $G$ contains either $k$ pairwise vertex-disjoint induced subgraphs that belong to $\mathcal{F}$, or a vertex set of size at most $f(k)$ hitting all induced copies of graphs in $\mathcal{F}$.\footnote{We decided to use this terminology because the classic \ep{} property considers the subgraph relation as a containment relation and therefore the sentence ``induced $H$-subdivisions have the \ep{} property'' might be confusing.}

A \emph{subdivision of $H$} (\emph{$H$-subdivision} for short) is a graph obtained from $H$ by subdividing some of its edges.
A subgraph of a graph $G$ is called an \emph{induced subdivision of $H$} (or \emph{induced $H$-subdivision}) if it is an induced subgraph of $G$ that is a subdivision of~ $H$.
For graphs $H$ and $G$, we denote by $\upnu_H(G)$ the maximum size of a collection of vertex-disjoint induced subdivisions of $H$ in $G$ (called \emph{packing}). We denote by $\uptau_H(G)$ the minimum size of a subset $X\subseteq V(G)$ (called \emph{hitting set}) such that $G - X$ has no induced subdivision of $H$. 
By definition, the class of $H$-subdivisions has the induced \ep{} property if there is a bounding function $f \colon \N \to \R$ such that $\uptau_H(G) \leq f(\upnu_H(G))$ for every graph $G$.
To avoid a long terminology, we allow to say that $H$-subdivisions have the induced \ep{} property.

\autoref{th:kim17} can be reformulated in terms of the induced \ep{} property of $C_4$-subdivisions.
The authors of \cite{Kim2017} noted that it is an interesting topic to investigate the induced \ep{} property of subdivisions of other graphs.

In this paper, we determine whether $H$-subdivisions have the induced version of the \ep{} property or not, for various graphs $H$.
We note that the classic (i.e.\ non-induced) \ep{} property of subdivisions has been investigated before \cite{JGT:JGT3190120111, liu2017packing, bruhn2017k4, MOUSSET201721}.

\subsection*{Our results}
Towards a classification of graphs based on the induced \ep{} property of their subdivisions, we consider several simple extensions of cycles, depicted in \autoref{fig:smgr} (see \autoref{sec:prelim} for a formal definition). 

\begin{figure}[h]
  \centering
  \begin{tikzpicture}[every node/.style = black node, scale = 0.5]
    \draw (30:1) node (a) {} -- (150:1) node (b) {} -- (-90:1) node (c) {} -- cycle;
    \draw (c) -- ++ (-90:1.5) node {} ;
    \draw (0, -4) node[normal] {\small 1-pan};
    \begin{scope}[xshift = 5cm]
      \draw (30:1) node (a) {} -- (150:1) node (b) {} -- (-90:1) node (c) {} -- cycle;
      \draw (c) -- ++ (-90:0.75) node {} -- ++(-90:0.75) node{}; 
      \draw (0, -4) node[normal] {\small 2-pan};
    \end{scope}
    \begin{scope}[xshift = 10.5cm, yshift = -1.5cm]
      \draw (30:1) node (a) {} -- (150:1) node (b) {} -- (-90:1) node (c) {} -- cycle;
      \draw (a) -- ++ (120:1.732) node {} -- (b);
      \draw (0, -2.5) node[normal] {\small Diamond};
    \end{scope}
  \end{tikzpicture}
  \caption{The graphs mentioned in the statement of \autoref{thm:smallgraphs}.}
  \label{fig:smgr}
\end{figure}
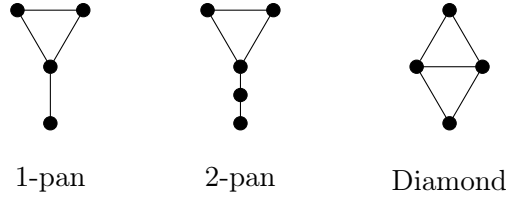

\begin{theorem}\label{thm:smallgraphs}
If $H$ is either the diamond, the $1$-pan, or the $2$-pan, then $H$-subdivisions have the induced \ep{} property with a polynomial bounding function.\label{e:posi}
\end{theorem}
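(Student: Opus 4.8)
The plan is to treat the three cases separately but with a common strategy: in each case, an induced $H$-subdivision is a cycle together with a small, controlled amount of extra structure, so I would try to reduce the problem to an induced Erd\H{o}s-P\'osa statement for induced cycles (or induced long cycles) that is already available, together with a local argument to upgrade ``cycle'' to ``cycle with the required attachment''. For the $1$-pan and $2$-pan, an induced $H$-subdivision is precisely an induced cycle $C$ together with an induced path of length $\geq 1$ (resp. $\geq 2$) attached to $C$ at exactly one endpoint and otherwise disjoint from and nonadjacent to $C$. For the diamond, an induced $H$-subdivision is an induced cycle $C$ with two vertices $u,v$ joined by an additional induced path internally disjoint from $C$, such that $u,v$ are not consecutive on $C$ and the path is anticomplete to $C$ except at $u,v$; equivalently it is an induced theta graph.

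First I would establish the diamond case, since an induced diamond-subdivision is just an induced theta subgraph (two branch vertices joined by three internally disjoint induced paths, with no chords), and I expect this to follow from a ``theta'' variant of \autoref{th:kim17}: a graph with no induced diamond-subdivision has very restricted structure (every block is a cycle or a complete graph $K_1,K_2,K_3$, i.e.\ the graph is close to a ``cactus-like'' object), so a clean win is to argue that if $G$ has no small hitting set for induced thetas then by \autoref{th:kim17} it has many disjoint induced cycles of length $\geq 4$, and then show that inside a large enough packing of such cycles — or after removing a bounded set — one can actually route the third path to turn many of them into thetas, using either a second long-cycle packing in the ``interior'' or a direct Menger-type argument on the connections between cycles. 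The polynomial bound would come from iterating \autoref{th:kim17} a bounded number of times.

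Next, for the $1$-pan and $2$-pan, the key reduction is that an induced $(t)$-pan-subdivision is an induced cycle with a private pendant induced path of length $\geq t$. I would first apply \autoref{th:kim17} to get either a small hitting set for induced cycles of length $\geq 4$ (in which case we must separately handle triangles, but a triangle with a pendant path is itself easy because a graph with no induced $1$-pan-subdivision through any triangle and no long induced cycle is essentially a ``forest of cliques'', bounded treewidth, where the induced \ep{} property is standard), or $k' = \mathrm{poly}(k)$ disjoint induced long cycles. Given many disjoint induced long cycles, I would show that either many of them already have a private pendant path of the required length — giving the packing — or almost all of them fail to, which forces every vertex outside the cycles to have a bounded-diameter, bounded-size ``reach'', and then a bounded hitting set can be extracted by a counting/marking argument over the cycles; shrinking the cycles themselves contributes the remaining pieces of the hitting set. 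The $2$-pan differs from the $1$-pan only in needing a pendant path of length $\geq 2$ rather than $\geq 1$, which is a minor strengthening handled by looking one more step away from each cycle.

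The main obstacle I anticipate is the ``upgrading'' step in each case: given a large collection of disjoint induced cycles, turning a positive fraction of them into induced $H$-subdivisions (a theta, or a pan) requires routing an \emph{induced} extra path that is anticomplete to the rest of the cycle and to the other cycles, and the naive shortest path need not be induced nor vertex-disjoint from other cycles. I would handle this with a cleaning/pruning argument: take a shortest connection, take an induced subpath, and use disjointness to charge conflicts to a bounded number of ``bad'' cycles that are then deleted, so that the surviving cycles each get a clean attachment. Making all the polynomial losses explicit and ensuring the final bounding function is genuinely polynomial (rather than merely finite) is the place where the bookkeeping is most delicate; I would keep each reduction linear or quadratic in $k$ and compose a bounded number of them.
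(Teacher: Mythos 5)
Your proposal takes a genuinely different route from the paper's, and I think it has substantive gaps as sketched.

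The paper does not invoke the Kim--Kwon chordal-cycle theorem at all in its positive proofs. Instead it fixes a \emph{single} induced $H$-subdivision $M$ which is minimum in $G$ and such that $G-V(M)$ is $H$-subdivision-free, isolates the cycle $C$ of $M$, and then applies Gallai's $A$-path theorem (for the pans) or the Bruhn--Heinlein--Joos $A$-claw theorem together with the regular-partition lemma (for the diamond) with $A = V(C)$. Many $V(C)$-paths or $V(C)$-claws are then converted into many disjoint induced pans/thetas via structural lemmas that crucially exploit the minimality of $M$ (e.g.\ Lemma~\ref{l:oneneigh}: outside a minimum $1$-pan model with $|C|\geq 5$, every vertex has at most one neighbor on $C$; Lemma~\ref{lem:dominating}: outside a minimum $2$-pan model, every vertex has at most one neighbor on $C$ or dominates $C$). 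The iteration over a maximal sequence of such models gives the extra factor of $k$ in the bounding function. The key point is that \emph{every} induced $H$-subdivision meets $M$ (and hence one of its at most three constituent paths), which is what makes an $A$-path/$A$-claw approach anchored at $M$ exhaustive. Your approach has no such anchor.

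The main gap is the ``upgrading'' step, which you single out yourself but I don't think your sketch can be made to work as stated. A large packing of induced cycles of length $\geq 4$ produced by Kim--Kwon carries no information about induced pan- or theta-subdivisions: those cycles need not support any pendant path or third connection (e.g.\ a disjoint union of holes, or a wall, has many such cycles and no induced $1$-pan-subdivision). More importantly, the hard regime for the \ep{} argument is when $\upnu_H(G)$ is small but a purported hitting set must be large; there is no reason the Kim--Kwon dichotomy lands you on the ``packing'' side exactly when $\uptau_H$ is large, because $\uptau_{C_{\geq 4}}$ and $\uptau_{\text{pan}}$ (or $\uptau_{\text{theta}}$) are incomparable in both directions. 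Your ``charge conflicts to a bounded number of bad cycles'' cleaning is also unjustified: a single cut vertex can be the only access point to unboundedly many of the cycles, so no bounded deletion restores private attachments.

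Two of your structural claims are also incorrect. First, a graph with no induced diamond-subdivision need not be cactus-like with all blocks cycles or $K_{\le 3}$: complete graphs $K_n$ for all $n$ contain no induced subdivision of the diamond (every induced subgraph of $K_n$ on $\geq 4$ vertices has minimum degree $\geq 3$, whereas a diamond-subdivision has exactly two degree-$3$ vertices), so arbitrarily large cliques are diamond-subdivision-free blocks. Second, after deleting a Kim--Kwon hitting set for induced $C_{\geq 4}$s the remainder is chordal, not of bounded treewidth; ``the induced \ep{} property is standard'' in this class is not a known black box and would itself require an argument (and in fact the whole difficulty of the $1$-pan case persists on chordal inputs). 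In short, the paper's device of anchoring the $A$-path/$A$-claw machinery on a minimum model is exactly what replaces the ``upgrading'' you gesture at, and I don't see how to avoid something like it.
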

For 1- and 2-pans, we furthermore give a polynomial-time algorithm that constructs a packing or a hitting set of bounded size.
For diamond, we give an algorithm that runs in time $k^{\mathcal{O}(k)}\cdot \abs{G}^c$ for some constant $c$.
This will be explicitly mentioned in the statement of the theorems for each of the graphs.

We then give negative results for graphs satisfying certain general properties.

\begin{theorem}\label{thm:badcases}
Let $H$ be a graph which satisfies one of the following:
\begin{enumerate}
\item $H$ is a forest and two vertices of degree at least 3 lie in the same connected component;\label{badforest}
\item $H$ contains an induced cycle of length at least $5$; \label{badc:c5}
\item $H$ contains a cycle $C$ and two adjacent vertices having no neighbors in $C$; \label{badc:2v}
\item $H$ contains a cycle $C$ and three vertices having no neighbors in $C$; \label{badc:3v}
\item $H=K_{2,n}$ with $n\ge 3$; \label{badc:bip}
\item $H$ is not planar. \label{badc:nonpl}
\end{enumerate}
Then $H$-subdivisions do not have the induced \ep{} property.
\end{theorem}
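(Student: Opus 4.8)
Recall that the class of $H$-subdivisions fails the induced \ep{} property precisely when there is a constant $c$ such that graphs $G$ with $\upnu_H(G)\le c$ can nevertheless have arbitrarily large $\uptau_H(G)$. So the plan is, for each of the six features of $H$ listed in the statement, to produce a family $(G_i)_{i\in\N}$ with $\upnu_H(G_i)$ bounded by an absolute constant and $\uptau_H(G_i)\to\infty$, the family being supplied by one of three general constructions. The common idea behind these constructions is to build $G_i$ so that every induced $H$-subdivision is forced to stretch across more than half of some distinguished vertex set — which bounds $\upnu_H(G_i)$, since two vertex-disjoint such subdivisions cannot both do this — while $G_i$ remains robust, in that deleting any small vertex set still leaves an induced $H$-subdivision, which forces $\uptau_H(G_i)$ up. The `induced' requirement is what makes this balance possible: it forbids the shortcuts and reroutings that, in the non-induced setting, would allow many disjoint copies. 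For condition~(\ref{badc:c5}) the first construction adapts the known construction witnessing that the class of $C_5$-subdivisions does not have the induced \ep{} property. The point is that if $C$ is an induced cycle of length at least $5$ in $H$, then in any induced $H$-subdivision $S$ of a graph $G$ the subpaths subdividing $C$ form an induced cycle of $G$ of length at least $5$ — no chord can occur, as $C$ is induced in $H$ and $S$ is induced in $G$ — so vertex-disjoint induced $H$-subdivisions give vertex-disjoint induced cycles of length at least $5$, and $\upnu_H(G)$ is at most the maximum number of such disjoint cycles in $G$. I would glue onto that family, at the places where its (long) induced cycles appear, several pendant induced copies of the part of $H$ outside $C$, arranged so as to create no new disjoint long induced cycle; the result is dense in induced $H$-subdivisions, giving $\uptau_H\to\infty$, while the inequality above keeps $\upnu_H$ bounded.

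Conditions~(\ref{badforest}),~(\ref{badc:2v}),~(\ref{badc:3v}) and~(\ref{badc:bip}) are all to be handled by the second construction. In each of these cases $H$ decomposes into a small \emph{core} — a single cycle, or a pair of branch vertices joined by a path, or the two vertices of degree $n$ of $K_{2,n}$ — together with \emph{peripheral} material: a detached edge, a detached triple of vertices, a second branch vertex with its own branches, or the $n$ long branches of $K_{2,n}$. In an $H$-subdivision this peripheral material must reach the core along long induced paths. I would build $G_i$ from a gadget offering many realizations of the core, linked by a large family of internally disjoint, pairwise non-adjacent induced paths that the peripheral material is forced to traverse, and arrange things so that realizing $H$ commits one to going `the long way round' the gadget — making two disjoint realizations impossible — while, after any bounded deletion, enough spare paths survive to reassemble an induced $H$-subdivision somewhere. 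For conditions~(\ref{badc:2v}) and~(\ref{badc:3v}) one additionally exploits that a cycle together with two detached adjacent vertices, or with three detached vertices, cannot be realized inside a bounded neighbourhood of a single vertex of the gadget; this is what pins the packing number down to a constant.

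Condition~(\ref{badc:nonpl}) is handled by a topological construction. Write $g=\mathrm{eg}(H)\ge 1$ for the Euler genus of $H$, fix a surface $\Sigma$ with $\mathrm{eg}(\Sigma)=g$, and let $\Gamma_r$ be a large, highly connected grid-like graph drawn in $\Sigma$, with every edge subdivided a number of times depending only on $H$, so that every $H$-subdivision that occurs as a subgraph of $\Gamma_r$ is automatically induced. Since subdivision preserves Euler genus, every induced $H$-subdivision $S\subseteq\Gamma_r$ has $\mathrm{eg}(S)=g$; if $S_1$ and $S_2$ were vertex-disjoint such subdivisions, then cutting $\Sigma$ along the boundary of a regular neighbourhood of $S_1$ and capping both bordered pieces with disks would express $\Sigma$ as assembled from two closed surfaces each of Euler genus at least $g$, so that $g=\mathrm{eg}(\Sigma)\ge g+g$, a contradiction; hence $\upnu_H(\Gamma_r)\le 1$. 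On the other hand, for $r$ large $\Gamma_r$ contains $H$ as a topological minor, and grid-like graphs on a surface are genus-robust: deleting $o(r)$ vertices cannot reduce the Euler genus below $g$, hence cannot destroy all induced $H$-subdivisions, so $\uptau_H(\Gamma_r)=\Omega(r)\to\infty$.

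The main obstacle, in every case but the topological one, will be establishing the bound on the packing number. For each gadget this amounts to classifying \emph{all} the induced $H$-subdivisions it can contain — locating their branch vertices, which must already have degree at least $3$ in the gadget, and tracing their connecting and pendant paths, which cannot take shortcuts because they must be induced — and then arguing that no two of them are vertex-disjoint. The constructions therefore have to be tuned to the structure of $H$: to which non-adjacencies $H$ forces, to which of its vertices carry high degree, and to how its cycles and peripheral vertices are attached. Producing a single gadget that simultaneously kills every packing of size two and survives every small deletion is the delicate point; the topological construction is the exception, where a clean genus count settles the packing side outright, which is presumably why it is worth isolating as one of the three basic constructions.
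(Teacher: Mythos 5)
Your broad strategy---three gadget families, each designed so that every induced $H$-subdivision is pinned to a constrained position (forcing any two to intersect), yet remaining robust under small deletions---is exactly the paper's strategy, and your reduction for item~\eqref{badc:c5} via the observation that the image of an induced cycle of $H$ inside an induced $H$-subdivision is itself an induced cycle of $G$ of the same length or more is also correct and does appear implicitly in the paper. Likewise the genus argument for item~\eqref{badc:nonpl}, including subdivision-invariance of Euler genus and the additivity of genus over disjoint subgraphs, matches the paper's; the paper simply outsources the base family to a known lemma from the non-induced \ep{} literature and then subdivides every edge once so that the relevant $H$-subdivisions become induced.

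The main gap is that you fold item~\eqref{badc:3v} (a cycle $C$ plus three vertices with no neighbour in $C$) into the same ``core-plus-periphery'' gadget as items~\eqref{badforest},~\eqref{badc:2v} and~\eqref{badc:bip}. The paper uses a triangle-wall for items~\eqref{badforest} and~\eqref{badc:bip}, a hypergraph-based bipartite gadget for items~\eqref{badc:c5} and~\eqref{badc:2v}, and opens its Section on item~\eqref{badc:3v} by explaining why \emph{both} of those fail there: the triangle-wall contains, as an induced subgraph, a triangle together with three pairwise non-adjacent vertices none of which is adjacent to the triangle; and in the hypergraph gadget one can take a $C_4$ using two vertices of $V(H_P)\setminus A$ and two of $V(H_{P'})\setminus A$ for distinct branches $P,P'$, and then pick three vertices of the independent set $A$ outside $N[C_4]$, for suitable $P,P'$. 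In other words, exactly the shape of $H$ in item~\eqref{badc:3v} reappears ``spread out'' across both earlier gadgets, ruining the localisation these constructions are supposed to enforce. The paper's way out is a genuinely new gadget, the semi-grid $SG_n$, in which each vertex lies on at most two of $n$ nearly-pairwise-meeting long induced paths and any two branch vertices from different branches are adjacent; its analysis is delicate and quite unlike either previous one. Your proposal gives no hint that item~\eqref{badc:3v} resists the unified treatment, and as written it would leave that case unhandled.

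A smaller but real error occurs in item~\eqref{badc:nonpl}: you argue that because deleting $o(r)$ vertices from a large grid-like graph $\Gamma_r$ cannot drop its Euler genus below $g=\gamma(H)$, it therefore ``cannot destroy all induced $H$-subdivisions.'' The implication is false in general: a graph of Euler genus at least $g$ need not contain an $H$-subdivision (a graph of Euler genus $1$ need not contain a $K_5$-subdivision). The genus bound certifies the packing side ($\upnu_H\le 1$), but the covering side requires a different argument. The paper handles this by citing a known family $(G_n)$ which, by construction in the non-induced setting, contains an $H$-subdivision after any deletion of fewer than $n$ vertices, has Euler genus equal to that of $H$, and which after subdividing every edge once has all those subdivisions induced. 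Your proof as stated is missing the step that guarantees $\uptau_H$ is large.
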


We remark that if a forest $F$ has no two vertices of degree at least $3$ in the same connected component, 
then $F$ is a disjoint union of subdivisions of stars, and therefore every subdivision of $F$ contains $F$ as an induced subgraph.
Thus, for such a forest $F$, $F$-subdivisions trivially have the induced \ep{} property (see \autoref{l:pathlinear}).
By item \eqref{badforest} of \autoref{thm:badcases}, other forests $F$ will not satisfy this property. 
Therefore, we can focus on graphs containing a cycle.

The rest of \autoref{thm:badcases} provides, for graphs that have cycles, necessary conditions for their subdivisions to have the induced \ep{} property.
By combining together Theorems~\ref{th:kim17}, \ref{thm:smallgraphs}, and \ref{thm:badcases}, we obtain the following dichotomies.

\begin{corollary}\label{thm:dichotomy}\mbox{~}
  \begin{enumerate}
  \item Let $F$ be a forest. Then $F$-subdivisions have the induced \ep{} property if and only if each connected component of $F$ has at most one vertex of degree more than 2.\label{enum:dich-forest}
  \item Let $n,m$ be positive integers with $n\leq m$. Then $K_{n,m}$-subdivisions have the induced \ep{} property if and only if $n \leq 1$ or $m \leq 2$. \label{enum:dich-bip}
  \item Let $n$ be a positive integer. Then the subdivisions of the $n$-pan have the induced \ep{} property if and only if $n \leq 2$. \label{enum:dich-pan}
  \end{enumerate}
\end{corollary}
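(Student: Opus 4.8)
The plan is to obtain all three dichotomies by combining the positive results (\autoref{th:kim17}, \autoref{thm:smallgraphs}, and \autoref{l:pathlinear}) with the negative result \autoref{thm:badcases}, performing a short case analysis in each item.

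For item~\eqref{enum:dich-forest}, suppose first that every connected component of the forest $F$ has at most one vertex of degree more than $2$. Then, as observed right after the statement of \autoref{thm:badcases}, each component of $F$ is a subdivided star, so every subdivision of $F$ contains $F$ as an induced subgraph, and \autoref{l:pathlinear} yields the induced \ep{} property. Conversely, if some connected component of $F$ contains two vertices of degree at least $3$, then $F$ falls under item~\eqref{badforest} of \autoref{thm:badcases}, so $F$-subdivisions do not have the induced \ep{} property.

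For item~\eqref{enum:dich-bip}, if $n\le 1$ then $K_{n,m}$ is a star, hence a forest each of whose components has at most one vertex of degree more than $2$, and item~\eqref{enum:dich-forest} applies; if $m\le 2$ then $K_{n,m}\in\{K_{1,1},K_{1,2},K_{2,2}\}$, where the first two are covered by item~\eqref{enum:dich-forest} while $K_{2,2}=C_4$ is covered by \autoref{th:kim17}, since induced $C_4$-subdivisions are precisely induced cycles of length at least $4$. If instead $n\ge 2$ and $m\ge 3$, then either $n=2$, in which case item~\eqref{badc:bip} of \autoref{thm:badcases} applies directly, or $n\ge 3$, in which case $K_{n,m}$ contains $K_{3,3}$ and is therefore nonplanar, so item~\eqref{badc:nonpl} of \autoref{thm:badcases} applies; in both subcases $K_{n,m}$-subdivisions fail the induced \ep{} property. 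For item~\eqref{enum:dich-pan}, the cases $n\in\{1,2\}$ are exactly \autoref{thm:smallgraphs}. For $n\ge 3$, write the $n$-pan as a triangle $C$ together with a pending path $w_1w_2\cdots w_n$ attached to $C$ so that only $w_1$ has a neighbor in $C$; then $w_2$ and $w_3$ are two adjacent vertices with no neighbor in $C$, so item~\eqref{badc:2v} of \autoref{thm:badcases} applies, and the subdivisions of the $n$-pan do not have the induced \ep{} property.

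No single step is difficult, so the only point requiring care is to check that the structural hypotheses of \autoref{thm:badcases} are genuinely met in each negative case — in particular that $K_{n,m}$ with $n,m\ge 3$ really is nonplanar (via a $K_{3,3}$-subgraph) and that the $n$-pan for $n\ge 3$ exhibits the configuration of item~\eqref{badc:2v} — and, on the positive side, to invoke \autoref{l:pathlinear} correctly for disjoint unions of subdivided stars.
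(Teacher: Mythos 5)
Your proof is correct and follows essentially the same route as the paper: combining \autoref{l:pathlinear}, \autoref{th:kim17}, and \autoref{thm:smallgraphs} on the positive side with \autoref{thm:badcases} (its items \eqref{badforest}, \eqref{badc:bip}, \eqref{badc:nonpl}, \eqref{badc:2v}) on the negative side, via the same case analyses. The only cosmetic difference is that the paper routes the $K_{1,1}$ and $K_{1,2}$ subcases directly through \autoref{l:pathlinear} rather than through item~\eqref{enum:dich-forest}, which is substantively identical.
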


We also show that if $H$ has a cycle and no vertex of degree more than 3, then there is no $o(k \log k)$ bounding function for $H$ (\autoref{thm:omegaklogk}).

\subsection*{Our techniques}
To obtain negative results, we describe three constructions. Their common point is that any induced subdivision of the considered graph $H$ that they contain has a strictly constrained position. This will ensure that two distinct induced $H$-subdivisions meet. On the other hand, they contain several distinct induced $H$-subdivisions, so one needs many vertices to hit them all. Compared to their non-induced counterparts, induced subdivisions of a fixed graph do not appear in very dense subgraphs such as cliques, which leaves more freedom in the design of our constructions.

The proofs of the positive results for $1$-pan and $2$-pan start similarly.
We first show that if $S$ is a smallest 1-pan- (resp. 2-pan-) subdivision in $G$ that hits all other induced subdivisions of the $1$-pan (resp. 2-pan),
then either $G$ contains $k$ pairwise vertex-disjoint induced subdivisions of the 1-pan, or all the 1-pan-(resp. 2-pan-)subdivisions can be hit with $\mathcal{O}(k \log k)$ vertices. By applying inductively this result we can conclude that the subdivisions of the 1-pan (resp.\ 2-pan) have the induced \ep{} property with gap $\mathcal{O}(k^2 \log k)$. The proof for the diamond is much more involved.

\subsection*{Organization of the paper}
We introduce the definitions and some of the external results that we need in \autoref{sec:prelim}.
In sections from \ref{sec:triwa} to \ref{sec:lowbo}, we present general constructions and use them to obtain the negative results contained in this paper. The positive results about pans and diamonds are proved in Sections \ref{sec:panpan} and \ref{sec:diamond}, respectively.
We conclude in \autoref{sec:concl} with remarks and possible directions for future research. The sections from \ref{sec:triwa} to \ref{sec:diamond} are mostly independent and can be read separately.

\section{Preliminaries}
\label{sec:prelim}

\subsection*{Basics} We denote by $\R$ the set of reals and by $\N$ the set of non-negative integers. 
For an integer $p\ge 1$, we denote by $\N_{\ge p}$ the set of integers greater than or equal to~$p$.
Let $G$ be a graph. We denote by $V(G)$ and $E(G)$ its vertex set and edge set, respectively. 
For every $X\subseteq V(G)$ (resp.\ $X\subseteq E(G)$), we denote by $G-X$ the graph obtained by removing all vertices (resp.\ edges) of $X$ from~$G$. 
For every $S\subseteq V(G)$, we denote by $G[S]$ the subgraph of $G$ induced by $S$.
We use $|G|$ as a shorthand for~$|V(G)|$.

For two graphs $G$ and $H$,  $G\cup H$ is the graph with vertex set $V(G)\cup V(H)$ and edge set $E(G)\cup E(H)$.

For a subgraph $H$ of a graph $G$ and $v\in V(G)\setminus V(H)$, we say that $v$ \emph{dominates} $H$ 
if $v$ is adjacent to every vertex of $H$. In this case, we also say that $v$ is a $H$-dominating vertex.

For $v,w\in V(G)$, we denote by $\dist_G(v,w)$ the number of edges of a shortest path from $v$ to $w$ in $G$; that is the \emph{distance} between $v$ and $w$ in $G$.
This notation is extended to sets $V,W \subseteq V(G)$ as $\dist_G(V,W) = \min_{(v,w)\in V\times W} \dist_G(v,w)$.
When $V$ consists of one vertex $v$, we write $\dist_G(v, W)$ for $\dist_G(\{v\}, W)$.
Let $r \in \N_{\ge 1}$. The \emph{$r$-neighborhood} of a subset $S\subseteq V(G)$, that we denote by $N^r_G[S]$,  is the set of all vertices $w$ such that $\dist_G(w, S)\le r$. 

\subsection*{Subdivisions}
We recall the definition of subdivision given in the introduction. 
The operation of \emph{subdividing} an edge $e$ of a graph $G$ adds a new vertex of degree 2 adjacent to the endpoints of $e$ and removes $e$.
We say that a graph $H'$ is a subdivision of a graph $H$ if $H'$ can be obtained from $H$ by a sequence of edge subdivisions.
We also say that $G$ has an \emph{induced subdivision} of $H$, or shortly an induced $H$-subdivision, if there is an induced subgraph of $G$ that is isomorphic to an $H$-subdivision.

\subsection*{Models}
We formulate an induced subdivision of $H$ in a graph $G$ as a mapping from $H$ to $G$.
A \emph{model} of $H$ in $G$ is an injective function $\varphi$ with domain $V(H) \cup E(H)$ that maps vertices of $H$ to vertices of $G$ and edges of $H$ to induced paths of $G$ such that:
\begin{itemize}
\item for every edge $uv \in E(H)$, $\varphi(u)$ and $\varphi(v)$ are the endpoints of $\varphi(uv)$;
\item for distinct edges $u_1v_1, u_2v_2\in E(H)$, there is no edge between $V(\varphi(u_1v_1)) \setminus \{\varphi(u_1), \varphi(v_1)\}$ and $V(\varphi(u_2v_2)) \setminus \{\varphi(u_2), \varphi(v_2)\}$;
\item for every edge $uv\in E(H)$ and a vertex $w\in V(H)\setminus \{u,v\}$, $\varphi(w)$ has no neighbors in $V(\varphi(uv)) \setminus \{\varphi(u), \varphi(v)\}$;
\item for two non-adjacent vertices $u,v\in V(H)$, $\varphi(u)\varphi(v) \notin E(G)$.
\end{itemize}

It is easy to see that a graph $G$ contains an induced subdivision of a graph $H$ if and only if there is a model of $H$ in $G$. We will often used this fact implicitly.
If $H'$ is an induced subgraph of $H$, then we denote by $\varphi(H')$ the subgraph of $G$ induced by $\bigcup_{v \in V(H')} V(\varphi(v))$.

\subsection*{Small graphs} 
We denote by $K_n$ the complete graph on $n$ vertices, and by $C_n$ the cycle on $n$ vertices.
The \emph{claw} is the complement of the disjoint union of $K_3$ and $K_1$. The \emph{$n$-pan} is the graph obtained from the disjoint union of a path on $n$ edges and $K_3$ by identifying one endpoint of the path with a vertex in $K_3$. The \emph{diamond} is the graph obtained by removing an edge in $K_4$.

\subsection*{General tools}
\label{sec:tools}
We collect in this subsection general results that will be used in the paper.
The first one is a lemma used by Simonovits \cite{Simonovits67genera} to give a new proof of \autoref{th:ep}.

\begin{lemma}[\cite{Simonovits67genera}, see also {\cite[Lemma~2.3.1]{diestel2010graph}}]\label{t:simonovits}
Let $k$ be a positive integer. If a cubic multigraph has at least $24 k \log k$ vertices, then it contains at least $k$ vertex-disjoint cycles.
Furthermore, such $k$ cycles can be found in time $\mathcal{O}(\abs{G}^3)$.
\end{lemma}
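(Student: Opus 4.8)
The plan is to repeatedly extract a short cycle and then restore $3$-regularity, so that the girth of the current graph stays logarithmic in its number of vertices throughout the process; the hypothesis on $\abs{G}$ then guarantees enough room for $k$ such extractions. The two ingredients are a classical \emph{short cycle lemma} and a careful accounting of how many vertices each extraction destroys.

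First I would prove the short cycle lemma: every multigraph $M$ with minimum degree at least $3$ contains a cycle of length at most $2\log_2\abs{M}$ (for $\abs{M}$ not too small; a slightly larger additive constant suffices for small $M$). If $M$ has a loop or two parallel edges this cycle has length at most $2$, so assume $M$ is simple and run a breadth-first search from an arbitrary vertex $v$. If $M$ had no cycle of length at most $2r+1$, then the ball of radius $r$ about $v$ would induce a tree in which $v$ has $3$ children and every other vertex at least $2$ children, hence would contain at least $3\cdot 2^{r}-2$ vertices; taking $r$ minimal with $3\cdot 2^{r}-2>\abs{M}$ yields the bound. The same search finds such a short cycle in linear time.

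Next, the extraction loop. Put $G_0:=G$, and given a cubic multigraph $G_i$, let $C_i$ be a shortest cycle of $G_i$; by the lemma $\abs{C_i}\le L:=2\log_2\abs{G}$. Obtain $G_{i+1}$ from $G_i-V(C_i)$ by the following cleanup: while the minimum degree is below $3$, delete a vertex of degree at most $1$ if one exists, and otherwise suppress (smooth out) a vertex of degree $2$. The crucial points are: (i) since $C_i$ is chordless, at most $\abs{C_i}$ edges leave $C_i$, so removing $V(C_i)$ raises the total \emph{deficiency} $\sum_v(3-\deg(v))$ of the graph by at most $\abs{C_i}$; (ii) deleting a vertex of degree at most $1$ strictly decreases the deficiency, while suppressing a degree-$2$ vertex removes one unit of deficiency and leaves the degree of every other vertex unchanged --- so each cleanup step deletes exactly one vertex and decreases the deficiency by at least $1$. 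Hence the cleanup performs at most $\abs{C_i}\le L$ steps, $G_{i+1}$ is again a cubic multigraph, and $\abs{G_{i+1}}\ge\abs{G_i}-2\abs{C_i}\ge\abs{G_i}-2L$. It follows that the loop runs for $k$ steps provided $\abs{G}-2(k-1)L\ge 2$, and a short computation shows that $\abs{G}\ge 24k\log k$ suffices for this. Finally, reversing the suppressions turns each $C_i$ into an honest cycle $\widehat{C}_i$ of $G$, and since every vertex and edge deleted at step $i$ lies strictly inside the portion of $G$ removed at that step, these $k$ cycles are pairwise vertex-disjoint. Each iteration costs $\mathcal{O}(\abs{G}^2)$ for the (shortest) cycle and $\mathcal{O}(\abs{G})$ for the cleanup, and there are $k\le\abs{G}$ of them, giving total time $\mathcal{O}(\abs{G}^3)$.

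The step I expect to be most delicate is item (ii): one must choose the cleanup operations so that re-establishing $3$-regularity does not cascade and wipe out a linear number of vertices --- which is exactly what happens if one naively passes to the $3$-core (consider a long prism with a few vertices removed). Suppression of degree-$2$ vertices is the right operation precisely because it preserves all other degrees, after which the deficiency potential controls the total damage. The remaining care is the disjointness bookkeeping for the lifted cycles $\widehat{C}_i$, which is routine once one observes that a vertex suppressed during the $i$-th cleanup sits inside the part of $G$ removed at step $i$, hence is untouched by all later cycles.
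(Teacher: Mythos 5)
The extraction scheme you set up --- shortest cycle, followed by a cleanup of deletions and degree\hbox{-}$2$ suppressions controlled by the deficiency potential $\sum_v(3-\deg v)$ --- is correct and is essentially the classical Simonovits/Diestel argument that the paper cites; steps (i) and (ii) are sound, and the girth bound and arithmetic are fine. However, the disjointness step you call routine rests on a claim that is not true as stated. When $C_i$ uses a derived edge $f$ that was created by suppressions at some earlier step $j<i$, the lifted cycle $\widehat{C}_i$ (obtained by substituting the underlying path of $G$ for $f$) contains vertices that were suppressed at step~$j$, not at step~$i$. So a vertex suppressed during the $j$-th cleanup is \emph{not} ``untouched by all later cycles'': it can, and typically does, sit inside $\widehat{C}_{i}$ for some $i>j$. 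The ``removed at step~$i$, hence disjoint from later steps'' argument shows only that the backbone vertex sets $V(C_0),\dots,V(C_{k-1})$ are pairwise disjoint; it says nothing about the interiors of the lifted subdivision paths, and two different $\widehat{C}_i$'s could a priori reach back into the same earlier cleanup batch.

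The conclusion nevertheless holds, and the missing observation is short. At every stage $j'>j$, a vertex $v$ suppressed at step $j$ lies on a \emph{unique} maximal derived edge of $G_{j'}$: subsequent suppressions can only merge that edge at an endpoint into a longer one, never create a second derived edge through $v$ or split the existing one. Since each $C_i$ is a shortest cycle it is chordless, so a derived edge with both endpoints in $V(C_i)$ must actually lie on $C_i$; in that case its interior is absorbed into $\widehat{C}_i$ and the edge vanishes together with $V(C_i)$, so $v$ never sits on any later derived edge. A derived edge with at most one endpoint in $V(C_i)$ is either discarded or survives intact without contributing to $\widehat{C}_i$. Either way each suppressed vertex lands in at most one $\widehat{C}_i$, which, together with the disjointness of the $V(C_i)$, gives the claimed $k$ pairwise vertex-disjoint cycles. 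With this patch the proof is complete and matches the cited source in spirit and in method.
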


The second one is related to an \ep{} type problem about paths intersecting a fixed set of vertices.
Let $G$ be a graph and $A \subseteq V(G)$. An \emph{$A$-path} in $G$ is a path with both end vertices in $A$ and all internal vertices in $V(G)\setminus A$. 
An \emph{$A$-cycle} is a cycle containing at least one vertex of~$A$.

\begin{theorem}[\cite{Gallai1964}]\label{t:gallaiapath}
Let $G$ be a graph, $A\subseteq V(G)$, and $k$ be a positive integer.
Then one can find in time $\mathcal{O}(k\abs{G}^2)$ either
$k$ vertex-disjoint $A$-paths, or
a vertex set $X$ of $G$ with $\abs{X}\le 2k-2$ such that  $G- X$ has no $A$-paths.
\end{theorem}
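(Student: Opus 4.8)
The plan is to reduce the statement to a classical min--max theorem for matchings. Write $B=V(G)\setminus A$ and build an auxiliary graph $\widehat G$ on vertex set $A\cup B'\cup B''$, where $B'$ and $B''$ are two disjoint copies of $B$; denote by $b',b''$ the copies of a vertex $b\in B$. The edges of $\widehat G$ are: the edge $b'b''$ for every $b\in B$; the edges $ab'$ and $ab''$ for every edge $ab\in E(G)$ with $a\in A$ and $b\in B$; the edge $ab$ for every edge $ab\in E(G)$ with $a,b\in A$; and the edges $b'c'$ and $b''c''$ for every edge $bc\in E(G)$ with $b,c\in B$. A routine verification shows that the maximum number of pairwise vertex-disjoint $A$-paths in $G$ equals $\nu(\widehat G)-\abs{B}$, where $\nu(\cdot)$ is the matching number. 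In one direction a family of $t$ pairwise vertex-disjoint $A$-paths gives a matching of $\widehat G$ of size $\abs{B}+t$: an $A$-path $a_0b_1\cdots b_sa_1$ ``unfolds'', by alternating between the two copies, into the matched pairs $a_0b_1',\ b_1''b_2'',\ b_2'b_3',\dots$ and finally one pair containing $a_1$, while each vertex of $B$ missed by all the paths contributes its copy edge $b'b''$; in the other direction a maximum matching of $\widehat G$ can be rearranged so that it decomposes into copy edges and unfolded $A$-paths.

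Next I would apply the Tutte--Berge formula to $\widehat G$. If $G$ has no $k$ pairwise vertex-disjoint $A$-paths then $\nu(\widehat G)<\abs{B}+k$, hence there is a set $U\subseteq V(\widehat G)$ for which the number of odd components of $\widehat G-U$ is at least $\abs{U}+\abs{A}-2k+2$. The remaining task is to project this barrier back to $G$: each vertex of $A\cap U$ is kept, and each $b\in B$ with $b'\in U$ or $b''\in U$ is kept, which yields a set $X\subseteq V(G)$; one checks that $X$ meets every $A$-path and --- this is the delicate point --- that $\abs{X}\le 2k-2$. The size bound requires a careful accounting of how the odd components of $\widehat G-U$ and the set $U$ distribute over $A$, $B'$ and $B''$; it is cleanest to take for $U$ the set produced by the Gallai--Edmonds structure theorem applied to $\widehat G$, so that each odd component of $\widehat G-U$ is factor-critical, and to read off $X$ together with its size bound from that structure. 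An alternative, matching-free route to the same constant starts from a maximum family of pairwise vertex-disjoint $A$-paths chosen to minimise the total number of vertices used, and argues by augmentation/rerouting that the $2t$ endpoints of these paths, after a bounded local adjustment, already hit every $A$-path; I expect this rerouting analysis, with its parity case distinctions, to be the main obstacle in either approach.

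Finally, for the algorithmic claim I would run Edmonds' blossom algorithm on $\widehat G$, which has $\mathcal O(\abs{G})$ vertices and $\mathcal O(\abs{G}^2)$ edges, warm-started from the ``free'' baseline matching consisting of all copy edges $b'b''$ (of size $\abs{B}$). It then suffices to search for at most $k$ augmenting paths: if $k$ of them are found we reach a matching of size $\abs{B}+k$, hence, by the decomposition above, $k$ pairwise vertex-disjoint $A$-paths; if some augmenting-path search fails earlier, the current matching is maximum, so $G$ has fewer than $k$ vertex-disjoint $A$-paths, and the barrier certificate returned by the failed search yields, via the projection above, a vertex set $X$ of size at most $2k-2$ meeting all $A$-paths. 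With an efficient implementation each augmenting-path search costs $\mathcal O(\abs{G}^2)$, and at most $k$ of them are performed, giving total running time $\mathcal O(k\abs{G}^2)$, as required.
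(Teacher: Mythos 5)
The paper states this as a theorem of Gallai (1964) and cites it without proof, so there is no in-paper argument to compare your proposal against; it is an imported tool. Your gadget $\widehat G$ (two cross-linked copies of $G-A$, the copy edges $b'b''$, and duplicated edges to $A$) is the standard Gallai reduction, and the identity $\nu_{A\text{-paths}}(G)=\nu(\widehat G)-\abs{B}$ together with the ``warm-start from all copy edges and run at most $k$ augmenting searches'' algorithmic plan are both the usual route.

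The genuine gap is exactly where you flag it, but it is more than a loose end: the step from a Tutte--Berge barrier $U$ to a hitting set $X\subseteq V(G)$ with $\abs{X}\le 2k-2$ does not go through as sketched. Nothing in Tutte--Berge bounds $\abs{U}$, so the projected $X$ is not obviously small; worse, the projection need not cover at all. For a concrete failure, let $G$ be the single path $a_0\,b\,a_1$ with $A=\{a_0,a_1\}$ and $k=2$. Then $\widehat G\cong K_4$ minus the edge $a_0a_1$ has a perfect matching, so the deficiency is $0$, $U=\emptyset$ is an optimal (indeed the Gallai--Edmonds) barrier satisfying your inequality $\operatorname{odd}(\widehat G-U)\ge\abs{U}+\abs{A}-2k+2$, and the projection yields $X=\emptyset$, which fails to hit the one $A$-path $a_0ba_1$. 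So taking the canonical Gallai--Edmonds $U$ does not by itself rescue the argument. The covering property and the size bound both need a further idea --- either an augmented analysis of the matching structure of $\widehat G$ that is not a simple projection of a barrier, or the alternative route you mention (a maximum $A$-path packing of minimum total size whose $2\nu$ endpoints are shown by rerouting to cover). Those parity/exchange arguments are the whole substance of the theorem, and as you yourself note they are postponed rather than carried out; without them the sketch does not yet prove the statement.
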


We use the regular partition lemma introduced in \cite{Choi2017}.
For a sequence $(A_1, \ldots, A_{\ell})$ of finite subsets of an interval $I\subseteq \mathbb{R}$, a partition $\{I_1, \ldots, I_k\}$ of $I$ into intervals 
is called a \emph{regular partition} of $I$ with respect to $(A_1, \ldots, A_{\ell})$ if 
each $i\in\{1,\ldots,k\}$ satisfies one of the following:
\begin{enumerate}[({RP}1)]
\item \label{it:rp1} $A_1\cap I_i=A_2\cap I_i=\cdots = A_{\ell}\cap I_i\not= \emptyset$.
\item $\abs{A_1\cap I_i}=\abs{A_2\cap I_i}=\cdots = \abs{A_{\ell}\cap I_i}>0$, and for all $j,j'\in \{1, \ldots, \ell\}$ with $j<j'$, $\max(A_j\cap I_i)<\min (A_{j'}\cap I_i)$.
\item $\abs{A_1\cap I_i}=\abs{A_2\cap I_i}=\cdots = \abs{A_{\ell}\cap I_i}>0$, and for all $j, j'\in \{1, \ldots, \ell\}$ with $j<j'$, $\max (A_{j'}\cap I_i)< \min (A_j\cap I_i)$. 
\end{enumerate}
The number of parts $k$ is called the \emph{order} of the regular partition.
The following can be obtained using multiple applications of the Erd\H{o}s-Szekeres Theorem~\cite{Erdos1987}.

\begin{lemma}[Regular partition lemma~\cite{Choi2017}]\label{prop:regularpartition}
Let $I\subseteq \mathbb R$ be an interval.  
There exists a function $N$ such that for all positive integers $n, \ell$, the value $N = N(n, \ell)$ satisfies the following.
For every sequence $(A_1, \ldots, A_N)$ of $n$-element subsets of $I$, 
there exist a subsequence $(A_{j_1}, \ldots, A_{j_{\ell}})$ of $(A_1, \ldots, A_N)$ and a regular partition of $I$ with respect to $(A_{j_1}, \ldots, A_{j_{\ell}})$ that has order at most $n$. 
\end{lemma}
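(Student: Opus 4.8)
The plan is to prove the lemma by induction on $n$, the common size of the sets. Write each set in increasing order as $A_i=\{a_i^1<a_i^2<\dots<a_i^n\}$, so that $a_i^t$ denotes its $t$-th smallest element. Observe first that every part of a regular partition meets every one of the sets (each of (RP1), (RP2), (RP3) requires a nonempty, equal-size intersection with every set), so a regular partition with respect to a collection of $n$-element sets has order at most $n$ automatically; the content of the lemma lies entirely in the existence of the subsequence. For the base case $n=1$, each $A_i$ is a single point, and one application of the Erd\H{o}s-Szekeres theorem~\cite{Erdos1987}, together with pigeonhole to handle repeated values, yields a subsequence of length $\ell$ on which $(a_i^1)_i$ is constant, strictly increasing, or strictly decreasing; then $\{I\}$ is a regular partition of order $1$ of the corresponding type.

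For the inductive step, assume the statement for $n-1$ and let $(A_1,\dots,A_N)$ be a sufficiently long sequence of $n$-element subsets of $I$. Applying the induction hypothesis to the sequence $(A_i\setminus\{a_i^n\})_i$ of $(n-1)$-element sets produces a subsequence; let $(B_1,\dots,B_L)$ be the corresponding subsequence of $(A_i)_i$, put $x_s:=\max B_s$ (the deleted element) and $B_s^-:=B_s\setminus\{x_s\}$, so that the hypothesis gives a regular partition $\{I_1,\dots,I_k\}$ of $I$ (parts listed in increasing order) of order $k\le n-1$ with respect to $(B_1^-,\dots,B_L^-)$. Since $I_1<\dots<I_k$ are consecutive intervals and each $B_s^-$ has at least one element in each part, $I_k$ contains the largest elements of $B_s^-$; in particular $z_s:=\max B_s^-\in I_k$. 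Passing to a further subsequence --- which preserves the regular partition --- we may assume, by a bounded further use of Erd\H{o}s-Szekeres and pigeonhole, that $(x_s)_s$ is constant, strictly increasing, or strictly decreasing. Because $x_s>z_s$ and $z_s\in I_k$, each $x_s$ lies above all of $I_1,\dots,I_{k-1}$, so only the position of the $x_s$ relative to $I_k$ is in question; it remains to reinsert the $x_s$ and reach order at most $n$.

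If $I_k$ has type (RP1), or if we are not in one of the two ``matching'' cases ``$I_k$ has type (RP2) and $(x_s)_s$ is strictly increasing'' and ``$I_k$ has type (RP3) and $(x_s)_s$ is strictly decreasing'', then the inequality $x_s>z_s$ forces $\min_s x_s$ to exceed every element of $I_k$, so the $x_s$ either join $I_k$ (when $I_k$ and $(x_s)_s$ are both constant) or form a new top part $I_{k+1}$; either way we obtain a regular partition of order at most $k+1\le n$, with no further extraction. In the two matching cases, which are symmetric, say $I_k$ has type (RP2) and $(x_s)_s$ is strictly increasing; write $y_s$ for the smallest element of $B_s^-$ in $I_k$, so that $(y_s)_s$ and $(z_s)_s$ are strictly increasing, $z_s<x_s$, and $z_s<y_{s+1}$ for every $s<L$ (this last being the type-(RP2) condition). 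A greedy selection now supplies the needed dichotomy: starting from $s_1=1$ and letting $s_{i+1}$ be the least index $s>s_i$ with $y_s>x_{s_i}$, either the process runs for $\ell$ steps, in which case $z_{s_i}<x_{s_i}<y_{s_{i+1}}$ for all $i$ and reinstating $x$ into $I_k$ turns it into a type-(RP2) part (order $k$); or at some step the process jumps by at least $\sqrt{L}$, in which case, using $z_s<y_{s+1}$, one finds about $\sqrt{L}$ consecutive indices on which every reinstated top element lies strictly above all of $I_k$, so that the $x_s$ form a new top part $I_{k+1}$ of type (RP2) (order $k+1\le n$). In all cases we have produced a subsequence carrying a regular partition of order at most $n$, completing the induction.

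Finally, each extraction used above --- the application of the induction hypothesis, the bounded use of Erd\H{o}s-Szekeres and pigeonhole, and the greedy dichotomy --- shrinks the length by at most a polynomial factor, so one may take $N(n,\ell)$ to be $N(n-1,\cdot)$ evaluated at a fixed polynomial in $\ell$; unwinding the recursion gives a bound polynomial in $\ell$ whose degree depends only on $n$. I expect the main obstacle to be the reinsertion step, and specifically the greedy dichotomy: one has to show that the reinstated top elements can always be made to behave uniformly with respect to the current top part --- either interleaving it exactly or lying entirely above it --- along a subsequence that is still long, and this is exactly where the structural inequality $a_i^n>a_i^{n-1}$ and the type-(RP2)/(RP3) shape of $I_k$ enter. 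The remaining work --- checking that the interval endpoints separating consecutive parts can always be chosen, which comes down to disposing of coincidences between elements of different sets --- is routine but must be carried out with a little care.
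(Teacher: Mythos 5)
This lemma is not proved in the paper: it is imported verbatim from \cite{Choi2017}, and the paper's preliminary section only remarks that it ``can be obtained using multiple applications of the Erd\H{o}s--Szekeres Theorem.'' There is therefore no in-paper proof to compare against, and the evaluation below is of your argument on its own merits.

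Your overall scheme --- strip the largest element of each $A_i$, apply the induction hypothesis on $(n-1)$-element sets to obtain a regular partition with topmost part $I_k$, make the stripped elements $(x_s)$ monotone by Erd\H{o}s--Szekeres, and then reinstate them either by absorbing them into $I_k$ or by peeling off a new top part $I_{k+1}$ --- is a natural induction and, after checking the cases, I believe it works. A few comments on the points that need care.

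First, the opening observation (that the order $\le n$ is automatic because each part must meet each $n$-element set) is correct and usefully isolates what actually has to be established. Second, the fact that $z_s := \max B_s^-$ lies in the topmost part $I_k$, and hence $x_s > z_s$ also lies in $I_k$, depends on the partition being a partition of the whole interval $I$; this is true here, but you never say it explicitly, and your phrasing ``the $x_s$ either join $I_k$ \dots\ or form a new top part'' reads as though the $x_s$ might lie outside $I_k$. What is actually happening is that you subdivide $I_k$ by a threshold $t$; it would be cleaner to say so. Third, in the ``non-matching'' cases your claim that ``$\min_s x_s$ exceeds every element of $I_k$'' should be ``$\min_s x_s$ exceeds $\max_s z_s$,'' since $I_k$ is the rightmost part and stretches to the right end of $I$; with that reading, the case analysis (RP1 with any monotonicity, RP2 with constant/decreasing, RP3 with constant/increasing) goes through, because in each of these the maximum of the $z_s$ is attained at the same end of the subsequence as the minimum of the $x_s$. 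Fourth and most importantly, your greedy dichotomy in the matching case is correct but the $\sqrt L$ is not quite the right quantity: if the process terminates in $m<\ell$ steps, you get $m+1$ chunks (not $m$, because you must also count the terminal tail after the last $s_m$), so some chunk has length $\ge (L-1)/\ell$; you need that chunk to have length at least $\ell+2$ or so (you lose a couple of boundary indices when ensuring $z_s < y_{s+1} \le x_{s_i}$), hence $L=\Omega(\ell^2)$ rather than $L=\Omega(\ell)$ suffices, which is what you intended but is imprecisely stated. Finally, there is a genuine bookkeeping step you wave at but should acknowledge: when you ``reinstate $x$ into $I_k$'' in the greedy Case A, you must check that the parts $I_1,\dots,I_{k-1}$ remain valid with respect to the full sets $B_{s_i}$, which holds because $x_{s_i}>z_{s_i}$ places $x_{s_i}$ outside those parts. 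None of these issues is a gap in the argument --- they are the routine details you already flag at the end --- and I am satisfied the proof is sound.

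One stylistic remark: you strip the maximum; stripping the minimum and working with the bottom part is symmetric, and choosing one convention and making the asymmetry explicit (every ``top'' is with respect to the chosen maximum) would make the exposition slightly less confusing to a reader trying to reconcile positions relative to $I_k$.
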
 

We will use this lemma with $n\in \{2, 3, 4\}$. We note that for fixed $n$, the function $N(n,\ell)$ is a polynomial function in $\ell$, but the known upper bound following from the proof of the result is big.
For instance, $N(2,\ell)=\mathcal{O}(\ell^{40})$ and $N(3,\ell)=\mathcal{O}(\ell^{968})$.

\section{Negative results using the triangle-wall}
\label{sec:triwa}

We prove in this section items \eqref{badforest} and \eqref{badc:bip} of \autoref{thm:badcases}.
The cornerstone of both proofs is the use of a triangle-wall, which we describe hereafter. 
These results are then used to show the dichotomies \eqref{enum:dich-forest} and \eqref{enum:dich-bip} of \autoref{thm:dichotomy}.

Let $n \in \N_{\geq 2}$. The \emph{$n$-garland} is the graph obtained from the path $x_0y_0\dots  x_i y_i \dots x_{n-1} y_{n-1}$ by adding, for every $i \in \intv{0}{n-1}$, a new vertex $z_i$ adjacent to $x_i$ and $y_i$. Let $Q_0, \dots, Q_{n-1}$ be $n$ copies of the $2n$-garland. For every $i \in \intv{0}{n-1}$ and $j\in \intv{0}{2n-1}$, we respectively  denote by $x_j^i, y_j^i, z_j^i$ the copies of the vertices $x_j, y_j, z_j$ in $Q_i$. The graph $\Gamma_n$ (informally called \emph{triangle-wall}) is constructed from the disjoint union of $Q_0, \dots, Q_{n-1}$ as follows:
\begin{itemize}
\item for every even $i \in \intv{0}{n-1}$ and every odd $j \in \intv{0}{2n-1}$, we add an edge between the vertex $z^i_j$ and the vertex $z^{i+1}_j$;
\item for every odd $i \in \intv{0}{n-1}$ and every even $j \in \intv{0}{2n-1}$, we add an edge between the vertex $z^i_j$ and the vertex $z^{i+1}_j$.
\end{itemize}

\begin{figure}[h]
  \centering
  \begin{tikzpicture}
    \begin{scope}[every node/.style = black node, scale = 0.5]
      \foreach \evenrow in {0,2}{
        \begin{scope}[yshift = -2.732*\evenrow cm]
          \foreach \i/\l in {1/1, 5/3, 9/5, 13/7}{
            \draw (\i, 0) node (x{\l,\evenrow}) {} -- ++(1,0) node (y{\l,\evenrow}) {} -- ++(120:1) node (z{\l,\evenrow}) {} -- cycle;
          }
          \foreach \i/\l in {3/2, 7/4, 11/6, 15/8}{
            \draw (\i, 0) node (x{\l,\evenrow}) {} -- ++(1,0) node (y{\l,\evenrow}) {} -- ++(-120:1) node (z{\l,\evenrow}) {} -- cycle;
          }
          \foreach \i in {1,...,7}{
            \pgfmathparse{int(\i+1)}
            \draw (y{\i,\evenrow}) -- (x{\pgfmathresult,\evenrow});
          }
        \end{scope}
      }
      \foreach \oddrow in {1,3}{
        \begin{scope}[yshift = -2.732*\oddrow cm, yscale = -1]
          \foreach \i/\l in {1/1, 5/3, 9/5, 13/7}{
            \draw (\i, 0) node (x{\l,\oddrow}) {} -- ++(1,0) node (y{\l,\oddrow}) {} -- ++(120:1) node (z{\l,\oddrow}) {} -- cycle;
          }
          \foreach \i/\l in {3/2, 7/4, 11/6, 15/8}{
            \draw (\i, 0) node (x{\l,\oddrow}) {} -- ++(1,0) node (y{\l,\oddrow}) {} -- ++(-120:1) node (z{\l,\oddrow}) {} -- cycle;
          }
          \foreach \i in {1,...,7}{
            \pgfmathparse{int(\i+1)}
            \draw (y{\i,\oddrow}) -- (x{\pgfmathresult,\oddrow});
          }
        \end{scope}
      }

      \foreach \evenrow in {0,2}{
        \foreach \i in {2, 4,...,8}{
          \pgfmathparse{int(\evenrow+1)}
          \draw (z{\i,\evenrow}) -- (z{\i,\pgfmathresult});
        } 
      }
      \foreach \oddrow in {1}{
        \foreach \i in {1, 3,...,7}{
          \pgfmathparse{int(\oddrow + 1)}
          \draw (z{\i,\oddrow}) -- (z{\i,\pgfmathresult});
        }
      }
    \end{scope}
    \foreach \i in {0,...,3}{
      \pgfmathparse{int(2*\i+1)}
      \draw (z{\pgfmathresult,0}) node[label=120:$a_\i$] {};
    }
    \foreach \i in {0,...,3}{
      \pgfmathparse{int(7-2*\i)}
      \draw (z{\pgfmathresult,3}) node[label=-60:$b_\i$] {};
    }
    \begin{pgfonlayer}{bg}
        \foreach \row / \next in {0,...,3}{
          \foreach \column in {1,2}{
            \fill[green!80!black] (x{\column,\row}) circle (0.15cm);
            \fill[green!80!black] (y{\column,\row}) circle (0.15cm);
            \fill[green!80!black] (z{\column,\row}) circle (0.15cm);
          }
          \draw[green!80!black, line join = miter, line cap = round, line width = 0.15cm]
          (y{1,\row}) -- (x{1,\row}) -- (z{1,\row}) -- (y{1,\row}) -- (x{2,\row}) -- (z{2,\row}) -- (y{2,\row}) -- (x{2,\row});
        }
        \draw[green!80!black, line join = round, line cap = round, line width = 0.15cm] (z{2,0}) -- (z{2,1}) (z{1,1}) -- (z{1,2}) (z{2,2}) -- (z{2,3});
        \end{pgfonlayer}
  \end{tikzpicture}
  
  \caption{The graph $\Gamma_4$, with $C_0$ depicted in green.}
  \label{fig:gamma4}
\end{figure}
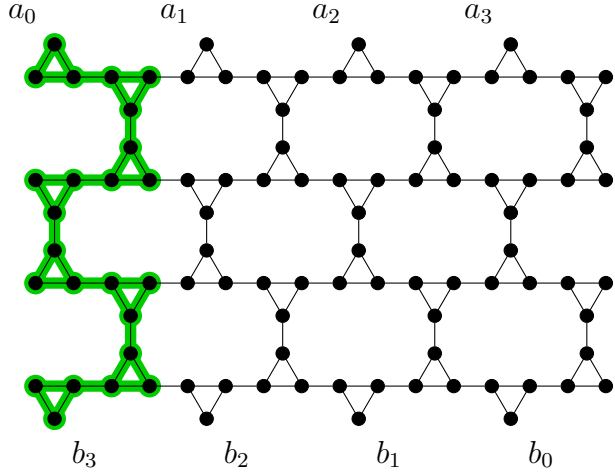

For $i\in \intv{0}{n-1}$, we set
\begin{align*}
  a_i &= z^0_{2i}, \qquad
   b_i =
    \begin{cases}
      z^p_{2(n-i-1)}&\text{if $n$ is even},\\
      z^p_{2(n-i)-1} &\text{otherwise}.
    \end{cases},\\
  \text{and}\ C_i &= \bigcup_{j=0}^{n-1}\{x_{2i}^j, y_{2i}^j, z_{2i}^j, x_{2i + 1}^j, y_{2i + 1}^j, z_{2i + 1}^j\}.                    
\end{align*}
Intuitively, the $Q_i$'s form the rows of $\Gamma_n$ and the $C_i$'s its columns.
The graph $\Gamma_4$ is depicted in \autoref{fig:gamma4}.
Observe that there is no induced claw in $\Gamma_n$.

\subsection{Complete bipartite patterns}
In this section, we complete the study of the induced \ep{} property of subdivisions of complete bipartite graphs by proving item \eqref{enum:dich-bip} of \autoref{thm:dichotomy}.

\begin{lemma}[item \eqref{badc:bip} of \autoref{thm:badcases}]\label{l:bip}
 For every integer $r\ge 3$, the subdivisions of $K_{2, r}$ do not have the induced \ep{} property.
\end{lemma}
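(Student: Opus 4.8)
The plan is to use the triangle-wall $\Gamma_n$ as the base graph and show that, for suitable $n$, it simultaneously contains many pairwise vertex-disjoint induced $K_{2,r}$-subdivisions while no small vertex set hits them all. The first step is to understand where induced $K_{2,r}$-subdivisions can live inside $\Gamma_n$. Since $\Gamma_n$ has no induced claw, any induced subdivision of $K_{2,r}$ with $r\ge 3$ cannot contain one of its two degree-$r$ branch vertices together with three of the private subdivided paths leaving it, \emph{unless} those paths are genuinely subdivided near the branch vertex; so the branch vertices must have their incident model-paths subdivided at least once, and one has to argue that the only way to realize a high-degree branch vertex with pairwise non-adjacent neighbours inside $\Gamma_n$ is to place it at one of the degree-$3$ ``triangle'' vertices $z^i_j$ (the $x,y$ vertices have too small a neighbourhood once one forbids induced claws and the cross-edges). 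I would make this precise by a short structural analysis of the closed neighbourhoods in $\Gamma_n$: every vertex has degree at most $4$, and whenever a vertex has three pairwise non-adjacent neighbours, a case check pins down its location and the local shape of the three paths leaving it.

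Next, building on that local analysis, I would prove a ``confinement'' lemma: any induced $K_{2,r}$-subdivision $S$ in $\Gamma_n$ must be contained in (the union of) a bounded number of consecutive columns $C_i$, and moreover its two degree-$r$ branch vertices are forced into a small, predictable region. The intuition is that to get $r\ge 3$ internally disjoint induced paths between two branch vertices in a graph that is essentially a subdivided hexagonal/triangular grid, the two branch vertices must be ``close'' — the grid structure does not allow three internally disjoint induced paths between two far-apart vertices without those paths being forced to run through narrow cuts, which the column structure of $\Gamma_n$ controls. The outcome I want is: (i) two induced $K_{2,r}$-subdivisions that use the same column $C_i$ necessarily intersect, and (ii) for each column there genuinely is an induced $K_{2,r}$-subdivision using only that column (or a constant-width band around it). Part (ii) is an explicit construction: inside a single garland $Q_j$ together with its neighbour $Q_{j+1}$ and the cross-edges, one can exhibit two branch vertices among the $z$'s joined by $r$ induced paths; I would draw this for, say, $r=3$ and note it extends to all $r\ge 3$ by lengthening paths.

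Given (i) and (ii), the Erd\H{o}s--P\'osa failure follows by the standard argument: take $n$ large (as a function of $k$), so that $\Gamma_n$ has $\Theta(n)$ pairwise vertex-disjoint induced $K_{2,r}$-subdivisions (one per column, columns being vertex-disjoint), hence $\upnu_{K_{2,r}}(\Gamma_n) \ge n \ge k$; that part needs $n\ge k$ only, but the point is the other direction. For the covering number, I would argue that any hitting set $X$ must, by confinement, intersect each column $C_i$ (because $C_i$ alone supports an induced $K_{2,r}$-subdivision), so $|X| \ge n$; combined with $\upnu_{K_{2,r}}(\Gamma_n)$ being bounded — wait, that is the subtlety — so actually the right setup is the usual one for these ``grid-like'' lower bounds: one shows that in a \emph{single} column the induced $K_{2,r}$-subdivisions pairwise intersect \emph{and} there are many of them requiring many vertices to hit, so $\Gamma_n$ with $n$ chosen appropriately has $\upnu_{K_{2,r}} < k$ yet $\uptau_{K_{2,r}}$ unbounded; concretely one takes a construction where deleting fewer than, say, $n$ vertices always leaves an induced $K_{2,r}$-subdivision while no $k$ of them are disjoint. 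I would phrase the final contradiction as: if $f$ were a bounding function, then $\uptau_{K_{2,r}}(\Gamma_n)\le f(\upnu_{K_{2,r}}(\Gamma_n)) \le f(c)$ for a constant $c$ independent of $n$, contradicting $\uptau_{K_{2,r}}(\Gamma_n)\to\infty$.

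The main obstacle I anticipate is the confinement lemma — proving rigorously that three internally disjoint \emph{induced} paths between two branch vertices in the triangle-wall force those vertices (and the paths) into a bounded-width band, and that a hitting set must therefore stab every column. The induced condition is both a help (it rules out shortcuts through triangles and through the dense local structure, and kills claws) and a nuisance (I must make sure the explicit positive construction in (ii) really is induced, i.e.\ there are no unwanted chords between the chosen paths, which forces careful routing through the alternating cross-edges between consecutive garlands). I would handle this by exploiting the claw-freeness observed for $\Gamma_n$ together with the explicit labelling $x^i_j, y^i_j, z^i_j$ to reduce everything to a finite case analysis of adjacencies, and by choosing the model paths to alternate between rows $j$ and $j+1$ only at the prescribed cross-edge positions so that non-consecutive paths are separated by at least two columns and hence non-adjacent.
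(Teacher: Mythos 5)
Your plan takes $\Gamma_n$ itself as the host graph and aims to exhibit an induced $K_{2,r}$-subdivision inside (roughly) each column, then argue that any two such subdivisions in the same column intersect and that a small set cannot stab all columns. This cannot work, because $\Gamma_n$ contains \emph{no} induced $K_{2,r}$-subdivision at all when $r\ge3$: the two degree-$r$ branch vertices of a $K_{2,r}$-subdivision each have $r\ge3$ pairwise non-adjacent neighbours inside the model, i.e.\ each is the centre of an induced claw, and $\Gamma_n$ is claw-free (every vertex of $\Gamma_n$ has degree at most~$4$ and always two of its neighbours are adjacent via a triangle edge, so its neighbourhood has independence number at most~$2$). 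So your ``confinement lemma'' is vacuous and your step (ii) — an explicit induced $K_{2,r}$-subdivision inside a column — is impossible; you also notice midway that the quantifiers in your packing/covering argument are pointing the wrong way, which is a symptom of the same missing ingredient.

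The missing ingredient is exactly that the branch vertices must live \emph{outside} the claw-free wall. The paper's construction builds $G_n$ by taking $\Gamma_{rn}$ and \emph{adding} new vertices $u_0,\dots,u_{n-1}$ and $v_0,\dots,v_{n-1}$, with $u_i$ adjacent to $a_{ir},\dots,a_{(i+1)r-1}$ and $v_i$ to $b_{ir},\dots,b_{(i+1)r-1}$, and with all edges present between $\{u_i,v_i\}$ and $\{u_j,v_j\}$ for $i\neq j$. Claw-freeness of $\Gamma_{rn}$ then forces both branch vertices of any induced $K_{2,r}$-subdivision to be among the added vertices, and because the added vertices are pairwise adjacent except $u_i$--$v_i$, they must in fact be $u_i$ and $v_i$ for a single $i$. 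One of the $r$ connecting paths is then forced to cross the whole wall from an $a$-vertex in block $i$ to a $b$-vertex in block $i$, and two such crossings for distinct $i$ must meet, giving $\upnu_{K_{2,r}}(G_n)\le1$; meanwhile after deleting fewer than $n/2$ vertices some such crossing survives, giving $\uptau_{K_{2,r}}(G_n)\ge n/2$. If you want to repair your proof, you need to replicate that step: attach external ``hubs'' that supply branch vertices and make all but one (matched) pair of hubs mutually adjacent so that the position of the model's branch vertices is pinned down.
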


\begin{proof}
We construct an infinite family $(G_n)_{n \in \N_{\geq 1}}$ of graphs such that $\upnu_{K_{2,r}}(G_n) = \mathcal{O}(1)$ while $\uptau_{K_{2,r}}(G_n) = \Omega(n)$.

  For $n \in \N_{\geq 1}$, the graph $G_n$ is obtained as follows from the graph $\Gamma_{rn}$:
  \begin{itemize}
  \item for every $i \in \intv{0}{n-1}$, add a new vertex $u_{i}$ and make it adjacent to $a_{ir}, a_{ir+1}, \dots, a_{(i+1)r - 1}$;
  \item for every $i \in \intv{0}{n-1}$, add a new vertex $v_{i}$ and make it adjacent to $b_{ir}, b_{ir+1}, \dots, b_{(i+1)r - 1}$;
  \item for every distinct $i,j \in \intv{0}{n-1}$, add all possible edges between $\{u_i, v_i\}$ and $\{u_j, v_j\}$ (thus, $u_i$ and $v_j$ are not adjacent iff $i=j$).
  \end{itemize}

  Let us show that $\upnu_{K_{2,r}}(G_n) \leq 1$.
  For this we consider a model $\varphi$ of $K_{2,r}$ in $G_n$. Notice that $\varphi(K_{2,r})$ has exactly two vertices of degree $r$, and that they are not adjacent.
  As each of them has $r\geq 3$ pairwise non-adjacent neighbors, we observe that none of these two vertices of degree $r$ belongs to the subgraph $\Gamma_{rn}$ of $G_n$. Furthermore, as they are not adjacent, one is $u_i$ and the other one $v_i$, for some $i\in \intv{0}{n-1}$. 
  
  Let us now focus on the $(u_i,v_i)$-paths of $\varphi(K_{2,r})$.
  Observe there is no edge between the internal vertices of two distinct $(u_i, v_i)$-paths of $\varphi(K_{2,r})$. Therefore if two such paths contain a vertex of $\{u_j, v_j\colon j\in \intv{0}{n-1} \setminus \{i\}\}$ each, one of these vertices is $u_j$ and the other one is $v_j$, for some $j\in \intv{0}{n-1}$. As $r\geq 3$, we deduce that at least one  $(u_i,v_i)$-path of $\varphi(K_{2,r})$ does not contain any vertex of $\{u_j, v_j\colon j\in \intv{0}{n-1} \setminus \{i\}\}$. The set of internal vertices of this path lies in $\Gamma_{rn}$ and connects a neighbor of $u_i$ to a neighbor of $v_i$, i.e.\ a vertex of $\{a_{ir}, a_{ir+1}, \dots, a_{(i+1)r -1}\}$ to a vertex of $\{b_{ir}, b_{ir+1}, \dots, b_{(i+1)r -1}\}$.
  
We just proved that every induced subdivision of $K_{2,r}$ in $G_n$ contains a path of the subgraph $\Gamma_{rn}$ connecting a vertex of $\{a_{ir}, a_{ir+1}, \dots, a_{(i+1)r -1}\}$ to a vertex of $\{b_{ir}, b_{ir+1}, \dots, b_{(i+1)r -1}\}$, for some $i\in \intv{1}{n}$.
As every two such paths meet for different values of $i$, we deduce $\upnu_{K_{2,r}}(G_n) \leq 1$.

We now show that when $n\geq 2r$, $\uptau_{K_{2,r}}(G_n) \geq \frac{n}{2}$.
For this we consider the sets defined for every $i \in \intv{0}{n-1}$ as follows:
\[
  C^+_i = \{u_i, v_{n-i-1}\}  \cup \bigcup_{j=(i + 1)r - 1}^{ir} C_j.
\]
The set $C_i^+$ contains $u_i, v_{n-i-1}$ and the vertices that are, intuitively, in the columns that are between them.
Let $X$ be a subset of $V(G_n)$ with $\ceil{\frac{n}{2}}-1$ vertices. Observe that for distinct $i,j \in \intv{1}{n}$, the sets $C_i^+$ and $C_j^+$ are disjoint.
Hence the $\ceil{n/2}$ sets \[\left \{C^+_i \cup C_{n-i-1}^+,\ i \in \intv{0}{\ceil{\frac{n-1}{2}}} \right \}\] are disjoint. As $|X| = \ceil{\frac{n}{2}} - 1$, we have $X \cap (C^+_i \cup C_{n-i-1}^+) = \emptyset$ for some  $i \in \intv{0}{\ceil{\frac{n-1}{2}}}$.

Recall that $\Gamma_n$ is composed of $n$ disjoint garlands $Q_0$, \dots $Q_{n-1}$ connected together. As $|X|<\frac{n}{2}$ and $r \leq \frac{n}{2}$, some $r$ of these garlands do not intersect $X$. Let $\pi \colon \intv{1}{r} \to \intv{1}{n}$ be an increasing function such that $Q_{\pi(i)} \cap X = \emptyset$, for every $i \in \intv{1}{r}$. Then there is a collection of disjoint paths $P_1, \dots, P_r$ of $G-X$ so that $P_j$ connects $a_{ir+j}$ to $b_{{(i+1)r-1 -j}}$, for every $j \in \intv{1}{r}$. The path $P_j$ can be obtained by following in $G_n[C_{ir +j}]$ (intuitively, the $j$-th column of $C^+_i$) a shortest path from $a_{ir +j}$ to $Q_{\pi(r-1-j)}$, then chordlessly following $Q_{\pi(r-1-j)}$ up to a vertex of $C_{(i+1)r-1 -j}$ and finally following a chordless path to $b_{{(i+1)r-1 -j}}$ in $G_n[C_{(i+1)r-1 -j}]$.
We deduce that $G-X$ contains a model of $K_{2,r}$. As this argument holds for every $X \subseteq V(G)$ such that $|X| \leq \ceil{\frac{n}{2}} - 1$, we deduce that $\uptau_{K_{2,r}}(G_n) \geq \frac{n}{2}$.

Therefore $\upnu_{K_{2,r}}(G_n) = \mathcal{O}(1)$ and $\uptau_{K_{2,r}}(G_n) = \Omega(n)$. This concludes the proof.
\end{proof}

\begin{corollary}[item \eqref{enum:dich-bip} of \autoref{thm:dichotomy}]
  Let $r,r'$ be positive integers with $r \leq r'$. Then the $K_{r,r'}$-subdivisions have the induced \ep{} property if and only if $r \leq 1$ or $r' \leq 2$.
\end{corollary}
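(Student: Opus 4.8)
The goal is to establish the dichotomy for complete bipartite graphs $K_{r,r'}$ with $r \le r'$: the $K_{r,r'}$-subdivisions have the induced \ep{} property if and only if $r \le 1$ or $r' \le 2$. The plan is to handle the two directions separately and reduce each to results already available in the paper.

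For the \emph{negative direction}, suppose $r \ge 2$ and $r' \ge 3$. I would split into two subcases. If $r = 2$, then $K_{r,r'} = K_{2,r'}$ with $r' \ge 3$, and \autoref{l:bip} directly gives that $K_{2,r'}$-subdivisions do not have the induced \ep{} property. If $r \ge 3$, then $K_{r,r'}$ contains $K_{3,3}$ as a subgraph, hence $K_{r,r'}$ is nonplanar; by item \eqref{badc:nonpl} of \autoref{thm:badcases}, $K_{r,r'}$-subdivisions do not have the induced \ep{} property. (Alternatively, since $r' \ge r \ge 3$ one could invoke item \eqref{badc:bip} after noting $K_{2,n}$ with $n \ge 3$ is an induced subgraph, but one must be careful: $K_{2,n}$ is not an induced subgraph of $K_{r,r'}$ for $r \ge 3$, so the nonplanarity route via item \eqref{badc:nonpl} is the clean one.) Either way, the negative direction follows.

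For the \emph{positive direction}, assume $r \le 1$ or $r' \le 2$. If $r \le 1$, then (since $r,r'$ are positive integers) $r = 1$ and $K_{1,r'}$ is a star, which is a tree with exactly one vertex of degree more than $2$; by the remark following \autoref{thm:badcases} (equivalently, \autoref{l:pathlinear}), every subdivision of a star contains the star as an induced subgraph, so $K_{1,r'}$-subdivisions have the induced \ep{} property with a linear bounding function. If $r' \le 2$, then since $r \le r'$ we have $r,r' \in \{1,2\}$, so $K_{r,r'}$ is one of $K_{1,1}$, $K_{1,2}$, or $K_{2,2} = C_4$. The first two are again stars (or a single edge), handled as above. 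The case $K_{2,2} = C_4$ is exactly \autoref{th:kim17} (Kim and the first author), reformulated as the statement that $C_4$-subdivisions have the induced \ep{} property. This disposes of all cases in the positive direction.

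The proof is essentially a bookkeeping argument combining \autoref{l:bip}, items \eqref{badc:bip} and \eqref{badc:nonpl} of \autoref{thm:badcases}, the forest remark / \autoref{l:pathlinear}, and \autoref{th:kim17}. The only point requiring a moment's care — the ``main obstacle'', such as it is — is the bipartite case $r \ge 3$, $r' \ge 3$: one should not try to force an induced $K_{2,3}$ inside $K_{r,r'}$ (it is not induced there), but instead appeal to nonplanarity via item \eqref{badc:nonpl}. With that caveat noted, assembling the equivalence is routine.

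\begin{proof}
  \textbf{($\Leftarrow$)} Suppose $r \le 1$ or $r' \le 2$. If $r = 1$, then $K_{1,r'}$ is a star, hence a forest in which each connected component has at most one vertex of degree more than $2$; by \autoref{l:pathlinear}, $K_{1,r'}$-subdivisions have the induced \ep{} property. If $r' \le 2$, then $r \le r' \le 2$, so $K_{r,r'}$ is $K_{1,1}$, $K_{1,2}$, or $K_{2,2}$. The graphs $K_{1,1}$ and $K_{1,2}$ are stars, handled as above. Finally, $K_{2,2} = C_4$, and \autoref{th:kim17} states precisely that $C_4$-subdivisions have the induced \ep{} property.

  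\textbf{($\Rightarrow$)} Suppose $r \ge 2$ and $r' \ge 3$; we show $K_{r,r'}$-subdivisions do not have the induced \ep{} property. If $r = 2$, this is \autoref{l:bip} applied with the integer $r' \ge 3$. If $r \ge 3$, then $K_{r,r'}$ contains $K_{3,3}$ as a subgraph and is therefore not planar; by item \eqref{badc:nonpl} of \autoref{thm:badcases}, $K_{r,r'}$-subdivisions do not have the induced \ep{} property.
\end{proof}
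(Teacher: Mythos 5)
Your proof is correct and follows exactly the same case analysis as the paper: $r=1$ via \autoref{l:pathlinear}, $r=r'=2$ as $C_4$ via \autoref{th:kim17}, $r=2,\ r'\ge 3$ via \autoref{l:bip}, and $r\ge 3$ via nonplanarity and \autoref{l:notplanar}. The parenthetical caveat you raise about $K_{2,3}$ not being induced in $K_{r,r'}$ for $r\ge 3$ is a sensible sanity check, though the paper does not bother to make it explicit.
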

	\begin{proof}
	  When $r=1$, the result holds by \autoref{l:pathlinear} (to be proved in the next section). When $r = r' = 2$, $K_{2,r}$ is a cycle on four vertices and the result holds by \autoref{th:kim17}.
	  When $r=2$ and $r'\ge 3$, the result follows from \autoref{l:bip}.
  In the case where $r\geq 3$, then $K_{r,r'}$ is not planar and the result follows from \autoref{l:notplanar}.
	\end{proof}

\subsection{Acyclic patterns}
\label{sec:acyclic}

We show in this section that if a graph is acyclic, then its subdivisions either have the induced \ep{} property with a linear bounding function (\autoref{l:pathlinear}), or they do not have the induced \ep{} property (\autoref{lem:twodegthree}). This proves item \eqref{enum:dich-forest} of \autoref{thm:dichotomy}.

\begin{lemma}\label{l:pathlinear}
Let $H$ be a graph whose connected components are paths or subdivided stars. Then the subdivisions of $H$ have the induced \ep{} property with a bounding function of order~$\mathcal{O}(k)$.
\end{lemma}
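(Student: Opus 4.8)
The plan is to reduce the problem to the very simple structure of $H$: each connected component is a path or a subdivided star, and the key fact to exploit is that a subdivision of a path is a path, and a subdivision of a subdivided star is again a subdivided star with the same set of leaves and the same center. Consequently, \emph{every} subdivision $H'$ of $H$ contains $H$ itself as an induced subgraph (contract the subdivision vertices back), so a graph $G$ contains an induced subdivision of $H$ if and only if it contains an induced subdivision of $H$ — but more usefully, if $\varphi$ is a model of $H$ in $G$ then the induced subgraph $\varphi(H)$ is already what we want to count, and we are free to look for models whose paths are as short as possible. I would first record this observation, together with the fact (already noted in the excerpt for the triangle-wall, but here needed abstractly) that the branch vertices of a model of $H$ determine the rest up to the choice of induced paths between them.

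The core of the argument is an induction on $k$, following the scheme announced in the introduction: take a \emph{minimal} induced subdivision $S$ of $H$ in $G$, i.e.\ a model $\varphi$ minimizing $\abs{V(\varphi(H))}$. First I would argue that if $G$ has $k$ pairwise vertex-disjoint induced $H$-subdivisions we are done, so assume not; in particular $S$ intersects every induced $H$-subdivision of $G - (\text{something small})$. The point is that because $H$ has no cycle, a minimal model uses, between any two branch vertices, a shortest induced path, and two branch vertices at graph-distance larger than $\abs{V(H)}$ cannot both be "needed": more precisely, I would show that the number of vertices of a minimal induced $H$-subdivision is bounded by a constant $c_H$ depending only on $H$. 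The quick way to see this: in a minimal model, each $\varphi(uv)$ is an \emph{induced} shortest path in $G$ between $\varphi(u)$ and $\varphi(v)$, and if any such path had length exceeding $\abs{E(H)}+1$ we could... — actually the cleanest route is different, and I flag it below as the main obstacle.

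Granting the bound $\abs{V(S)}\le c_H$, the inductive step runs as follows. Put $X_0 = N_G[V(S)]$ if $H$ is a collection of paths (so that deleting $X_0$ kills every $H$-subdivision through $V(S)$), but in fact we only need to delete $V(S)$ together with the vertices that could serve as interior of a path adjacent to $V(S)$ — since paths in a model are \emph{induced} and must avoid being shortcut, a bounded neighborhood suffices; let $X_0$ be $V(S)$ plus this bounded set, so $\abs{X_0}\le c_H'$. By minimality of $S$, every induced $H$-subdivision of $G$ that is disjoint from $V(S)$ lives in $G - X_0$... this is not quite automatic, so the honest statement is: every induced $H$-subdivision of $G$ meets $V(S)$, hence $S$ is itself a hitting set of constant size when $\upnu_H(G)=1$, i.e.\ $\uptau_H(G)\le c_H'$ when $\upnu_H(G)\le 1$. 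For general $k$, apply this with $G' = G - N_G[V(S)]$: either $\upnu_H(G-N_G[V(S)]) \ge k-1$, which together with $S$ gives $k$ disjoint induced $H$-subdivisions in $G$, or by induction $\uptau_H(G - N_G[V(S)]) \le c_H'\cdot(k-1)$, and then $N_G[V(S)]$ — of size at most $c_H'' \cdot \Delta(G)$? no — we must not use $\Delta(G)$. Here is the fix: we do not delete the whole neighborhood; we delete only $V(S)$ itself. Then any induced $H$-subdivision of $G$ disjoint from $V(S)$ is an induced $H$-subdivision of $G - V(S)$, so $\uptau_H(G)\le \abs{V(S)} + \uptau_H(G-V(S)) \le c_H + c_H\cdot(k-1) = c_H\cdot k$, provided we have already disposed of the case $\upnu_H(G)\ge k$. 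This gives the clean linear bound $f(k) = c_H\,k$.

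\textbf{Main obstacle.} The delicate point is proving that a minimal induced $H$-subdivision has a number of vertices bounded by a constant depending only on $H$ (and not on $G$). This is where the acyclicity of $H$ is essential: for a cycle $C_\ell$ this is false (induced cycles can be arbitrarily long), which is exactly why long induced cycles fail the Erd\H{o}s–P\'osa property. For forests whose components are paths or subdivided stars, the argument is that in a \emph{minimal} model each branch-to-branch path is a shortest induced path, and — using that there are no cycles in $H$ to force long detours — one can replace a long path by a short one: if $\varphi(uv)$ has length $\ge 2$, take instead the edge $\varphi(u)\varphi(v)$ if it exists, or reroute through a chord; the absence of other constraints (no cycle of $H$ passes through this path) means a rerouting in $G$ does not break inducedness of the rest of the model beyond a correction of bounded size. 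Making this rerouting argument fully rigorous — verifying that shortening one path does not create forbidden edges or collisions with the other, constantly many, paths and branch vertices — is the technical heart, and I would isolate it as a standalone claim: \emph{if $G$ has an induced $H$-subdivision, then it has one on at most $g(H)$ vertices for some function $g$.} Once that claim is in hand, the induction above closes immediately.
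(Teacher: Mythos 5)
Your proposal contains the paper's entire proof, but you've mislabeled it as having an open gap. In your first paragraph you correctly note that every subdivision of a path is a path and every subdivision of a subdivided star is a subdivided star with the same leaf/center structure, so \emph{every subdivision $H'$ of $H$ contains $H$ itself as an induced subgraph}. This observation already closes what you later call the ``main obstacle'': if $M$ is any induced $H$-subdivision in $G$, then $M$ (as a graph) contains $H$ as an induced subgraph, so $G$ itself contains an induced copy of $H$ on exactly $\abs{H}$ vertices, and hence a minimum-size induced $H$-subdivision in $G$ has exactly $\abs{H}$ vertices. No rerouting, no shortest-path replacement, no analysis of chords is needed; your sketched rerouting argument (``if any such path had length exceeding $\abs{E(H)}+1$ we could\dots'') is both unnecessary and not obviously repairable, but the point is moot. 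Once you grant $\abs{V(S)} = \abs{H}$, the recursion you write down --- delete $V(S)$, recurse on $G - V(S)$, obtaining $\uptau_H(G) \le \abs{H} + \uptau_H(G-V(S)) \le \abs{H} \cdot \upnu_H(G)$ --- is verbatim the paper's proof (the paper phrases it as a minimum counterexample rather than induction on $k$, but the content is identical). So: same approach, correct conclusion; the only issue is that you failed to notice that your own opening remark already disposed of the step you flagged as the technical heart.
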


\begin{proof}
  We show that for every graph $G$ we have $\uptau_H(G) \leq \upnu_H(G) \cdot |H|$. 
  Towards a contradiction we assume that the above statement has a counterexample $G$, that we choose to have the minimum number of vertices.
  Clearly $\upnu_H(G) \geq 1$. Let $M$ be an induced subdivision of $H$ with minimum number of vertices. As the connected components of $H$ are paths and subdivided stars, $M$ is a copy of $H$.
  Hence $|M| = |H|$. By minimality of $G$, we have $\uptau_H(G- V(M)) \leq \upnu_H(G-V(M))\cdot |H|\leq (\upnu_H(G)-1)\cdot |H|$.
  We deduce
  \begin{align*}
    \uptau_H(G) &\leq \uptau_H(G - V(M)) + |M|\\
    &\leq  (\upnu_H(G)-1)|H| + |H|\\
    &= \upnu_H(G)\cdot |H|.
  \end{align*}
  This contradicts the definition of~$G$.
\end{proof}

\begin{lemma}[item \eqref{badforest} of \autoref{thm:badcases}]\label{lem:twodegthree}
  Let $H$ be a forest, one connected component of which contains at least two vertices of degree at least 3. Then the subdivisions of $H$ do not have the induced \ep{} property.
\end{lemma}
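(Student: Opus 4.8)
The goal is to construct, for each $n$, a graph $G_n$ such that $G_n$ contains many induced $H$-subdivisions, any two of which intersect (so $\upnu_H(G_n)=\mathcal{O}(1)$), but hitting all of them requires $\Omega(n)$ vertices. The key structural fact about $H$ is that it has a connected component $T$ with two vertices $p,q$ of degree at least $3$; fix a path $P$ in $T$ from $p$ to $q$. Since $H$ is a forest, every $H$-subdivision $H'$ has a unique subtree corresponding to $T$, and inside it a unique path corresponding to $P$ joining the branch vertices images of $p$ and $q$. The idea is to build $G_n$ so that this ``$P$-part'' is forced to travel through a long, brittle bottleneck.

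\textbf{Construction.} I would take the triangle-wall $\Gamma_N$ (for $N$ a suitable multiple of $n$, or just $\Gamma_n$ with a rescaling) and attach two pendant gadgets, one near the top boundary vertices $a_0,\dots$ and one near the bottom boundary vertices $b_0,\dots$, each gadget being a copy of the rest of $H$ minus the interior of $P$ — i.e.\ the two ``halves'' of $H$ that hang off $p$ and $q$ (together with the other components of $H$, placed anywhere disjoint and in a way that creates no unwanted adjacencies, using that $\Gamma_n$ has no induced claw). The two halves are glued to the wall only through $p$'s image being forced into the $a$-side and $q$'s image into the $b$-side. Then any induced $H$-subdivision must route the subdivided path $P$ from the top of the wall to the bottom of the wall through $\Gamma_N$. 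Because distinct vertical columns $C_i$ of the wall are pairwise disjoint and each top-to-bottom path in $\Gamma_N$ must stay within a narrow band of columns, any two such $H$-subdivisions will share a vertex — this yields $\upnu_H(G_n)=\mathcal{O}(1)$. Conversely, by the usual triangle-wall argument (mirroring the proof of \autoref{l:bip}): for any small vertex set $X$, some block of consecutive columns avoids $X$ entirely, and within an $X$-free block one can route a chordless top-to-bottom path and complete it with the two (fixed, pendant) halves of $H$ to form an induced $H$-subdivision in $G_n - X$; hence $\uptau_H(G_n)=\Omega(n)$.

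\textbf{Key steps, in order.} (1) Fix $T$, $p$, $q$, $P$, and describe the two ``halves'' $H_p, H_q$ of $T$ obtained by deleting the interior of $P$, plus the remaining components $H_0$ of $H$. (2) Define $G_n$: take $\Gamma_N$, attach $H_p$ at the $a$-vertices and $H_q$ at the $b$-vertices via suitable identifications/edges, and add $H_0$ disjointly; verify carefully that no induced claw or spurious adjacency is introduced (this is where the claw-freeness of $\Gamma_n$ and the structure of the garlands matter). (3) Upper bound on $\upnu_H$: show the branch vertices of degree $\ge 3$ in any model $\varphi$ of $H$ must land in the gadgets (not in $\Gamma_N$, since wall vertices have bounded degree and the right non-adjacency pattern), force $\varphi(P)$ to cross the wall vertically, and argue two vertical crossings always meet. (4) Lower bound on $\uptau_H$: reuse the column-counting / garland-avoiding argument from \autoref{l:bip} essentially verbatim, building an induced $H$-subdivision in $G_n-X$ whenever $|X|$ is small. (5) Conclude $\upnu_H(G_n)=\mathcal{O}(1)$ while $\uptau_H(G_n)=\Omega(n)$, so $H$-subdivisions lack the induced \ep{} property.

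\textbf{Main obstacle.} The delicate part is step (2)–(3): ensuring that the attachment of $H_p$ and $H_q$ and the placement of $H_0$ genuinely force the degree-$\ge 3$ branch vertices of every induced $H$-subdivision into the gadgets, and force the $P$-path to traverse the wall, \emph{without} accidentally creating short alternative routes (e.g.\ via the cross-edges $z^i_j z^{i+1}_j$) or extra induced subdivisions that could be packed disjointly. One must exploit that $H$ is a forest — so the model has no cycles to ``hide'' branch vertices in — and the fact that induced subdivisions cannot live in dense parts, to pin down where branch vertices can go. Getting the gadget/adjacency bookkeeping right so that the induced-subgraph condition is respected on both the ``contains many'' side and the ``they all intersect'' side is the crux; once that is set up, both the packing upper bound and the covering lower bound follow the template already used for $K_{2,r}$ in \autoref{l:bip}.
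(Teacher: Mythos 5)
There is a genuine gap, and it is in exactly the place you flag as the ``crux.'' Your plan attaches a \emph{single} pendant gadget $H_p$ near the $a$-vertices and a \emph{single} gadget $H_q$ near the $b$-vertices (you say ``two pendant gadgets,'' ``the two (fixed, pendant) halves''). But then every induced $H$-subdivision in $G_n$ must use those two fixed gadgets, and one can destroy all of them by deleting the (constantly many) vertices through which the gadgets attach to the wall. This gives $\uptau_H(G_n)=\mathcal{O}(1)$, not $\Omega(n)$, and the lower bound in your step~(4) collapses. The wall by itself makes paths hard to kill, but the gadgets are the real bottleneck, and a single copy of each is trivially coverable.

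The paper's construction avoids this by attaching \emph{one copy per index}: for each $i\in\intv{0}{n-1}$, a fresh copy $T_i$ of $T$ hanging off $a_i$, a fresh copy $T'_i$ of $T'$ hanging off $b_i$, and a fresh copy $D_i$ of the remaining components $D$. That is what makes $\uptau$ grow linearly. To then keep $\upnu$ bounded, the paper adds all possible edges between the ``far halves'' $F_i\cup D_i$ and $F_j\cup D_j$ (and $F'_i\cup D_i$ and $F'_j\cup D_j$, and $F_i\cup D_i$ and $F'_j\cup D_j$) for $i\neq j$; this cocktail-party structure is what forces every branch vertex of any model $\varphi(H)$ into a single block $T_i\cup T'_i\cup D_i$ (the paper's Claim about degree-$3$ vertices), which in turn forces the $P$-path to run from $a_i$ to $b_i$ in $\Gamma_n$, and the $a_i\to b_i$ paths pairwise intersect by the row/column indexing. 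Your proposal does not include these inter-copy edges, and without them a model could mix pieces across indices, or worse, $n$ disjoint models could be assembled. Two smaller points: you should choose the two degree-$\ge 3$ vertices at \emph{minimum} distance (the paper's $u,u'$) so that the path between them is internally degree-$2$ in $H$ and the two ``halves'' genuinely partition the component; and the other components of $H$ cannot just be ``placed anywhere disjoint'' — they also need $n$ copies $D_i$ wired into the cocktail-party pattern, otherwise they give free packing room.
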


\begin{proof}
  Our goal is to construct an infinite family of graphs $(G_n)_{n \in \N_{\geq 1}}$ such that $\upnu_{H}(G_n) = \mathcal{O}(1)$ while $\uptau_{H}(G_n) = \Omega(n)$.
Let $J$ be a connected component of $H$ which contains at least two vertices of degree at least 3 and let $u$ and $u'$ be two such vertices, that are at minimal distance of each other. Let $\{v,v'\}$ be an edge of the path connecting $u$ to $u'$, with the convention that $u$ is closer to $v$ than to~$v'$. We call $T$ the connected component of $H-\{vv'\}$ that contains $v$, $T'$ that that contains $v'$, and $D$ the union of the remaining connected components. We also call $F$ (resp.\ $F'$) the graph obtained from $T$ (resp.\ $T'$) by removing the vertices of the path from $u$ to $v$ (resp. $u'$ to $v'$). See \autoref{fig:graphH} for an illustration.

  \begin{figure}[h]
    \centering
    \begin{tikzpicture}[rotate = 90]
      \begin{scope}
        \draw[every node/.style = black node] (0,0) node[label = 90:$v$] (v) {} --
        ++(0,0.75) node[normal, fill = white, inner sep = 1] {$\dots$} --
        ++(0,0.75) node[black node, label = 90:$u$] (u) {}
        (u) -- ++(90-70*1.5/4:1) node (u3) {}
        (u)  ++(90-70*0.5/4:1) node[normal, yshift = 0.12cm, rotate = -10,inner sep = 0] {\tiny $\vdots$}
        (u) -- ++(90+70*0.5/4:1) node (u2) {}
        (u) -- ++(90+70*1.5/4:1) node (u1) {};
        \draw (u1) -- ++(90+70*1.5/4-30:0.5) node (v11) {}
        (u1) -- ++(90+70*1.5/4+30:0.5) node (v12) {}
        (v11.center) -- (v12.center);
        \draw (u2) -- ++(90-70*0.5/4-30:0.5) node (v21) {}
        (u2) -- ++(90-70*0.5/4+30:0.5) node (v22) {}
        (v21.center) -- (v22.center);
        \draw (u3) -- ++(90-70*1.5/4-30:0.5) node (v31) {}
        (u3) -- ++(90-70*1.5/4+30:0.5) node (v32) {}
        (v31.center) -- (v32.center);
        \draw[rounded corners, color = gray] (u) ++(0.95, 0.55) -- ++(0, 1) -- ++(-1.9, 0) node[midway, label = 180:$F$] {} -- ++(0, -1) -- cycle;
        \draw[rounded corners, color = gray] (v) ++(1.15, -0.33) -- ++(0, 4.5) -- ++(-2.3, 0) node[midway, label = 180:$T$] {} -- ++(0, -4.5) -- cycle;
      \end{scope}
      \begin{scope}[yscale = -1]
        \draw[every node/.style = black node] (v) --
        ++(0,1) node[label = 90:$v'$] (vp) {} --
        ++(0,0.75) node[normal, fill = white, inner sep = 1] {$\dots$} --
        ++(0,0.75) node[black node, label = 90:$u'$] (u) {}
        (u) -- ++(90-70*1.5/4:1) node (u3) {}
        (u)  ++(90-70*0.5/4:1) node[normal, yshift = 0.12cm, rotate = -10,inner sep = 0] {\tiny $\vdots$}
        (u) -- ++(90+70*0.5/4:1) node (u2) {}
        (u) -- ++(90+70*1.5/4:1) node (u1) {};
        \draw (u1) -- ++(90+70*1.5/4-30:0.5) node (v11) {}
        (u1) -- ++(90+70*1.5/4+30:0.5) node (v12) {}
        (v11.center) -- (v12.center);
        \draw (u2) -- ++(90-70*0.5/4-30:0.5) node (v21) {}
        (u2) -- ++(90-70*0.5/4+30:0.5) node (v22) {}
        (v21.center) -- (v22.center);
        \draw (u3) -- ++(90-70*1.5/4-30:0.5) node (v31) {}
        (u3) -- ++(90-70*1.5/4+30:0.5) node (v32) {}
        (v31.center) -- (v32.center);
        \draw[rounded corners, color = gray] (u) ++(0.95, 0.55) -- ++(0, 1) -- ++(-1.9, 0) node[midway, label = 0:$F'$] {} -- ++(0, -1) -- cycle;
        \draw[rounded corners, color = gray] (vp) ++(1.15, -0.33) -- ++(0, 4.5) -- ++(-2.3, 0) node[midway, label = 0:$T'$] {} -- ++(0, -4.5) -- cycle;
        \draw (0,7) node[white node, minimum size  = 0.75cm] (d) {$D$};
      \end{scope}
    \end{tikzpicture}
    \caption{The forest $H$.}
    \label{fig:graphH}
  \end{figure}
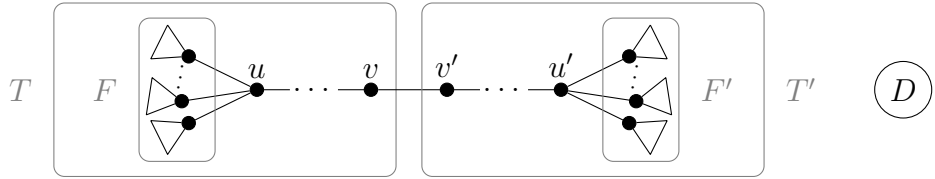
  
  Recall that the triangle-wall $\Gamma_n$ is defined in \autoref{sec:tools} and has special vertices called $a_i$ and~$b_i$.
  For every $n\ \in \N_{\geq 1}$, we construct the graph $G_n$ from a copy of $\Gamma_n$ as follows:
  \begin{enumerate}
  \item for every $i \in \intv{0}{n-1}$, we add a new copy $T_i$ of $T$ and add an edge from the copy of $v$ to $a_i$, and let $F_i$ be the copy of $F$ in $T_i$;
  \item for every $i \in \intv{0}{n-1}$, we add a new copy $T'_i$ of $T'$ and add an edge from the copy of $v'$ to $b_i$, and let $F_i'$ be the copy of $F$ in $T_i'$;
  \item for every $i \in \intv{0}{n-1}$, we add a new copy $D_i$ of $D$;
  \item for every distinct $i,j \in \intv{0}{n-1}$, we add all edges between $F_{i} \cup D_i$ and $F_{j} \cup D_j$, between $F'_i \cup D_i$ and $F'_j \cup D_j$ and between $F_i\cup D_i$ and $F'_j \cup D_j$.\label{e:compl}
  \end{enumerate}

\begin{figure}[h]
  \centering
  \begin{tikzpicture}[rotate = 90]
    \begin{scope}[every node/.style = black node, scale = 0.5]
      \foreach \evenrow in {0,2}{
        \begin{scope}[yshift = -2.732*\evenrow cm]
          \foreach \i/\l in {1/1, 5/3, 9/5, 13/7}{
            \draw (\i, 0) node (x{\l,\evenrow}) {} -- ++(1,0) node (y{\l,\evenrow}) {} -- ++(120:1) node (z{\l,\evenrow}) {} -- cycle;
          }
          \foreach \i/\l in {3/2, 7/4, 11/6, 15/8}{
            \draw (\i, 0) node (x{\l,\evenrow}) {} -- ++(1,0) node (y{\l,\evenrow}) {} -- ++(-120:1) node (z{\l,\evenrow}) {} -- cycle;
          }
          \foreach \i in {1,...,7}{
            \pgfmathparse{int(\i+1)}
            \draw (y{\i,\evenrow}) -- (x{\pgfmathresult,\evenrow});
          }
        \end{scope}
      }
      \foreach \oddrow in {1,3}{
        \begin{scope}[yshift = -2.732*\oddrow cm, yscale = -1]
          \foreach \i/\l in {1/1, 5/3, 9/5, 13/7}{
            \draw (\i, 0) node (x{\l,\oddrow}) {} -- ++(1,0) node (y{\l,\oddrow}) {} -- ++(120:1) node (z{\l,\oddrow}) {} -- cycle;
          }
          \foreach \i/\l in {3/2, 7/4, 11/6, 15/8}{
            \draw (\i, 0) node (x{\l,\oddrow}) {} -- ++(1,0) node (y{\l,\oddrow}) {} -- ++(-120:1) node (z{\l,\oddrow}) {} -- cycle;
          }
          \foreach \i in {1,...,7}{
            \pgfmathparse{int(\i+1)}
            \draw (y{\i,\oddrow}) -- (x{\pgfmathresult,\oddrow});
          }
        \end{scope}
      }

      \foreach \evenrow in {0,2}{
        \foreach \i in {2, 4,...,8}{
          \pgfmathparse{int(\evenrow+1)}
          \draw (z{\i,\evenrow}) -- (z{\i,\pgfmathresult});
        } 
      }
      \foreach \oddrow in {1}{
        \foreach \i in {1, 3,...,7}{
          \pgfmathparse{int(\oddrow + 1)}
          \draw (z{\i,\oddrow}) -- (z{\i,\pgfmathresult});
        }
      }
      \begin{pgfonlayer}{bg}    
        \foreach \row / \next in {0,...,3}{
          \foreach \column in {1,2}{
            \fill[green!80!black] (x{\column,\row}) circle (0.3cm);
            \fill[green!80!black] (y{\column,\row}) circle (0.3cm);
            \fill[green!80!black] (z{\column,\row}) circle (0.3cm);
          }
          \draw[green!80!black, line join = miter, line cap = round, line width = 0.15cm]
          (y{1,\row}) -- (x{1,\row}) -- (z{1,\row}) -- (y{1,\row}) -- (x{2,\row}) -- (z{2,\row}) -- (y{2,\row}) -- (x{2,\row});
        }
        \draw[green!80!black, line join = round, line cap = round, line width = 0.15cm] (z{2,0}) -- (z{2,1}) (z{1,1}) -- (z{1,2}) (z{2,2}) -- (z{2,3})
        (z{1,0}) -- ++(0,3) node (u) {};
        \fill[green!80!black] (u) circle (0.3cm);
        \draw[scale =2, green!80!black, line join = round, line cap = round, line width = 0.15cm]
        (u) -- ++(90-70*1.5/4:1) node[draw = none] (u3) {}
        (u) -- ++(90+70*0.5/4:1) node[draw = none] (u2) {}
        (u) -- ++(90+70*1.5/4:1) node[draw = none] (u1) {};
        \fill[green!80!black] (u1) circle (0.3cm);
        \fill[green!80!black] (u2) circle (0.3cm);
        \fill[green!80!black] (u3) circle (0.3cm);
        \fill[green!80!black, rotate = -60, shift = {(u1)}] (0,0) -- ++(150:1) -- ++(-90:1) -- cycle;
        \fill[green!80!black, rotate = -100, shift = {(u2)}] (0,0) -- ++(150:1) -- ++(-90:1) -- cycle;
        \fill[green!80!black, rotate = -120, shift = {(u3)}] (0,0) -- ++(150:1) -- ++(-90:1) -- cycle;
        \draw[green!80!black, line join = round, line cap = round, line width = 0.15cm]
        (z{1,3}) -- ++(0,-3) node (u) {};
        \fill[green!80!black] (u) circle (0.3cm);
        \begin{scope}[yscale = -1]
        \draw[scale =2, green!80!black, line join = round, line cap = round, line width = 0.15cm]
        (u) -- ++(90-70*1.5/4:1) node[draw = none] (u3) {}
        (u) -- ++(90+70*0.5/4:1) node[draw = none] (u2) {}
        (u) -- ++(90+70*1.5/4:1) node[draw = none] (u1) {};
        \fill[green!80!black] (u1) circle (0.3cm);
        \fill[green!80!black] (u2) circle (0.3cm);
        \fill[green!80!black] (u3) circle (0.3cm);
        \fill[green!80!black, rotate = -60, shift = {(u1)}] (0,0) -- ++(150:1) -- ++(-90:1) -- cycle;
        \fill[green!80!black, rotate = -100, shift = {(u2)}] (0,0) -- ++(150:1) -- ++(-90:1) -- cycle;
        \fill[green!80!black, rotate = -120, shift = {(u3)}] (0,0) -- ++(150:1) -- ++(-90:1) -- cycle;
      \end{scope}
      \end{pgfonlayer}

    \end{scope}
    \foreach \i/\l in {1/0, 2/1, 3/2,4/3}{
      \pgfmathparse{int(2*\i-1)}
      \draw (z{\pgfmathresult,0}) node[label=90:$a_\l$] (a) {};
      \begin{scope}
        \draw[every node/.style = black node] (a)
        ++(0,0.75) node[normal, fill = none, inner sep = 1] (midn) {$\dots$}
        ++(0,0.75) node[black node, label = 90:$u_\l$] (u) {}
        (a) -- (midn.east) (midn.west) -- (u)
        (u) -- ++(90-70*1.5/4:1) node (u3) {}
        (u)  ++(90-70*0.5/4:1) node[normal, yshift = 0.12cm, rotate = -10,inner sep = 0] {\tiny $\vdots$}
        (u) -- ++(90+70*0.5/4:1) node (u2) {}
        (u) -- ++(90+70*1.5/4:1) node (u1) {};
        \draw[rotate = -60, shift = {(u1)}] (0,0) -- ++(150:0.5) -- ++(-90:0.5) -- cycle;
        \draw[rotate = -100, shift = {(u2)}] (0,0) -- ++(150:0.5) -- ++(-90:0.5) -- cycle;
        \draw[rotate = -120, shift = {(u3)}] (0,0) -- ++(150:0.5) -- ++(-90:0.5) -- cycle;
        \draw[rounded corners, color = gray] (u) ++(0.95, 0.55) -- ++(0, 1) -- ++(-1.9, 0) -- ++(0, -1) -- cycle
        (u) ++(0, 2) node {$F_\l$};
    \end{scope}
    }
    \foreach \i/\l in {1/0, 2/1, 3/2,4/3}{
      \pgfmathparse{int(7-2*\i +2)}
      \draw (z{\pgfmathresult,3}) node[label=90:$b_{\l}$] (b) {};
      \begin{scope}[yscale = -1]
        \draw[every node/.style = black node] (b)
        ++(0,0.75) node[normal, fill = none, inner sep = 1] (midn) {$\dots$}
        ++(0,0.75) node[black node, label = 90:$u'_{\l}$] (u) {}
        (b) -- (midn.west) (midn.east) -- (u)
        (u) -- ++(90-70*1.5/4:1) node (u3) {}
        (u) -- ++(90+70*0.5/4:1) node (u2) {}
        (u) -- ++(90+70*1.5/4:1) node (u1) {};
        \draw[rotate = -60, shift = {(u1)}] (0,0) -- ++(150:0.5) -- ++(-90:0.5) -- cycle;
        \draw[rotate = -100, shift = {(u2)}] (0,0) -- ++(150:0.5) -- ++(-90:0.5) -- cycle;
        \draw[rotate = -120, shift = {(u3)}] (0,0) -- ++(150:0.5) -- ++(-90:0.5) -- cycle;
        \draw[rounded corners, color = gray] (u) ++(0.95, 0.55) -- ++(0, 1) -- ++(-1.9, 0) -- ++(0, -1) -- cycle
        (u) ++(0, 2) node {$F'_{\l}$};
      \end{scope}
    }
  \end{tikzpicture}
  
  \caption{The graph $G_4$. The $D_i$'s and the edges between the $V(F_i) \cup V(F'_i)$'s (framed) are not depicted.}
  \label{fig:g4}
\end{figure}
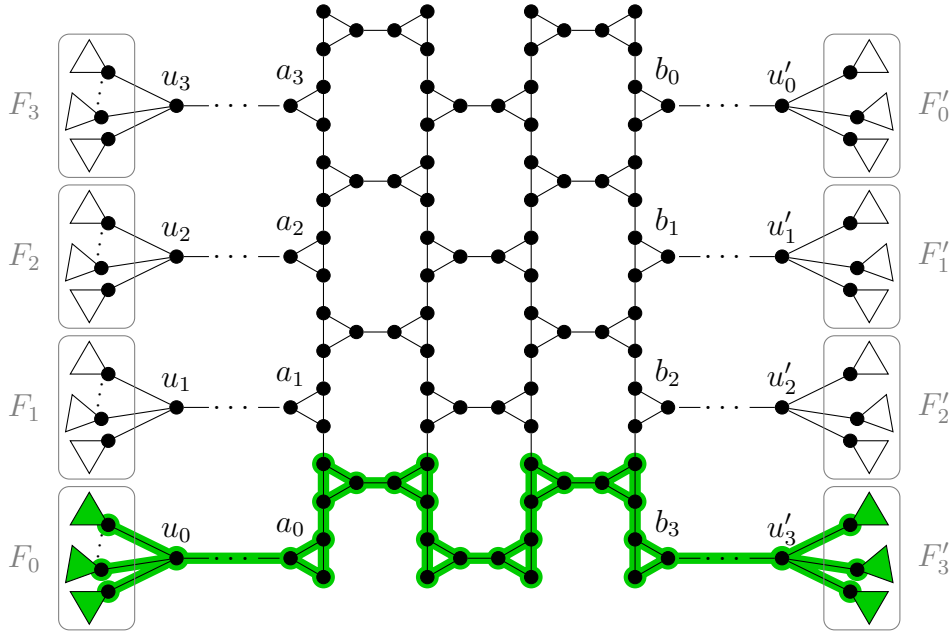

  Let $n \in \N$ and let us show that every induced subdivision of $H$ in $G_n$ has a very restricted position. For convenience we refer to the copy of $\Gamma_n$ in $G_n$ as~$\Gamma_n$ and if $w$ is a vertex of $H$, we refer by $w_i$ to its copy in $T_i$, $T_i'$, or $D_i$. We denote by $P_{xy}$ the unique path of $T$ (resp.\ $T'$) with endpoints $x,y \in V(T)$ (resp.\ $x,y \in V(T')$) and similarly for $P_{x_iy_i}$ in $T_i$ and~$T'_i$.
  
  Let $\varphi$ be a model of $H$ in $G_n$.

  \begin{claim}\label{c:deg3}
    There is an integer $i\in \intv{0}{n-1}$ such that every vertex of degree at least 3 of $\varphi(H)$ belongs to $V(T_i) \cup V(T_i') \cup V(D_i)$.
  \end{claim}
  \begin{proofclaim}
    Let $x,y$ be two vertices of degree 3 in $\varphi(H)$. As mentioned above, these vertices do not belong to $\Gamma_n$. Let us assume towards a contradiction that for distinct $i,j \in\intv{0}{n-1}$ we have
    \[
      x \in V(T_i) \cup V(T_i') \cup V(D_i)\quad \text{and} \quad y \in V(T_j) \cup V(T_j') \cup V(D_j).
    \]
    The first case we consider is when $x \in \{u_i, u'_i\}$. Then at most one of its neighbors (in $\varphi(H)$) belongs to $P_{u_iv_i}$ or $P_{u_i'v_i'}$ so $x$ has two neighbors $x_1$ and $x_2$ that belong to $V(F_i) \cup V(F_i') \cup V(D_i)$.
    If additionally $y \in \{u_j, u'_j\}$, then we can similarly deduce that it has (in $\varphi(H)$) two neighbors that belong to $V(F_j) \cup V(F_j')$. By construction they are both adjacent to $x_1$ and $x_2$ and form together with them an induced $C_4$, a contradiction. We deduce $y \notin \{u_j, u'_j\}$, i.e. $y \in V(F_j) \cup V(F_j')$. But then $y$ is adjacent to $x_1$ and $x_2$, which again forms an induced $C_4$ and hence is not possible.
    Notice that the case where  $x \notin \{u_i, u'_i\}$ and $y \in \{u_j, u'_j\}$ is symmetric.
    
    The only remaining case to consider is then when $x \notin \{u_i, u'_i\}$ and $y \notin \{u_j, u'_j\}$. Then $xy \in E(\varphi(H))$.
    At most one neighbor of $x$ (in $\varphi(H)$) is $u_i$, so $x$ has a neighbor $x_1\neq y$ in $V(F_k) \cup V(F_k') \cup V(D_k)$ for some $k\in \intv{1}{n}$. As this neighbor cannot be adjacent to $y$ without creating a cycle, it belongs in fact to $V(F_j) \cup V(F_j') \cup V(D_j)$. Symmetrically, $y$ has a neighbors distinct from $x$ in $V(F_i) \cup V(F_i') \cup V(D_i)$. But then $G_n[\{x,x_1,y,y_1\}]$ contains a cycle, a contradiction.
    This concludes the proof.
  \end{proofclaim}

  Let $i$ be as in the statement of \autoref{c:deg3}. Observe that $\varphi(H)$ contains no vertex of $V(T_j) \cup V(T_j') \cup V(D_j)$ for $j \in \intv{0}{n-1}\setminus\{i\}$, otherwise it would contain a cycle:
  \[
    V(\varphi(H)) \subseteq V(T_i) \cup V(T_i') \cup V(D_i) \cup V(\Gamma_n).
  \]
  By construction $|E(T_i \cup T_i' \cup D_i)| = |E(H)| - 1$. We deduce that $\varphi(H)$ contains a path $Q$ from a vertex $x$ of $T_i$ to a vertex $y$ of $T_i'$. 
  As $Q$ does not intersect $V(T_j) \cup V(T_j') \cup V(D_j)$ for $j \in \intv{0}{n-1}\setminus\{i\}$, a subpath of it links $a_i$ to $b_i$ in $\Gamma_n$.
  
We proved that if there is an induced subdivision of $H$ in $G_n$, it contains an path of $\Gamma_n$ from $a_i$ to $b_i$ for some $i\in \intv{0}{n-1}$.
Notice that two such paths meet for distinct values of~$i$.
Consequently, $\upnu_H(G_n) \leq 1$. On the other hand, $G_p[V(T_i) \cup V(P) \cup V(T'_i) \cup V(D_i)]$ is an induced subdivision of $H$ for every $i \in \intv{1}{n}$ and every chordless path $P$ of $\Gamma_n$ connecting $a_i$ to~$b_i$. Hence $\upnu_H(G_n) = 1$.

We now show that $\uptau_H(G_n) = \Omega(n)$.

\begin{claim}
For every $n\in \N$, $\uptau_H(G_n) \geq \frac{n}{2}$.  
\end{claim}

\begin{proofclaim}
  This proof is similar to the end of the proof of \autoref{l:bip}.
  Let $n \in \N$. We consider a set $X$ of $\ceil{\frac{n}{2}} - 1$ vertices of $G_n$ and show that $G_n - X$ contains a induced subdivision of~$H$.
For every $i \in \intv{0}{n-1}$, we set
\[C_i^+ = V(T_i) \cup V(D_i) \cup V(T'_{p-i-1}) \cup C_i.\]
Intuitively $C_i$ contains the vertices of the copies $T_i$ and $T_{p-i+1}$, a path linking them, and $D_i$; see \autoref{fig:g4} for a depiction of $C_1$ in~$G_4$ (in green).
As in the proof of \autoref{l:bip} we deduce the existence of an integer $i$ such that none of $C_i^+$ and $C_{n-i-1}^+$ intersects $X$.
As $|X|<n$, for some $j\in \intv{0}{n-1}$ the garland $Q_j$ (which is a subgraph of $\Gamma_n$) does not contain a vertex of $X$. Besides, any of these garlands intersect each of $C_0, \dots, C_{n-1}$. We deduce that in $G - X$, $C_{i}$ and $C_{n-i-1}$ (which by definition of $i$ do not intersect $X$) belong to the same connected component. An induced subdivision of $H$ can then be found by connecting the vertex $a_i$ of $T_i$ to the vertex $b_i$ of $T_{n-i-1}$ using a chordless path. As this argument works for every $X \subseteq V(G)$ such that $|X| \leq \ceil{\frac{n}{2}} - 1$, we deduce that $\uptau_H(G_n) \geq \frac{n}{2}$.
\end{proofclaim}
We constructed a family $(G_n)_{n \in \N}$ of graphs such that $\upnu_H(G_n) = \mathcal{O}(1)$ and $\uptau_H(G_n) = \Omega(n)$. This proves that subdivisions of $H$ do not have the induced \ep{} property.
\end{proof}

\section{Negative results using hypergraphs}\label{sec:hypergraph}
\label{sec:neghype}

We prove in this section negative results about the induced \ep{} property of the subdivisions of graphs containing long cycles and of graphs that have an edge away from a cycle, which are respectively items \eqref{badc:c5} and \eqref{badc:2v} of \autoref{thm:badcases}.
We start with an easy lemma.
\begin{lemma}\label{lem:subdivisionlemma}
Let $H$ be a graph containing a cycle $C$, let $H'$ be a subdivision of $H$, and let $e$ be an edge of $H'$ contained in the subdivision of $C$.
Then $H'-\{e\}$ does not contain an induced subdivision of $H$.
\end{lemma}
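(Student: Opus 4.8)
The plan is to argue via the \emph{cycle rank} $\mu(G) = \abs{E(G)} - \abs{V(G)} + c(G)$, where $c(G)$ is the number of connected components of $G$ (equivalently, $\mu(G) = \dim \mathcal{Z}(G)$ for the cycle space $\mathcal{Z}(G)$ over $\mathbb{F}_2$). Two elementary observations drive everything. First, subdividing an edge adds exactly one vertex and one edge and creates no new component, so $\mu$ is invariant under subdivision; hence $\mu(M) = \mu(H)$ for \emph{every} subdivision $M$ of $H$, in particular $\mu(H') = \mu(H)$. Second, $\mu$ is monotone under taking subgraphs: deleting an edge changes $\mu$ by $-1$ or $0$ according to whether the edge lies on a cycle or is a bridge, and deleting a vertex of degree $d$ whose removal splits its component into $p \le d$ pieces changes $\mu$ by $p - d \le 0$; iterating, $\mu(M') \le \mu(G)$ whenever $M'$ is a subgraph of $G$. (Alternatively, this second observation is immediate from the inclusion $\mathcal{Z}(M') \subseteq \mathcal{Z}(G)$ obtained by extending edge sets by zeros.)

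With these in hand the proof is short. Since $e$ lies on the cycle of $H'$ that is the subdivision of $C$, $e$ is not a bridge of $H'$, so deleting it leaves the vertex set and the number of components unchanged while removing one edge; thus $\mu(H' - \{e\}) = \mu(H') - 1 = \mu(H) - 1$. Now suppose towards a contradiction that $H' - \{e\}$ contains an induced subdivision $M$ of $H$. By the first observation $\mu(M) = \mu(H)$, while by the second observation $\mu(M) \le \mu(H' - \{e\}) = \mu(H) - 1$, a contradiction.

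I do not anticipate a genuine obstacle: the only point needing a little care is the subgraph-monotonicity of $\mu$, which is standard and can be dispatched with the one-line case analysis above (or via the cycle-space inclusion). Note that the argument uses neither connectedness of $H$ nor the hypothesis that $M$ is \emph{induced}; it in fact establishes the stronger statement that $H' - \{e\}$ contains no subdivision of $H$ as a subgraph at all.
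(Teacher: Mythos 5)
Your proof is correct. It runs along the same high-level template as the paper's: find a graph invariant that is (i) invariant under subdivision, (ii) monotone under taking subgraphs, and (iii) strictly decreased by removing an edge that lies on a cycle, and then derive a contradiction. The difference is the choice of invariant. The paper uses the raw count of distinct cycles in the graph, for which monotonicity under subgraphs is immediate (every cycle of a subgraph is a cycle of the supergraph), and strict decrease under removal of a cycle edge is equally immediate. You use the cyclomatic number $\mu(G) = \abs{E(G)} - \abs{V(G)} + c(G)$, i.e.\ the dimension of the $\mathbb{F}_2$-cycle space. This is a genuinely different invariant (e.g.\ $K_4$ has $\mu = 3$ but seven cycles), and it buys you a cleaner linear-algebraic footing: subgraph-monotonicity follows from the cycle-space inclusion $\mathcal{Z}(M') \subseteq \mathcal{Z}(G)$, and the ``strictly decreases by exactly $1$ when deleting a non-bridge edge'' step is crisp. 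The paper's choice is slightly more elementary in that it avoids any discussion of vertex deletion or bridges; yours is slightly more robust and more standard as a tool. Both establish, as you note, the stronger statement with ``induced'' dropped.
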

\begin{proof}  
  For a graph $G$, let $c(G)$ denote the number of distinct (and not disjoint) cycles in $G$.
  Observe that there is a bijection between the cycles of $G$ and those of a subdivision $G'$ of $G$, hence $c(G) = c(G')$.
  Besides, if $G'$ is a subgraph of $G$, then every cycle of $G'$ is a cycle of $G$, hence $c(G') \leq c(G)$.

  As $e$ belongs to a cycle, $c(H'- \{e\}) < c(H') = c(H)$.
  If $H'-\{e\}$ contained a subdivision $H''$ of $H$ we would have the following contradiction:
  \[c(H) = c(H'') \leq c(H'-\{e\}) < c(H') = c(H).\qedhere \]
\end{proof}

Our proofs rely on a suitable modification of the construction given in \cite{Kim2017} to show that 
for $\ell\ge 5$, $C_{\ell}$-subdivisions have no induced \ep{} property.

A pair $(X,E)$ consisting of a set $X$ and a family $E$ of non-empty subsets of $X$
is called a \emph{hypergraph}.
Each element in $E$ is called a \emph{hyperedge}, and 
for a hypergraph $H$, let $E(H)$ denote the set of hyperedges in $H$. 
A subset $Y$ of $X$ is called a \emph{hitting set} if for every $F\in E$, $Y\cap F\neq \emptyset$.
For positive integers $a,b$ with $a\geq b$, 
let $U_{a,b}$ be the hypergraph $(X, E)$ such that 
$X=\intv{1}{a}$ and $E$ is the set of all subsets of $X$ of size $b$.
It is not hard to observe that for every positive integer $k$, every two hyperedges of $U_{2k-1, k}$ intersect and that
the minimum size of a hitting set of $U_{2k-1, k}$ is precisely~$k$.

The bipartite graph $UB_k$ with bipartition $(A,B)$ is defined as follows:
\begin{itemize}
\item $A=\{1, \ldots, 2k-1\}$;
\item for every hyperedge $F=\{a_1, a_2, \ldots, a_k\}$ of $U_{2k-1, k}$ with $a_1< \cdots <a_k$, we add fresh vertices $p_1^F, \dots, p^F_{k+1}$ to $B$ and the edges of the path $P_F=p^F_1a_1p^F_2a_2 \cdots p^F_ka_kp^F_{k+1}$.
\end{itemize}
We also set $\mathcal{P}(UB_k)=\{P_F:F\in E(U_{2k-1, k})\}$.

We are ready to prove one of the main results of this section.
\begin{theorem}[item \eqref{badc:c5} of \autoref{thm:badcases}]\label{thm:cyclelength5}
If a graph $H$ contains an induced cycle of length at least $5$, then 
the subdivisions of $H$ have no induced \ep{} property.
\end{theorem}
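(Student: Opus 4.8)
The plan is to adapt the $UB_k$-based construction to the graph $H$ with a long induced cycle, exactly in the spirit of the $C_\ell$ case from \cite{Kim2017}. Let $C$ be an induced cycle of $H$ of length $\ell \ge 5$, and write $H = C \cup R$ where $R$ accounts for the rest of $H$ (vertices and edges not on $C$). First I would fix a long path $P^\star$ inside the subdivision of $C$ — concretely, pick an edge $e_0$ of $C$ and think of $C$ as an $e_0$-path $P^\star$ together with $e_0$. For each $n$, build $G_n$ from $UB_{n}$ (with bipartition $(A,B)$ and path family $\mathcal P(UB_n)$) by: replacing each path $P_F \in \mathcal P(UB_n)$ with a sufficiently long \emph{induced} path so that it can serve as the subdivided $e_0$-path $P^\star$; attaching at the two ``ends'' of this path a private copy of the rest of the structure needed to complete an $H$-subdivision (i.e.\ a subdivision of $H - e_0$, which includes the two endpoints of $e_0$ and all of $R$), using fresh vertices for each $F$; and finally adding edges between the ``$R$-parts'' of distinct copies so that no induced $H$-subdivision can mix material from two different $F$'s. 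Since $\ell \ge 5$, the cycle $C$ in $H$ is induced of length at least $5$, which is exactly what forces any induced $H$-subdivision living in $G_n$ to route its subdivided-$C$ entirely through a single path $P_F$: a short cycle (length $\le 4$) could be created by shortcuts among the attached parts, but a long induced cycle cannot, so the $C$-part has to be one of the long induced paths we installed, and \autoref{lem:subdivisionlemma} prevents using only a proper subpath.

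The two quantitative claims are then: $\upnu_H(G_n) = \mathcal O(1)$ and $\uptau_H(G_n) = \Omega(n)$. For the packing bound, I would argue that every induced $H$-subdivision in $G_n$ uses the subdivided cycle $C$ as (essentially) one of the paths $P_F$, and since any two hyperedges of $U_{2k-1,k}$ intersect — here with $a = 2n{-}1$, $b = n$, so in $A$ any two of the paths $P_F, P_{F'}$ share a vertex of $A$ — any two such induced $H$-subdivisions meet. A little care is needed because the attached ``rest of $H$'' copies also occupy vertices and are pairwise joined by edges; but the completeness added in the last construction step ensures an induced $H$-subdivision cannot spread its $R$-part over two copies without creating a triangle or $C_4$, so it lives ``above'' a single $P_F$, giving $\upnu_H(G_n)\le 1$ (or a small constant). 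For the covering bound, I would mimic the hitting-set argument: a set $X$ of $< n$ vertices misses the vertices of some path $P_F$ entirely (because the hitting number of $U_{2n-1,n}$ is $n$, so for $|X\cap A| < n$ there is a hyperedge $F$ with $F\cap X = \emptyset$, and then also make the internal vertices of $P_F$ and the single attached copy for $F$ survive by a counting/pigeonhole step), whence $G_n - X$ still contains an induced $H$-subdivision. Hence $\uptau_H(G_n)=\Omega(n)$, and the two bounds together rule out the induced \ep{} property.

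The main obstacle, and where most of the work goes, is the \emph{induced}-subgraph bookkeeping in the packing bound: one must verify that in $G_n$ there is genuinely no ``rogue'' induced $H$-subdivision that either (i) uses a short path through the $A$-side as a shortcut (excluded precisely because $\ell \ge 5$: a subdivided $C$ of length $\ge 5$ cannot be realized by the length-$3$ segments $p^F_j a_j p^F_{j+1}$ alone, and mixing two different $P_F$'s forces an $A$-vertex of degree $\ge 3$ inside the would-be induced cycle, contradicting that $C$ is induced in $H$ and that the model uses induced paths), or (ii) places two degree-$\ge 3$ vertices of $H$ in different copies of the attached part — this is handled by the same $C_4$/triangle argument as in \autoref{lem:twodegthree} and \autoref{l:bip}, using the complete join between distinct copies. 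I would state this as an internal claim (``every induced $H$-subdivision of $G_n$ has all of its $R$-part and both endpoints of $e_0$ inside a single copy, and its subdivided $C$ equals a single $P_F$'') and prove it by the cycle-counting device of \autoref{lem:subdivisionlemma} together with a case analysis on where the high-degree vertices and the cycle $C$ land. Everything else — choosing path lengths long enough to host $P^\star$, checking $G_n$ has no accidental induced $H$-subdivision of a different flavor, the Gallai-type counting for which garland/copy survives $X$ — is routine once this claim is in place.
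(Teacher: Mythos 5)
Your overall blueprint is the same as the paper's: build $G_n$ by gluing a subdivided copy $H_P$ of $H$ onto each path $P_F$ of $UB_n$ (so $P_F$ becomes the subdivision of the chosen cycle edge), completely join the ``private'' parts of distinct copies, then argue that any induced $H$-subdivision is confined to one copy and must contain its $P_F$, so no two are disjoint, and finally count hyperedges for the covering lower bound. The use of \autoref{lem:subdivisionlemma} to forbid a proper subpath is also exactly the paper's move. So the plan is on target. However there are two closely related soft spots that the paper handles carefully and your sketch does not.

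First, the join. You say you add edges ``between the $R$-parts of distinct copies'', where you defined $R$ as the non-cycle part of $H$. That is not enough, and it is not what the paper does: the paper adds \emph{all} edges between $V(H_P)\setminus A$ and $V(H_{P'})\setminus A$, which crucially includes the $B$-vertices $p_1^F,\dots,p_{n+1}^F$ on $P_F$ itself as well as the subdivided non-$e_0$ part of $C$. If you leave those out, a ``rogue'' induced cycle can weave between two paths through shared $A$-vertices. Concretely, with $F=\{1,2,3\}$ and $F'=\{1,3,4\}$, the $6$-cycle $1\, p_2^F\, 2\, p_3^F\, 3\, p_2^{F'}\, 1$ is induced in $UB_n$ and remains induced if only your $R$-parts are joined; under the paper's full join it has the chord $p_2^F p_2^{F'}$ and is killed. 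For $H=C_5$ or $C_6$ (which certainly satisfy the hypothesis), such a rogue cycle would already be a complete induced $H$-subdivision living in neither $V(H_P)\cup A$ nor $V(H_{P'})\cup A$, and one can produce many pairwise-disjoint ones, so $\upnu$ would not be bounded. The full non-$A$ join is essential, not a detail.

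Second, your stated reason why mixing fails is not correct. You claim that mixing two $P_F$'s would put an $A$-vertex of degree at least $3$ inside the would-be induced cycle; but every vertex of a cycle has degree $2$ in that cycle, and in the rogue $6$-cycle above the $A$-vertex $1$ has exactly two neighbours on the cycle ($p_2^F$ and $p_2^{F'}$), so no contradiction arises at the $A$-vertices. The paper instead derives a contradiction from the \emph{non-$A$} side, using the complete join: if $\varphi(C)$ had vertices in two copies, then some non-$A$ vertex of $\varphi(C)$ would acquire three neighbours inside $\varphi(C)$, or four such vertices would form an induced $C_4$. This is carried out by a short case analysis on $|\varphi(C)\cap A|$ and $|\varphi(C)\setminus A|$, concluding that either possibility forces $|\varphi(C)|\le 4$, contradicting $\ell\ge 5$. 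You will need to replace the $A$-vertex-degree argument with this non-$A$-side argument, and for that you need the full join from the previous point. Once those two corrections are in place, the rest of your proposal (triangle/$C_4$ separation of the attached parts, the degree$\ge m$ obstruction to leaking into another copy, and the pigeonhole argument for the hitting set) matches the paper's proof.
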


\begin{proof}
Let $H$ be a graph containing an induced cycle $C$ of length at least $5$, and let $m:=\abs{H}$.
We construct an infinite family $(G_n)_{n\in \mathbb{N}_{\ge m}}$ of graphs such that 
$\upnu_{H}(G_n) = 1$ while $\uptau_{H}(G_n) \geq n$ for every $n \in \N_{\geq m}$.

Let $n\ge m$ be an integer and let us fix an edge $uv$ of~$C$.
Let $H'$ be the graph obtained from $H$ by subdividing each edge into a path of length $2n$. In particular, we subdivide $uv$ into $uw_1w_2 \cdots w_{2n-1}v$.
We construct a graph $G=G_n$ from $UB_n$ as follows:
\begin{itemize}
\item for each path $P$ of $\mathcal{P}(UB_n)$, we take a copy $H_P$ of $H'$ and identify the copy of $uw_1w_2 \cdots w_{2n-1}w$ with the path $P$, in the same order
\item then we add all possible edges between $V(H_P)\setminus  A$ and $V(H_{P'})\setminus A$ for two distinct paths $P$ and $P'$ in $\mathcal{P}(UB_n)$.
\end{itemize}

Note that $A$ is still an independent set in $G$.
Clearly, each $H_P$ is an induced subdivision of $H$. 

\begin{claim}\label{claim:fq1}
Every induced subdivision of $H$ in $G$ is contained in $V(H_P)\cup A$ for some $P\in \mathcal{P}(UB_n)$, and 
contains the path $P$.
\end{claim}
\begin{proofclaim}

Let $\varphi$ be a model of $H$ in $G$ and $C':=\varphi(C)$.

We first show that $C'$ is contained in $H_P$ for some $P\in \mathcal{P}(UB_n)$.
Since $A$ is an independent set,  $C'$ contains a vertex $v$ of $H_P- A$ for some $P\in \mathcal{P}(UB_n)$.
Suppose for contradiction that $C'$ is not contained in $H_P$.
Then $C'$ also contains a vertex $v'$ of $H_{P'} - A$ for some $P' \in \mathcal{P}(UB_n)$, $P'\neq P$.
By construction $v$ is adjacent to $v'$.

As $C'$ contains no triangle, $C'$ does not contain any vertex in $H_{P''}$ for $P''\in \mathcal{P}(UB_n)\setminus \{P, P'\}$.

We now analyze the order of $C'[A]$. 
Let us observe that if $I$ is an independent set of a cycle $M$, we have $\abs{N_M(I)}\ge \abs{I}$ since we can define an injective mapping from $I$ to $V(M)\setminus I$ by mapping $v\in V(T)$ to its left neighbor.
Since $C'[V(C')\cap A]$ is an independent set in $G$, the previous observation implies that $\abs{C' - A }\ge \abs{C'[V(C')\cap A]}$.

Suppose $C' - (\{v,v'\} \cup A)$ has two vertices $w$ and $w'$.
Then either $C'$ has a vertex with three neighbors in $C'$ (when both $w$ and $w'$ belong to one of $H_P - A$ and $H_{P'} - A$), or it contains an induced $C_4$ (when $\{w,w'\}$ intersects both $V(H_P - A)$ and $V(H_{P'} - A)$). This is not possible as $C'$ is an induced cycle of length at least 5.
We deduce that $C' -A$ has at most three vertices.

Observe that if $\abs{C'-A} =  \abs{V(C')\cap A} = 3$, then $C'$ is a $C_6$ and $C'-A$ is an independent set. This is not possible as $vv'\in E(C'-A)$, hence $|C'|\leq 5$.
Since $V(C')\cap A$ is an independent set in $C'$, we deduce $\abs{V(C')\cap A} \leq 2$. 

Now, if $|V(C')\cap A] = 2$, then $C' - A$ has three vertices and only one edge, contradicting the fact that it intersects both $H_{P} - A$ and $H_{P'} - A$.
Consequently, $C'$ has length 4, which is a contradiction.

We conclude that $C'$ is contained in $V(H_P)\cup A$ for some $P\in \mathcal{P}(UB_{n})$.
Note that every cycle in $H_P$ contains more than $m$ vertices of $H_P - V(P)$, by construction.
Therefore, if $\varphi(H)$ contains a vertex of $H_{P'} - A$ for some $P'\in \mathcal{P}(UB_{n})$, $P'\neq P$, 
then this vertex should have degree at least $m$ in $\varphi(H)$, which cannot happen in an induced subdivision of~$H$.
Thus, all other vertices are also contained in $V(H_P)\cup A$, which proves the first part of the claim.
Since $P$ is a part of the subdivisions of the cycle of $H$ in $H_P$, by \autoref{lem:subdivisionlemma}, the subdivision of $H$ should contain the path $P$.
\end{proofclaim}

Since every induced subdivision of $H$ in $G$ contains a path in $\mathcal{P}(UB_n)$, 
no two induced subdivisions of $H$ in $G$ are vertex-disjoint, hence $\upnu_{H}(G) = 1$. 
For every vertex subset $S$ of size at most $n-1$, 
there exists $P\in \mathcal{P}(G)$ such that $H_P$ contains no vertex of $S$.
Therefore $\uptau_{H}(G) \geq n$, as required.
\end{proof}

Using the same construction, we can also prove the following.

\begin{theorem}[item \eqref{badc:2v} of \autoref{thm:badcases}]\label{thm:twoadjacent}
Let $H$ be a graph containing a cycle $C$ and two adjacent vertices having no neighbors in $C$.
Then the subdivisions of $H$ do not have the induced \ep{} property.
\end{theorem}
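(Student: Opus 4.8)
The plan is to reuse, with one modification, the hypergraph construction from the proof of \autoref{thm:cyclelength5}. Fix an edge $uv$ of $C$ and two adjacent vertices $s,t$ of $H$ that have no neighbor in $C$; note that then $s,t\notin V(C)$ and $st\notin E(C)$. Put $m=\abs H$ and, for each integer $n\ge m$, let $H'$ be obtained from $H$ by subdividing every edge into a path of length $2n$, writing $uw_1w_2\cdots w_{2n-1}v$ for the subdivided edge $uv$. Exactly as in \autoref{thm:cyclelength5}, construct $G_n$ from $UB_n$ by taking for each $P\in\mathcal P(UB_n)$ a fresh copy $H_P$ of $H'$, identifying its copy of $uw_1\cdots w_{2n-1}v$ with $P$ (so that $w_1,w_3,\dots,w_{2n-1}$ become the vertices of $A$ lying on $P$), and then adding all edges between $V(H_P)\setminus A$ and $V(H_{P'})\setminus A$ for distinct $P,P'$. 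As before, $A$ remains independent and $G_n[V(H_P)]=H_P$, so each $H_P$ is an induced $H$-subdivision; moreover the covering bound $\uptau_H(G_n)\ge n$ is obtained verbatim from the end of the proof of \autoref{thm:cyclelength5} (any set of fewer than $n$ vertices avoids some copy $H_P$, because the hitting sets of $U_{2n-1,n}$ have size $n$ and there are more hyperedges avoiding $S\cap A$ than there are vertices of $S$ contained in a unique copy).

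The core of the proof is the packing bound $\upnu_H(G_n)=1$, i.e.\ the analogue of \autoref{claim:fq1}: every induced $H$-subdivision of $G_n$ is contained in $V(H_P)\cup A$ for a single $P\in\mathcal P(UB_n)$ and contains the whole path $P$. Let $\varphi$ be a model of $H$ in $G_n$ and set $C'=\varphi(C)$. The step that replaces the use of the long induced cycle in \autoref{thm:cyclelength5} is showing that $C'$ lies in a single copy $H_P$. Suppose not: since $A$ is independent and $C'$ is a cycle, $C'$ has a vertex $v\in V(H_P)\setminus A$; and since the vertices of $A$ not on $P$ are isolated in $G_n[V(H_P)\cup A]$, $C'$ cannot be contained in $V(H_P)\cup A$, so it also has a vertex $v'\in V(H_{P'})\setminus A$ with $P'\ne P$. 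Because $s,t\notin V(C)$, neither $s$ nor $t$ has a neighbor in $C$, and $st\notin E(C)$, the definition of a model gives that $\varphi(s)$ and $\varphi(t)$ have no neighbor on $C'$; in particular neither is adjacent to $v$ or to $v'$. But every vertex of $\bigcup_{P''}\big(V(H_{P''})\setminus A\big)$ is adjacent to $v$ or to $v'$ (to $v$ if it lies in a copy other than $H_P$, to $v'$ otherwise), so $\varphi(s),\varphi(t)\in A$. If the induced path $\varphi(st)$ is a single edge this contradicts the independence of $A$. Otherwise its first internal vertex $z$ lies in $N_{G_n}(\varphi(s))$, which — as $\varphi(s)\in A$ — is contained in $\bigcup_{P''}\big(V(H_{P''})\setminus A\big)$; hence $z$ is adjacent to $v$ or to $v'$, contradicting that $z$, being internal to the path $\varphi(st)$ with $\varphi(st)\ne\varphi(e)$ for all $e\in E(C)$, has no neighbor on $C'$ by the definition of a model. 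Thus $C'\subseteq V(H_P)$ for a single $P$.

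The remaining steps copy \autoref{thm:cyclelength5}. Because $H'$ is the subdivision of $H$ in which every edge becomes a path of length $2n$, every cycle of $H_P=H'$, and in particular $C'$, has more than $m$ vertices outside $V(P)$, hence outside $A$; so a vertex of $\varphi(H)$ lying in some $V(H_{P'})\setminus A$ with $P'\ne P$ would be adjacent to more than $m$ vertices of $C'\subseteq\varphi(H)$, giving it degree more than $m=\abs H$ in the subdivision $\varphi(H)$ — impossible. Hence $V(\varphi(H))\subseteq V(H_P)\cup A$, so $\varphi(H)$ is an induced subgraph of $H'$; as $P$ lies inside the subdivision of the cycle $C$ in $H_P$, \autoref{lem:subdivisionlemma} forces $\varphi(H)$ to contain every edge of $P$. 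Since any two paths of $\mathcal P(UB_n)$ share a vertex of $A$, no two induced $H$-subdivisions of $G_n$ are vertex-disjoint, so $\upnu_H(G_n)=1$ while $\uptau_H(G_n)\ge n$, and $H$-subdivisions have no induced \ep{} property. I expect the confinement of $C'$ to be the delicate point: with no long induced cycle available one must route the argument through the adjacent pair $s,t$, and in doing so carefully treat the case where $\varphi(st)$ is a long induced path and verify that the various ``no neighbor on $C'$'' claims are genuine consequences of the model axioms (they are, since $s,t\notin V(C)$, $s,t$ have no neighbor in $C$, and $st\notin E(C)$).
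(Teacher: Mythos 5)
Your proof is correct and follows essentially the same route as the paper: both reuse the hypergraph gadget $G_n$ from the proof of \autoref{thm:cyclelength5} and confine $C'=\varphi(C)$ to a single copy $H_P$ by exploiting the adjacent pair $s,t$ of $H$ with no neighbors in $C$. The only cosmetic difference is that the paper picks an arbitrary edge $vw$ of the path $\varphi(st)$ (which automatically has both endpoints non-adjacent to $C'$) and uses the independence of $A$ to conclude that one endpoint lies outside $A$, whereas you fix $v=\varphi(s)$, $w=\varphi(t)$ and case-split on whether $\varphi(st)$ is subdivided; the two are logically equivalent, and the paper's formulation is just a little shorter.
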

\begin{proof}
We use the same construction in \autoref{thm:cyclelength5}.
Similarly, we claim that 
every induced subdivision of $H$ in $G$ is contained in $V(H_P)\cup A$ for some $P\in \mathcal{P}(UB_n)$.

Let $\varphi$ be a model of $H$ in $G$ and let  $C':=\varphi(C)$.
Let $vw$ be an edge of $\varphi(H)$ having no neighbors in $C'$. Such an edge exists by the assumption that 
$H$ contains two adjacent vertices having no neighbors in $C$.

As $A$ is independent, one of $v$ and $w$ is not contained in $A$.
Without loss of generality, we assume $v\in V(H_P)\setminus A$ for some $P\in \mathcal{P}(UB_n)$.
Since $v$ has no neighbor in $C'$, 
$C'$ should be contained in $V(H_P)\cup A$.
By construction, $C'$ has at least $\abs{H}$ vertices in $H_P -A$.
Thus, if $\varphi(H)$ contained a vertex in $H_{P'} - A$ for some $P'\in \mathcal{P}(UB_n) \setminus \{P\}$, 
then it would have degree at least $\abs{H}$, which is not possible.
This implies that $V(\varphi(H))\subseteq V(H_P)\cup A$, and again by \autoref{lem:subdivisionlemma}, 
$\varphi(H)$ contains $P$.
The remaining part is the same as in \autoref{thm:cyclelength5}.
\end{proof}

\section{Negative results using the semi-grid}
\label{sec:segri}

In this section, we show that if $H$ contains a cycle and three vertices that have no neighbors in $C$, 
then $H$-subdivisions have no induced \ep{} property.
For example, we may consider a graph $F$ that is the disjoint union of $C_3$ and three isolated vertices.
Clearly, triangle-walls contain $F$ as an induced subgraph.
Also, in the construction based on hypergraphs in \autoref{sec:hypergraph}, 
we have an independent set $A$, and we can choose some $C_4$ by picking two vertices from each of $V(H_P)\setminus A$ and $V(H_{P'})\setminus A$ for some distinct $P, P'\in \mathcal{P}(UB_n)$, 
and then choose three vertices in $A$ having no neighbors in $C_4$ (for this, we choose $P, P'$ so that $A\setminus (V(P)\cup V(P'))$ has $3$ vertices).
Therefore, the constructions presented in the previous sections do not seem helpful to deal with this case. Let us introduce a new one.

For $n\in \mathbb{N}_{\ge 3}$, we define the \emph{semi-grid} $SG_n$ of order $n$ as follows:
\begin{enumerate}
\item $V(SG_n)=\{v_{i,j}:i, j\in \{1, \ldots, n\}, i\ge j\}$;
\item for every $i \in \intv{1}{n}$, we add the edges of the path $P_i$, defined as the concatenation of
  $
v_{i,1} \cdots v_{i,i} \quad \text{and} \quad v_{i,i} \cdots v_{n,i};
  $
\item additionally, we add an edge between two vertices $v$ and $w$ if there is no $i\in \{1, \ldots, n\}$ such that $v,w \in V(P_i)$.\label{e:compled}
\end{enumerate}

An example is depicted in \autoref{fig:sg6}.

\begin{figure}[h]
  \centering
  \begin{tikzpicture}
    \begin{scope}[every node/.style = black node]
    
      \foreach \i in {1,...,5}{
        \foreach \j in {1, ...,\i}{
          \draw (\j, -\i) node[label = 4:$v_{\i,\j}$] (N{\i,\j}) {};
        }
  }
  
      \foreach \i/\c in {2/purple, 3/green!50!black, 4/orange, 5/cyan}{
        \pgfmathtruncatemacro{\im}{int(\i-1)}
        \foreach \k in {1,...,\im}{
          \pgfmathtruncatemacro{\kp}{\k+1}
            \draw[very thick, color = \c] (N{\i,\k}) -- (N{\i,\kp});
          }
        }
        \foreach \i/\c in {1/blue, 2/purple, 3/green!50!black, 4/orange}{
          \foreach \k in {\i,...,4}{
            \pgfmathtruncatemacro{\kp}{\k+1}
            \draw[very thick, color = \c] (N{\k,\i}) -- (N{\kp,\i});
          }
        }
        \foreach \i/\c in {1/blue, 2/purple, 3/green!50!black, 4/orange, 5/cyan}{
          \draw (-0.25,-\i) node[normal, color = \c] {$P_{\i}$};
        }
      
    \end{scope}
  \end{tikzpicture}
  \caption[The graph $SG_5$.]{The graph $SG_5$. The $P_i$'s are drawn with different colors. The edges added in step \eqref{e:compled} are not depicted.}
  \label{fig:sg6}
\end{figure}
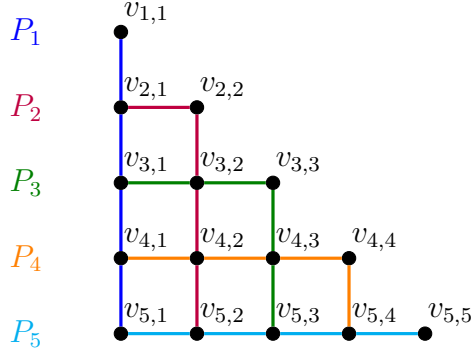

Observe that $SG_n$ satisfies the following, for every $i,j \in \intv{1}{n}$ with $i<j$:
\begin{itemize}
\item $P_i$ is an induced path;
\item for $i,j\in \{1,2, \ldots n\}$ with $i<j$, $V(P_i)\cap V(P_j) = \{v_{j,i}\}$;
\item every vertex of $SG_n$ belongs to at most two paths of $P_1, \ldots, P_n$;
\item $V(SG_n)\setminus V(P_i)$ has at least $n-2$ neighbors in $P_i$ (this is because a vertex in $V(SG_n)\setminus V(P_i)$ belongs to at most two paths of $P_1, \ldots, P_n$, which may contain non-neighbors of $v$ on their intersections with $P_i$). 
\end{itemize}

\begin{theorem}[item \eqref{badc:3v} of \autoref{thm:badcases}]\label{thm:cyclethreevertex}
Let $H$ be a graph that contains a cycle $C$ and three vertices having no neighbors in $C$.
Then the subdivisions of $H$ do not have the induced \ep{} property.
\end{theorem}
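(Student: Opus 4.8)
The plan is to mimic the strategy of Theorems~\ref{thm:cyclelength5} and \ref{thm:twoadjacent}, but replacing the bipartite graph $UB_n$ built from the hypergraph $U_{2k-1,k}$ by the semi-grid $SG_n$, which has the crucial feature that it contains no induced $C_3$, no induced $C_4$, and more generally that any small set of vertices is ``trapped'' inside a single path $P_i$. As before, let $C$ be a cycle of $H$ with three vertices $w_1,w_2,w_3$ having no neighbors in $C$, fix an edge $uv$ of $C$, and let $m:=\abs{H}$. For $n\ge m$, subdivide every edge of $H$ into a path of length roughly $2n$ to obtain $H'$; in particular $uv$ becomes a long path. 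I would then build $G=G_n$ from $SG_n$ by taking, for each $i\in\intv{1}{n}$, a copy $H_{P_i}$ of $H'$ in which the subdivided copy of $uv$ is identified with $P_i$, and then adding all possible edges between $V(H_{P_i})\setminus V(SG_n)$ and $V(H_{P_j})\setminus V(SG_n)$ for $i\neq j$ (so the ``pendant'' parts of distinct copies form a complete multipartite structure, while $SG_n$ itself keeps its own adjacencies). The $n-2$ extra vertices of $SG_n$ outside any fixed $P_i$ will play the role that the independent set $A$ played before.

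The core claim, proved exactly as \autoref{claim:fq1}, is that every induced subdivision of $H$ in $G_n$ lies in $V(H_{P_i})\cup V(SG_n)$ for a single $i$ and contains the whole path $P_i$. To prove it, take a model $\varphi$ and set $C':=\varphi(C)$. First I would locate $C'$: since $SG_n$ has no triangle or induced $C_4$ and every vertex outside a fixed $P_i$ has at most two neighbors sitting on $P_i$, the usual counting argument (an independent set in a cycle has at least as many neighbors as elements on the cycle, combined with the fact that $C'$ is induced of length $\ge$ whatever we need) forces $C'$ to be contained in $V(H_{P_i})\cup V(SG_n)$ for one $i$; the key point is that if $C'$ met the pendant part of two distinct copies it would create a triangle or induced $C_4$ through the complete-multipartite edges. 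Next, since by construction every cycle inside $H_{P_i}$ uses more than $m$ vertices outside $P_i$, a vertex of $\varphi(H)$ lying in another copy's pendant part would need degree $\ge m$ in $\varphi(H)$, impossible in an induced $H$-subdivision; so all of $V(\varphi(H))\subseteq V(H_{P_i})\cup V(SG_n)$. Finally \autoref{lem:subdivisionlemma} forces $P_i$ itself to be used, because $P_i$ is part of the subdivision of the cycle $C$. Given this claim, any two induced $H$-subdivisions share a vertex of $SG_n$ (their respective paths $P_i,P_j$ always intersect, at $v_{j,i}$), so $\upnu_H(G_n)=1$, while for any vertex set $S$ with $\abs{S}\le n-1$ some copy $H_{P_i}$ avoids $S$ entirely, giving $\uptau_H(G_n)\ge n$.

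The one genuinely new ingredient, and the place I expect the argument to need the most care, is verifying that the three vertices $w_1,w_2,w_3$ of $H$ with no neighbors in $C$ can actually be realized by $\varphi$ inside $SG_n$ rather than forcing things out of the trapped region; this is why we need three such vertices and why the semi-grid (rather than $UB_n$) is the right host: after the model places $C'$ inside one $P_i$-neighborhood, the branch vertices of $\varphi(w_1),\varphi(w_2),\varphi(w_3)$ must be embedded among the $n-2$ vertices of $SG_n$ outside $P_i$ (or in the pendant parts), and one must check the reverse direction of the claim — that $SG_n$ plus one copy of $H'$ genuinely contains an induced $H$-subdivision for each $i$ — which requires routing the subdivided edges of $H-uv$ chordlessly using $SG_n$ and the pendant copies while keeping everything induced. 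I would handle this by first establishing, as a preliminary lemma mirroring the last bullet list after the definition of $SG_n$, that $SG_n$ contains, for every $i$, a realization of $H'-uv$ whose branch vertices sit outside $P_i$ and whose paths are chordless; combined with the identification of $uv$ with $P_i$ this gives the needed induced $H$-subdivision, and the $\uptau_H(G_n)\ge n$ bound then follows verbatim as at the end of \autoref{thm:cyclelength5}.
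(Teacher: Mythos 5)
Your overall strategy---semi-grid host, copies of a heavily subdivided $H$ attached along the paths $P_i$, joining pendant parts into a dense block, then a confinement claim---is the right one and matches the paper in spirit. But the edge-adding step is crucially weaker than what the paper needs, and this breaks your confinement argument.

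In the paper's construction, after identifying the interior $w_1\cdots w_n$ of the subdivided edge $uv$ with $P_j$, the entire pendant part $V(H_j)\setminus V(P_j)$ (which includes $u$ and $v$ themselves) is made complete to \emph{all} of $V(G)\setminus V(H_j)$, in particular to all of $SG_n\setminus V(P_j)$. You instead add edges only between pendant parts of distinct copies, leaving each pendant part non-adjacent to the rest of $SG_n$. This matters precisely in the first case of Claim~\ref{claim:fq2}: when one of the three special vertices $z_1,z_2,z_3$ is realized by $\varphi$ as a pendant vertex of some $H_{P_i}$. In the paper that vertex dominates everything outside $H_{P_i}$, so the cycle image $C'$ (to which $z_1$ must have no neighbor) is forced entirely into $H_{P_i}$. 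In your graph, $z_1$ is not adjacent to $SG_n\setminus V(P_i)$, so $C'$ is free to wander through $SG_n$ outside $P_i$ without touching $z_1$, and your trap closes on nothing. The ``triangle or induced $C_4$ through the complete-multipartite edges'' heuristic only rules out $C'$ meeting two other pendant blocks; it does not rule out $C'$ living partly or entirely in $SG_n\setminus V(P_i)$.

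There is a second, independent gap: you assert that the confinement claim is proved ``exactly as Claim~\ref{claim:fq1},'' but that earlier claim is specific to cycles of length at least $5$ and to the hypergraph host $UB_n$ where $A$ is an independent set. Here $C$ may be a triangle or a $C_4$, and the second case of the paper's Claim~\ref{claim:fq2}---where all three of $z_1,z_2,z_3$ land in $SG_n$---is a genuinely new, rather delicate combinatorial analysis of how two non-collinear points of the semi-grid isolate at most four potential cycle vertices $u_{a,b},u_{a',b},u_{a,b'},u_{a',b'}$, and how $z_3$ then kills all four. None of that is present in your sketch. Finally, a minor point: your claim that $|S|\le n-1$ forces some $H_{P_i}$ to be disjoint from $S$ overlooks that the copies overlap in $SG_n$ (each $v_{j,i}$ lies on two paths), so a single vertex can hit two copies; the correct bound, as in the paper, is $\uptau_H(G_n)\ge n/2$, which is still $\Omega(n)$ and so does not affect the conclusion, but the reasoning should account for the overlap.
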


\begin{proof}
Let $m:=\abs{H}$.
We construct an infinite family $(G_n)_{n\in \mathbb{N}_{\ge m+2}}$ of graphs such that 
$\upnu_{H}(G_n) = 1$ while $\uptau_{H}(G_n) \geq \frac{n}{2}$ for every $n \in \N_{\geq m+2}$.

Let us fix an edge $uv$ of $C$. Let $n\ge m+2$ be an integer and let $H'$ be the graph obtained from $H$ by subdividing each edge into a path of length $n+1$. In particular, we subdivide the edge $uv$ into $uw_1w_2 \cdots w_n v$. The graph $G=G_n$ is constructed from $SG_n$ as follows:
\begin{itemize}
\item we take copies $H_1, H_2, \ldots, H_n$ of $H'$ and for each $j\in \{1, \ldots, n\}$, we identify the copy of $w_1w_2 \cdots w_n$ in $H_j$ with the path $P_j$;
\item then we add all possible edges between $V(H_j)\setminus V(P_j)$ and $V(G)\setminus V(H_j)$. 
\end{itemize}
Clearly, each $H_j$ is an induced subdivision of $H$. 
We show that every induced subdivision of $H$ in $G$ is $H_j$ for some $j\in \{1, \ldots, n\}$.

\begin{claim}\label{claim:fq2}
Every induced subdivision of $H$ in $G$ is contained in $H_j$ for some $j\in \{1, \ldots, n\}$ and contains the path $P_j$.
\end{claim}
\begin{proofclaim}
Let $\varphi$ be a model of $H$ in $G$, and let $F:=\varphi(H)$ and $C':=\varphi(C)$, and $Z=\{z_1, z_2, z_3\}$ be a set of three vertices in $F$ having no neighbors in $C'$. Such a set $Z$ exists 
because $H$ contains three vertices having no neighbors in $C$.

We consider two cases depending on whether $Z$ contains a vertex of $V(H_j)\setminus V(P_j)$ for some $j\in \{1, \ldots, n\}$, or not.

\medskip
\noindent

(\textit{First case:} $Z$ contains a vertex of $V(H_j)\setminus V(P_j)$ for some $j\in \{1,2, \ldots, n\}$.)

By construction, for all $a\in V(H_j)\setminus V(P_j)$ and $b\in V(G)\setminus V(H_j)$, $a$ is adjacent to $b$.
As $Z$ has no neighbors in $C'$, $C'$ is contained in $H_j$.
Furthermore, since $P_j$ is a path,
$C'$ contains a vertex of $V(H_j)\setminus V(P_j)$.
It implies that $Z$ is also contained in $H_j$.

First assume that $C'$ contains $P_j$.
In this case, every vertex in $V(G)\setminus V(H_j)$ has at least $n-2\ge m$ neighbors in $P_j$. As the maximum degree of $H$ is less than $m$, no vertex of $F$ belongs to $V(G)\setminus V(H_j)$. Hence $F$ is an induced subgraph of $H_j$. By assumption, $F$ contains $P_j$ so we are done.

In the remaining case, $C'$ does not contain $P_j$. As $V(C') \subseteq V(H_j)$ and $P_j$ is a path, we deduce that all vertices of $C'$ belong to $V(H_j)\setminus V(P_j)$.
By construction every vertex in $V(G)\setminus V(H_j)$ dominates $C'$.
Since $C'$ contains at least $m$ vertices while $H$ has maximum degree less than $m$, all other vertices of $F$ are contained in $H_j$.
If $F$ does not contain an edge in $P_j$, then $F$ contains no induced subdivision of $H$ by \autoref{lem:subdivisionlemma}.
Therefore, $F$ contains $P_j$, as required.

\medskip
\noindent

(\textit{Second case:} $Z \subseteq V(SG_n)$.)

First assume that $z_1, z_2, z_3$ are all contained in $P_i$ for some $i\in \{1, \ldots, n\}$. We can observe that every vertex of $V(G)\setminus V(P_i)$ has a neighbor in $\{z_1, z_2, z_3\}$. Thus, $C'$ is an induced subgraph of $H_i$. Since $P_i$ is a path and $z_1$ is already in $P_i$, $C'$ is contained in $V(H_i)\setminus V(P_i)$. 
Since $C'$ contains at least $m$ vertices and every vertex in $V(G)\setminus V(H_i)$ dominates $C'$,
all other vertices of $F$ are contained in $H_i$.
If $F$ does not contain an edge in $P_i$, then $F$ contains no induced subdivision of $H$ by \autoref{lem:subdivisionlemma}.
Therefore, $F$ contains $P_i$.

We can assume that two of $z_1, z_2, z_3$ are not contained in the same path $P_i$ of $SG_n$.
Without loss of generality, we assume that $z_1\in V(P_a)\cap V(P_{a'})$ and $z_2\in V(P_b)\cap V(P_{b'})$ such that 
\begin{itemize}
\item $\{a, a'\}\cap \{b, b'\}=\emptyset$, 
\item if $a=a'$, then $z_1:=v_{a,a}$,
\item if $b=b'$, then $z_2:=v_{b,b}$.
\end{itemize}
Therefore the four vertices $u_{a,b}\in V(P_a)\cap V(P_b)$, $u_{a, b'}\in V(P_a)\cap V(P_{b'})$, $u_{a', b}\in V(P_{a'})\cap V(P_b)$, and $u_{a', b'}\in V(P_{a'})\cap V(P_{b'})$ are the only vertices in $G$ that are not adjacent to both $z_1$ and $z_2$.
Thus, $V(C') \subseteq \{u_{a,b}, u_{a',b}, u_{a,b'}, u_{a',b'}\}$.

Observe that $z_3$ is contained in at least one and at most two paths of $P_a, P_{a'}, P_b, P_{b'}$.
If $z_3$ is contained in exactly one path of them, say $Q$, 
then $z_3$ is adjacent to the vertices of $\{u_{a,b}, u_{a',b}, u_{a,b'}, u_{a',b'}\}\setminus V(Q)$.
But every cycle in $\{u_{a,b}, u_{a',b}, u_{a,b'}, u_{a',b'}\}$ contains a vertex of $\{u_{a,b}, u_{a',b}, u_{a,b'}, u_{a',b'}\}\setminus V(Q)$, a contradiction.
We may assume that $z_3$ is contained in two paths of them, say $P_x$ and $P_y$.
Then $z_3=u_{x,y}$, and the other three vertices form a triangle $C'$.
But then $z_3$ is adjacent to the vertex in $\{u_{a,b}, u_{a',b}, u_{a,b'}, u_{a',b'}\}\setminus (V(P_x)\cup V(P_y))$, a contradiction.

This concludes the proof of the claim.
\end{proofclaim}

By \autoref{claim:fq2}, every induced subdivision  of $H$ in $G$ contains $P_j$ for some $j\in \{1, \ldots, n\}$.
Thus, two induced subdivisions of $H$ always intersect, and
hence $\upnu_{H}(G) = 1$. 
Let $S$ be a vertex subset of size less than $\frac{n}{2}$.
Clearly each vertex of $S$ hits at most two models in the graph.
Thus, $S$ hits less than $2(\frac{n}{2})=n$ models, and 
$G-S$ contains an induced subdivision of $H$.
This implies that $\uptau_{H}(G) \geq \frac{n}{2}$, as required.
\end{proof}

\section{Negative results for non-planar graphs}
\label{sec:plagra}

We prove here item \eqref{badc:nonpl} of \autoref{thm:badcases} (which is \autoref{l:notplanar}): subdivisions of a non-planar graph never have the induced \ep{} property.

We use the notion of \emph{Euler genus} of a graph $G.$
The {\em Euler genus} of a non-orientable surface ${\rm {\rm \Sigma}}$ is equal to the non-orientable genus
$\tilde{g}({\rm {\rm \Sigma}})$ (or the crosscap number). The {\em Euler genus}  of an orientable   surface
${\rm {\rm \Sigma}}$ is $2{g}({\rm {\rm \Sigma}})$, where ${g}({\rm {\rm \Sigma}})$ is  the orientable genus
of ${\rm {\rm \Sigma}}.$  We refer to the book of Mohar and Thomassen \cite{mohar2001graphs} for
more details  on graph embedding. The {\em Euler genus} $\gamma(G)$ of a graph $G$ is the minimum Euler genus of a surface where $G$ can be embedded.

We will adapt the proof of \cite[Lemma 5.2]{Raymond2017recent} (similar to that of \cite[Lemma~8.14]{RobertsonS86GMV} but dealing with subdivisions) to the setting of induced subdivisions.

\begin{lemma}[from the proof of {\cite[Lemma 5.2]{Raymond2017recent}}]\label{th:noepsub}
For every non-planar graph $H$, there is a family of graphs $(G_n)_{n \in \N}$ such that, for every $n\in \N$,
\begin{enumerate}[(i)]
\item $\gamma(G_n) = \gamma(H)$; and
\item $G_n - X$ contains a subdivision of $H$, for every $X\subseteq V(G_n)$ with $|X| < n$.\label{e:gpack}
\end{enumerate}
\end{lemma}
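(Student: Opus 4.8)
The plan is to realize $G_n$ as an $n$-fold ``thickening'' of $H$ drawn inside a surface of Euler genus $\gamma(H)$: the thickening keeps the genus equal to $\gamma(H)$, while it makes the $H$-subdivision that $G_n$ carries survive the deletion of any fewer than $n$ vertices. First I would reduce to the case where $H$ is connected: if $H$ has connected components $H_1,\dots,H_t$, build a family $(G_n^{(i)})_{n\in\N}$ for each $H_i$ (using a gadget drawn in the sphere for those $H_i$ that are planar) and set $G_n=\bigcup_{i}G_n^{(i)}$; since the Euler genus is additive over connected components \cite{mohar2001graphs}, property~(i) is inherited, and~(ii) holds because fewer than $n$ deletions leave a surviving subdivision in every part. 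So from now on assume $H$ is connected, put $g:=\gamma(H)\ge 1$, and fix an embedding of $H$ in a surface $\Sigma$ of Euler genus $g$, together with a closed regular (tubular) neighborhood of its drawing: a disk $D_v$ around each vertex $v$ of $H$ and a band $B_e\cong[0,1]^2$ around each edge $e$, where $B_e$ is glued to $D_u$ and $D_w$ (for $e=uw$) along two opposite sides, and the bands leave $D_v$ in the cyclic order given by the rotation of the embedding at $v$.

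Next I would draw $G_n$ inside this neighborhood --- hence inside $\Sigma$ --- as follows. In each band $B_e$ I put an $n\times L$ grid (for $L$ large), with its two short sides of $n$ vertices each lying on the two gluing sides of $B_e$. In each disk $D_v$, with incident bands $B_{e_1},\dots,B_{e_d}$ in this cyclic order (where $d$ is the degree of $v$ in $H$), I put a planar gadget $\Lambda_v$ --- for concreteness an $n\times dn$ grid --- placed so that $d$ pairwise disjoint ``exit bundles'' of $n$ parallel paths each run from inside $\Lambda_v$ to the $n$ vertices of $D_v$ shared with $B_{e_1},\dots,B_{e_d}$, respectively, so that $\Lambda_v$ robustly links all its incident bands. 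Everything is drawn without crossings inside $\Sigma$, so $G_n$ embeds in $\Sigma$ and therefore $\gamma(G_n)\le g$.

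For property~(i) it then remains to check $\gamma(G_n)\ge g$, which follows because $G_n$ contains a subdivision of $H$: take one ``hub'' vertex $h_v$ in each $\Lambda_v$, join $h_v$ to its $d$ exit bundles by $d$ internally disjoint paths inside $\Lambda_v$, and, for each edge $e=uw$, concatenate the relevant exit path in $\Lambda_u$, a path across the grid of $B_e$, and the relevant exit path in $\Lambda_w$; since the Euler genus is monotone under subgraphs and invariant under subdivision, $\gamma(G_n)\ge\gamma(H)=g$, hence $\gamma(G_n)=g$. For property~(ii), fix $X\subseteq V(G_n)$ with $|X|<n$. In each band $B_e$ the $n\times L$ grid contains $n$ vertex-disjoint paths between its two short sides, so $B_e-X$ still contains such a path; likewise each $\Lambda_v$ was built from $n$ internally disjoint copies of the linkage ``one hub joined to all incident exit bundles'', so $\Lambda_v-X$ still contains such a linkage. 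Joining a surviving band path to a surviving hub at each of its two endpoints (after a little rerouting inside the grids so that the chosen endpoints agree) yields a subdivision of $H$ in $G_n-X$.

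The step I expect to be the main obstacle is precisely this last one: the surviving band paths and the surviving hubs must be selected \emph{coherently}, so that for every vertex $v$ and every pair of edges $e,e'$ used at $v$, the route entering $\Lambda_v$ along $B_e$ is actually linked inside the undeleted part of $\Lambda_v$ to the route leaving along $B_{e'}$, with all selected routes internally disjoint. The cleanest way to handle this is, I think, to isolate a purely combinatorial lemma --- an ``$n$-redundant subdivision'' statement: if $R$ is obtained from $H$ by replacing each vertex and each edge by a suitable grid gadget carrying $n$ parallel tracks, then $R-X$ contains an $H$-subdivision whenever $|X|<n$. This is a Menger/linkage argument on grids with no reference to any surface; once it is in place, one only needs the easy observation that such an $R$ can be drawn inside $\Sigma$, which supplies the genus bound and completes the proof.
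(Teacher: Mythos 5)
The paper does not actually prove this lemma: it is quoted as coming from the proof of \cite[Lemma 5.2]{Raymond2017recent}, which the paper only cites, so there is no in-paper argument to compare yours against. Your overall strategy --- thicken an embedding of $H$ inside a tubular neighborhood in a surface of Euler genus $\gamma(H)$, replacing vertices and edges by $n$-redundant planar gadgets, and deduce $\gamma(G_n)=\gamma(H)$ from containment of an $H$-subdivision together with embeddability in $\Sigma$ --- is the standard route to such a result, and the reduction to connected $H$ via additivity of Euler genus over components is fine.

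There are, however, two concrete problems. First, the vertex gadget you actually commit to, an $n\times dn$ grid, has maximum degree $4$; a subdivision of $H$ must contain, for each $v\in V(H)$, a vertex of degree exactly $\deg_H(v)$, so no vertex of your $\Lambda_v$ can serve as the branch vertex $h_v$ whenever $\deg_H(v)\ge 5$. Since $H$ is an arbitrary non-planar graph this is not a marginal case: the proposed $G_n$ fails to contain even one $H$-subdivision, let alone a robust supply of them. You would need a planar gadget carrying $n$ robustly surviving vertices of degree $\ge\deg_H(v)$ (for example, a grid with $n$ apices, each adjacent to $\deg_H(v)$ consecutive vertices of a distinct row, or some nested-wheel variant), not a bare grid. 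Second, the step you yourself flag as ``the main obstacle'' --- that after deleting fewer than $n$ vertices one can \emph{coherently} choose a surviving hub in each $\Lambda_v$ and a surviving track in each band $B_e$, together with internally disjoint linking paths inside each $\Lambda_v - X$, so that the selected pieces actually glue into a subdivision --- is only outlined, not proved. That linkage argument is precisely the substance of the lemma; without it, and without a degree-correct gadget, the proposal is an accurate plan rather than a proof.
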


\begin{lemma}\label{lem:lb-nonplanar}\label{l:notplanar}
  For every non-planar graph $H$, the subdivisions of $H$ do not have the induced \ep{} property.
\end{lemma}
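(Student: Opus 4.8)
The plan is to produce, for our fixed non-planar graph $H$, an infinite family $(G_n)_{n\in\N}$ with $\upnu_H(G_n)$ bounded by a constant depending only on $H$, while $\uptau_H(G_n)\to\infty$; this rules out any bounding function and hence the induced \ep{} property. The family is the one supplied by \autoref{th:noepsub}: for each $n$ we have $\gamma(G_n)=\gamma(H)$, and $G_n-X$ still contains a subdivision of $H$ for every $X\subseteq V(G_n)$ with $|X|<n$. Controlling the Euler genus is the crucial point, because a surface of small genus cannot carry many pairwise vertex-disjoint copies of a non-planar pattern.

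For the packing bound I would show $\upnu_H(G_n)\le\gamma(H)$. Since $H$ is non-planar, by Kuratowski's theorem it contains a subgraph that is a subdivision of $K_5$ or of $K_{3,3}$, and subdividing edges of $H$ cannot destroy this; hence every subdivision of $H$, in particular every induced subdivision of $H$ inside $G_n$, contains a connected subgraph $J$ (not necessarily induced) that is a subdivision of $K_5$ or $K_{3,3}$, and $\gamma(J)\ge 1$ since the Euler genus is invariant under subdivision. If $G_n$ contained $k$ pairwise vertex-disjoint induced subdivisions of $H$, I would pick inside them pairwise vertex-disjoint such subgraphs $J_1,\dots,J_k$. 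As these are vertex-disjoint, $J_1\cup\dots\cup J_k$ is a disjoint union of connected graphs; using that $\gamma$ is monotone under taking subgraphs and additive over connected components (see \cite{mohar2001graphs}), I get $\gamma(G_n)\ge\gamma(J_1\cup\dots\cup J_k)=\sum_{i=1}^{k}\gamma(J_i)\ge k$. Combined with $\gamma(G_n)=\gamma(H)$ this forces $k\le\gamma(H)$, so $\upnu_H(G_n)=\mathcal{O}(1)$.

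For the covering bound I want $\uptau_H(G_n)\ge n$, i.e.\ an \emph{induced} subdivision of $H$ in $G_n-X$ whenever $|X|<n$. The only gap is that \autoref{th:noepsub} delivers an ordinary, not necessarily induced, subdivision. I would close it by revisiting the construction behind \autoref{th:noepsub}: these constructions are locally planar (assembled from grid- or wall-like pieces embedded in a fixed surface of genus $\gamma(H)$), so with some care the robust subdivision of $H$ in $G_n-X$ can be rerouted to one with no chord, hence induced; alternatively, one can build $G_n$ with large girth by subdividing edges (this leaves $\gamma(G_n)$ unchanged and is absorbed by enlarging the parameter), so that a chord of a suitably long $H$-subdivision would close a cycle shorter than the girth, which is impossible. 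Either way we obtain $\uptau_H(G_n)\ge n$, and together with the packing bound, $\upnu_H(G_n)=\mathcal{O}(1)$ while $\uptau_H(G_n)=\Omega(n)$, so no function $f$ can satisfy $\uptau_H(G)\le f(\upnu_H(G))$ for all graphs $G$, and the subdivisions of $H$ do not have the induced \ep{} property. I expect this last passage from a subdivision to an induced subdivision to be the delicate point: the packing bound is a near-immediate consequence of Euler-genus additivity, whereas making the robust subdivisions induced is exactly where the argument departs from its non-induced templates in \cite{Raymond2017recent,RobertsonS86GMV}, and it requires either careful girth and routing bookkeeping inside the construction or a direct re-inspection of that construction.
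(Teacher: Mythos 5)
Your overall plan coincides with the paper's: take the family $(G_n)$ from \autoref{th:noepsub}, bound the packing number via Euler genus additivity, and argue $\uptau_H\ge n$ from property \eqref{e:gpack}. The packing bound is sound (the detour through Kuratowski and $K_5/K_{3,3}$ is unnecessary — the paper simply notes that a disjoint union of two $H$-subdivisions has Euler genus $2\gamma(H)>\gamma(H)=\gamma(G_n)$, giving $\upnu_H\le 1$ directly — but your additivity argument is a correct, if weaker, variant).

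The covering bound is where you have a genuine gap, and you are right to flag it as the delicate point, but neither of your two repair attempts is correct as stated. The girth/rerouting idea fails for the reason you should worry about: an induced $H$-subdivision can have arbitrarily long branch paths, so a chord need not close a cycle shorter than the girth — large girth alone does not rule out chords of a long subdivision. The ``locally planar rerouting'' suggestion is pure hand-waving with no argument behind it. The paper's actual fix is both simpler and different from what you propose: replace $G_n$ by $\dot G_n$, the graph obtained by subdividing every edge of $G_n$ exactly once. This leaves $\gamma$ unchanged and, crucially, makes $\dot G_n$ ``bipartite'' between original vertices and subdivision vertices: every edge of $\dot G_n$ joins one original vertex $u$ to the subdivision vertex $w_{uv}$ of an original edge $uv$. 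Consequently, if $J$ is \emph{any} subgraph of $G_n$ and $\dot J$ its once-subdivided image in $\dot G_n$, then $\dot J$ is automatically an induced subgraph of $\dot G_n$: a purported chord would have to be some edge $u\,w_{uv}$ with both endpoints in $V(\dot J)$, but $w_{uv}\in V(\dot J)$ forces $uv\in E(J)$ and hence $u\,w_{uv}\in E(\dot J)$. This is not a girth statement — it makes chords impossible outright, not merely short. With this observation, property \eqref{e:gpack} for $G_n$ (a non-induced subdivision survives any small deletion) transfers to $\dot G_n$ as an \emph{induced} statement, giving $\uptau_H(\dot G_n)\ge n$. Without this single one-line observation your proof does not close.
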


\begin{proof}
We construct a family of graphs $(\dot{G_n})_{n \in \N}$ such that $\upnu_H(\dot{G_n}) = 1$ and $\uptau_H(\dot{G_n})\geq n$, for every $n\in \N$.

  For every $n\in \N$, let $G_n$ be the graph of \autoref{th:noepsub} and let $\dot{G_n}$ be the graph obtained by subdividing once every edge of~$G_n$.
  Notice that for every subdivision $G'$ of a graph $G$ we have $\gamma(G) = \gamma(G')$ (an embedding of one in a given surface can straightforwardly be deduced from an embedding of the other).
  Hence $\gamma(\dot{G_n}) = \gamma(G_n) = \gamma(H)$.
  As in the proof of \cite[Lemma 5.2]{Raymond2017recent}), we observe that since $H$ is not planar the disjoint union of two subdivisions of $H$ has Euler genus larger than that of $H$ (see \cite{battle1962}). Therefore $\dot{G_n}$ does not contain two disjoint (induced) subdivisions of~$H$: $\upnu_H(\dot{G_n}) \leq 1$.
For every subgraph $J$ of $G_n$, the subgraph of $\dot{G_n}$ obtained by subdividing once every edge of $J$ is induced. Hence property \eqref{e:gpack} of $(G_n)_{n \in \N}$ implies that for every $X\subseteq V(\dot{G_n})$ with $|X| < n$ the graph $\dot{G_n} - X$ contains an induced subdivision of $H$. In other words $\uptau_H(\dot{G_n}) \geq n$.
\end{proof}

\section{A lower-bound on the bounding function}
\label{sec:lowbo}

We proved in the previous sections that subdivisions of graphs with certain properties do not have the induced \ep{} property.
In this section we give a negative result of a different type by
proving that the bounding function for the induced \ep{} property of
subdivisions cannot be too small.
Our proof is an adaptation to the induced setting of an observation
already known  for the non-induced \ep{} property.

We first need a few more definitions. For every graph $G$, the \emph{girth} of $G$, denoted by $\girth(G)$ is the minimum order of a cycle in~$G$.
The treewidth of $G$, denoted by $\tw(G)$ is a graph invariant that can be defined using tree-decompositions. We avoid the technical definition here and only state the two well-known properties of treewidth that we use:
\begin{itemize}
\item deleting a vertex or an edge in a graph decreases its treewidth by at most one;
\item for every planar graph $H$ of maximum degree 3, there is a constant $c\in \N$ such that every graph of treewidth at least $c$ contains a subdivision of $H$ (Grid Minor Theorem, \cite{RobertsonS86GMV}).
\end{itemize}
We refer the reader to \cite{diestel2010graph} for an introduction to treewidth.

\begin{theorem}\label{thm:omegaklogk}
  Let $H$ be a graph that has a cycle and no vertex of degree more than 3. There is no function $f(k) = o(k \log k)$ such that subdivisions of $H$ have the induced \ep{} property with bounding function~$f$.
\end{theorem}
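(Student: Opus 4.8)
The plan is to adapt the classical argument showing that the Erd\H{o}s--P\'osa function for cycles is $\Omega(k \log k)$ (which goes back to Erd\H{o}s--P\'osa and uses high-girth cubic expanders) to the induced setting. The key observation is that if $H$ has a cycle and maximum degree at most $3$, then $H$ is a planar graph of maximum degree $3$ (planarity is free: a subgraph of a subdivision of $H$ argument is not even needed here, since $H$ having a vertex of degree $\le 3$ everywhere together with it having a cycle does \emph{not} force planarity in general --- so I must be slightly careful and instead invoke that the hypothesis of \autoref{thm:badcases} rules out nonplanar $H$; concretely, I may assume $H$ is planar, as otherwise \autoref{l:notplanar} already gives that $H$-subdivisions have no induced \ep{} property at all, a fortiori no $o(k\log k)$ bound). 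So assume $H$ is planar with $\Delta(H) \le 3$ and contains a cycle.

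First I would fix, for each $k$, a cubic graph $R_k$ on $\Theta(k \log k)$ vertices with girth at least $g$, where $g$ is a constant chosen larger than $|H|$ and also large enough that the Grid Minor Theorem constant $c = c(H)$ is exceeded on suitable subgraphs; such $R_k$ exist by standard probabilistic or explicit constructions, and moreover can be taken so that $R_k$ has treewidth $\ge c$ and, crucially, so that deleting few vertices keeps the treewidth high. Concretely I want: (a) every subgraph of $R_k$ obtained by deleting fewer than $\varepsilon k \log k$ vertices still has treewidth $\ge c$, which holds for expanders since deleting $o(|R_k|)$ vertices from a bounded-degree expander leaves a subgraph with a large expanding piece, hence large treewidth; and (b) $R_k$ has no $k$ vertex-disjoint cycles --- this follows because a cubic graph with $N$ vertices has at most $N/g$ vertex-disjoint cycles (each cycle uses $\ge g$ vertices), so choosing the constant in $\Theta(k\log k)$ small relative to $g$, or equivalently choosing $g$ large relative to the multiplicative constant, forces fewer than $k$ disjoint cycles.

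Next I would convert $R_k$ into a graph $G_k$ witnessing the lower bound for \emph{induced} $H$-subdivisions. The natural move is to take $\dot{R_k}$, the graph obtained from $R_k$ by subdividing every edge once (or more, to boost girth): subdivision does not change treewidth, and every subgraph of $R_k$ lifts to an \emph{induced} subgraph of $\dot{R_k}$ (this is exactly the trick used in \autoref{lem:lb-nonplanar}). Then: for the covering side, any $X$ with $|X| < \varepsilon k \log k$ leaves $\dot{R_k} - X$ with treewidth $\ge c$, hence it contains a (topological, hence ordinary-subgraph) subdivision of $H$; but since deletions were from an already-subdivided graph, one checks this subdivision can be taken induced --- here I would either argue directly that the relevant subgraph of $R_k$ lifts to an induced $H$-subdivision in $\dot{R_k}$, or first pass to the minor/topological structure in $R_k - X'$ and then lift. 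So $\uptau_H(\dot{R_k}) \ge \varepsilon k \log k$. For the packing side, any induced $H$-subdivision contains a cycle (as $H$ has one), and $\dot{R_k}$ inherits from $R_k$ the property that it has fewer than $k$ vertex-disjoint cycles (subdividing only makes cycles longer and cannot increase their disjoint packing number beyond $|V|/g$), so $\upnu_H(\dot{R_k}) < k$, in fact one can arrange $\upnu_H(\dot{R_k}) \le$ some function that is $o(\uptau_H / \text{(any } o(k\log k)\text{ bound)})$; more simply, $\upnu_H(\dot{R_k}) \le \upnu_{C_3}(\dot{R_k}) \le |V(\dot{R_k})|/g = O(k \log k / g)$, and since $g$ is a fixed large constant while $\uptau_H = \Omega(k\log k)$, no $o(k \log k)$ function $f$ can satisfy $\uptau_H(\dot{R_k}) \le f(\upnu_H(\dot{R_k}))$ for all $k$ once we let $k \to \infty$ --- here I would be careful to phrase the final contradiction correctly: if $\upnu_H(G_k) \le \nu_k$ with $\nu_k = O(k \log k)$ and $\uptau_H(G_k) \ge \tau_k$ with $\tau_k = \Omega(k \log k)$, then for the statement to fail I actually want $\upnu_H(G_k)$ bounded by something like $k$, so I should instead choose the expander to have $\Theta(k \log k)$ vertices with girth $\ge 3k$ (girth growing with $k$), so that $\upnu_{C_3}(R_k) \le |V(R_k)|/(3k) = O(\log k) \le k$, while still $\uptau_H \ge \varepsilon k \log k$; such expanders (bounded degree, $n$ vertices, girth $\Omega(\log n)$) exist, and $\log n = \log(k \log k) = \Theta(\log k)$, giving girth $\Theta(\log k)$, hence $\upnu_{C_3} = O(k\log k/\log k) = O(k)$ --- good enough.

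The main obstacle I anticipate is the ``lifting'' step: ensuring that when the Grid Minor Theorem hands us a topological subdivision of $H$ inside $\dot{R_k} - X$, we actually get an \emph{induced} subdivision. In the non-induced lower bound this is a non-issue, but here one must argue that in a once-subdivided graph the natural subdivision realizing a topological minor is automatically chordless, or re-do the Grid Minor extraction so that the branch paths chosen are induced paths with no chords between them. I expect this is handled exactly as in \autoref{lem:lb-nonplanar}: a topological $H$-model in $R_k - X'$ (with $X'$ the ``original'' vertices of $X$) yields a subgraph $J \subseteq R_k$ which is an $H$-subdivision, and the corresponding subgraph of $\dot{R_k}$ (subdividing each edge of $J$ once) is induced in $\dot{R_k}$ because any chord would correspond to an edge of $R_k$ between two vertices of $J$, but in $\dot{R_k}$ all original vertices are pairwise non-adjacent. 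Spelling this out carefully, together with bookkeeping the girth/treewidth constants, is the bulk of the work; the rest is routine.
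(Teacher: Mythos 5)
Your proposal is correct and follows essentially the same route as the paper: take cubic Ramanujan-type graphs on $\Theta(k\log k)$ vertices with girth $\Theta(\log k)$ and treewidth linear in their order, subdivide every edge once so that any $H$-subdivision lifted from a subgraph becomes induced (the same trick as in \autoref{lem:lb-nonplanar}), bound packings by girth and coverings via treewidth plus the Grid Minor Theorem, and handle non-planar $H$ separately via \autoref{l:notplanar}. The only minor divergence is that you appeal to expansion of the residual graph to keep treewidth high, whereas the paper simply uses that each vertex or edge deletion decreases treewidth by at most one; both work, but the latter is simpler here since one only needs treewidth to remain above a fixed constant.
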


\begin{proof}
  When $H$ is not planar, the result follows from \autoref{l:notplanar}. We therefore assume for now that $H$ is planar. By the Grid Minor Theorem, there is a constant $c$ such that every graph $G$ satisfying $\tw(G)\geq c$ contains a subdivision of $H$.
We will construct sequences $(G_n)_{n \in \N}$ (graphs) and $(k_n)_{n \in \N}$ (integers) such that $\upnu_H(G_n) = \mathcal{O}(k_n)$ while $\uptau_H(G_n) = \Omega(k_n \log k_n)$.
  
We start with an infinite family $(R_n)_{n \in \N}$ of 3-regular graphs (of increasing order), called Ramanujan graphs, whose existence is proved in \cite[Theorem 5.13]{Morgenstern1994exis}. These graphs have the following properties:
\begin{enumerate}
\item $\forall n \in \N$, $\girth(R_n) \geq  \frac{2}{3} \log |R_n|$ (\cite[Theorem 5.13]{Morgenstern1994exis});
\item $\exists n'\in \N,\ \exists \alpha\in \R_{>0},\ \forall n\in \N,\ n \geq n' \Rightarrow \tw(R_n) \geq \alpha|R_n|$ (see~\cite[Corollary~1]{Bezrukov2004155}).
\end{enumerate}

Recall that $n \mapsto |R_n|$ is increasing. We denote by $n''$ the minimum integer that is larger than $n'$ and such that $|R_n| \geq \frac{c}{\alpha}$ for every $n \geq n''$.
For every integer $n \geq n''$, we define $k_n$ as the maximum positive integer such that
\[
  \frac{1}{\alpha}\left (c + k_n \log k_n \right) \leq |R_{n}|
\]
Such a value exists by definition of $n''$. Notice that $n \mapsto k_n$ is non-decreasing and is not upper-bounded by a constant.
Observe that for every $n \geq n''$, $\tw(R_n) \geq c + k_n \log k_n$.

Let us define, for every integer $n \in \N$, $G_n$ as the graph obtained from $R_n$ by subdividing once every edge. We then have, for every $n\geq n''$:
\begin{align}
  |G_n| &= \frac{5}{2} \cdot |R_n|\nonumber\\
  & \geq \frac{5}{2\alpha}(c + k_n \log k_n) &\text{and} \label{eq:size}\\
  \girth(G_n) &= 2\cdot \girth(R_n)\nonumber\\
        &\geq \frac{4}{3}\cdot  \log\left ( \frac{2}{5} \cdot|G_n| \right ). \label{eq:girth}
\end{align}

\begin{claim}\label{claim:bigtau}
  For every integer $n \geq n''$, $\uptau_H(G_{n}) \geq k_n \log k_n$.
\end{claim}

\begin{proofclaim}
  Let $X \subseteq V(G_{n})$ be such that $|X|<k_n \log k_n$.
  We show that $G_{n} - X$ contains an induced subdivision of~$H$.
  Recall that each vertex of degree 2 of $G_{n}$ was obtained by subdividing an edge of $R_{n}$. We define $X^e$ as the set of edges of $R_{n}$ corresponding to the vertices of degree 2 in $X$ and set $X^v = X\cap V(R_{n})$.
  As the deletion of an edge or a vertex in a graph decreases its treewidth by at most one, the graph obtained from $R_{n}$ by deleting $X^v$ and $X^e$ has treewidth at least $c$. By definition of $c$, this graph contains a subdivision $S$ of $H$. Notice that the corresponding subdivision of $H$ in $G_{n}$ (i.e. that obtained by subdividing once every edge of $S$) is induced and does not contain any vertex of $X$.
  As this holds for every subset of $V(G_{n})$ with less than $k_n \log k_n$ vertices, we deduce that $\uptau_H(G_{n}) \geq k_n \log k_n$.  
\end{proofclaim}

\begin{claim}\label{claim:smallnu}
  There is a $n'''\in \N$, such that for every $n  \geq n'''$, $\upnu_H(G_{n}) < k_n$.
\end{claim}

\begin{proofclaim}
  Let $n \geq n''$ and let us assume that $G$ contains $k_n$ disjoint induced subdivisions of $H$. As $H$ has a cycle, the order of each of these subdivisions is at least the girth of~$G$.
  We deduce:
  \begin{align*}
    |G_n| &\geq k_n \cdot \girth(G_n)\\
    &\geq k_n \cdot \frac{4}{3}\cdot  \log\left ( \frac{2}{5} \cdot|G_n| \right ) & \text{by \eqref{eq:girth}}\phantom{.}\\
    &\geq k_n \cdot \frac{4}{3}\cdot  \log\left ( \frac{1}{\alpha}(c + k_n\log k_n)  \right )& \text{by \eqref{eq:size}}.
  \end{align*}

  On the other hand, from the maximality of $k_n$ we get:
  \[
    |G_n| < \frac{5}{2\alpha}(c + k_n \log k_n).
  \]
  Combining these bounds together we obtain:
  \begin{align}
    k_n \cdot \frac{4}{3}\cdot  \log\left ( \frac{1}{\alpha}(c + k_n\log k_n)  \right ) < \frac{5}{2\alpha}(c + k_n \log k_n).\label{eq:collision}
  \end{align}

  The left-hand side of \eqref{eq:collision} is a $\Omega(k_n \log (k_n \log k_n))$ while its right-hand side is a $\mathcal{O}(k_n \log k_n)$.
  As $n \mapsto k_n$ is not upper-bounded by a constant, there is a positive integer $n'''\geq n''$ such that for every $n\geq n'''$, \eqref{eq:collision} does not hold.
  Therefore, when $n \geq n'''$ we have $\upnu_H(G_{n}) < k_n$.
\end{proofclaim}

The sequences $(G_n)_{n \in \N}$ and $(k_n)_{n \in \N}$ have the property that $\uptau_H(G_n) = \Omega(k_n \log k_n)$ (\autoref{claim:bigtau}) while $\upnu_H(G_n) = \mathcal{O}(k_n)$ (\autoref{claim:smallnu}), as required.

\end{proof}

\section{Subdivisions of 1-pan or 2-pan have the induced \texorpdfstring{\ep{}}{Erdos-Posa} property}
\label{sec:panpan}

Recall that for $p\in \N_{\ge 1}$, the $p$-pan is the graph obtained by adding an edge between a triangle and an end vertex of a path on $p$ vertices.
We show in this section that for every $p\in \{1,2\}$, the subdivisions of the $p$-pan have the induced \ep{} property with bounding function $\mathcal{O}(k^2 \log k)$ (\autoref{t:1panep} and \autoref{t:2panep}). 
For $p\ge 3$, the $p$-pan has an edge that has no neighbor in its triangle, 
and by \autoref{thm:twoadjacent}, the subdivisions of the $p$-pan do not have the induced \ep{} property.
This proves the dichotomy \eqref{enum:dich-pan} of \autoref{thm:dichotomy}.
Furthermore, the proofs of our positive results yield polynomial-time algorithms for finding a packing of induced subdivisions of the $p$-pan or a hitting set of size $\mathcal{O}(k^2 \log k)$.

The proofs for $1$-pan and $2$-pan start similarly.
Let $p\in \{1,2\}$ and let $G$ and $H$ be graphs such that:
\begin{itemize}
\item $H$ is an induced subdivision of the $p$-pan in $G$ with minimum number of vertices, 
\item $G- V(H)$ has no induced subdivision of the $p$-pan.
\end{itemize}
With these assumptions, we will show that for every $k\in \N$, $G$ contains either $k$ pairwise vertex-disjoint induced subdivisions of the $p$-pan, 
or a vertex set $S$ of size $\mathcal{O}(k \log k)$ hitting all induced subdivisions of the $p$-pan.
By applying inductively this result we can conclude that a graph contains either $k$ pairwise vertex-disjoint induced subdivisions of the $p$-pan, 
or a vertex set of size $\mathcal{O}(k^2 \log k)$ hitting all induced subdivisions of the $p$-pan.

The following algorithm to find an induced subdivision of the $p$-pan will be necessary.

\begin{lemma}\label{lem:algorithm}
Let $p$ be a positive integer. Given a graph $G$, one can find an induced subdivision of the $p$-pan with minimum number of vertices, if one exists, in time $\mathcal{O}(\abs{G}^{p+5})$.
\end{lemma}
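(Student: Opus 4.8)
The plan is to reduce the problem to a bounded number of shortest-path and induced-path computations. An induced subdivision of the $p$-pan consists of an induced cycle $C$ together with an induced path $R$ of length at least $p$ attached to $C$ at exactly one endpoint $c$, such that the only edge between $V(R)\setminus\{c\}$ and $V(C)$ is the one incident to $c$ (in fact, for $p=1$ the ``path'' is a single pendant vertex, and for $p=2$ it is a pendant path on two vertices). So the structure is determined by: the induced cycle $C$; the attachment vertex $c\in V(C)$; and the pendant induced path $R$ from $c$. First I would guess, by brute force over all $\mathcal{O}(\abs{G}^{p+1})$ choices, the vertex $c$ together with the $p$ consecutive vertices of the pendant path $R=c r_1 r_2\cdots$ (this is $1$ extra vertex when $p=1$, two when $p=2$, etc., so $p+1$ vertices total including $c$). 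For fixed such a guess, it remains to find, among all choices completing this to an induced $p$-pan-subdivision, one minimizing $\abs{C}$, since $\abs{R}$ is already fixed to its minimum value $p+1$ once we insist the pendant path has exactly length $p$ — and a minimum induced $p$-pan-subdivision never needs a longer pendant path, as any longer one could be shortened while staying induced by taking a sub-pan.

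Next, having fixed $c$ and the pendant vertices $r_1,\dots,r_p$ (with $r_p$ the far endpoint), I delete from $G$ all vertices in $N_G[\{r_1,\dots,r_p\}]\setminus\{c\}$ that would create a forbidden chord — more precisely, I work in the graph $G'$ obtained from $G$ by deleting $r_1,\dots,r_p$ and every vertex of $V(G)\setminus\{c,r_1,\dots,r_p\}$ adjacent to some $r_i$, so that in $G'$ the only remaining constraint is to find a shortest induced cycle through $c$. Then I guess the two neighbors $a,b$ of $c$ on the cycle $C$ (another $\mathcal{O}(\abs{G}^2)$ factor), delete $c$ and all its other neighbors from $G'$, and compute a shortest $a$--$b$ path $P$ in the resulting graph; then $C=cPc$ and one checks $P$ is induced and has no chord to $c$ other than via $a,b$. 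A shortest such path is automatically induced and chordless. Taking over all guesses the combination minimizing $\abs{C}+p+1$ gives the answer; the total running time is $\mathcal{O}(\abs{G}^{p+1}\cdot \abs{G}^2\cdot \abs{G}^2) = \mathcal{O}(\abs{G}^{p+5})$, where the last two factors come from guessing $a,b$ and running BFS.

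The main subtlety — the ``hard part'' — is verifying that a \emph{shortest} $a$--$b$ path in the reduced graph, once glued to $c$ and the pendant path, really yields an \emph{induced} $p$-pan-subdivision, i.e.\ that there are no unwanted edges between the cycle part and the pendant part or chords within the cycle. Chords within the cycle are ruled out because a shortest path in a graph from which $c$ and its extra neighbors were removed has no chords and no edge back to $c$ except at its ends. Edges between the pendant path $\{r_1,\dots,r_p\}$ and the interior of $C$ were ruled out by the deletion step that removed all neighbors of the $r_i$'s (other than $c$) before searching for the cycle. The one remaining point is that we must not accidentally \emph{forbid} a valid solution: but if some minimum induced $p$-pan-subdivision uses cycle $C^\star$, attachment $c$, neighbors $a,b$, and pendant vertices $r_1,\dots,r_p$, then none of the vertices of $C^\star$ other than $c$ is adjacent to any $r_i$ (by the induced condition), so they all survive in the reduced graph, and hence the BFS finds an $a$--$b$ path of length at most $\abs{C^\star}-2$; combined with the pendant path this gives an induced $p$-pan-subdivision of order at most that of $C^\star$ plus $p$, which is optimal. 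Thus the algorithm is correct and runs within the claimed time bound.
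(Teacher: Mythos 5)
Your proposal is correct and is essentially the same algorithm as the paper's: you guess the same $p+3$ vertices (the pendant path, the attachment vertex, and its two neighbors on the cycle), delete the corresponding forbidden neighborhoods, run BFS to find a shortest completing path, and observe that a shortest path in the reduced graph is automatically chordless and non-interacting with the deleted parts. The only cosmetic difference is that you split the guess into two stages ($\mathcal{O}(\abs{G}^{p+1})$ for $c,r_1,\dots,r_p$ and then $\mathcal{O}(\abs{G}^2)$ for $a,b$), whereas the paper guesses all $p+3$ vertices at once; the resulting bound $\mathcal{O}(\abs{G}^{p+5})$ is identical.
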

\begin{proof}
  We first describe how, given vertices $v_1, v_2, \ldots, v_p$ and $w_1, w_2, w_3$, one can in $\mathcal{O}(|G|^2)$ steps
  determine whether there exists 
  an induced subdivision of the $p$-pan $H$ such that 
  \begin{itemize}
  \item $\{v_1, v_2 \ldots, v_p, w_1, w_2, w_3\}\subseteq V(H)$, 
  \item $v_1\dots v_pw_2$ is an induced path of $G$, 
  \item both $w_1$ and $w_3$ are adjacent to $w_2$,
  \item $H-\{v_1, v_2, \ldots, v_p, w_2\}$ is an induced path of $G$ from $w_1$ to $w_3$,
  \end{itemize}
  and output such an induced subdivision with minimum order if one exists.
  Since it is an induced subdivision of the $p$-pan,
  there should not exist an edge between $\{v_1, v_2, \ldots, v_p\}$ and the induced path $H-\{v_1, v_2, \ldots, v_p, w_2\}$, 
  and furthermore, there should not exist an edge between $w_2$ and the internal vertices of the path.

We may check the first three conditions in time $\mathcal{O}(p^2)$.
If one of them is not satisfied, then we answer negatively.

We then compute a shortest path $P$ from $w_1$ to $w_3$ in 
\[G-\{v_1, v_2, \ldots, v_p, w_2\}-(N_G(\{v_1, v_2, \ldots, v_p, w_2\})\setminus \{w_1, w_3\}).\] 
It can be done in time $\abs{G}^2$, for instance using Dijkstra's algorithm. If such a path does not exist, then we answer negatively.
Otherwise, $G[\{v_1, v_2, \ldots, v_p, w_2\}\cup V(P)]$ is the desired induced subdivision of the $p$-pan.

We apply the above procedure for every choice of $p+3$ vertices of $G$.
In the end, we output an induced subdivision of the $p$-pan of minimum order among those returned, if any. Overall this takes $\mathcal{O}(\abs{G}^{p+5})$ steps.
If $S$ is an induced subdivision of the $p$-pan in $G$, then one choice of $v_1v_2 \cdots v_pw_2$ corresponds to the path of length $p$ of $S$ and $w_1$ and $w_3$ to the two neighbors of $w_2$ on the cycle. A shortest path from $w_1$ to $w_3$ yields an induced subdivision of the $p$-pan of order at most $|S|$. This guarantees that the algorithm outputs an induced subdivision of the $p$-pan with minimum order.
\end{proof}

\subsection{On induced subdivisions of the 1-pan}
The aim of this section is to prove the following theorem.
\begin{theorem}\label{t:1panep}
There is a polynomial-time algorithm 
which, given a graph $G$ and a positive integer $k$, finds either $k$ vertex-disjoint induced subdivisions of the $1$-pan in $G$ or a vertex set of size $\mathcal{O}(k^2\log k)$ hitting every induced subdivision of the $1$-pan in $G$.
\end{theorem}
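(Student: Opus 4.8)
The plan is to reduce \autoref{t:1panep} to the following \emph{single-obstruction lemma}, which is the technical heart and which I would prove separately: \emph{if $H$ is a minimum-order induced subdivision of the $1$-pan in $G$ such that $G-V(H)$ has no induced subdivision of the $1$-pan, then one can find in polynomial time either $k$ pairwise vertex-disjoint induced subdivisions of the $1$-pan in $G$, or a set of $\mathcal{O}(k\log k)$ vertices hitting all of them.} Note that such an $H$ need not exist for every $G$, so the lemma must be combined with an induction to handle arbitrary $G$.

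Granting the lemma, I would prove \autoref{t:1panep} by induction on $k$, with a bounding function $g$ satisfying $g(1)=0$ and $g(k)\le g(k-1)+\mathcal{O}(k\log k)$, hence $g(k)=\mathcal{O}(k^2\log k)$. For $k=1$, run the algorithm of \autoref{lem:algorithm}: a returned induced $1$-pan-subdivision is a packing of size $1$, and if there is none then $\emptyset$ is a hitting set. For $k\ge 2$, use \autoref{lem:algorithm} to compute a minimum-order induced $1$-pan-subdivision $H$ of $G$ (output $\emptyset$ if none exists) and recurse on $(G-V(H),\,k-1)$. If the recursive call returns $k-1$ disjoint induced $1$-pan-subdivisions of $G-V(H)$, then together with $H$ they form $k$ disjoint ones in $G$. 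Otherwise it returns a set $S_0$ with $\abs{S_0}\le g(k-1)$ hitting all induced $1$-pan-subdivisions of $G-V(H)$. Since $S_0\cap V(H)=\emptyset$, the graph $G-S_0$ is an induced subgraph of $G$ still containing $H$; therefore $H$ is a minimum-order induced $1$-pan-subdivision of $G-S_0$, and $(G-S_0)-V(H)=(G-V(H))-S_0$ has no induced $1$-pan-subdivision. Applying the single-obstruction lemma to $G-S_0$ yields either $k$ disjoint induced $1$-pan-subdivisions (which lie in $G$), or a set $S_1$ with $\abs{S_1}=\mathcal{O}(k\log k)$ hitting all induced $1$-pan-subdivisions of $G-S_0$; in the latter case $S_0\cup S_1$ hits all induced $1$-pan-subdivisions of $G$, because any such subdivision avoiding $S_0$ lies in $G-S_0$ and hence meets $S_1$. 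Each recursion level performs a constant number of polynomial-time subroutines and the recursion depth is at most $k$, so the overall algorithm is polynomial.

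For the single-obstruction lemma, the first step is to determine the shape of $H$. Writing $H$ as the union of its induced cycle $C$ and the tail attached to $C$ at a vertex $r$, minimality forces the tail to be a single edge $r\ell$: if it had an internal vertex $t$ adjacent to $r$ (and to the next tail vertex), then $t$ has no neighbour on $V(C)$ other than $r$, so $G[V(C)\cup\{t\}]$ would be a strictly smaller induced $1$-pan-subdivision. Thus $H=G[V(C)\cup\{\ell\}]$ with $C$ an induced cycle and $\ell$ having exactly one neighbour on $C$. Because $G-V(H)$ has no induced $1$-pan-subdivision, every induced $1$-pan-subdivision of $G$ meets $V(C)\cup\{\ell\}$; in particular those disjoint from $V(C)$ all contain $\ell$, so the single vertex $\ell$ hits all of these. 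It then remains either to find $k$ pairwise vertex-disjoint induced $1$-pan-subdivisions, or to hit with $\mathcal{O}(k\log k)$ more vertices those induced $1$-pan-subdivisions that meet $V(C)$. Here I would classify such a subdivision $D$ by its cycle $C_D$: if $C_D=C$ then $V(C)\subseteq V(D)$, so one fixed vertex of $C$ hits all of these simultaneously; and if $C_D\ne C$ then, since $C$ is the only induced cycle on $V(C)$, the subdivision $D$ uses a vertex outside $V(C)$ and hence contains either a path leaving $V(C)$ into $G-V(C)$ and returning to it (an ear), or a cycle inside $G-V(C)$ joined to $V(C)$ or to $\ell$ by a path.

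This last family is what I expect to be the main obstacle. To handle it I would reduce to covering/packing cycles relative to a prescribed vertex set: apply \autoref{t:gallaiapath} with $A=V(C)$ to control the ears, and \autoref{t:simonovits} to the auxiliary multigraph obtained from $G$ by contracting $V(C)$ to a single vertex (after the usual reduction to a subcubic multigraph), obtaining either $k$ cycles that lift to $k$ pairwise vertex-disjoint induced $1$-pan-subdivisions of $G$, or $\mathcal{O}(k\log k)$ vertices of $G-V(C)$ meeting all of them; adding $\ell$ and the fixed vertex of $C$ then yields a hitting set of the required size. The delicate points, and the reason the proof is not immediate, are that an induced $1$-pan-subdivision meeting $V(C)$ can interact with $C$ in many ways — its cycle may share a single vertex, one arc, or several arcs with $C$, and its tail may run partly along $C$ — so one must verify that every configuration is either destroyed by the computed hitting set or contributes to a genuinely \emph{induced}, pairwise disjoint packing, while keeping the contracted and cleaned multigraph simple enough for \autoref{t:simonovits} to deliver the $\mathcal{O}(k\log k)$ bound; the minimality of $H$ (equivalently, of the length of $C$ among induced cycles admitting a degree-one attachment) is exactly what rules out the degenerate short-cuts. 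Since all subroutines are constructive, this gives the polynomial-time algorithm claimed in \autoref{t:1panep}; the same scheme then applies to the $2$-pan, with $H$ being an induced cycle with a path of length two attached, while \autoref{thm:twoadjacent} already shows that $p$-pans with $p\ge 3$ fail.
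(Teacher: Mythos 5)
Your outer recursion is in effect the same as the paper's: they also peel off minimum-order induced $1$-pan-subdivisions $F_1,F_2,\dots$ to get $G_1\supsetneq G_2\supsetneq\cdots$, stop after at most $k-1$ steps, and then prove by downward induction the exact analogue of your ``single-obstruction lemma'', getting a bounding function $\mathcal{O}(k^2\log k)$ as you do. Your observation that minimality forces the tail of $H$ to be a single edge is also correct and is used (implicitly, via the $(v,C)$ notation) in the paper.

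The gap is in the single-obstruction lemma itself, which you correctly identify as the technical heart but do not prove. Two specific issues. First, your proposal to apply \autoref{t:simonovits} to the multigraph obtained by contracting $V(C)$ to a single vertex $c$ does not deliver what you need: Simonovits' $k$ vertex-disjoint cycles can contain at most one cycle through $c$, and the remaining $k-1$ cycles live entirely in $G-V(C)$, where there is no guarantee of a pendant path back to $V(C)$ to form a pan. The paper instead applies Gallai's theorem to get many disjoint $V(C)$-paths, thins them to a subfamily $\mathcal P'$ whose endpoints are pairwise at $\dist_C$-distance $\ge 3$, and runs Simonovits on the subcubic graph $C\cup\mathcal P'$ \emph{without contracting}, so every cycle found shares an edge of $C$ and the pendant vertex can be chosen as a $C$-neighbour of that edge. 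Second, turning those cycles into \emph{induced} $1$-pan-subdivisions requires the structural fact (paper's \autoref{l:oneneigh}) that when $\abs{C}\ge 5$ and $(u,C)$ is minimum, every vertex outside $V(H)$ has at most one neighbour on $C$; this, together with a short construction lemma (\autoref{lem:induced1pan}), is what certifies the pendant $q$ has no stray edges into the cycle. You invoke minimality in spirit (``rules out the degenerate short-cuts'') but never establish this neighbour bound, and without it the case where a Simonovits cycle tightens to a triangle or $C_4$, or where the chosen pendant has a second attachment, is not handled; this is exactly why the paper treats $\abs{C}\le 4$ separately and keeps $C$ uncontracted.
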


 Let us refer to an induced subdivision of the $1$-pan as a pair $(v, C)$, where $v$ is the vertex of degree one and $C$ is the cycle. We start with structural lemmas.

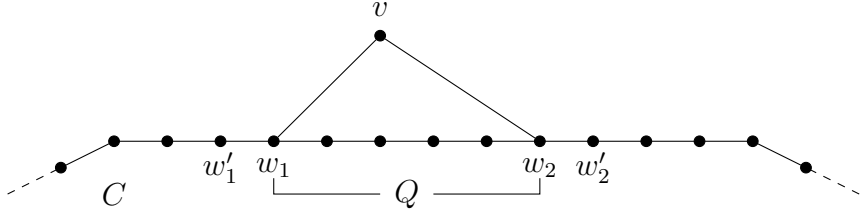
\begin{figure}
  \centering
  \begin{tikzpicture}[scale=0.7]
  \tikzstyle{w}=[circle,draw,fill=black,inner sep=0pt,minimum width=4pt]

   \draw (-2,0)--(10,0);
	\draw(-2, 0)--(-3,-0.5);
	\draw(10, 0)--(11,-0.5);
      \draw (-3,-.5) node [w] {};
       \draw (11,-.5) node [w] {};
 	\draw[dashed](12, -1)--(11,-0.5);
	\draw[dashed](-4,-1)--(-3,-0.5);
 
 \foreach \y in {-2,...,10}{
      \draw (\y,0) node [w] (a\y) {};
     }
       \draw (3,2) node [w] (x) {};
	\draw  (x)--(a1);
	\draw  (x)--(a6);
      \node at (-2, -1) {$C$};
       \node at (3, 2.5) {$v$};
       \node at (1, -.5) {$w_1$};
       \node at (6, -.5) {$w_2$};
       \node at (0, -.5) {$w_1'$};
       \node at (7, -.5) {$w_2'$};
       \node at (3.5, -1) {$Q$};
       \draw(1, -.7)--(1,-1)--(3,-1);
      \draw(6, -.7)--(6,-1)--(4,-1);

   \end{tikzpicture}     \caption{Two neighbors $w_1$ and $w_2$ of $v$ with minimum $\dist_C(w_1, w_2)$ in \autoref{l:oneneigh}.}\label{fig:distance}
\end{figure}

\begin{lemma}\label{l:oneneigh}
  Let $G$ be a graph and let $H=(u,C)$ be an induced subdivision of the $1$-pan in $G$ with minimum number of vertices. 
  If $\abs{C}\geq 5$, then every vertex of $V(G) \setminus V(H)$ has at most one neighbor in~$C$.
\end{lemma}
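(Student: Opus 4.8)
The plan is to argue by contradiction: suppose some vertex $v\in V(G)\setminus V(H)$ has at least two neighbors on $C$, and produce an induced subdivision of the $1$-pan with strictly fewer vertices than $H$. First a harmless normalization: writing the tail of $H$ as $z p_1\cdots p_{m-1}u$ with $z\in V(C)$, if $m\ge 2$ then $C$ together with the single pendant vertex $p_1$ at $z$ is already an induced $1$-pan subdivision on $|C|+1<\abs{H}$ vertices, so minimality of $H$ forces $m=1$; thus $H$ consists of $C$ plus a vertex $u$ adjacent to exactly one vertex $z$ of $C$, and $\abs{H}=\abs{C}+1$. Consequently, to reach a contradiction it suffices to exhibit an induced $1$-pan subdivision on at most $\abs{C}$ vertices (in particular, any one on $4$ vertices, since $\abs{C}\ge 5$).

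Among the neighbors of $v$ on $C$, choose $w_1,w_2$ minimizing $d:=\dist_C(w_1,w_2)$, and let $Q$ (of length $d$) and $Q'$ (of length $\abs{C}-d$) be the two arcs of $C$ from $w_1$ to $w_2$. The cycle $D$ obtained from $Q$ by adding the edges $vw_1$ and $vw_2$ is an induced cycle of length $d+2$: its only possible chords would be an edge from $v$ to an internal vertex of $Q$ (impossible: such a vertex would be a neighbor of $v$ at $C$-distance $<d$ from $w_1$), a chord inside $Q$ (impossible: $Q$ is a subpath of the induced cycle $C$), or the edge $w_1w_2$ when $d\ge 2$ (impossible, again as $C$ is induced). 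When $d\ge 2$, let $w_1'$ be the neighbor of $w_1$ lying on $Q'$; since $\abs{C}\ge 5$ and $d\le\lfloor\abs{C}/2\rfloor$, the vertex $w_1'$ is internal to $Q'$, hence $w_1'\notin V(D)$, and moreover $w_1'\not\sim v$ (it is at $C$-distance $1<d$ from $w_1$) and its other $C$-neighbor lies outside $V(D)$, so $w_1'$ has no neighbor in $D$ except $w_1$. Thus $D$ with pendant $w_1'$ at $w_1$ is an induced $1$-pan subdivision on $d+3$ vertices, and since $d\le\lfloor\abs{C}/2\rfloor\le\abs{C}-3$ for $\abs{C}\ge 5$ this number is at most $\abs{C}<\abs{H}$, a contradiction.

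It remains to handle $d=1$, i.e.\ $v$ has two adjacent neighbors $w_1,w_2$ on $C$. Let $A$ be the maximal arc of $C$ through $w_1,w_2$ all of whose vertices are adjacent to $v$. If $A\ne V(C)$, take an endpoint $a$ of $A$, its $C$-neighbor $a''$ inside $A$, and its $C$-neighbor $a'$ outside $A$ (so $a'\not\sim v$); since $\abs{C}\ge 5$ one has $a'\not\sim a''$, so the triangle $\{v,a,a''\}$ with pendant $a'$ at $a$ is an induced $1$-pan subdivision on $4<\abs{H}$ vertices. If $A=V(C)$, i.e.\ $v$ dominates $C$, I use the vertex $u$: if $v\not\sim u$ then the triangle $\{v,z^-,z\}$ (with $z^-$ a $C$-neighbor of $z$) with pendant $u$ at $z$ works; if $v\sim u$ then the triangle $\{v,z,u\}$ with, as pendant at $v$, any vertex of $C$ at $C$-distance at least $2$ from $z$ (one exists since $\abs{C}\ge 5$) works. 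In all cases we obtain an induced $1$-pan subdivision on $4<\abs{H}$ vertices.

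I expect the main obstacle to be the domination subcase of $d=1$: the purely cyclic pendant construction breaks down exactly when $v$ sees all of $C$, and one must bring in the tail of $H$ (together with the reduction that the tail may be assumed of length $1$). The only other point needing care is the repeated verification that a chosen pendant vertex has no neighbor on the cycle it is attached to other than the intended attachment point — this is where minimality of $d$ and the hypothesis $\abs{C}\ge 5$ (ruling out short $C$-distances and chords) are invoked each time.
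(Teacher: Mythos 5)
Your proof is correct, and it shares the opening move with the paper (pick two neighbors $w_1,w_2$ of $v$ on $C$ at minimum $C$-distance $d$, and look at the induced cycle $D=G[V(Q)\cup\{v\}]$), but the way you finish is genuinely different. The paper's closing argument is uniform over $d$: writing $w_i'$ for the $C$-neighbor of $w_i$ outside $Q$, it observes that if $v\not\sim w_i'$ then $G[V(Q)\cup\{v,w_i'\}]$ is a smaller induced $1$-pan subdivision, so minimality forces $v\sim w_1'$ and $v\sim w_2'$; then $G[\{v,w_1,w_1',w_2'\}]$ is a $4$-vertex $1$-pan — a triangle $\{v,w_1,w_1'\}$ with $w_2'$ pendant \emph{at $v$}. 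Attaching the pendant at $v$ rather than at a cycle vertex is exactly what lets the paper absorb the ``$v$ dominates $C$'' situation with no extra case, and also makes the tail-length normalization and your separate $d\ge 2$ / $d=1$ / $A=V(C)$ split unnecessary. Your route is more elementary in each step — the $d\ge 2$ case drops out directly from minimality of $d$, and the $A\neq V(C)$ case is just a boundary-of-a-run argument — but it must reach outside $C$ to the tail vertex $u$ in the dominating subcase, which is why you needed the preliminary reduction to $|H|=|C|+1$. Both proofs are sound; the paper's is shorter because of the one trick you didn't find, and yours gains a reusable normalization (minimum models of the $1$-pan have a length-one tail).
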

\begin{proof}
 Suppose for contradiction that there exists a vertex $v\in V(G)\setminus V(H)$ having at least two neighbors in $C$. 
 We choose two neighbors $w_1, w_2$ of $v$ with minimum $\dist_C(w_1, w_2)$.
 Let $Q$ be a shortest path from $w_1$ to $w_2$ in $C$.
 See \autoref{fig:distance} for an illustration.
Then $G[V(Q)\cup \{v\}]$ is an induced cycle, 
and it is strictly shorter than $C$, as $\abs{C}\ge 5$ and 
\[\abs{Q}+1\le \frac{\abs{C}}{2}+2< \abs{C}.\]

If $V(C)=V(Q)$, then $w_1$ is adjacent to $w_2$, and by the choice of $Q$, 
$Q$ also has length $1$. This is not possible. So we know that $C-V(Q)$ has at least one vertex.

For each $i\in \{1,2\}$, let $w_i'$ be the neighbor of $w_i$ in $C$, which is not on the path $Q$.
If $w_1'=w_2'$, then $Q$ has length at most $2$, and $|C|\le 4$, a contradiction.
So, we may assume that $w_1'\neq w_2'$.

Since $H$ is an induced subdivision with minimum number of vertices, for each $i\in \{1, 2\}$, 
$G[V(Q)\cup \{v,w_i'\}]$ is not an induced subdivision of the 1-pan.
It implies that $v$ is adjacent to both $w_1'$ and $w_2'$.
Note that $w_1'$ is not adjacent to $w_2'$; otherwise, $\dist_C(w_1, w_2)\le \dist_C(w_1', w_2')=1$ and
we would have $\abs{C}\le 4$, a contradiction. 
Therefore, $G[\{v, w_1, w_1', w_2'\}]$ is isomorphic to the $1$-pan, contradicting the assumption that $H$ is an induced subdivision with minimum number of vertices. 

We conclude that every vertex in $V(G)\setminus V(H)$ has at most one neighbor in $C$.
\end{proof}

Suppose $(u,C)$ is an induced subdivision of the $1$-pan in a graph $G$.
In the next lemmas, we explain how to extract many induced subdivisions of the 1-pan
from a set of $V(C)$-paths of $G-E(C)$.
 For an induced cycle $U$ of $G$ and $q\in V(G) \setminus V(U)$, we call $(q,U)$ a \emph{good pair} if $G[\{q\}\cup V(U)]$ contains an induced subdivision of the 1-pan.

\begin{lemma}\label{lem:induced1pan}
Let $C=v_1v_2 \cdots v_mv_1$ be an induced cycle of length at least $4$ in a graph $G$, and let $v\in V(G)\setminus V(C)$ such that 
$v$ is adjacent to $v_3$, and non-adjacent to $v_1, v_2, v_4$.
Then $G[V(C)\cup \{v\}]$ contains an induced subdivision of the $1$-pan.
\end{lemma}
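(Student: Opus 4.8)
The plan is to locate, inside $H' := G[V(C)\cup\{v\}]$, an induced cycle $U$ together with a vertex $q \in V(H')\setminus V(U)$ that has exactly one neighbour on $U$. Such a pair immediately yields an induced subdivision of the $1$-pan, namely $G[V(U)\cup\{q\}]$: it consists of the cycle $U$ plus the single pendant edge joining $q$ to its unique neighbour on $U$, and it is an induced subgraph of $G$ since $H'$ is. I would then split into two cases according to whether $v_3$ is the only neighbour of $v$ on $C$ or not.

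In the first case, suppose $v_3$ is the unique neighbour of $v$ on $C$. Since $C$ is induced and $v$ sends only the edge $vv_3$ into $V(C)$, the graph $H'$ is exactly $C$ together with the pendant edge $vv_3$. As $|C|\ge 4\ge 3$, this is already a subdivision of the $1$-pan (subdivide the triangle into the cycle $C$, keep the pendant edge unsubdivided), so we are done with $U=C$ and $q=v$. In particular this disposes of the case $|C|=4$, where $v$ automatically has no neighbour on $C$ besides $v_3$.

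In the second case, $v$ has a neighbour on $C$ other than $v_3$; since $v$ is non-adjacent to $v_1,v_2,v_4$, this neighbour is $v_j$ for some $j$ with $5\le j\le m$ (so $m\ge 5$). I would pick $j$ minimal with this property. Then $v$ is adjacent to $v_3$ and to $v_j$ but to none of $v_4,\dots,v_{j-1}$, so $U := v\,v_3\,v_4\cdots v_j\,v$ is an induced cycle of $H'$ of length $j-1\ge 4$: the arc $v_3v_4\cdots v_j$ is chordless because it is a subpath of the induced cycle $C$, and by minimality of $j$ the vertex $v$ has no neighbour among $v_4,\dots,v_{j-1}$. Finally I would take $q=v_2$: because $C$ is induced and $v\not\sim v_2$, the neighbours of $v_2$ inside $V(C)\cup\{v\}$ are exactly $v_1$ and $v_3$; of these, $v_1\notin V(U)=\{v,v_3,\dots,v_j\}$ (as the index $1$ is not among $3,\dots,j$) while $v_3\in V(U)$. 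Hence $v_2\notin V(U)$ has exactly one neighbour on $U$, and $G[V(U)\cup\{v_2\}]$ is the desired induced subdivision of the $1$-pan.

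I do not anticipate a genuine obstacle; the statement reduces to a short case analysis. The only points requiring care are in the second case: choosing the correct neighbour $v_j$ (minimality of $j$ is precisely what guarantees the auxiliary cycle $U$ is induced), and verifying that the pendant vertex $v_2$ lies outside $U$ and has a unique neighbour on it — which uses both that $v\not\sim v_2$ and that the second $C$-neighbour $v_1$ of $v_2$ is avoided by $U$.
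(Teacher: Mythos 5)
Your proof is correct and follows essentially the same route as the paper: the case where $v_3$ is $v$'s only neighbour on $C$ is immediate, and otherwise you choose the minimal index $j\ge 5$ with $v_j\sim v$, form the induced cycle on $\{v,v_3,\dots,v_j\}$, and attach $v_2$ as the pendant — which is exactly the subgraph $G[\{v,v_2,v_3,\dots,v_j\}]$ the paper exhibits. Your write-up is somewhat more explicit about why $v_2$ has a unique neighbour on $U$, but the argument is the same.
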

\begin{proof}
If $v$ has no neighbors in $V(C)\setminus \{v_1, v_2, v_3, v_4\}$, then it is clear. 
We may assume $v$ has a neighbor in $V(C)\setminus \{v_1, v_2, v_3, v_4\}$.
We choose a neighbor $v_i$ of $v$ in $V(C)\setminus \{v_1, v_2, v_3, v_4\}$ with minimum $i$.
Then $G[\{v, v_2, v_3, v_4, \ldots, v_i\}]$ is an induced subdivision of the $1$-pan in $G$. 
\end{proof}

\begin{lemma}\label{l:panpack}
  Let $k\ge 2$ be an integer, let $G$ be a graph, and $(v,C)$ be an induced subdivision of the $1$-pan in $G$ with minimum number of vertices.
  Given a set  $\mathcal{P}$ of vertex-disjoint $V(C)$-paths in  $G-E(C)$ with $\abs{\mathcal{P}}\ge 108 k \log k$, one can find  $k$ vertex-disjoint induced subdivisions of the 1-pan in time $\mathcal{O}(\abs{G}^3)$.
\end{lemma}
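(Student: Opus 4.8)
The plan is to use the minimality of $(v,C)$ to control how the paths of $\mathcal{P}$ can meet $C$, and then a Simonovits-style counting argument (\autoref{t:simonovits}) to turn the $108\,k\log k$ paths into $k$ pairwise vertex-disjoint induced subdivisions of the $1$-pan. The case $\abs{C}\le 4$ is special and I would treat it directly: then $C$ has only a bounded number of vertices, so a linear number of the paths in $\mathcal{P}$ share the same pair of endpoints on $C$, i.e.\ form a family of internally disjoint paths between two fixed vertices of $C$, and a (somewhat tedious) case analysis extracts $\Omega(k)$ pairwise vertex-disjoint induced $1$-pan subdivisions from such a family. So from now on assume $\abs{C}\ge 5$.

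Note first that, by minimality, $V(H)\setminus V(C)$ consists of a single vertex, which has exactly one neighbour on $C$; combined with \autoref{l:oneneigh} this means that every internal vertex of a path of $\mathcal{P}$ has at most one neighbour on $C$. After replacing each $P\in\mathcal{P}$ by a chordless subpath of $G[V(P)]$ with the same endpoints — which keeps the paths vertex-disjoint and only shrinks them — I would establish the following local fact: for each such $P$ with endpoints $a,b$, since $C$ is induced and each internal vertex of $P$ sees at most one vertex of $C$, the graph $G[V(C)\cup V(P)]$ contains an induced subdivision of the $1$-pan whose cycle is a chordless path through $V(P)$ closed up by an arc of $C$ and whose pendant is a short stub of the complementary arc of $C$. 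The (boundedly many) internal vertices of $P$ that do have a neighbour on $C$ force a short case analysis on which arc and stub to pick, and \autoref{lem:induced1pan} is the tool that turns ``induced cycle plus stub'' into ``induced $1$-pan''.

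To obtain $k$ \emph{pairwise disjoint} such subdivisions, contract $C$ to its skeleton: keep only the $2\abs{\mathcal{P}}$ endpoints of the paths, join consecutive endpoints along $C$ (each such edge standing for an arc of $C$, possibly with hidden internal vertices), and add one edge per path between its two endpoints. This is a cubic multigraph on $2\abs{\mathcal{P}}\ge 216\,k\log k$ vertices, so by \autoref{t:simonovits} it contains a number of vertex-disjoint cycles that is linear in $k$ (comfortably more than $k$), computable in time $\mathcal{O}(\abs{G}^3)$, and all but at most one of these use at least one path. Lifting them back gives pairwise vertex-disjoint cycles of $C\cup\bigcup\mathcal{P}$ that are individually ``thin'' and therefore leave large parts of $C$ untouched; a greedy argument then selects $k$ of them together with pairwise disjoint stubs of $C$ attached to them and avoiding all chosen cycles. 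Using once more that $C$ is induced (no chords from a stub to an arc) and \autoref{l:oneneigh} plus a short case analysis (no chords from a stub to the path parts of a chosen cycle), each of these $k$ cycles with its stub is an induced subdivision of the $1$-pan, and the $k$ of them are pairwise vertex-disjoint, as required.

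I expect induced-ness to be the main obstacle, and this is also what the chordless-path preprocessing and the generous constant $108$ are for: a lifted cycle of $C\cup\bigcup\mathcal{P}$ need not be an induced subgraph of $G$, because internal vertices of distinct paths may be adjacent and a chosen stub may be adjacent to path vertices of the cycle; carefully picking reroutes and arcs to avoid all such chords, while keeping the final $k$ subdivisions disjoint, is the bulk of the work. Everything else is routine bookkeeping, and the running time is dominated by the single call to \autoref{t:simonovits}.
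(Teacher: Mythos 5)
Your high-level plan coincides with the paper's: build an auxiliary (sub)cubic graph out of $C \cup \bigcup\mathcal{P}$, apply Simonovits (\autoref{t:simonovits}) to extract many vertex-disjoint cycles, turn each into an induced cycle containing an edge of $C$, and attach a short stub of $C$ to each via \autoref{lem:induced1pan}, using \autoref{l:oneneigh} to control chords. That much is right, and you correctly identify the two helper lemmas.

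There are, however, two issues. The first is minor: the case $\abs{C}\le 4$ that you set aside for ``a somewhat tedious case analysis'' cannot occur at all. The paths in $\mathcal{P}$ are vertex-disjoint $V(C)$-paths, so $\abs{C}\ge 2\abs{\mathcal{P}}\ge 216k\log k$; the paper dismisses this in one line. So the special case is vacuous and should not be presented as genuine work.

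The second is the real gap. You run Simonovits on the full family $\mathcal{P}$ and postpone the disjointness of the stubs to ``a greedy argument,'' but you never explain how that greedy selection works, and it is precisely the nontrivial point. Once Simonovits has fixed the $\approx 9k$ cycles, two of them can meet $C$ at adjacent (or nearly adjacent) attachment points, and then the two candidate stubs $q$ collide or land inside the other cycle; you have no control over this after the fact. The paper avoids this by thinning \emph{before} Simonovits: it extracts a subfamily $\mathcal{P}'\subseteq\mathcal{P}$ with $\abs{\mathcal{P}'}\ge \abs{\mathcal{P}}/9 \ge 12k\log k$ and with any two endpoints of distinct paths at $C$-distance at least $3$, then builds the auxiliary graph from $\mathcal{P}'$ only. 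This separation guarantees that for each cycle $U_i$ the stub $q_i$ (the $C$-neighbor of a degree-$3$ endpoint, on the side away from $U_i$) is distinct from all other $q_j$ and from all other cycles, which is exactly what your ``generous constant $108$'' is for ($108 = 9\times 12$, with $24k\log k$ degree-$3$ vertices needed for Simonovits). Without this pre-thinning your argument does not go through as written, and your chordless-subpath preprocessing does not help here — it addresses chords inside a single path, not collisions between stubs of different cycles. To fix the proposal, move the ``separation'' step before the Simonovits call, exactly as in the paper.
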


\begin{proof}
 The existence of such a set $\mathcal{P}$ guarantees that $C$ has length at least $5$.

For each $P\in \mathcal{P}$, let $\last(P)$ be the set of end vertices of $P$.
We construct a subset $\mathcal{P}'$ of $\mathcal{P}$ with the following property:
\begin{itemize}
\item $\forall P_1, P_2\in \mathcal{P}',\ \dist_C(\last(P_1), \last(P_2))\ge 3$,
\item $\abs{\mathcal{P}'}\ge 12k\log k$.
\end{itemize}
This can be done by repeatedly choosing a path $P$ in $\mathcal{P}$ and discarding from $\mathcal{P}$ all paths that have an endpoint at distance at most 4 from one of $\last(P)$ in $C$. For each path added to $\mathcal{P}'$, at most 8 are discarded, hence $\abs{\mathcal{P}'}\ge \abs{\mathcal{P}}/9\ge 12k\log k$.

  We consider the subgraph $H$ on the vertex set $V(C)\cup (\bigcup_{P\in \mathcal{P'}}V(P))$ and edge set $E(C)\cup (\bigcup_{P\in \mathcal{P'}}E(P))$.
  This graph has maximum degree $3$ and has at least $24k \log k$ vertices of degree $3$, as each path of $\mathcal{P'}$ contributes for two vertices of degree $3$.

  According to \autoref{t:simonovits}, one can in time $\abs{G}^3$ construct a set $\mathcal{Q}'$ of $k$ vertex-disjoint cycles in $H$. 
  Observe that $C$ intersects all other cycles, and thus, $C$ is not contained in $\mathcal{Q}'$.
  Note that for each cycle $U$ of $\mathcal{Q}'$, $G[V(U)]$ contains an induced cycle that has at least one edge of~$C$.
  This is always possible: in the cycle $U$, every chord $e$ divides the cycle into two paths, one of which, together with $e$, is again a cycle containing an edge of $C$ and less chords.
  Let $\mathcal{Q}$ be a collection of $k$ resulting induced cycles of $G$ obtained from~$\mathcal{Q}'$.

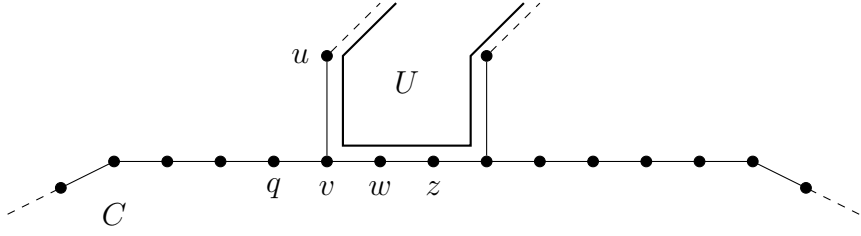
\begin{figure}
  \centering
  \begin{tikzpicture}[scale=0.7]
  \tikzstyle{w}=[circle,draw,fill=black,inner sep=0pt,minimum width=4pt]

   \draw (-2,0)--(10,0);
	\draw(-2, 0)--(-3,-0.5);
	\draw(10, 0)--(11,-0.5);
      \draw (-3,-.5) node [w] {};
       \draw (11,-.5) node [w] {};
 	\draw[dashed](12, -1)--(11,-0.5);
	\draw[dashed](-4,-1)--(-3,-0.5);
 
 \foreach \y in {-2,...,10}{
      \draw (\y,0) node [w] (a\y) {};
     }
       \draw (2,2) node [w] (x1) {};
       \draw (5,2) node [w] (x2) {};
	\draw  (x1)--(a2);
	\draw  (x2)--(a5);
	\draw[dashed] (x1)--(3,3);
	\draw[dashed] (x2)--(6,3);
	
	\draw[thick] (3.3,3)--(2.3, 2)--(2.3,0.3)--(4.7,0.3)--(4.7, 2)--(5.7,3);
	
      \node at (-2, -1) {$C$};
      \node at (3.5, 1.5) {$U$};
      \node at (2, -.5) {$v$};
      \node at (3, -.5) {$w$};
      \node at (4, -.5) {$z$};
      \node at (1.5, 2) {$u$};
          
     \node at (1, -.5) {$q$};

   \end{tikzpicture}     \caption{Four consecutive vertices $u,v,w,z$ of $U$ and the neighbor $q$ of $v$ in $C$ other than $w$, described in \autoref{l:panpack}.
   Since $U$ is an induced cycle, such a neighbor $q$ exists.}\label{fig:cycle}
\end{figure}

   For each cycle $U\in \mathcal{Q}$, there are four consecutive vertices $u, v, w, z$ of $U$ such that $vw\in E(C)\cap E(U)$ and $u\notin V(C)$.
   See \autoref{fig:cycle} for an illustration.
   Note that the neighbor of $v$ in $C$ other than $w$ is not contained in $U$, because $U$ has no vertex of degree $3$ in $G$.
 Let $q$ be the neighbor of $v$ in $C$ other than $w$.
 Observe that $q$ is adjacent to $v$ but has no neighbors in $\{u,w,z\}$ by \autoref{l:oneneigh}.
 Indeed, if $z$ is in $C$, then $q$ is not adjacent to $z$ because $\abs{C}\ge 5$,
 and if $z$ is not in $C$, then $q$ is not adjacent to $z$ because $q$ has a neighbor $w$ in $C$.
 Therefore, $G[\{q\}\cup V(U)]$ contains an induced subdivision of the $1$-pan by \autoref{lem:induced1pan}, that is, $(q, U)$ is a good pair. 

	Following the above procedure, for each $U_i\in \mathcal{Q}$, 
	we choose a vertex $q_i$ as $q$.
	Observe that for distinct cycles $U_i, U_j\in \mathcal{Q}$, 
	$q_i\neq q_j$ because otherwise 
	\[\dist_C(V(U_i)\cap V(C), V(U_j)\cap V(C))\le 2,\] 
	and it implies that for some $P, P'\in \mathcal{P}'$,
	\[\dist_C(\last(P), \last(P'))\le 2.\] 
	It contradicts our choice of $\mathcal{P}'$.
	Therefore, $q_1, q_2, \ldots, q_k$ are distinct vertices, and 
    \[(q_1, U_1), (q_2, U_2), \ldots, (q_k, U_k)\] are pairwise disjoint $k$ good pairs.
	Using \autoref{lem:induced1pan}, one can output $k$ vertex-disjoint induced subdivisions of the $1$-pan in linear time.
\end{proof}	

We are now ready to prove the main result of this section.
\begin{proof}[Proof of \autoref{t:1panep}.]
If $k=1$, then there is nothing to show. We assume that $k\ge 2$.

We construct sequences of graphs $G_1,\ldots, G_{\ell+1}$ and $F_1,\ldots , F_{\ell}$ 
with maximum $\ell$
such that 
\begin{itemize}
\item $G_1=G$, 
\item for every $i\in \{1, \ldots, \ell\}$, $F_i$ is an induced subdivision of the $1$-pan in $G_i$ with minimum number of vertices,
\item for every $i\in \{1, \ldots, \ell\}$, $G_{i+1}=G_i-V(F_i)$.
\end{itemize}
Such a sequence can be constructed in polynomial time repeatedly applying \autoref{lem:algorithm}.
If $\ell \geq k$, then we have found a  packing of $k$ induced subdivisions of the $1$-pan. 
Hence, we may assume that $\ell \leq k-1$. 

Let $\mu_k:=216k\log k +12k-11$. The rest of the proof relies on the following claim.

\begin{claim}
Let $j \in \intv{1}{\ell+1}$. One can find in polynomial time either $k$ vertex-disjoint induced subdivisions of the $1$-pan, or 
a vertex set $X_{j}$ of $G_{j}$ of size at most $(\ell+1-j)\mu_k$ such that $G_j-X_j$ has no induced subdivision of the $1$-pan. 
\end{claim}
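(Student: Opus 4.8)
The plan is to prove the claim by \emph{reverse} induction on $j$, running from $j=\ell+1$ down to $j=1$; applying the result with $j=1$ and recalling that $\ell\le k-1$ then produces a hitting set of size $\le(k-1)\mu_k=\mathcal{O}(k^2\log k)$, which together with the packing alternative gives \autoref{t:1panep}. For the base case $j=\ell+1$ we have $(\ell+1-j)\mu_k=0$, and $G_{\ell+1}=G_\ell-V(F_\ell)$ has no induced subdivision of the $1$-pan by the maximality of $\ell$, so $X_{\ell+1}:=\emptyset$ works. For the inductive step, suppose $1\le j\le\ell$ and the statement holds for $j+1$: running that algorithm either returns $k$ vertex-disjoint induced subdivisions of the $1$-pan (and we are done), or a set $X_{j+1}\subseteq V(G_{j+1})$ with $\abs{X_{j+1}}\le(\ell-j)\mu_k$ such that $G_{j+1}-X_{j+1}$ has no induced subdivision of the $1$-pan. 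Put $G':=G_j-X_{j+1}$. Since $X_{j+1}\subseteq V(G_{j+1})=V(G_j)\setminus V(F_j)$, the graph $F_j$ is an induced subgraph of $G'$, is still a smallest induced subdivision of the $1$-pan in $G'$ (as $G'\subseteq G_j$), and $G'-V(F_j)=G_{j+1}-X_{j+1}$ has none. Hence it suffices to prove the following \emph{main step} and then set $X_j:=X_{j+1}\cup Z$, which has size at most $(\ell-j)\mu_k+\mu_k=(\ell+1-j)\mu_k$ and satisfies $G_j-X_j=G'-Z$.

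\smallskip\noindent\textit{Main step.} Given a graph $G'$ with a smallest induced subdivision $F=(v,C)$ of the $1$-pan such that $G'-V(F)$ has no induced subdivision of the $1$-pan, one can find in polynomial time either $k$ vertex-disjoint induced subdivisions of the $1$-pan in $G'$, or a set $Z\subseteq V(G')$ with $\abs{Z}\le\mu_k$ such that $G'-Z$ has no induced subdivision of the $1$-pan.

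\smallskip
To prove it I would first record a structural fact: the pendant path of a smallest induced subdivision of the $1$-pan has length exactly $1$, since if it had length $\ge 2$ then its second vertex has exactly one neighbour on the cycle (by the definition of a model), so deleting all pendant vertices but that one would give a strictly smaller induced subdivision of the $1$-pan; thus $V(F)\setminus V(C)=\{v\}$. If $\abs{C}\le 4$ then $\abs{V(F)}\le 5\le\mu_k$ and $Z:=V(F)$ works. If $\abs{C}\ge 5$, then by \autoref{l:oneneigh} every vertex of $V(G')\setminus V(F)$ has at most one neighbour in $C$. Apply \autoref{t:gallaiapath} to $G'-E(C)$ with $A=V(C)$ and parameter $108k\log k$: in the first outcome, \autoref{l:panpack} turns the $108k\log k$ vertex-disjoint $V(C)$-paths into $k$ vertex-disjoint induced subdivisions of the $1$-pan; in the second outcome we obtain $X$ with $\abs{X}\le 216k\log k-2$ such that $G'-E(C)-X$ has no $V(C)$-path, and since $C$ is an induced cycle (hence chordless) we may take $X$ disjoint from $V(C)$.

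It then remains to hit, with at most a linear in $k$ number of further vertices — or else to extract a packing of size $k$ — all induced subdivisions of the $1$-pan of $G'':=G'-X$. The structural key is that, because $G''-E(C)$ has no $V(C)$-path and $C$ is an induced cycle of $G''$, every cycle of $G''$ distinct from $C$ lies inside a single connected component of $G''-E(C)$, and each such component meets $V(C)$ in at most one vertex; in particular the only edges leaving such a component $B_c$ (with $B_c\cap V(C)=\{c\}$) are the two $C$-edges at $c$. Combining this with $V(F)\setminus V(C)=\{v\}$ and the hypothesis that $G''-V(F)$ has no induced subdivision of the $1$-pan, one shows that an induced subdivision $S$ of the $1$-pan in $G''$ that avoids $v$ and a fixed vertex $c^*\in V(C)$ must have its cycle inside $B_c$ for a \emph{unique} $c\in V(C)$, and must then contain $c$ (every path leaving $B_c$ passes through $c$, and $S$ must meet $V(C)$). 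Letting $D$ be the set of those $c\in V(C)$ for which $G''[B_c]$ is not a forest (detected trivially, and each such $B_c$ yielding an induced subdivision of the $1$-pan inside $G''[B_c\cup\{c_\pm\}]$ by a short lollipop argument as in \autoref{lem:induced1pan}), we are in one of two cases: either $\abs{D}$ exceeds a fixed constant times $k$, and then, selecting one such induced subdivision in every third active component so that the auxiliary vertices $c_\pm$ do not collide, we obtain $k$ pairwise vertex-disjoint induced subdivisions of the $1$-pan; or $\abs{D}$ is bounded by that constant times $k$, and $Z:=X\cup\{v,c^*\}\cup D$ has size at most $216k\log k-2+\mathcal{O}(k)\le\mu_k$ and $G'-Z$ has no induced subdivision of the $1$-pan.

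The obstacle I anticipate is precisely this last structural analysis in $G''=G'-X$: pinning down how the pendant path of a would-be induced subdivision of the $1$-pan can meander through $V(C)$ and across the components of $G''-E(C)$, and converting the count of ``active'' components into either $k$ disjoint subdivisions or an $\mathcal{O}(k)$-vertex hitting set — this is where the $12k-11$ summand of $\mu_k$ is consumed. Everything else — the reverse induction, the reduction to the main step, the $\abs{C}\le 4$ case, and the \autoref{t:gallaiapath}/\autoref{l:panpack} dichotomy — is routine, and the polynomial running time follows since each ingredient (\autoref{lem:algorithm}, \autoref{t:gallaiapath}, \autoref{l:panpack}, \autoref{t:simonovits}) is algorithmic and the recursion has at most $\ell+1\le k$ levels.
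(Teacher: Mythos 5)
Your scaffolding coincides with the paper's proof: the reverse induction on $j$, the base case $X_{\ell+1}=\emptyset$, the reduction to a single ``main step'' for $G'=G_j-X_{j+1}$ (where $F_j$ is still minimum and $G'-V(F_j)$ is clean), the observation that minimality forces the pendant part to be a single vertex, the $\abs{C}\le 4$ case, and the dichotomy via \autoref{t:gallaiapath} and \autoref{l:panpack} are all exactly as in the paper. The divergence is in the cleanup phase after the Gallai hitting set $S$ is obtained: the paper deletes $S\cup\{u\}$, shows any surviving subdivision meets $C$ in at most two consecutive vertices, and runs a greedy test over windows of four consecutive vertices of $C$, ending with either $k$ disjoint subdivisions or a set $U'$ of size at most $12(k-1)$; you instead propose a component analysis of $G''-E(C)$ and delete the $C$-vertices of the non-forest components, plus $\{v,c^*\}$.

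Your alternative is plausible but, as written, has genuine gaps precisely where you stop. First, the assertion that the Gallai hitting set ``may be taken disjoint from $V(C)$'' does not follow from chordlessness and is false in general: if a single vertex $c\in V(C)$ has many private neighbours $x_1,\dots,x_t$ outside $C$, each joined to a distinct far-away vertex of $C$, then $\{c\}$ hits all of these $V(C)$-paths, while any hitting set avoiding $V(C)$ must contain every $x_i$. (The step happens to be unnecessary: even when $S$ meets $V(C)$, the absence of $V(C)$-paths still forces each component of $G''-E(C)$ to contain at most one surviving $C$-vertex, so the analysis can be repaired, but you did not do so and your later statements lean on $C$ being intact in $G''$.) Second, the claim that every non-forest component $B_c$ together with $c_\pm$ contains an induced subdivision of the $1$-pan is not ``a short lollipop argument as in \autoref{lem:induced1pan}'': that lemma treats cycles of length at least $4$ with a prescribed adjacency pattern and says nothing when the relevant induced cycles are triangles or when every vertex adjacent to an induced cycle of $B_c$ has two or more neighbours on it. The claim is true, but it needs a real argument (for instance: in a connected graph containing a cycle and no induced $1$-pan subdivision, every vertex lies on an induced cycle, and then the pendant $c_-$ attached at $c$ yields the $1$-pan), and the quantitative conversion of ``many active components'' into $k$ disjoint subdivisions or an $\mathcal{O}(k)$-size deletion set is only gestured at. You explicitly flag this last analysis as the anticipated obstacle; it is exactly the part the paper resolves with its windowed marking along $C$, so the decisive step of the claim remains unproved in your proposal.
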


\begin{proofclaim}
  We prove the claim by induction for $j=\ell +1$ down to $j=1$.
The claim trivially holds for $j=\ell+1$ with $X_{\ell+1}=\emptyset$ because $G_{\ell+1}$ has no induced subdivision of the $1$-pan.
Let us  assume that for some $j\leq \ell$, we obtained a required vertex set $X_{j+1}$ of $G_{j+1}$ of size at most $(\ell-j)\mu_k$.
Then in $G_{j}-X_{j+1}$, $F_{j}$ is an induced subdivision of the $1$-pan with minimum number of vertices.
If $F_{j}$ has at most $5$ vertices, then we set $X_{j}:=X_{j+1}\cup V(F_{j})$. Clearly, $\abs{X_j}\leq (\ell-j+1)\mu_k$.
We may thus assume that $F_j$ has at least $6$ vertices. Let $F_j:=(u,C)$. Observe that $\abs{C}\ge 5$.

We first apply Gallai's $A$-path Theorem (\autoref{t:gallaiapath}) with $A=V(C)$ for finding at least $108k \log k$ pairwise vertex-disjoint $V(C)$-paths in $(G_j-X_{j+1})-E(C)$.
Assume it outputs such $V(C)$-paths.
Then, by applying \autoref{l:panpack} to $G_{j}-X_{j+1}$ and $V(C)$,
one can find in polynomial time $k$ vertex-disjoint induced subdivisions of the $1$-pan.
Thus, we may assume that \autoref{t:gallaiapath} outputs a vertex set $S$ of size at most $216k\log k$ hitting all $V(C)$-paths in $(G_j-X_{j+1})-E(C)$.
 
Now, we consider the graph $G_j':=G_j-(X_{j+1}\cup S\cup \{u\})$. 
Suppose $G_j'$ contains an induced subdivision $Q=(d,D)$ of the $1$-pan.
Then $G_j'[V(C)\cap V(Q)]$ is connected; otherwise, $G_j'$ contains a $V(C)$-path in $(G_j-X_{j+1})-E(C)$, contradicting with that $S$ hits all such $V(C)$-paths.
Furthermore, $G_j'[V(C)\cap V(Q)]$ contains no edge of $D$; otherwise, we also have a $V(C)$-path in $(G_j-X_{j+1})-E(C)$.
Thus, we have that $\abs{V(C)\cap V(Q)}\le 2$.

To hit such remaining induced subdivisions of the $1$-pan,
we recursively construct sets $U\subseteq V(C)\setminus S$ and $\mathcal{J}$ as follows. At the beginning, set $U:=\emptyset$ and $\mathcal{J}:=\emptyset$.
For every set of four consecutive vertices $v_1, v_2, v_3, v_4$ of $C$ with $v_2, v_3\notin U$ and $v_2, v_3\notin S$, 
we test whether $G_j'[(V(G_j')\setminus V(C))\cup \{v_2,v_3\}]$ contains an induced subdivision $H$ of the $1$-pan, 
and if so, add vertices $v_1, v_2, v_3, v_4$ to $U$, and add $H$ to $\mathcal{J}$ and increase the counter by 1. If the counter reaches $k$, then we stop. 

Observe that graphs in $\mathcal{J}$ are pairwise vertex-disjoint; otherwise, we can find a $V(C)$-path in $(G_j-X_{j+1})-E(C)$, a contradiction.
Thus, if the counter reaches $k$, we can output $k$ pairwise vertex-disjoint induced subdivisions of the $1$-pan in polynomial time.

Assume that the counter does not reach $k$. Then
we have $\abs{U}\leq 4(k-1)$. We take the $1$-neighborhood $U'$ of $U$ in $C$, and observe that $\abs{U'}\le 12(k-1)$. 
We claim that $G_j'-U'$ has no induced subdivision of the $1$-pan. 
Suppose for contradiction that there is an induced subdivision $F$ of the $1$-pan in $G_j'-U_j$.
The intersection of $F$ on $C$ is a set of at most two consecutive vertices, say $T$.
In case when $\abs{T}=1$, neighbors of $T$ in $C$ are not contained in $U$, as $U'$ is the $1$-neighborhood of $U$ in $C$.
Thus, we can increase the counter by adding this induced subdivision to $\mathcal{J}$, a contradiction.
Therefore, $G_j'-U'$ has no induced subdivision of the $1$-pan, 
and 
\[X_j:=X_{j+1}\cup (S\cup \{u\}\cup U')\] 
satisfies the claim.
\end{proofclaim}

The result follows from the claim with $j=1$.
\end{proof}

\subsection{On induced subdivisions of the 2-pan}
This section is devoted to the proof of the following result.
\begin{theorem}\label{t:2panep}
There is a polynomial-time algorithm 
which, given a graph $G$ and a positive integer $k$,   finds either $k$ vertex-disjoint induced subdivisions of the $2$-pan in $G$ or a vertex set of size $\mathcal{O}(k^2\log k)$ hitting every induced subdivision of the $2$-pan in $G$.
\end{theorem}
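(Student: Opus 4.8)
The plan is to follow, \emph{mutatis mutandis}, the proof of \autoref{t:1panep}. First one builds greedily a maximal sequence of graphs $G_1=G,G_2,\dots,G_{\ell+1}$ and induced subdivisions of the $2$-pan $F_1,\dots,F_\ell$, where $F_i$ is a smallest induced subdivision of the $2$-pan in $G_i$ (computable in polynomial time by \autoref{lem:algorithm} with $p=2$) and $G_{i+1}=G_i-V(F_i)$, the sequence ending at the first graph $G_{\ell+1}$ with no induced subdivision of the $2$-pan. If $\ell\ge k$ we output the packing $\{F_1,\dots,F_k\}$; otherwise $\ell\le k-1$ and, exactly as in \autoref{t:1panep}, it is enough to prove by downward induction on $j\in\intv{1}{\ell+1}$ that one can compute in polynomial time either $k$ pairwise vertex-disjoint induced subdivisions of the $2$-pan, or a set $X_j\subseteq V(G_j)$ with $\abs{X_j}=\mathcal O\big((\ell+1-j)k\log k\big)$ hitting all of them in $G_j$; the case $j=1$ then yields a hitting set of size $\mathcal O(k^2\log k)$.

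For the inductive step I would write $X:=X_{j+1}$ and take $F_j=(P_j,C_j)$, a smallest induced subdivision of the $2$-pan in $G_j-X$, with pendant path $P_j$ (of length at least $2$) and cycle $C_j$. If $\abs{V(F_j)}$ is below an absolute constant, set $X_j:=X\cup V(F_j)$; otherwise $\abs{C_j}$ is large and we use two analogues of the structural lemmas for the $1$-pan. The first is the analogue of \autoref{l:oneneigh}: when $\abs{C_j}$ exceeds an absolute constant, every vertex of $(V(G_j)\setminus X)\setminus V(F_j)$ has at most one neighbor in $C_j$; its proof is the shortening argument of \autoref{l:oneneigh} (a vertex with two neighbors on $C_j$ produces a strictly shorter induced cycle), with the sole addition that one re-attaches a pendant — the original $P_j$, or a short new pendant built from the $C_j$-neighbors of the chosen vertices as at the end of the proof of \autoref{l:oneneigh} — which is a routine finite case check, made easier by the extra room afforded by $\abs{P_j}\ge 2$. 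The second is the analogue of \autoref{lem:induced1pan}: if $C$ is an induced cycle of length at least some constant, $a$ has a neighbor $v$ on $C$ but none among the few $C$-vertices closest to $v$, and $b$ is adjacent to $a$ and to no vertex of $C\cup\{v\}$, then $G[V(C)\cup\{a,b\}]$ contains an induced subdivision of the $2$-pan — immediately (cycle $C$, pendant path $v\,a\,b$) if $a$ has no further neighbor on $C$, and after carving a sub-arc of $C$ between $v$ and its nearest further neighbor otherwise.

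As in \autoref{t:1panep}, I would then apply \autoref{t:gallaiapath} with $A=V(C_j)$ to the graph obtained from $G_j-X$ by removing $V(P_j)$ and then $E(C_j)$ — removing $V(P_j)$ first guarantees that the internal vertices of the $V(C_j)$-paths lie outside $V(F_j)$, so the one-neighbor property applies to them. If the theorem returns $\Theta(k\log k)$ pairwise disjoint $V(C_j)$-paths, I pass to a subfamily $\mathcal P'$ of size $\Omega(k\log k)$ whose end vertices are pairwise at distance at least a suitable constant on $C_j$ (losing only a constant factor, as in the passage from $\mathcal P$ to $\mathcal P'$ in \autoref{l:panpack}), form the subcubic graph $H:=C_j\cup\bigcup_{P\in\mathcal P'}P$, which has $\Omega(k\log k)$ vertices of degree $3$, and use \autoref{t:simonovits} to find $k$ vertex-disjoint cycles, each refined to an induced cycle $U$ of $G$ containing at least one edge of $C_j$. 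For each such $U$ the task is to hang a pendant path of length $2$ on it: the spacing of $\mathcal P'$ forces every maximal arc of $C_j\setminus V(U)$ to contain several consecutive vertices, and from such an arc one selects two consecutive vertices $a,b$ — with $v\in V(U)$ the $C_j$-neighbor of $a$ — for which the analogue of \autoref{lem:induced1pan} produces an induced subdivision of the $2$-pan inside $G[V(U)\cup\{a,b\}]$; these $k$ subdivisions are pairwise disjoint for the reason given in \autoref{l:panpack}, their $C_j$-footprints $\{a,b\}$ being far apart since they come from distinct paths of $\mathcal P'$. If instead \autoref{t:gallaiapath} returns a hitting set $S$ of size $\mathcal O(k\log k)$ for the $V(C_j)$-paths, then — exactly as in \autoref{t:1panep} — in $G_j-(X\cup S\cup V(P_j))$ every induced subdivision of the $2$-pan meets $C_j$ in at most two consecutive vertices, so a sweep along $C_j$ (testing, for each window of a few consecutive $C_j$-vertices disjoint from the part already processed, whether that window together with the rest of the graph contains an induced subdivision of the $2$-pan) either produces $k$ pairwise disjoint such subdivisions or locates $\mathcal O(k)$ further vertices of $C_j$ whose deletion destroys all remaining ones; adding those vertices together with $S$ and $V(F_j)$ to $X$ gives $X_j$ and closes the induction.

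\textbf{Expected main obstacle.} Everything except one point is a routine transcription of the $1$-pan argument. The hard part is the extraction of a packing from many $V(C_j)$-paths and, within it, the attachment of a pendant path of \emph{length two} to each cycle produced by \autoref{t:simonovits}: for the $1$-pan a single well-chosen $C_j$-neighbor of the cycle suffices (\autoref{l:panpack}), whereas for the $2$-pan one must find two consecutive $C_j$-vertices lying off the cycle whose adjacencies to the internal — and therefore a priori uncontrolled — vertices of the $V(C_j)$-paths can nevertheless be controlled. This is what forces the paths of $\mathcal P'$ to be spaced more generously along $C_j$, and it makes the $2$-pan analogue of \autoref{lem:induced1pan} (extracting an induced $2$-pan subdivision from a cycle together with such a pair, possibly after carving the cycle to remove unwanted chords) the most delicate part of the proof.
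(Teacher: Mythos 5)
Your plan is broadly the right one, but the first structural lemma you propose to transcribe is false, and this is exactly the point where the $2$-pan genuinely differs from the $1$-pan. You claim that, for a minimum induced $2$-pan subdivision $F_j=(P_j,C_j)$ with $\abs{C_j}$ large, every vertex outside $V(F_j)$ has at most one neighbor in $C_j$, and that this follows from the shortening argument in \autoref{l:oneneigh} plus a routine re-attachment of a length-$2$ pendant. This does not hold: a vertex $v$ outside $F_j$ may be adjacent to \emph{every} vertex of $C_j$, and then no shorter $2$-pan subdivision can be extracted from $v$ and $C_j$ at all, because every $C_j$-vertex you might use as a pendant is adjacent to $v$ (so any purported length-$2$ pendant has a chord back to $v$), and there is no a priori control over the adjacencies of $v$ to $P_j$ either. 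This is not a degenerate corner case that minimality rules out; it is a genuine second alternative. The paper's \autoref{lem:dominating} states precisely the correct dichotomy (at most one neighbor in $C$, \emph{or} $v$ dominates $C$), and its proof handles a delicate three-variable minimization (neighbours $w_1,w_2$ and non-neighbour $z$).

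The gap then propagates: in your packing extraction, once a $V(C_j)$-path has internal vertices that dominate $C_j$, the pendant-attachment step (your analogue of \autoref{lem:induced1pan}) fails, since the two consecutive vertices $a,b\in V(C_j)$ you pick will be adjacent to any dominating vertex lying on the cycle $U$ returned by Simonovits. The paper closes this with an extra lemma (\autoref{lem:avoiddom}) showing that, once $v_1,v_2$ are removed, \emph{no} induced $2$-pan subdivision uses a $C$-dominating vertex; consequently the set $D$ of dominating vertices can simply be deleted before applying Gallai's theorem, and \autoref{lem:2panpack} is stated and proved for $G-E(C)-D$. Without this observation your inductive step is not sound. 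So while the overall architecture of your proposal (greedy sequence, downward induction, Gallai plus Simonovits, sweep along $C_j$) matches the paper, the paper needs the dominating alternative in the structural lemma together with \autoref{lem:avoiddom} to make it work, and your write-off of that difficulty as ``a routine finite case check'' conceals where the actual work lies.
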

We refer to an induced subdivision of the $2$-pan as a tuple $(v_1, v_2, C)$, where $v_1$, $v_2$ are respectively the vertices of degree one and two that are not contained in the cycle, and $C$ is the cycle.

As in the $1$-pan case, we first obtain a structural property.

\begin{lemma}\label{lem:dominating}
  Let $G$ be a graph and let $H=(v_1, v_2, C)$ be an induced subdivision of the $2$-pan in $G$ with minimum number of vertices such that $\abs{C}\ge 11$. 
  For every vertex $v$ in $V(G)\setminus V(H)$, either $v$ has at most one neighbor in $C$ or it dominates $C$. 
\end{lemma}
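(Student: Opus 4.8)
The plan is to argue by contradiction: I will assume that some $v\in V(G)\setminus V(H)$ has at least two neighbours in $C$ but is not adjacent to every vertex of $C$, and from this produce an induced subdivision of the $2$-pan on fewer than $\abs H$ vertices, contradicting the minimality of $H$. Write $P=a\,p_1\,p_2\cdots p_\ell$ for the path of $H$ joining $C$ to the degree-one vertex $v_1=p_\ell$ (so $v_2=p_{\ell-1}$), where $a\in V(C)$ is the unique vertex of $P$ lying on $C$; since $P$ has at least two edges, $\abs H=\abs C+\ell\ge\abs C+2\ge 13$. Because $H$ is an induced subdivision, $P$ is chordless and, among $p_1,\dots,p_\ell$, only $p_1$ has a neighbour on $C$, namely $a$. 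The neighbours of $v$ on $C$ split $C$ into chordless arcs; as $v$ does not dominate $C$, some arc has length at least $2$, and the argument splits according to whether some arc is long.

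\emph{Case 1: some arc between consecutive neighbours of $v$ has length at least $3$.} Here I would find an induced $2$-pan subdivision already inside $G[V(C)\cup\{v\}]$. Fix such a long arc $B$, with interior vertices $z_1,z_2,\dots$ (all non-neighbours of $v$) and an endpoint $w$. As the cycle take a triangle on $\{v,w,w'\}$ whenever a length-$1$ arc is incident with $w$, and otherwise take the induced cycle $G[\{v\}\cup V(A)]$ for an arc $A\neq B$ incident with $w$ of length at least $2$; in either case $w$ lies on this cycle, which has few vertices. As a pendant of length $2$ take $w\,z_1\,z_2$. Using that $C$ is induced and that the only neighbours of $v$ on an arc are its endpoints, one checks that $z_1$'s only neighbour on the cycle is $w$ while $z_2$ has none, so the union is an induced $2$-pan subdivision; a routine count gives fewer than $\abs H$ vertices, a contradiction. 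If $v$ has exactly two neighbours on $C$ the longer arc has length $\ge\ceil{\abs C/2}\ge 6$, so Case 1 applies; hence we may assume $v$ has at least three neighbours on $C$.

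\emph{Case 2: every arc has length at most $2$.} Then $\abs C\le 2\abs{N(v)\cap V(C)}$, so $v$ has at least $\ceil{\abs C/2}\ge 6$ neighbours on $C$, the vertex $a$ lies within distance $1$ of such a neighbour, and the non-neighbours of $v$ on $C$ form an independent set; in particular $G[V(C)\cup\{v\}]$ by itself contains no small induced $2$-pan subdivision (wheels and their close relatives contain none at all), so the pendant path $P$ must be exploited. First I would form, using $v$ and at most two vertices of $C$ on a shortest arc through $a$, a short induced cycle $C'$ with $a\in V(C')$ and $\abs{C'}\le 4$ (a triangle when a length-$1$ arc meets a neighbour of $v$ at $a$, a $C_4$ otherwise). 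If $v$ has no neighbour in $\{p_1,\dots,p_\ell\}$, then $G[V(C')\cup\{p_1,p_2\}]$ is an induced $2$-pan subdivision — cycle $C'$, pendant $a\,p_1\,p_2$ — on at most $6<\abs H$ vertices, a contradiction. Otherwise, let $j$ be the least index with $v\sim p_j$: the vertices $v,a,p_1,\dots,p_j$ (together with one $C$-neighbour of $v$ at $a$ when $a\notin N(v)$) induce a cycle $D$ through $v$, minimality of $j$ ruling out the only possible chords, and the tail $p_j\,p_{j+1}\cdots p_\ell$ of $P$ then serves as a pendant; when this tail has length at least $2$ and meets $N(v)$ only in $p_j$, the union is an induced $2$-pan subdivision on $\ell+3<\abs H$ vertices, again a contradiction. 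The finitely many remaining configurations — where $v$ attaches to $v_1$ or $v_2$, or to a run of consecutive vertices of $P$ reaching one of them — I would dispatch directly, forming a short cycle through $v$ and a length-$2$ arc of $C$ and appending the edge $v_1v_2$ (or the edge $v_2p_{\ell-2}$) as the last step of a length-$2$ pendant; this is where the hypothesis $\abs C\ge 11$ provides the slack needed to keep all the pieces disjoint. In every case the minimality of $H$ is contradicted, which proves the lemma.

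The hard part is Case~2. Unlike the analogous statement for the $1$-pan (\autoref{l:oneneigh}), one cannot conclude from a short induced cycle through $v$ alone, because a wheel — the extremal configuration behind the ``dominates'' alternative — contains no induced $2$-pan subdivision at all; so the argument must reroute through the pendant path of $H$, and making that rerouting go through while clearing the boundary cases in which $v$ attaches near the end of $P$ is the delicate bookkeeping for which the constant $11$ is tailored.
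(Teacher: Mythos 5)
Your Case~1 is essentially sound (the arc $A$ at $w$ other than the long arc $B$, together with $v$, gives a short induced cycle, and $w\,z_1\,z_2$ is a clean pendant; the vertex count $\mathrm{length}(A)+4\le |C|<|H|$ checks out). The problem is in Case~2, and it stems from a false structural claim.

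You assert that when every arc has length at most $2$, ``$G[V(C)\cup\{v\}]$ by itself contains no small induced $2$-pan subdivision (wheels and their close relatives contain none at all), so the pendant path $P$ must be exploited.'' This is not true unless $v$ actually \emph{dominates} $C$. The moment $v$ has a single non-neighbour $z$ on $C$, a short induced $2$-pan subdivision already lives inside $G[V(C)\cup\{v\}]$: pick adjacent $v$-neighbours $w_1,w_2$ so that the triangle $vw_1w_2$ is short, and then, walking from $w_1$ away from $w_2$ along $C$, the first two vertices $x_1,y_1$ give the pendant $w_1\,x_1\,y_1$ (which has length $2$) \emph{unless} $v$ is adjacent to one of $x_1,y_1$; in that case one shifts the short cycle to sit in $\{v,w_2,x_2,y_2\}$ and uses $v\,y_1\,x_1$ (or $v\,x_1\,y_1$) as the pendant, which is induced because $\abs{C}\ge 11$ keeps $\{x_1,y_1\}$ far from $\{w_2,x_2,y_2\}$. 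Iterating this, one shows $v$ must be complete to the six consecutive vertices $y_1x_1w_1w_2x_2y_2$, and choosing the triple $(w_1,w_2,z)$ to also minimise $\dist_C(w_1,z)+\dist_C(w_2,z)$ then yields a contradiction — all without ever touching the pendant path $P$ of $H$. This is precisely the paper's argument; it has no case split on arc lengths at all, and it works unaffected in your Case~2.

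Because you believed the self-contained argument impossible, you reroute through $P$, and this is where the real gaps open. The ``finitely many remaining configurations'' you leave to a direct dispatch are not benign. If $v$ is adjacent to $v_1$, to $v_2=p_{\ell-1}$, and to $p_{\ell-2}$ simultaneously (which nothing forbids), then both of your proposed pendants $v\,v_1\,v_2$ and $v\,v_2\,p_{\ell-2}$ contain a chord back to $v$ and are not induced. More generally, when $N(v)\cap\{p_1,\dots,p_\ell\}$ is a long run reaching the end of $P$, the cycle $D=v\,a\,p_1\cdots p_j\,v$ you build with $j$ the \emph{least} index in $N(v)$ is induced, but its tail $p_j\cdots p_\ell$ is riddled with edges to $v$, and replacing $j$ by the largest index destroys the inducedness of $D$ instead. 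Your sketch does not produce a valid small $2$-pan subdivision in these cases, and no vertex count is offered for them. As written the argument for Case~2 is therefore incomplete, and the motivating belief that one is forced out of $C\cup\{v\}$ is incorrect.
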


\begin{figure}
  \centering
  \begin{tikzpicture}[scale=0.7]
  \tikzstyle{w}=[circle,draw,fill=black,inner sep=0pt,minimum width=4pt]

   \draw (-2,0)--(10,0);
	\draw(-2, 0)--(-3,-0.5);
	\draw(10, 0)--(11,-0.5);
      \draw (-3,-.5) node [w] {};
       \draw (11,-.5) node [w] {};
 	\draw[dashed](12, -1)--(11,-0.5);
	\draw[dashed](-4,-1)--(-3,-0.5);
 
 \foreach \y in {-2,...,10}{
      \draw (\y,0) node [w] (a\y) {};
     }
       \draw (3,2) node [w] (x) {};
	\draw  (x)--(a1);
	\draw  (x)--(a6);
      \node at (-2, -1) {$C$};
       \node at (3, 2.5) {$v$};
       \node at (1, -.5) {$w_1$};
       \node at (6, -.5) {$w_2$};
       \node at (0, -.5) {$x_1$};
       \node at (-1, -.5) {$y_1$};
       \node at (7, -.5) {$x_2$};
       \node at (8, -.5) {$y_2$};
       \node at (3.5, -1) {$Q$};
       \draw(1, -.7)--(1,-1)--(3,-1);
      \draw(6, -.7)--(6,-1)--(4,-1);

   \end{tikzpicture}     \caption{Two neighbors $w_1$ and $w_2$ of $v$ with minimum $\dist_C(w_1, w_2)$ in \autoref{lem:dominating}.}\label{fig:distance2}
\end{figure}
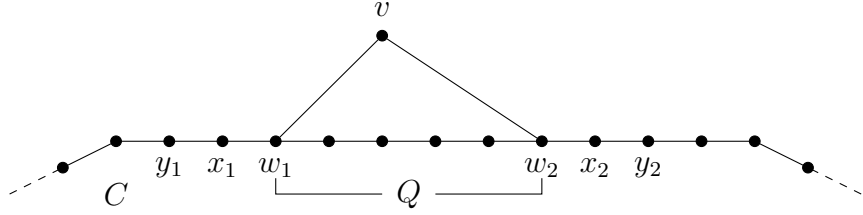

\begin{proof}
 Suppose there exists a vertex $v\in V(G)\setminus V(H)$ such that it has at least two neighbors in $C$ and has a non-neighbor~$z$.
We choose two neighbors $w_1, w_2$ of $v$ and a non-neighbor $z$ such that
	\begin{enumerate}
	\item $\dist_C(w_1, w_2)$ is minimum, 
	\item subject to (1), $\dist_C(w_1, z)+\dist_C(w_2, z)$ is minimum. 
	\end{enumerate}
	Let $Q$ be a shortest path of $C$ from $w_1$ to $w_2$.
Then $G[V(Q)\cup \{v\}]$ is an induced cycle, and it is strictly shorter than $C$, because $\abs{Q}+1\le \abs{C}/2+2< \abs{C}$.

For each $i\in \{1,2\}$, let $x_i$ be the neighbor of $w_i$ in $C$, which is not on the path $Q$, 
and let $y_i$ be the neighbor of $x_i$ in $C$ other than $w_i$. See \autoref{fig:distance2} for an illustration.
Note that $\{x_1, y_1\}\cap \{x_2, y_2\}=\emptyset$ and furthermore, 
there are no edges between $\{x_1, y_1\}$ and $\{x_2, y_2\}$; otherwise, $C$ would have length at most $10$, a contradiction.

Observe that for each $i\in \{1, 2\}$, $v$ has a neighbor in $\{x_i, y_i\}$; otherwise, $G[V(Q)\cup \{v,x_i,y_i\}]$ is an induced subdivision of the $2$-pan, which has smaller number of vertices than $H$. 

We claim that $v$ is complete to $\{x_1, y_1, x_2, y_2\}$.
Suppose not. Without loss of generality, we may assume that $v$ has a non-neighbor in $\{x_1, y_1\}$. 
Since $v$ has a neighbor in $\{x_2, y_2\}$, 
there is an induced cycle $C'$ in $G[\{v, w_2, x_2, y_2\}]$ that contains~$v$.
Depending which of $x_1,y_1$ is not adjacent to $v$, one of $(x_1, y_1, C')$ and $(y_1, x_1, C')$ is an induced subdivision of the 2-pan with less vertices than $H$.
This is a contradiction, and 
consequently, $v$ is complete to $\{x_1, y_1, x_2, y_2\}$.

By the choice of $w_1,w_2$, this also implies that $w_1$ and $w_2$ are adjacent.
But then $z\in V(C)\setminus \{x_1, y_1, w_1, w_2, x_2, y_2\}$, and 
$z$ is closer on $C$ to one of $\{x_1, y_1\}$ and $\{x_2, y_2\}$ than to $\{w_1, w_2\}$.
As this contradicts the choice of $w_1, w_2, z$, we conclude that $v$ dominates $C$.
\end{proof}

The following is a 2-pan counterpart of \autoref{lem:induced1pan}.

\begin{lemma}\label{lem:induced2pan}
Let $C=v_1v_2 \cdots v_mv_1$ be an induced cycle of length at least $4$ in a graph $G$ and $w_1, w_2, w_3, w_4\in V(G)\setminus V(C)$ such that 
\begin{enumerate}
\item $v_2w_1w_2w_3w_4$ is an induced path,
\item $v_2w_1$ is the only edge between $\{v_1, v_2, v_3, v_4\}$ and $\{w_1, w_2, w_3, w_4\}$, and 
\item each vertex in $C$ has at most one neighbor in $\{w_1, w_2, w_3, w_4\}$.
\end{enumerate}
Then $G[V(C)\cup \{w_1, w_2, w_3, w_4\}]$ contains an induced subdivision of the $2$-pan.
\end{lemma}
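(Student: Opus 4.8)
The plan is to exhibit, inside $G[V(C)\cup\{w_1,w_2,w_3,w_4\}]$, an induced subgraph consisting of an induced cycle together with a pendant path of at least two edges attached at a single one of its vertices; such a graph is a subdivision of the $2$-pan, so this suffices. Write $W=\{w_1,w_2,w_3,w_4\}$. By~(2), the only edge between $W$ and $\{v_1,v_2,v_3,v_4\}$ is $v_2w_1$, so any neighbor on $C$ of some $w_i$ other than this one lies in $\{v_5,\dots,v_m\}$, and by~(3) distinct $w_i$'s have disjoint neighborhoods on $C$. Two situations are immediate. If no $w_i$ has a neighbor on $C$ besides $v_2\sim w_1$, then $G[V(C)\cup W]$ is precisely $C$ with the pendant path $v_2w_1w_2w_3w_4$ (four edges) attached at $v_2$. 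If $w_1$'s only neighbor on $C$ is $v_2$ and $w_2$ has no neighbor on $C$, then $G[V(C)\cup\{w_1,w_2\}]$ is $C$ with the pendant path $v_2w_1w_2$ attached at $v_2$. In both cases we are done.

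In the remaining case some $w_i$ has a neighbor in $\{v_5,\dots,v_m\}$; let $v_b$ be such a neighbor of largest possible index $b$, and let $w_j$ be the unique member of $W$ adjacent to $v_b$. The crux is to route the cycle around $C$ \emph{through $v_1$}, so that the short stretch $v_2v_3v_4$ of $C$ is freed up to serve as the pendant path. Assume first that $b\ge 6$, and let
\[
  Z \;=\; v_2\,v_1\,v_m\,v_{m-1}\cdots v_b\; w_j\,w_{j-1}\cdots w_1\; v_2,
\]
the union of the arc of $C$ running from $v_2$ through $v_1$ to $v_b$ with the reverse of the subpath $w_1\cdots w_j$ of $v_2w_1w_2w_3w_4$. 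I claim $Z$ is an induced cycle: the two arcs of $C$ it uses are induced because $C$ is; the segment $w_1\cdots w_j$ is induced by~(1); and the only edges between $V(Z)\cap V(C)=\{v_1,v_2\}\cup\{v_b,\dots,v_m\}$ and $\{w_1,\dots,w_j\}$ are $v_2w_1$ (by~(2)) and $v_bw_j$ (by~(3), and by the maximality of $b$, which forbids any $w_i$ a neighbor in $\{v_{b+1},\dots,v_m\}$), and these are exactly the two closing edges of $Z$. Since $b\ge 6$, none of $v_3,v_4,v_5$ lies on $Z$; hence $v_4$ has no neighbor on $Z$ (its only neighbors are $v_3,v_5$ and, by~(2), none in $W$), while the only neighbor of $v_3$ on $Z$ is $v_2$. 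Therefore $G[V(Z)\cup\{v_3,v_4\}]$ is the cycle $Z$ with the pendant path $v_2v_3v_4$ attached at $v_2$, an induced subdivision of the $2$-pan.

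It remains to treat $b=5$. Then $v_5$ is the only neighbor of $W$ on $C$ apart from $v_2$, and it is adjacent to a unique $w_{j_5}\in W$; if $j_5\ge 3$ then $w_1$'s only neighbor on $C$ is $v_2$ and $w_2$ has no neighbor on $C$, so we are back in the second immediate situation. Hence $j_5\le 2$, and here I use the cycle $v_2v_3v_4v_5\,w_{j_5}w_{j_5-1}\cdots w_1v_2$ (induced for the same reasons as above) together with the pendant path $w_{j_5}w_{j_5+1}\cdots w_4$ of $4-j_5\ge 2$ edges attached at $w_{j_5}$: the vertices $w_{j_5+1},\dots,w_4$ have no neighbor on this cycle, being non-adjacent to $w_1,\dots,w_{j_5-1}$ by~(1), to $v_2,v_3,v_4$ by~(2), and to $v_5$ by~(3). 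The only real bookkeeping is the chordlessness of each constructed cycle (including the degenerate sub-cases where $b$ is close to $m$, or $C$ is short), and this is exactly what the extremal choice of $v_b$ together with the ``long way around'' routing is designed to make automatic; I expect this incidence-checking to be the bulk, though not the difficulty, of the proof.
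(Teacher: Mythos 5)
Your argument is correct and follows essentially the same strategy as the paper's proof: choose the attachment vertex of the pendant $w$-path on $C$ that is extremal (farthest from $v_2$ when travelling through $v_1$), route a chordless cycle ``the long way around'' so that $v_2v_3v_4$ is freed to serve as the pendant path, and treat separately the residual case where the attachment is at $v_5$. The only real difference is in that residual case, where the paper splits according to the length of $C$ (using $v_6v_7$ or $w_3w_4$ as the pendant) while you split on which $w_{j}$ meets $v_5$ (using either $v_2w_1w_2$ on $C$ or the tail $w_{j}\cdots w_4$ on the short mixed cycle); both resolutions are equally elementary, and your incidence checks, including the degenerate subcases, hold up.
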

\begin{proof}
If $v_2w_1$ is the only edge between $\{w_1, w_2\}$ and $C$, then
$(w_1, w_2, C)$ is an induced subdivision of the $2$-pan.
Thus, we may assume that $C$ has length at least $5$, and $w_1$ or $w_2$ has a neighbor in $C - \{v_1, v_2, v_3, v_4\}$.
We choose $i\in\intv{5}{m}$ such that:
\begin{itemize}
\item $v_i$ has a neighbor in $\{w_1, w_2\}$; and 
\item no internal vertex of the path $P = v_iv_{i+1} \dots v_mv_1v_2$ has a neighbor in $\{w_1, w_2\}$.
\end{itemize}
We may assume that $i=5$; otherwise, the induced cycle in $G[V(P)\cup \{w_1, w_2\}]$ that contains $v_i$ together with $v_3, v_4$ forms an induced subdivision of the $2$-pan.
On the other hand, we can observe that $P$ contains at most $2$ internal vertices; otherwise the induced cycle in $G[\{w_1, w_2, v_2, v_3, v_4, v_5\}]$ that contains $v_i$ together with two internal vertices in $P$ forms an induced subdivision of the $2$-pan.

We distinguish cases depending the number of internal vertices of $P$.

\medskip
\noindent
\textit{First case:} $P$ has a unique internal vertex, i.e.\ $m=5$. As $v_5$ has a neighbor in $\{w_1, w_2\}$, 
it has none in $\{w_3, w_4\}$ by our assumption. Therefore, $G[V(P)\cup \{w_1, w_2, w_3, w_4\}]$ contains an induced subdivision of the $2$-pan.

\medskip
\noindent
\textit{Second case:} $P$ has exactly $2$ internal vertices, i.e.\ $m=6$. 
Note that $v_6$ has no neighbors in $\{w_1, w_2\}$.
If $v_6$ has a neighbor in $\{w_3, w_4\}$, then the induced cycle in $G[\{v_6, v_1, v_2, w_1, w_2, w_3, w_4\}]$ that contains $v_2$ together with $v_3, v_4$ induces a subdivision of the $2$-pan.
Hence we may assume that $v_6$ has no neighbors in $\{w_1, w_2, w_3, w_4\}$.
Since $v_5$ has no neighbor in $\{w_3, w_4\}$ (as above), $G[V(P)\cup \{w_1, w_2, w_3, w_4\}]$ contains an induced subdivision of the $2$-pan.
\medskip

This concludes the lemma.
\end{proof}

\begin{lemma}\label{lem:2panpack}
  Let $k$ be a positive integer, $G$ be a graph, $(v_1, v_2 ,C)$ be an induced subdivision of the $2$-pan in $G$ with minimum number of vertices, and $D$ be the set of all vertices dominating $C$. 
  Given a set $\mathcal{P}$ of vertex-disjoint $V(C)$-paths in  $G-E(C)-D$ with $\abs{\mathcal{P}}\ge 396k\log k$, one can find in polynomial time $k$ vertex-disjoint induced subdivisions of the $2$-pan.
\end{lemma}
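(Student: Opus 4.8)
The plan is to mirror the proof of \autoref{l:panpack} as closely as possible, the only conceptual changes being that the pendant vertex of a $1$-pan becomes a pendant path of three edges, that \autoref{lem:dominating} plays the role \autoref{l:oneneigh} played there, and that the constants are larger; the set $D$ is excluded from $\mathcal{P}$ precisely so that \autoref{lem:dominating} applies to the internal vertices of the paths of $\mathcal{P}$ (as in the intended application, one may also assume these paths avoid $v_1$ and $v_2$). If $k=1$ we output $(v_1,v_2,C)$, so assume $k\ge 2$; then $C$ has more than $2\abs{\mathcal{P}}>11$ vertices and \autoref{lem:dominating} applies to $(v_1,v_2,C)$. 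As in \autoref{l:panpack}, greedily select a subfamily $\mathcal{P}'\subseteq\mathcal{P}$ in which the endpoints of distinct paths are at $C$-distance at least $9$; since selecting a path forbids at most $32$ others, $\abs{\mathcal{P}'}\ge\abs{\mathcal{P}}/33\ge 12k\log k$ (this is where $396=33\cdot 12$ enters). Let $H'$ be the graph on $V(C)\cup\bigcup_{P\in\mathcal{P}'}V(P)$ with edge set $E(C)\cup\bigcup_{P\in\mathcal{P}'}E(P)$; it has maximum degree $3$, its path-edges span a forest, and it has at least $24k\log k$ vertices of degree $3$, so \autoref{t:simonovits} yields in time $\mathcal{O}(\abs{G}^3)$ a family $\mathcal{Q}'$ of $k$ pairwise vertex-disjoint cycles of $H'$, each using an edge of $C$. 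No cycle of $H'$ is vertex-disjoint from $C$ (such a cycle would lie within the internal vertices of one path of $\mathcal{P}'$, which span a forest), so $C\notin\mathcal{Q}'$ and hence every $U'\in\mathcal{Q}'$ has a vertex off $C$. Exactly as in \autoref{l:panpack}, replace each $U'\in\mathcal{Q}'$ by an induced subcycle of $G$ on a subset of $V(U')$; a short case distinction on whether a removed chord has both ends on $C$ shows this can be done while retaining throughout both a fixed edge of $C$ and a fixed vertex off $C$. This produces pairwise vertex-disjoint induced cycles $U_1,\dots,U_k$, each with an edge of $C$ and a vertex off $C$.

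Fix $i$. As in \autoref{l:panpack}, choose consecutive vertices $u,v,w,z$ of $U_i$ with $vw\in E(C)\cap E(U_i)$ and $u\notin V(C)$; then $v\in V(C)$, $v$ is essentially an endpoint of the path of $\mathcal{P}'$ carrying the edge $vu$, and the neighbour $q_1$ of $v$ on $C$ other than $w$ is not in $V(U_i)$ (otherwise $v$ would have degree $3$ in the induced cycle $U_i$). Let $q_1q_2q_3q_4$ be the four vertices of $C$ that follow $v$ in the direction opposite to $w$. The crucial point is that $q_1,q_2,q_3,q_4$ lie on no $U_j$: each $V(U_j)\cap V(C)$ is contained in a union of arcs of $C$ whose endpoints are endpoints of paths of $\mathcal{P}'$, the arc of $U_i$ through $v$ leaves $v$ towards $w$, and every other relevant path-endpoint is at $C$-distance at least $9$ from $v$ while $q_1,\dots,q_4$ are within distance $4$ of $v$, so no such arc can reach $\{q_1,q_2,q_3,q_4\}$. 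Granting this, apply \autoref{lem:induced2pan} with $U_i$ as the induced cycle, $v$ as $v_2$, $u$ and $w$ as $v_1$ and $v_3$, $z$ as $v_4$, and $(q_1,q_2,q_3,q_4)$ as $(w_1,w_2,w_3,w_4)$: hypothesis $(1)$ holds since $vq_1q_2q_3q_4$ is a subpath of the induced cycle $C$; $(2)$ holds since, apart from $vq_1$, each of $u,v,w,z$ is either at $C$-distance at least $2$ from every $q_j$ (when it lies on $C$, as $\abs{C}$ is large) or is an off-$C$ vertex of $U_i$, hence path-internal, hence by \autoref{lem:dominating} has a single neighbour on $C$, namely $v$ or $w$, which is not a $q_j$; and $(3)$ holds since an on-$C$ vertex of $U_i$ has at most one neighbour among the consecutive $C$-vertices $q_1,\dots,q_4$ (none of which lies on $U_i$) because $C$ is induced, while an off-$C$ vertex of $U_i$ is path-internal and has at most one neighbour on $C$ by \autoref{lem:dominating}. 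This gives an induced subdivision $S_i$ of the $2$-pan inside $G[V(U_i)\cup\{q_1,q_2,q_3,q_4\}]$.

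The $S_i$ are pairwise vertex-disjoint: the $V(U_i)$ are disjoint by \autoref{t:simonovits}, and the four $C$-vertices appended to $U_i$ lie within distance $4$ of a path-endpoint of $\mathcal{P}'$, so the positioning claim above (with the distance-$9$ separation) makes them disjoint from $V(U_j)$ and from the four vertices appended to $U_j$ for $j\ne i$. Every step runs in polynomial time, the only nontrivial contributions being \autoref{t:simonovits} and the polynomial, constructive applications of \autoref{lem:induced2pan}; hence $G$ contains $k$ pairwise vertex-disjoint induced subdivisions of the $2$-pan, found in polynomial time.

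The step I expect to require the most care is the positioning claim of the second paragraph: one must control precisely how an arc of a cycle $U_j$ sits inside $C$ near the anchor $v$ — in particular it must not wind around $C$ into the short ``clean'' stretch $q_1q_2q_3q_4$ — and one must make sure the anchor $v$ genuinely behaves like an endpoint of a path of $\mathcal{P}'$ used by $U_i$ (watching out, for instance, for paths of $\mathcal{P}'$ whose two endpoints are themselves close on $C$, and for chords absorbed when passing from $U'$ to $U_i$, possibly by extracting $U_i$ more carefully than a blind chord-elimination). This is the $2$-pan counterpart of the distance-$3$ bookkeeping in \autoref{l:panpack}, and the fact that one now appends a path of length three rather than a single vertex is exactly what forces the separation $9$, hence the hypothesis $\abs{\mathcal{P}}\ge 396k\log k$.
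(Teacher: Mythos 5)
Your proposal follows the same route as the paper's proof: greedily spread out the endpoints of a subfamily $\mathcal{P}'\subseteq\mathcal{P}$ on $C$ (the paper also uses distance~$9$, loss factor $33$, and ends with $12k\log k$ paths), feed the resulting subcubic subgraph to Simonovits (\autoref{t:simonovits}), chord-eliminate the cycles to induced cycles containing a $C$-edge, and then combine \autoref{lem:dominating} (internal path vertices have a unique $C$-neighbour) with \autoref{lem:induced2pan}. The only cosmetic difference is how the four ``clean'' $C$-vertices are located: you take them to be the four vertices of $C$ following the anchor $v$ away from $w$, whereas the paper takes them as $t_2,\dots,t_5$ on a $C$-stretch between two of the chosen cycles; both reductions rest on the same distance-$9$ bookkeeping, and your application of \autoref{lem:induced2pan} is verified correctly (note also that \autoref{lem:dominating} even forces $|U_i|\ge 4$, so the lemma's length hypothesis is automatic). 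The one point you flag at the end — that $v$ must genuinely be an endpoint of a path of $\mathcal{P}'$, and that blind chord-elimination might not preserve this — is a real subtlety, and it is the same one the paper leaves implicit when it asserts that the sets $\{t^U_2,\dots,t^U_5\}$ are pairwise disjoint ``from the choice of $\mathcal{P}'$''. It is cleanly discharged by replacing each $P\in\mathcal{P}$ by a shortest $V(C)$-path on $V(P)$: by \autoref{lem:dominating} every internal vertex has at most one $C$-neighbour, and shortest-ness forces that neighbour (if any) to be the adjacent endpoint of $P$; hence whenever an induced $U_i$ leaves $C$ at $v$ into an internal path vertex $u$, $v$ is indeed an endpoint of $u$'s path. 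With that observation supplied, your argument is complete and coincides with the paper's.
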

\begin{proof}
For each $P\in \mathcal{P}$, let $\last(P)$ be the set of end vertices of $P$.
We construct a subset $\mathcal{P}'$ of $\mathcal{P}$ with the following property:
\[\forall P_1, P_2\in \mathcal{P}',\ \dist_C(\last(P_1), \last(P_2))\ge 9.\]
This can be done by repeatedly choosing a path $P$ in $\mathcal{P}$ and discarding from $\mathcal{P}$ all paths $Q$ that have an endpoint at distance at most 8 from one of $P$. 
For each path added to $\mathcal{P}'$, at most 32 are discarded, hence $\abs{\mathcal{P}'}\ge \abs{\mathcal{P}}/33\ge 12k\log k$.

  We consider the subgraph $H$ on the vertex set $V(C)\cup (\bigcup_{P\in \mathcal{P}'}V(P))$ and edge set $E(C)\cup (\bigcup_{P\in \mathcal{P}'}E(P))$.
  This graph has maximum degree $3$ and has at least $24k \log k$ vertices of degree $3$, as each path of $\mathcal{P}'$ contributes for two vertices of degree $3$. 	
  According to \autoref{t:simonovits}, one can in polynomial time construct a set $\mathcal{Q}$ of $k$ vertex-disjoint cycles of $C$. 
  Observe that $C$ intersects all other cycles, and thus, $C$ is not contained in $\mathcal{Q}$.
  
As in the proof of \autoref{l:panpack} we note that for each cycle $U$ of $\mathcal{Q}$, $G[V(U)]$ has an induced cycle containing at least one edge of $C$.
  Let $\mathcal{Q}'$ be a collection of $k$ resulting induced $V(C)$-cycles obtained from $\mathcal{Q}$.
  
  For each cycle $U$ in $\mathcal{Q}'$, 
  we want to find $4$ consecutive vertices of $C$ that are not contained in $\bigcup_{F\in \mathcal{Q}'} (V(F)\cap V(C))$ such that the last vertex has an edge of $C$ to $U$.
  See \autoref{fig:2pan} for an illustration.
  Since $\abs{\mathcal{Q}'}\ge 2$ and each cycle of $\mathcal{Q}'$ intersects $C$, there exist another cycle $U'\in \mathcal{Q}'\setminus \{U\}$  and a subpath $T=t_1t_2 \cdots t_x$ of $C$ such that
  \begin{itemize}
  \item $t_1\in V(U)$ and $t_x\in V(U')$, 
  \item $t_2, \ldots, t_{x-1}\notin \bigcup_{F\in \mathcal{Q}'} V(F)$.
  \end{itemize}
  For $i\in \{1,2,3,4,5\}$, we assign $t^U_i:=t_i$.
  
  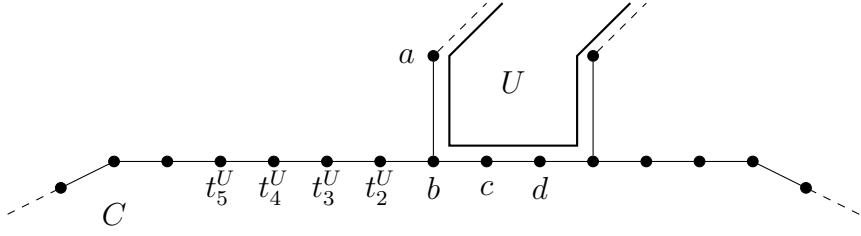
\begin{figure}
  \centering
   \begin{tikzpicture}[scale=0.7]
  \tikzstyle{w}=[circle,draw,fill=black,inner sep=0pt,minimum width=4pt]

   \draw (-2,0)--(10,0);
	\draw(-2, 0)--(-3,-0.5);
	\draw(10, 0)--(11,-0.5);
      \draw (-3,-.5) node [w] {};
       \draw (11,-.5) node [w] {};
 	\draw[dashed](12, -1)--(11,-0.5);
	\draw[dashed](-4,-1)--(-3,-0.5);
 
 \foreach \y in {-2,...,10}{
      \draw (\y,0) node [w] (a\y) {};
     }
       \draw (4,2) node [w] (x1) {};
       \draw (7,2) node [w] (x2) {};
	\draw  (x1)--(a4);
	\draw  (x2)--(a7);
	\draw[dashed] (x1)--(5,3);
	\draw[dashed] (x2)--(8,3);
	
	\draw[thick] (5.3,3)--(4.3, 2)--(4.3,0.3)--(6.7,0.3)--(6.7, 2)--(7.7,3);
	
      \node at (-2, -1) {$C$};
      \node at (5.5, 1.5) {$U$};
      \node at (4, -.5) {$b$};
      \node at (5, -.5) {$c$};
      \node at (6, -.5) {$d$};
      \node at (3.5, 2) {$a$};
          
     \node at (3, -.5) {$t^U_2$};
     \node at (2, -.5) {$t^U_3$};
     \node at (1, -.5) {$t^U_4$};
     \node at (0, -.5) {$t^U_5$};

   \end{tikzpicture}   \caption{A cycle $U$ in $\mathcal{Q}'$ and four selected vertices in \autoref{lem:2panpack}.}\label{fig:2pan}
\end{figure}
  
  We claim that $G[V(U)\cup \{t^U_i: 2\le i\le 5\}]$ contains an induced subdivision of the 2-pan.
 Let $abcd$ be the subpath of $U$ such that $b=t^U_1$ and $bc\in E(C)\cap E(U)$. 
 By \autoref{lem:dominating} and since we work in $G-D$, 
 every vertex of $U$ has at most one neighbor in $C$. In particular, $bt^U_2$ is the only edge between $\{a,b,c,d\}$ and $\{t^U_i: 2\le i\le 5\}$. Therefore, one can find an induced subdivision of the $2$-pan in $G[V(U)\cup \{t^U_i: 2\le i\le 5\}]$ using \autoref{lem:induced2pan}.

  From the choice of $\mathcal{P}'$, the sets in $\{\{t^U_i:2\le i\le 5\} : U\in \mathcal{Q}'\}$ are pairwise disjoint. Hence by applying \autoref{lem:induced2pan} as above for every $U\in \mathcal{Q}'$ we obtain in polynomial time a collection of $k$ vertex-disjoint induced subdivisions of the $2$-pan. 
  
  This concludes the proof.
\end{proof}

The last thing to show is that in fact, an induced subdivision of the $2$-pan never contains a vertex dominating~$C$.
\begin{lemma}\label{lem:avoiddom}
  Let $k$ be a positive integer, let $G$ be a graph, $(v_1, v_2 ,C)$ be an induced subdivision of the $2$-pan in $G$ with minimum number of vertices and $\abs{C}\ge 11$, and $D$ be the set of all vertices dominating $C$. 
  Every induced subdivision of the $2$-pan in $G-\{v_1, v_2\}$ has no $C$-dominating vertices.
\end{lemma}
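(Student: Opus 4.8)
The plan is to argue by contradiction: suppose $G-\{v_1,v_2\}$ contains an induced subdivision $M=(a_1,a_2,U)$ of the $2$-pan having a vertex $d\in D$, where $U$ is the cycle, $a_1$ the leaf and $a_2$ its neighbour. I would then extract from this configuration an induced subdivision of the $2$-pan on only $5$ vertices; since $(v_1,v_2,C)$ has at least $\abs{C}+2\ge 13$ vertices, this contradicts its minimality and proves the lemma.

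Two structural facts underlie everything. First, in any subdivision of the $2$-pan exactly one vertex (the image of the degree-$3$ vertex of the $2$-pan, i.e.\ the common endpoint of the cycle and the pendant path) has degree $3$, every other vertex having degree at most $2$; hence the induced subgraph $M$ contains no induced $K_4$ and no induced diamond. Second, $d\notin V(C)$ because $d$ dominates $C$, so $d\in V(G)\setminus V(H)$ with $H=(v_1,v_2,C)$; and since $M$ is induced, every vertex of $V(C)\cap V(M)$ is a neighbour of $d$ in $M$, whence $\abs{V(C)\cap V(M)}\le\deg_M(d)\le 3$.

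The core of the argument is to find inside $M$ an induced path $d\,x\,y$ of length $2$ with $d\not\sim y$ and with the property that neither $x$ nor $y$ dominates $C$. That some length-$2$ induced path from $d$ exists follows from a routine case check on the location of $d$ in $M$ (walk two steps along $U$ when $d\in V(U)$ and $\abs{U}\ge 4$; walk into the pendant path, which has length at least $2$, when $d$ is the branch vertex; handle $d\in\{a_1,a_2\}$ and $\abs{U}=3$ by hand), and automatically $y\notin V(C)$ since $d$ dominates $C$ but $d\not\sim y$. The delicate point is avoiding $C$-dominating vertices. Here Lemma~\ref{lem:dominating} and the first structural fact interact: a vertex of $M$ outside $V(C)$ with two or more neighbours in $C$ must dominate $C$; and if $M$ contained two distinct $C$-dominating vertices, then taking two or three consecutive vertices of $C$ together with them would produce an induced $K_4$, diamond or $C_4$ inside $M$, which severely restricts the configuration — it forces $M$ and $C$ to be nearly disjoint and, in particular, rules out $a_1$ dominating $C$ unless $V(C)\cap V(M)=\emptyset$. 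A short analysis of these limited possibilities, combined with the option of re-anchoring the construction at an adjacent pair $\{d,x\}$ of $C$-dominating vertices of $M$ (using a common neighbour of $d$ and $x$ on $C$ as the apex of the triangle and advancing the pendant path one step deeper into $M$, an iteration that terminates because $M$ is finite and has a leaf), yields the required path. I expect this step — choosing the length-$2$ pendant path out of $d$ so that it avoids $C$-dominating vertices while disposing of the degenerate cases $\abs{U}\in\{3,4\}$ and $d\in\{a_1,a_2\}$ — to be the main obstacle; the rest is bookkeeping.

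Once $d\,x\,y$ is fixed with $x,y$ not dominating $C$, each of $x,y$ has at most two neighbours in $C$ (exactly two if it lies on $C$, at most one otherwise, by Lemma~\ref{lem:dominating}), so at most $4$ vertices of $C$ are ``blocked'' (being $x$, $y$, or a neighbour of one of them). As $\abs{C}\ge 11$, there is an edge $c_ic_{i+1}$ of $C$ with both endpoints unblocked; then $G[\{d,c_i,c_{i+1},x,y\}]$ has exactly the edges $dc_i$, $dc_{i+1}$, $c_ic_{i+1}$ (from $d$ dominating $C$) and $dx$, $xy$ (from the path), and no others. This is an induced subdivision of the $2$-pan on $5$ vertices, contradicting the minimality of $(v_1,v_2,C)$, so no such $M$ exists.
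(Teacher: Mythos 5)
Your final step --- building a $5$-vertex $2$-pan $G[\{d,c_i,c_{i+1},x,y\}]$ from an induced path $dxy$ in $M$ with $x,y\notin D$ --- is correct and is essentially the construction the paper uses inside its proof of \autoref{claim:inducedpath2}: there the triangle is $vwq_i$ and the pendant path is $vp_1p_2$. Your version is in fact slightly cleaner, since you choose $c_i,c_{i+1}$ freely among consecutive pairs of $C$ not blocked by $x$ and $y$ (at most four vertices of $C$ are blocked, so such a pair exists when $\abs C\ge 11$), rather than forcing one of them to already lie in $M$.

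The genuine gap is exactly the step you defer: the existence of an induced path $dxy$ with $x,y\notin D$. The re-anchoring heuristic does not obviously terminate: a priori $d$ could lie on a segment of $M$ entirely contained in $D$, and walking along $M$ then never escapes $D$ before reaching a leaf or going around the whole cycle. To rule this out one must bound $\abs{V(M)\cap D}$, and the bound $\abs{V(M)\cap D}\le 3$ requires knowing that $M$ meets $V(C)$ (then any $c\in V(M)\cap V(C)$ is adjacent in $M$ to everything in $V(M)\cap D$); this is an ambient fact and not something visible from $M$ alone. Your attempted justification --- that two distinct $C$-dominating vertices of $M$ together with consecutive vertices of $C$ give an induced $K_4$, diamond or $C_4$ \emph{inside $M$} --- misidentifies where the forbidden subgraph sits: the vertices of $C$ you adjoin need not be in $M$, so the $K_4$/diamond/$C_4$ lives in $G$, not in $M$, and does not constrain $M$. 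The paper's proof avoids the local walk entirely: it proves \autoref{claim:inducedpath1} and \autoref{claim:inducedpath2} (no induced path $p_1p_2p_3$ in $H-v$ with $p_1\in V(C)$ and $p_2,p_3\notin V(C)\cup D$, nor an induced path $wvp_1p_2$ with $w\in V(C)$ and $p_1,p_2\notin V(C)\cup D$), then shows $\abs{V(H)\cap(V(C)\cup D)}\le 4$ and deduces that $H-(V(C)\cup D)$ has at least $7$ vertices in at most $5$ components, so some component contains an edge $x_1x_2$ whose end has a neighbour in $V(C)\cup D$, contradicting one of the two claims. That global counting argument is precisely what your local ``walk from $d$'' strategy is missing, and without it (or an equivalent substitute) the proposal is incomplete.
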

\begin{proof}
Suppose that there exists an induced subdivision $H$ of the $2$-pan in $G-\{v_1, v_2\}$ containing a $C$-dominating vertex~$v$.
We prove two claims.

\begin{claim}\label{claim:inducedpath1}
There is no induced path $p_1p_2p_3$ in $H-v$ such that 
$p_1\in V(C)$ and $p_2, p_3\in V(G)\setminus (V(C)\cup D)$.
\end{claim}
\begin{proofclaim}
Let $q_1, q_2$ be the neighbors of $p_1$ in $C$.
As $v$ is $C$-dominating, for each $i\in \{1, 2\}$, $vp_1q_iv$ is a triangle.
By minimality of $(v_1, v_2, C)$ and the fact that $\abs{C}\ge 11$, we deduce that $q_i$ does not belong to~$H$. For the same reason, $v$ has no neighbor in $\{p_2, p_3\}$.

Let $i\in \{1, 2\}$. Consider the subgraph induced by $\{v, q_i, p_1, p_2, p_3\}$.
As above the graph $G[\{v, q_i, p_1, p_2, p_3\}]$ is not a $2$-pan. So 
$q_i$ is adjacent to either $p_2$ or $p_3$.
Since $p_2\notin D$, $p_2$ has exactly one neighbor in $C$ by \autoref{lem:dominating} (which is $p_1$) and thus it is not adjacent to~$q_i$.
Therefore, both $q_1$ and $q_2$ are adjacent to~$p_3$.
By \autoref{lem:dominating}, $p_3$ is $C$-dominating, and thus $p_3$ is adjacent to $p_1$. This contradicts the assumption that $p_1p_2p_3$ is an induced path.
We conclude that there is no such an induced path.
\end{proofclaim}

A similar observation can be made for a path continuing from a $C$-dominating vertex.

\begin{claim}\label{claim:inducedpath2}
There is no induced path $wvp_1p_2$ in $H$ such that 
$w\in V(C)$ and $p_1, p_2\in V(G)\setminus (V(C)\cup D)$.

\end{claim}
\begin{proofclaim}
Let $q_1, q_2$ be the neighbors of $w$ in $C$. As above we can deduce that none of $q_1$ and $q_2$ belongs to~$H$.
Also, for every $i\in \{1, 2\}$, $G[\{v, w, q_i, p_1, p_2\}]$ is not a $2$-pan, so
$q_i$ is adjacent to either $p_1$ or $p_2$.
If both $q_1$ and $q_2$ are adjacent to $p_1$ (or $p_2$), then 
by \autoref{lem:dominating}, 
$p_1$ (or $p_2$) is $C$-dominating, a contradiction. 
Therefore, we may assume, without loss of generality, that $q_1$ is adjacent to $p_1$, but not to $p_2$, 
and $q_2$ is adjacent to $p_2$ but not to $p_1$.
Then the cycle $p_1p_2q_2wq_1$ with two more vertices in $C$ induces a $2$-pan, which is smaller than $(v_1, v_2, C)$, a contradiction.
We conclude that there is no such an induced path.
\end{proofclaim}
Observe that $\abs{V(H)\cap V(C)}\le 3$, otherwise the $C$-dominating vertex in $H$ has degree $4$ in $H$.
Also, since $H$ contains a vertex of $C$, we have $\abs{V(H)\cap D}\le 3$.
Furthermore, $\abs{V(H)\cap V(C)}\ge 2$ and $\abs{V(H)\cap D}\ge 2$, then $H$ contains an induced subgraph isomorphic to $C_4$, a contradiction.
Thus, we may assume that $\abs{V(H)\cap V(C)}+\abs{V(H)\cap D}\le 4$.

So $H-(V(C)\cup D)$ has at least $11-4=7$ vertices divided into at most $5$ connected components.
Thus, one of connected components of $H-(V(C)\cup D)$ contains an edge $x_1x_2$ whose one end vertex, say $x_1$, has a neighbor in $V(H)\cap (V(C)\cup D)$.
We call~$x$ a neighbor of $x_1$ in $V(H)\cap (V(C)\cup D)$.

If $x\in V(C)$, then $x$ is not adjacent to $x_2$; otherwise, $H$ contains an induced subgraph isomorphic to $C_3$.
Then $xx_1x_2$ is an induced path, and this is a contradiction by \autoref{claim:inducedpath1}.
So, we may assume that $x\in D$. Note that similarly $x$ is not adjacent to $x_2$, and since $H$ does not contain $v_1, v_2$, $H$ contains a vertex of $C$, say $y$. 
Then $y$ has no neighbors in $\{x_1, x_2\}$; otherwise, $G[\{v,y,x_1, x_2\}]$ contains $C_3$ or $C_4$, a contradiction.
 But then $yvx_1x_2$ is an induced path, and this is a contradiction by \autoref{claim:inducedpath2}.

We conclude that there is no such an induced subdivision of the $2$-pan.
\end{proof}

We can now prove \autoref{t:2panep}.
\begin{proof}[Proof of \autoref{t:2panep}.]
As in the proof for 1-pans, we construct a maximal sequence of graphs $G_1,\ldots, G_{\ell+1}$ and $F_1,\ldots , F_{\ell}$ 
such that 
\begin{itemize}
\item $G_1=G$;
\item for each $i\in \{1, \ldots, \ell\}$, $F_i$ is an induced subdivision of the $2$-pan in $G_i$ with minimum number of vertices; and
\item for each $i\in \{1, \ldots, \ell\}$, $G_{i+1}=G_i-V(F_i)$.
\end{itemize}
Such a sequence can be constructed in polynomial time by repeatedly applying \autoref{lem:algorithm}.
If $\ell \geq k$, then we have found a  packing of $k$ induced subdivisions of the $2$-pan. 
Hence, we assume in the sequel that $\ell \leq k-1$. 
Let $\mu_k:=792k\log k +25k-23$.

\begin{claim}Let $j \in \intv{1}{\ell+1}$.
One can find in polynomial time either $k$ vertex-disjoint induced subdivisions of the $2$-pan, or 
a vertex set $X_{j}$ of $G_{j}$ of size at most $(\ell+1-j)\mu_k$ such that $G_j-X_j$ has no induced subdivision of the $2$-pan.   
\end{claim}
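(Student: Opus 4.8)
The plan is to prove this claim by downward induction on $j$, from $j=\ell+1$ to $j=1$, following verbatim the structure of the analogous claim in the proof of \autoref{t:1panep}, but replacing the $1$-pan ingredients (\autoref{l:oneneigh}, \autoref{lem:induced1pan}, \autoref{l:panpack}) by their $2$-pan counterparts (\autoref{lem:dominating}, \autoref{lem:induced2pan}, \autoref{lem:2panpack}, \autoref{lem:avoiddom}). We may assume $k\ge 2$. For $j=\ell+1$ we take $X_{\ell+1}=\emptyset$, which works by maximality of the sequence. For the inductive step, assume we have $X_{j+1}\subseteq V(G_{j+1})$ with $\abs{X_{j+1}}\le (\ell-j)\mu_k$ and $G_{j+1}-X_{j+1}$ free of induced subdivisions of the $2$-pan. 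Since $V(F_j)$ and $V(G_{j+1})\supseteq X_{j+1}$ are disjoint and $F_j$ has minimum order among induced subdivisions of the $2$-pan in $G_j$, it is also such a minimum in $G_j-X_{j+1}$. If $\abs{C}\le 10$, then a minimum induced $2$-pan subdivision has exactly two vertices off its cycle (if its tail were longer, trimming it down to the two vertices nearest the cycle yields a smaller one), so $\abs{F_j}\le 12\le \mu_k$ and $X_j:=X_{j+1}\cup V(F_j)$ works. So assume $\abs{C}\ge 11$ and write $F_j=(v_1,v_2,C)$.

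Let $D$ be the set of vertices of $G_j-X_{j+1}$ dominating $C$. By \autoref{lem:avoiddom}, no induced subdivision of the $2$-pan in $(G_j-X_{j+1})-\{v_1,v_2\}$ uses a vertex of $D$. Apply \autoref{t:gallaiapath} with $A=V(C)$ in $\bigl((G_j-X_{j+1})-E(C)\bigr)-D$ and parameter $396k\log k$. If it returns $396k\log k$ vertex-disjoint $V(C)$-paths, they lie in $(G_j-X_{j+1})-E(C)-D$, so \autoref{lem:2panpack} applied to $G_j-X_{j+1}$ with the minimum subdivision $F_j$ yields $k$ vertex-disjoint induced subdivisions of the $2$-pan and we are done. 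Otherwise we obtain a set $S$ with $\abs{S}\le 792k\log k-2$ hitting all $V(C)$-paths of $\bigl((G_j-X_{j+1})-E(C)\bigr)-D$. Put $G':=(G_j-X_{j+1})-\bigl(S\cup\{v_1,v_2\}\bigr)$.

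The key structural step, mirroring the $1$-pan case, is: every induced subdivision of the $2$-pan in $G'$ contains an induced subdivision $Q=(w_1,w_2,D_Q)$ of the $2$-pan (the tail trimmed to its first two vertices) such that $V(C)\cap V(Q)$ is a nonempty set of at most three consecutive vertices of $C$. Indeed, $Q$ avoids $D$ (by \autoref{lem:avoiddom}) and $S$; since $S$ hits all $V(C)$-paths of $\bigl((G_j-X_{j+1})-E(C)\bigr)-D$, the set $V(C)\cap V(Q)$ induces a path in $C$ (not all of $C$, as $\abs{C}\ge 11$) and contains no edge of $D_Q$, which together with \autoref{lem:dominating} forces the bound; and $V(C)\cap V(Q)=\emptyset$ is impossible because $G'-V(C)$ is an induced subgraph of $G_{j+1}-X_{j+1}$, which has no induced subdivision of the $2$-pan. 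With this in hand, run the greedy procedure of \autoref{t:1panep}: maintain $U\subseteq V(C)\setminus S$ and a family $\mathcal{J}$; for every window of five consecutive vertices of $C$ whose middle three avoid $U\cup S$, test by exhaustive search (as in \autoref{lem:algorithm}) whether $G'[(V(G')\setminus V(C))\cup\{\text{the middle three}\}]$ contains an induced subdivision of the $2$-pan, and if so add the five window vertices to $U$ and the subdivision to $\mathcal{J}$, stopping once $\abs{\mathcal{J}}=k$. Two members of $\mathcal{J}$ have disjoint traces on $C$, hence, again via the absence of $V(C)$-paths off $S$, are vertex-disjoint; so $\abs{\mathcal{J}}=k$ gives a packing. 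Otherwise $\abs{U}\le 5(k-1)$, and taking for $U'$ an appropriate bounded-radius neighborhood of $U$ in $C$ one gets $\abs{U'}\le 25k-23$ and $G'-U'$ has no induced subdivision of the $2$-pan: any such subdivision contains some $Q$ as above whose (at most three consecutive) vertices of $C$ lie outside $U'\supseteq U$ and would therefore have been detected by the greedy. Then $X_j:=X_{j+1}\cup S\cup\{v_1,v_2\}\cup U'$ satisfies $\abs{X_j}\le (\ell-j)\mu_k+(792k\log k-2)+2+(25k-23)=(\ell+1-j)\mu_k$ and $G_j-X_j$ is free of induced subdivisions of the $2$-pan, completing the induction. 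Every step runs in polynomial time (\autoref{t:gallaiapath}, the use of \autoref{t:simonovits} inside \autoref{lem:2panpack}, and \autoref{lem:algorithm}), so the whole procedure is polynomial.

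The step I expect to be the main obstacle is the structural claim bounding the trace of an induced $2$-pan subdivision on $C$. Unlike in the $1$-pan case, a single vertex can dominate $C$ and thereby lie on arbitrarily many induced $2$-pan subdivisions, so such vertices cannot simply be deleted; \autoref{lem:avoiddom} is precisely what circumvents this, but it applies only after deleting the two tail vertices $v_1,v_2$ of $F_j$, and one must then carefully combine it with \autoref{lem:dominating} and the $V(C)$-path hitting set $S$ to control every way an induced $2$-pan subdivision can touch or run along $C$ — in particular to ensure that the bounded-radius padding $U'$ of $U$ actually meets all the relevant subdivisions.
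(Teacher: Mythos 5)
Your proposal follows the paper's proof essentially step by step: downward induction on $j$, deletion of $F_j$ when small, the case $\abs{C}\ge 11$, Gallai's theorem on $V(C)$-paths, \autoref{lem:2panpack} for the packing branch, the greedy over windows of five consecutive vertices of $C$, the $2$-neighborhood $U'$, and the final accounting giving $\mu_k$. The two departures from the paper's write-up are sound: you make explicit the trimming of the tail of a candidate $2$-pan subdivision to two vertices (the paper does this implicitly through its $(w_1,w_2,W)$ notation), and you keep the $C$-dominating set $D$ inside $G'$ rather than deleting it, correctly observing via \autoref{lem:avoiddom} that any induced $2$-pan subdivision in $G'$ avoids $D$ anyway, so the same $V(C)$-path arguments go through.

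One subtlety deserves attention though. You require the \emph{middle three} of each tested window to avoid $U\cup S$, mirroring the phrasing of the $1$-pan claim. The paper's own $2$-pan version drops the $S$ part and requires only $v_2,v_3,v_4\notin U$, and this is not cosmetic. In the completeness argument one takes $T=V(Q)\cap V(C)$ with $\abs{T}\le 3$ consecutive, $T\cap S=\emptyset$ (since $T\subseteq V(G')$) and $T\cap U'=\emptyset$, and must exhibit a window containing $T$ that the greedy actually tested. With the paper's condition, any extension $\{z_1,z_2,z_3\}\supseteq T$ with $\{z_1,z_2,z_3\}\cap U=\emptyset$ suffices: even when an extension vertex $z_i$ lies in $S$, the test graph $G'[(V(G')\setminus V(C))\cup\{z_1,z_2,z_3\}]$ simply drops it and still contains $Q$. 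With your stricter condition, when $\abs{T}<3$ and every choice of extension vertex lies in $S$ (e.g.\ $T=\{t\}$ with $t-1,t+1\in S$), no window satisfying the test condition containing $T$ exists, the greedy never inspects $Q$, and the concluding claim that $G'-U'$ is $2$-pan-subdivision-free is not justified. You should simply not require the middle three to avoid $S$: the test graph already lives in $G'$, which excludes $S$.
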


\begin{proofclaim}
  We prove claim by induction for $j=\ell+1$ down to $j=1$. 
The claim trivially holds for $j=\ell+1$ with $X_{\ell+1}=\emptyset$ because $G_{\ell+1}$ has no induced subdivision of the $2$-pan.
Let us  assume that for some $j\leq \ell$, we obtained a required vertex set $X_{j+1}$ of $G_{j+1}$ of size at most $(\ell-j)\mu_k$.
Then in $G_{j}-X_{j+1}$, $F_{j}$ is an induced subdivision of the $2$-pan with minimum number of vertices.
If $F_{j}$ has less than 13 vertices, then we set $X_{j}:=X_{j+1}\cup V(F_{j})$. Clearly, $\abs{X_j}\leq (\ell-j+1)\mu_k$.
We may assume $F_j$ has at least $13$ vertices. Let $F_j:=(u_1, u_2,C)$, and let $D$ be the set of vertices in $G_j-X_{j+1}$ that dominate $C$.
Note that $C$ has length at least $11$.

According to \autoref{lem:avoiddom}, there are no induced subdivisions of the $2$-pan intersecting $D$. 
Therefore, we can ignore the vertex set $D$.

We apply Gallai's $A$-path Theorem (\autoref{t:gallaiapath}) with $A=V(C)$ for finding at least $396k \log k$ pairwise vertex-disjoint $V(C)$-paths in $G_j-(D\cup X_{j+1}\cup \{u_1, u_2\})-E(C)$.
Assume that it outputs such $V(C)$-paths.
Then, by applying \autoref{lem:2panpack} to $G_{j}-X_{j+1}$ and $V(C)$,
one can find in polynomial time $k$ vertex-disjoint induced subdivisions of the $2$-pan.
Thus, we may assume that \autoref{t:gallaiapath} outputs a vertex set $S$ of size at most $792 k\log k$ hitting all $V(C)$-paths in $G_j-(D\cup X_{j+1}\cup \{u_1, u_2\})-E(C)$.
 
Now, we consider the graph $G_j':=G_j-(X_{j+1}\cup D\cup S\cup \{u_1, u_2\})$. 
Suppose $G_j'$ contains an induced subdivision $Q=(w_1, w_2, W)$ of the $2$-pan.
Then $G_j'[V(C)\cap V(Q)]$ is connected; otherwise, $G_j'$ contains a $V(C)$-path in $(G_j-D-X_{j+1})-E(C)$, contradicting the fact that $S$ hits all such $V(C)$-paths.
Furthermore, $G_j'[V(C)\cap V(Q)]$ contains no edge of $W$; otherwise, we also have a $V(C)$-path in $(G_j-D-X_{j+1})-E(C)$.
Thus, we have that $\abs{V(C)\cap V(Q)}\le 3$.

We recursively construct sets $U\subseteq V(C)$ and $\mathcal{J}$ as follows. We start with  $U:=\emptyset$ and $\mathcal{J}:=\emptyset$.
For every set of five consecutive vertices $v_1, v_2, v_3, v_4, v_5$ of $C$ with $v_2, v_3, v_4\notin U$, 
we test whether $G_j'[(V(G_j')\setminus V(C))\cup \{v_2,v_3, v_4\}]$ contains an induced subdivision $H$ of the $2$-pan, 
and if so, add vertices $v_1, v_2, v_3, v_4, v_5$ to $U$, and add $H$ to $\mathcal{J}$ and increase the counter by 1. If the counter reaches $k$, then we stop. 
Note that graphs in $\mathcal{J}$ are pairwise vertex-disjoint; otherwise, we can find a $V(C)$-path, a contradiction.
Thus, if the counter reaches $k$, we can output $k$ pairwise vertex-disjoint induced subdivisions of the $2$-pan.

Assume the counter does not reach $k$. Then
we have $\abs{U}\leq 5(k-1)$. We take the $2$-neighborhood $U'$ of $U$ in $C$, and observe that $\abs{U'}\le 25(k-1)$. 
We claim that $G_j'-U'$ has no induced subdivision of the $2$-pan. 
Suppose for contradiction that there is an induced subdivision $F$ of the $2$-pan in $G_j'-U_j$.
The intersection of $F$ on $C$ is a set of at most three consecutive vertices, say $T$.

Since $\dist_C(T, U)\ge 2$ by the construction of $U'$, 
there exists three consecutive vertices $z_1, z_2, z_3$ of $C$ not containing $U$,  
such that $G_j'[(V(G_j')\setminus V(C))\cup \{z_1, z_2, z_3\}]$ contains an induced subdivision of the 2-pan.
It implies that we can increase the counter by adding this induced subdivision to $\mathcal{J}$, a contradiction.
Therefore, $G_j'-U'$ has no induced subdivision of the $2$-pan.

Observe that 
\[ \abs{X_j}\le \abs{X_{j+1}\cup S\cup \{u_1, u_2\}\cup U'}\le (\ell-j)\mu_{k}+792k\log k +2+25(k-1)\le (\ell-j+1)\mu_k.\] 
So, $X_j:=X_{j+1}\cup S\cup \{u_1, u_2\}\cup U'$ satisfies the claim.
\end{proofclaim}
The result follows from the claim with $j=1$.
\end{proof}

\section{Subdivisions of the diamond have the induced \texorpdfstring{\ep{}}{Erdos-Posa} property}\label{sec:diamond}

In this section, we prove that the subdivisions of the diamond have the induced \ep{} property.

\begin{theorem}\label{t:diamond}
There exists a polynomial function $g:\mathbb{N}\rightarrow \mathbb{N}$ satisfying the following.
Given a graph $G$ and a positive integer $k$, one can in time $\mathcal{O}(kN(3, 3k)^{3k}+k^2\abs{G}^7)$
output either $k$ vertex-disjoint induced subdivisions of the diamond, or a vertex set of size at most $g(k)$ hitting every induced subdivision of the diamond.
\end{theorem}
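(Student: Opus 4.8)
The plan is to mimic, at a high level, the strategy used for \autoref{t:1panep} and \autoref{t:2panep}, but with a considerably more delicate structural analysis whose core invokes the regular partition lemma (\autoref{prop:regularpartition}) with $n=3$. First I would record the elementary observation that an induced subdivision of the diamond is exactly an induced \emph{theta graph}: two branch vertices joined by three internally disjoint induced paths with no edges between distinct paths except at the branch vertices. Next I would build, as in the pan proofs, a maximal sequence of graphs $G_1=G,G_2,\dots,G_{\ell+1}$ together with induced diamond subdivisions $F_1,\dots,F_\ell$, where each $F_i$ is a minimum-order induced subdivision of the diamond in $G_i$ (found by an algorithm analogous to \autoref{lem:algorithm}: guess the two branch vertices and the first edge along each of the three incident paths, then compute three internally disjoint shortest induced paths avoiding the relevant forbidden neighbourhoods — this costs a polynomial factor in $\abs{G}$ per level and is responsible for the $k^2\abs{G}^7$ term) and $G_{i+1}=G_i-V(F_i)$. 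If $\ell\ge k$ we output $F_1,\dots,F_k$; otherwise $\ell\le k-1$ and $G_{\ell+1}$ has no induced diamond subdivision. It then suffices to prove the following one-level statement and apply it by downward induction on $j$ (exactly as for the pans, observing that $F_j$ is still a minimum induced diamond subdivision in $G_j-X_{j+1}$): if $S$ is a minimum-order induced subdivision of the diamond in a graph $G$ such that $G-V(S)$ has no induced subdivision of the diamond, then one can find, in time $\mathcal{O}(N(3,3k)^{3k}+k\abs{G}^{c})$, either $k$ vertex-disjoint induced subdivisions of the diamond or a hitting set of size at most $\mathrm{poly}(k)$; summing the hitting sets over the at most $k$ levels gives the polynomial function $g$.

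For the one-level statement, write $S$ as branch vertices $a,b$ with induced paths $Q_1,Q_2,Q_3$, and let $C=Q_1\cup Q_2$ be the induced cycle of $S$. If $\abs{S}$ is bounded by the fixed polynomial in $k$ coming from the argument below, delete $V(S)$ and recurse. Otherwise the long part of $S$ (a subpath of $C$ or the ear $Q_3$) is long, and I would prove minimality lemmas in the spirit of \autoref{l:oneneigh} and \autoref{lem:dominating}: a vertex of $V(G)\setminus V(S)$ that has several neighbours close together on $C$, or neighbours spread around $C$, either creates a strictly shorter theta or dominates $C$; and, as in \autoref{lem:avoiddom}, vertices dominating $C$ can be discarded since no induced diamond subdivision uses one. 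Then apply Gallai's $A$-path theorem (\autoref{t:gallaiapath}) with $A=V(C)$ to $G$ minus the branch vertices, minus the dominating set, and minus $E(C)$: either many vertex-disjoint $V(C)$-paths appear — together with $C$ they form a cubic multigraph with many degree-$3$ vertices, so \autoref{t:simonovits} yields many vertex-disjoint cycles, which a packing lemma analogous to \autoref{l:panpack} and \autoref{lem:2panpack} converts into $k$ vertex-disjoint induced diamond subdivisions — or Gallai's theorem returns a bounded-size set $T$ hitting all $V(C)$-paths. After removing $T$ (plus the branch and dominating vertices), every surviving induced diamond subdivision meets $V(C)$ in a connected set of $\mathcal{O}(1)$ consecutive vertices of a single $Q_i$.

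The hard part — and the reason the diamond is much more involved than the pans — is the final confinement step, which has to cope with the fact that an induced diamond subdivision has \emph{two} branch vertices: a parasitic theta can route two separate internally disjoint paths through $G-V(S)$ back onto the long path of $S$, and the attachment points of these paths (together with short arcs) can interleave along the long path in many patterns. To control this I would associate to each such candidate configuration a set of at most three attachment points along the long path, and apply \autoref{prop:regularpartition} with $n=3$ and $\ell=3k$ to extract a subsequence of $3k$ configurations whose attachment sets sit regularly with respect to a partition of order at most $3$. Inside such a regular family, disjointness among the configurations can be read directly from the partition, so either $k$ of them are pairwise vertex-disjoint induced diamond subdivisions, or $\mathcal{O}(k)$ windows along the long path hit every induced diamond subdivision; taking an $\mathcal{O}(1)$-neighbourhood of these windows (total size $\mathcal{O}(k)$) and adding $T$, the branch vertices and the dominating vertices gives the required hitting set. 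Enumerating the possible regular partitions and the assignment of the $3k$ configurations to their parts costs $\mathcal{O}(N(3,3k)^{3k})$ per level, which explains that term, and the bounding function $g(k)$ is the product of the per-level $\mathrm{poly}(k)$ bound with the number of levels.

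I expect the main obstacle to be precisely the exhaustive case analysis for a parasitic theta whose two return paths attach on opposite arcs of $C$ or straddle the branch vertices $a,b$, or which uses pieces of all three of $Q_1,Q_2,Q_3$: in each case one must show that minimality of $S$ either forbids the configuration outright or forces it into a bounded window around one attachment region, and it is exactly to make these cases tractable — by normalizing the relative order and multiplicity of the attachment points — that the regular partition machinery, with its non-polynomial-in-$k$ (but fixed, $k^{\mathcal{O}(k)}$) enumeration cost, is needed.
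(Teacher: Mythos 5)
Your outer scaffold (iteratively peeling off a diamond subdivision $F_i$ and reducing to a one-level statement, giving the factor $k$ in both the running time and the hitting-set bound) matches the paper's proof of \autoref{t:diamond}. But inside the one-level step your substance diverges from the paper's and, as written, has a genuine gap.

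First, a structural point of comparison. The paper's one-level statement (\autoref{prop:basedpath2}) does \emph{not} assume the diamond subdivision $H$ is of minimum order; they explicitly remark that minimality is not needed for the diamond. They then reduce to \autoref{prop:basedpath}, which hits all induced diamond subdivisions meeting a single given induced path $P$, and they apply this to each of the three paths $P_1,P_2,P_3$ forming $H$. The technical engine there is not Gallai's $A$-path theorem plus Simonovits, as you propose, but rather the Erd\H{o}s--P\'osa theorem for $A$-claws of Bruhn--Heinlein--Joos (\autoref{prop:aclawalgo}), after which every remaining Tutte bridge of $V(P)$ attaches to $P$ in at most two vertices; the argument then proceeds by a case analysis on Tutte bridges (\autoref{lem:connectedcycle}, \autoref{lem:acyclicbridge}) interleaved with regular-partition applications at $n=2$ and $n=4$ (not just $n=3$). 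So the paper's proof is a path-local, bridge-based analysis that replaces minimality assumptions with $A$-claw hitting.

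The genuine gap in your proposal is the confinement claim after the Gallai step: ``after removing $T$ \ldots every surviving induced diamond subdivision meets $V(C)$ in a connected set of $\mathcal{O}(1)$ consecutive vertices of a single $Q_i$.'' Hitting all $V(C)$-paths with $C=Q_1\cup Q_2$ says nothing about an induced diamond subdivision $X$ whose intersection with $S$ lies entirely inside the ear $Q_3$, or more generally whose attachments to $S$ avoid $C$; such $X$ uses no $V(C)$-path and need not dominate $C$, so it survives your deletions unconfined. Even for subdivisions that do attach on $C$, a $V(C)$-path hitting set only controls components of $G-V(C)$ that reach $C$ at $\ge 2$ points in a single excursion; it does not rule out a component that reaches $C$ at $\ge 3$ points (a $V(C)$-claw), which is exactly what creates an induced theta with $C$, and which the paper addresses head-on with the $A$-claw Erd\H{o}s--P\'osa machinery. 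Your proposal also leans on ``minimality lemmas in the spirit of \autoref{l:oneneigh} and \autoref{lem:dominating}'' for a theta graph, but you have not stated or proved them, and for thetas (two branch vertices, three arms) these statements are much more delicate than for pans: a vertex with three spread-out neighbours on $C$ does produce an induced theta with $C$, but this theta need not be smaller than $S$ (it uses all of $C$), so no contradiction with minimality follows. You yourself flag the case analysis for parasitic thetas straddling $a,b$ or using pieces of all of $Q_1,Q_2,Q_3$ as an obstacle; that is precisely where the argument breaks, and it is the portion of the proof that the paper's $A$-claw plus Tutte-bridge machinery is designed to handle.
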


Remark that while the bounding function $g$ that we obtained is polynomial, our upper-bound on its order is large; 
$g(k) =  \mathcal{O}(N(4,3k))$, where $N$ is the function of \autoref{prop:regularpartition}.

We follow a similar line of proofs as in the previous section.
We first deal with the case where the considered graph $G$ has an
induced subdivision $H$ of the diamond and $G-V(H)$ is $H$-induced-subdivision-free (here we do not need the minimality of a model).
Observe that a subdivision of the diamond consists of three internally
disjoint paths between two distinct vertices.
This simple observation allows us to focus on the following sightly easier setting:
given a graph $G$ and an induced path $Q$, we aim at finding
either many vertex-disjoint induced subdivisions of the diamond or a
small vertex set hitting all the induced subdivisions of the diamond
that meet~$Q$.

For a vertex subset $S$ of a graph $G$ and $v\in V(G)\setminus S$, 
a path is called a \emph{$(v,S)$-path}, if it starts with $v$ and ends at a vertex in $S$ and contains no other vertices in $S$.

We will use the \ep{} property of $A$-$\ell$-combs, recently developed
by Bruhn, Heinlein, and Joos~\cite{Bruhn2017}, to first exclude
induced diamond subdivisions of a special type.
Since we only use it with $\ell=2$, we name it \emph{$A$-claw} for simplicity. 
Given a graph $G$ and a vertex subset $A$, we say that a subgraph $H$ of $G$ is an \emph{$A$-claw} if 
\begin{itemize}
\item $H$ contains a vertex $v\in V(G)\setminus A$, 
\item $H$ consists of three $(v,A)$-paths $P_1, P_2, P_3$ such that
  $V(P_i) \cap V(P_j) = \{v\}$, for every distinct $i,j\in \{1,2,3\}$.
\end{itemize}
For convenience, if $H$ is a subgraph of $G$, then we refer to $V(H)$-claws as \emph{$H$-claws}. The \emph{leaves} of an $A$-claw are its vertices in~$A$.

\subsection{Base polynomial-time algorithms}\label{sec:basepoly}
In this section we present algorithms to find (collections of) induced subdivisions of the diamond and claws.
We first show that one can detect an induced subdivision of the diamond in polynomial time.
The following lemma is useful.

\begin{lemma}\label{lem:threepaths1}
Given a graph $G$, an induced path $Q$, and a $Q$-claw $F$,
one can find in time $\mathcal{O}(\abs{G}^3)$ an induced subdivision of the diamond in $G[V(F)\cup V(Q)]$.
\end{lemma}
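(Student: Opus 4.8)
The plan is to recast the statement as the problem of finding an induced \emph{theta} and then to ``clean up'' a concrete theta subgraph built from $F$ and $Q$; here by a theta I mean a subgraph consisting of two distinct vertices joined by three pairwise internally vertex-disjoint paths. The preliminary observation is that in a simple graph the induced subdivisions of the diamond are exactly the induced theta subgraphs: the diamond $K_4-e$ is a theta with path-lengths $(1,2,2)$, every subdivision of it is a theta whose three path-lengths are some $(\ell_1,\ell_2,\ell_3)$ with $\ell_1\ge 1$ and $\ell_2,\ell_3\ge 2$, and conversely any theta subgraph of a simple graph has at most one path of length $1$ (two would force parallel edges), so its sorted path-lengths have exactly this shape. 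Hence it suffices to find, in time $\mathcal{O}(\abs{G}^3)$, an induced theta inside $G':=G[V(F)\cup V(Q)]$.

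Let $q_1,q_2,q_3$ denote the (necessarily pairwise distinct) leaves of the $Q$-claw $F$, named so that they occur in this order along the induced path $Q$, and let $P_i$ be the path of $F$ from its centre $v$ to $q_i$. Since the $P_i$ pairwise meet only in $v$, each $P_i$ meets $V(Q)$ only in $q_i$, and $q_1\notin V(Q[q_2,q_3])$ while $q_3\notin V(Q[q_1,q_2])$, the three paths
\[ B_1:=P_2,\qquad B_2:=P_1\cup Q[q_1,q_2],\qquad B_3:=P_3\cup Q[q_2,q_3] \]
are $v$--$q_2$ paths that pairwise intersect in exactly $\{v,q_2\}$, and $B_2,B_3$ have length at least $2$. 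Thus $\Theta_0:=B_1\cup B_2\cup B_3$ is a theta subgraph of $G$ with $V(\Theta_0)\subseteq V(F)\cup V(Q)$, obtainable in time $\mathcal{O}(\abs{G})$.

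Next I would iteratively reduce $\Theta_0$: while the current theta $\Theta$ (with branch vertices $x,y$ and paths $R_1,R_2,R_3$) has a chord $e=ab$ in $G$, replace it by a theta subgraph of $G'$ with strictly fewer vertices, according to the position of $a,b$. If $\{a,b\}=\{x,y\}$ then each $R_i$ has length $\ge 2$ and we keep $e\cup R_1\cup R_2$, discarding the interior of $R_3$. If exactly one of $a,b$ is a branch vertex and the other is interior to a single $R_i$, or if both $a,b$ are interior to the same $R_i$, we shortcut $R_i$ along $e$. If $a$ is interior to $R_i$, $b$ interior to $R_j$ with $i\ne j$, and the third path $R_k$ has an interior vertex, we pass to the theta with branch vertices $a,b$ and paths $e$, $R_i[a,x]\cup R_j[x,b]$, $R_i[a,y]\cup R_j[y,b]$, which lives on $V(R_i)\cup V(R_j)\subsetneq V(\Theta)$. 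Each round strictly decreases $\abs{V(\Theta)}\le\abs{G}$ and is performed in $\mathcal{O}(\abs{G}^2)$ time by scanning the pairs of vertices of $\Theta$, so the whole process runs in $\mathcal{O}(\abs{G}^3)$ time.

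The only configuration in which none of these moves applies is a chord $ab$ joining the interiors of two paths $R_i,R_j$ of length $\ge 2$ while the third path $R_k$ is the single edge $xy$; equivalently $R_i\cup R_j$ is a cycle carrying the two ``crossing'' chords $xy$ and $ab$. Ruling this out is the heart of the argument, and it is where the structure of $\Theta_0$ — not merely its being a theta — is used. The right invariant is that two of the three paths of every theta produced by the reduction still contain a fixed length-$2$ subpath of $Q$ whose two endpoints are non-adjacent (since $Q$ is induced): near $q_2$ the paths $B_2$ and $B_3$ of $\Theta_0$ finish with the two edges of $Q$ at $q_2$, so $q_2$ together with its two $Q$-neighbours is such a witness, and one checks that a witness of this kind survives each of the four reduction moves. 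Having three consecutive vertices of an induced path inside $\Theta$ with the outer two non-adjacent is incompatible with the crossing-chords configuration (in particular it prevents $R_i\cup R_j$ from being a $4$-cycle), so the reduction never gets stuck and terminates at an induced theta, i.e.\ at an induced subdivision of the diamond inside $G[V(F)\cup V(Q)]$. Making this invariant precise and verifying its preservation under the four moves is the main technical obstacle; everything else is bookkeeping.
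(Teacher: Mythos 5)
Your opening reformulation (induced subdivisions of the diamond are exactly induced theta subgraphs) and each of your four chord-elimination moves are sound, but the argument has a genuine gap precisely where you flag ``the main technical obstacle'', and the invariant you sketch does not close it. The witness --- three consecutive $Q$-vertices $p,q_2,p'$ with $pp'\notin E(G)$ straddling two paths of $\Theta$ at a branch vertex --- can be present in a theta that is stuck for all four moves. Concretely, let $Q=q_1pq_2p'q_3$ be an induced path and take a $Q$-claw $F$ with centre $v$ and paths $P_1=vcq_1$, $P_2=vq_2$, $P_3=vdq_3$ where $c,d\notin V(Q)$, and suppose $cd$ is the only edge of $G[V(F)\cup V(Q)]$ not on $\Theta_0$. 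Then $\Theta_0$ has branch vertices $v$ and $q_2$, the path $B_1=vq_2$ is a single edge, and the unique chord $cd$ joins interiors of $B_2$ and $B_3$: none of your moves applies, yet the witness $p,q_2,p'$ is intact and $B_2\cup B_3$ is an $8$-cycle --- not a $4$-cycle, which is the only configuration your parenthetical actually rules out. An induced theta does exist here, namely $G[\{v,c,d,q_1,p,q_2\}]$ with branch vertices $v$ and $c$, but reaching it requires a move whose new branch pair consists of one old branch and one interior vertex, which is not among your four and does not obviously follow from any $Q$-witness.

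This missing move is exactly what the paper's recursion supplies by never passing to a bare theta: it stays in the world of ``induced path $Q'$ with a $Q'$-claw onto it''. When it encounters your stuck configuration (an edge $wz$ with $w$ interior to $P_1$ and $z$ interior to $P_3\cup Q_2$), it does not try to shrink the current theta; instead it forms a strictly smaller induced path $G[V(P_1')\cup\{r\}]$, where $P_1'$ is $P_1$ truncated at $w$ and $r$ is the $P_2$-neighbour of $v$, exhibits a fresh claw from $z$ onto this path, and recurses. Adding a move of this shape to your reduction and re-verifying the vertex-count decrease would essentially rebuild the paper's argument; without it, your reduction can terminate at a non-induced theta as in the example above, so the proposal as written does not prove the lemma.
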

\begin{proof}
  We prove by induction on $|V(F)\cup V(Q)|$ that one can find an induced subdivision of the diamond in $G[V(F)\cup V(Q)]$ in $\abs{V(F)\cup V(Q)}\cdot \mathcal{O}(\abs{E(G)})$ steps.
  
  Let $v$ be the vertex of $F - V(Q)$ that is connected via the paths $P_1, P_2, P_3$ to $Q$, forming the $Q$-claw $F$.
Without loss of generality, we assume that the end vertices of $P_1, P_2, P_3$ appear in $Q$ in this order and that $P_1$ and $P_3$ meet $Q$ on its endpoints.
For each $i\in \{1,2\}$, let $Q_i$ be the subpath of $Q$ from the end vertex of $P_i$ to the end vertex of $P_{i+1}$. 
Let $H:=G[V(F) \cup V(Q)]$.

If $H$ contains no edges other than those in $P_1\cup P_2\cup P_3\cup Q$, 
then clearly, it is an induced subdivision of the diamond.
Besides, we may assume that each of $P_1, P_2, P_3$ is an induced path; otherwise we could shorten it and apply the induction hypothesis.
Therefore, we may assume that $H$ contains an edge $wz$, which does not have
both endpoints in one of $P_1$, $P_2$, $P_3$, or~$Q$.

Suppose $w$ is an internal vertex of $P_1$ and $z$ is an internal vertex of  $P_2\cup Q_1$.
Then from $z$, there are three paths to $P_1$, where the end vertices of two paths are the end vertices of $P_1$. 
As $P_1$ is an induced path, by induction hypothesis, one can find an
induced subdivision of the diamond in $G[V(P_1\cup P_2\cup Q_1)]$ in
time $\abs{V(P_1\cup P_2\cup Q_1)} \cdot \mathcal{O}(E(G))$.

We thus may assume that there is no such an edge. 
We can apply the same argument for all pairs $(P_2, P_1\cup Q_1), (P_2, P_3\cup Q_2), (P_3, P_2\cup Q_2)$.

  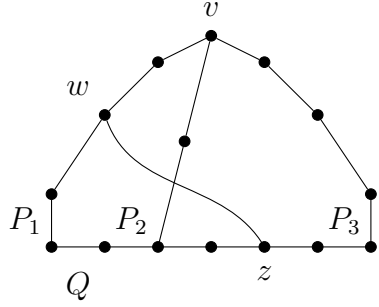
\begin{figure}
  \centering
   \begin{tikzpicture}[scale=0.7]
  \tikzstyle{w}=[circle,draw,fill=black,inner sep=0pt,minimum width=4pt]

  \draw (1,0)--(7,0);
  \foreach \y in {1,...,7}{
      \draw (\y,0) node [w] (a\y) {};
     }
       \draw (4,4) node [w] (x1) {};
       \draw (3,3.5) node [w] (x2) {};
       \draw (2,2.5) node [w] (x3) {};
       \draw (1,1) node [w] (x4) {};
     \draw(x1)--(x2)--(x3)--(x4)--(1,0);
     
     \draw (3.5, 2) node [w](y1) {};
     \draw(x1)--(y1)--(3,0);
     
       \draw (5,3.5) node [w] (z2) {};
       \draw (6,2.5) node [w] (z3) {};
       \draw (7,1) node [w] (z4) {};
     \draw(x1)--(z2)--(z3)--(z4)--(7,0);
     
      \node at (1.5, -0.75) {$Q$};
      \node at (.5, .5) {$P_1$};
      \node at (2.5, .5) {$P_2$};
      \node at (6.5, .5) {$P_3$};

      \draw(x3) [in=120,out=-70] to (5,0);   
      \node at (4, 4.5) {$v$};
      \node at (1.5, 3) {$w$};
      \node at (5,  -.5) {$z$};

   \end{tikzpicture}   \caption{An edge $wz$ where $w$ is an internal vertex of $P_1$ and $z$ is an internal vertex of $Q_2\cup P_3$ in \autoref{lem:threepaths1}.}\label{fig:crossingedge}
\end{figure}

Since $Q$ is an induced path, one of $w$ and $z$ is contained in $F - V(Q)$.
If $w\in V(P_2)\setminus V(Q)$, then $z$ is contained in $P_1\cup P_3\cup Q-V(P_2)$, which is not possible by the previous argument.
Therefore, we may assume that $w$ is an internal vertex of $P_1$ without loss of generality, and 
$z$ is an internal vertex of $P_3\cup Q_2$.
See \autoref{fig:crossingedge} for an illustration.

Let $r$ be the neighbor of $v$ in $P_2$, 
and let $P_1'$ be the subpath of $P_1$ from $v$ to $w$.
Note that $G[V(P_1')\cup \{r\}]$ is an induced path, and there are three paths from $z$ to $P_1'$, 
namely, two paths along $P_2\cup P_3\cup Q_2$ and $wz$.
Since the union of those paths does not contain the end vertex of $P_1$ in $Q$, we obtained a smaller induced subgraph graph satisfying the premisses of the lemma. 
By the induction hypothesis, one can find an induced subdivision of
the diamond in time $\abs{V(P_1'\cup P_2\cup P_3\cup Q_2)}\cdot \mathcal{O}(E(G))$.

This completes the proof.
\end{proof}

\begin{proposition}\label{prop:detectdia}
Given a graph $G$, one can test in time $\mathcal{O}(\abs{G}^7)$ whether $G$ contains an induced subdivision of the diamond, and output one if exists.
\end{proposition}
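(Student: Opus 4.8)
The plan is to reduce the detection of an induced subdivision of the diamond to finding an induced path $Q$ together with a $Q$-claw $F$, and then to apply \autoref{lem:threepaths1}, which extracts an induced subdivision of the diamond from $G[V(F)\cup V(Q)]$ in time $\mathcal{O}(\abs{G}^3)$. The key first step is the equivalence: \emph{$G$ contains an induced subdivision of the diamond if and only if $G$ contains an induced path $Q$ and a $Q$-claw $F$}. For this, note that a subdivision of the diamond is exactly two distinct vertices $u,v$ joined by three pairwise internally disjoint induced paths such that no edge joins an internal vertex of one to an internal vertex of another; hence $G$ has an induced subdivision of the diamond iff such a configuration occurs. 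The direction $(\Leftarrow)$ is literally \autoref{lem:threepaths1}. For $(\Rightarrow)$, take a witness with branch vertices $a,b$ and paths $P_1,P_2,P_3$, where $P_2$ and $P_3$ have length at least $2$ (at least two of the three always do), and let $n_2,n_3$ be the neighbours of $a$ on $P_2,P_3$. Then $Q:=(P_2\cup P_3)-a$ is an induced path containing $n_2,n_3,b$ and avoiding $a$, and the subgraph $F$ consisting of $P_1$ and the two edges $an_2,an_3$ is a $Q$-claw centred at $a$: its three legs pairwise meet only in $a$ because $n_2\neq n_3$ are internal to $P_2,P_3$ and $P_1$ is internally disjoint from $P_2\cup P_3$, and the interior of $P_1$ avoids $V(Q)$. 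This also tells us that it suffices to search for configurations of this normalised shape.

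The algorithm then iterates over all $\mathcal{O}(\abs{G}^4)$ quadruples $(a,n_2,n_3,b)$ with $n_2,n_3\in N_G(a)$ and $n_2\neq n_3$. For each, it tries to build the two halves $R_2$ (from $n_2$ to $b$) and $R_3$ (from $n_3$ to $b$) of $Q$, and the third leg $P_1$ (from $a$ to $b$) of the claw, by successive shortest-path computations in residual graphs: $R_2$ as a shortest $(n_2,b)$-path in $G-a-n_3$; then $R_3$ as a shortest $(n_3,b)$-path in the graph obtained from $G-a-n_2$ by deleting all vertices of $V(R_2)\setminus\{b\}$ and all of their neighbours other than $b$; then $P_1$ as a shortest $(a,b)$-path in the graph obtained from $G$ by deleting $(V(R_2)\cup V(R_3))\setminus\{b\}$. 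Every shortest path found this way is a chordless (hence induced) path, and the deletions are exactly what is needed so that $Q:=R_2\cup R_3$ is an induced path and the interior of $P_1$ is disjoint from $V(Q)$; consequently $F:=P_1\cup an_2\cup an_3$ is a genuine $Q$-claw. When all three paths exist, the algorithm runs \autoref{lem:threepaths1} on $G[V(F)\cup V(Q)]$ and outputs the induced subdivision of the diamond it returns; if no quadruple succeeds it reports that none exists. The shortest-path phase costs $\mathcal{O}(\abs{G}^2)$ per quadruple and \autoref{lem:threepaths1} costs $\mathcal{O}(\abs{G}^3)$, so the total running time is $\mathcal{O}(\abs{G}^7)$.

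Soundness is immediate: a ``yes'' answer is certified by \autoref{lem:threepaths1}, which always returns an actual induced subdivision of the diamond, and by the equivalence a ``no'' answer is correct. The step I expect to be the main obstacle is \emph{completeness}: one must show that if $G$ contains an induced subdivision of the diamond then some quadruple makes all three shortest-path computations succeed. The danger is the usual one for greedy disjoint-path arguments: a shortest $R_2$ might use vertices needed later by $R_3$ or $P_1$. To handle this I would fix a witness minimising a suitable size parameter, take $(a,n_2,n_3,b)$ to be the quadruple it determines, and argue by an exchange/uncrossing argument that the three required paths can then be found greedily. The crucial slack here is that \autoref{lem:threepaths1} only asks that the interior of $P_1$ \emph{avoid} $V(Q)$ (it need not be anticomplete to it) and tolerates extra edges inside $G[V(F)\cup V(Q)]$, so the residual graphs in which one hunts for $R_3$ and $P_1$ are less constrained than they first appear. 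Everything else --- the structural equivalence, the count of quadruples, and the per-quadruple bookkeeping of which vertices to delete --- is routine.
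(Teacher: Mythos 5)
Your reduction to \autoref{lem:threepaths1} is sound in both directions, and your bookkeeping of quadruples and residual-graph deletions correctly guarantees that \emph{if} all three shortest-path computations succeed then $Q=R_2\cup R_3$ is an induced path and $F$ is a genuine $Q$-claw, so soundness and the running-time count are fine. However, the step you yourself flag as ``the main obstacle'' is a genuine gap, not a routine verification: completeness of the greedy successive-shortest-path scheme is not established and does not obviously hold. Fix a witness $H$ with branch vertices $a,b$, paths $P_1,P_2,P_3$, and the quadruple $(a,n_2,n_3,b)$ it determines. The shortest $(n_2,b)$-path $R_2$ in $G-a-n_3$ need not be $P_2-a$; it may pass through internal vertices of $P_3$ (and of their neighbourhoods), and after you delete $N\bigl[V(R_2)\setminus\{b\}\bigr]\setminus\{b\}$ the vertex $n_3$ can be disconnected from $b$, so $R_3$ fails even though an induced diamond subdivision exists. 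Your proposed repair (an extremally chosen witness plus an uncrossing/exchange argument) is left entirely to intuition: you do not specify what parameter is minimised nor what exchange would force the greedy $R_2$ to be compatible with some $R_3$ and $P_1$, and it is unclear that any such local exchange exists. This is precisely the situation in which greedy disjoint-path arguments break down and flow-type machinery is required.

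The paper's proof sidesteps the issue by not assembling the long path $Q$ piece by piece. It enumerates $4$-tuples $(a,b,c,d)$ with $abc$ a $3$-vertex induced path (and a mild nonadjacency condition on $d$), and uses Menger's theorem to decide, in one exact $\mathcal{O}(\abs{G}^3)$ computation, whether there exist three internally vertex-disjoint $(d,\{a,b,c\})$-paths. When they exist they form an $\{a,b,c\}$-claw rooted at $d$, and \autoref{lem:threepaths1} \emph{with the short induced path $Q=abc$} extracts the induced subdivision; when they do not, Menger certifies their absence, so there is no completeness issue to argue. The observation your proposal misses is that the $Q$ fed to \autoref{lem:threepaths1} may be taken to have only three vertices, which collapses completeness to the existence of three disjoint $(d,Q)$-paths and removes all greedy choices. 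To salvage your write-up along its own lines, you would have to replace the three sequential shortest-path computations by a single Menger/disjoint-paths computation with the appropriate terminals; at that point you would essentially have recovered the paper's argument.
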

\begin{proof}
Let us consider 4 vertices $a,b,c$, and $d$ such that
$abc$ is an induced path and $d$ is adjacent to neither $a$ nor $c$.

Using Menger's theorem, we can check in $\mathcal{O}(\abs{G}^3)$-time whether there are three internally vertex-disjoint paths from $d$ to $a,b,c$.
If such paths do not exist, then there is no induced subdivision $H$ such that
\begin{itemize}
\item $b, d$ are vertices of degree 3 in $H$, and
\item $a, c$ are neighbors of $b$ in $H$.
\end{itemize}
On the other hand, if the 3 paths exist, then by
\autoref{lem:threepaths1}, we can detect such an induced subdivision of the
diamond in $\mathcal{O}(\abs{G}^3)$ steps.
By iterating over all possible choices of vertices $a,b,c$, and $d$,
we get a total running time is $\mathcal{O}(\abs{G}^7)$.
\end{proof}

Now we prove that every large enough collection of $Q$-claws contains many disjoint  induced subdivisions of the diamond.
\begin{lemma}\label{lem:clawtodiamond}
Let $G$ be a graph and let $Q$ be an induced path of $G$.
  Given a set of $N(3, 3k)$ vertex-disjoint $Q$-claws, one can find $k$ vertex-disjoint induced subdivisions of the diamond in time  $\mathcal{O}(N(3, 3k)^{3k}+\abs{G}^3)$.
\end{lemma}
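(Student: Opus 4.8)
The plan is to use the regular partition lemma (\autoref{prop:regularpartition}) to align the leaves of the given claws along $Q$, to group the claws into $k$ triples, and to turn each triple into an induced subdivision of the diamond that uses only a tightly localized portion of $Q$, so that distinct triples yield vertex-disjoint subdivisions. I would fix an arbitrary ordering $F_1,\dots,F_N$ of the $N:=N(3,3k)$ claws, identify $V(Q)$ with a finite subset of an interval $I\subseteq\R$ so that the order of $I$ agrees with the order along $Q$, and write $Q[x,y]$ for the subpath of $Q$ between $x$ and $y$. For each $i$ let $A_i\subseteq I$ be the set of the three leaves of $F_i$ on $Q$; applying \autoref{prop:regularpartition} with $n=3$ and $\ell=3k$ yields, after renaming, claws $F_1,\dots,F_{3k}$ and a regular partition $I_1<\dots<I_r$ with $r\le 3$ of $I$ with respect to $(A_1,\dots,A_{3k})$. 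Since the claws are pairwise vertex-disjoint their leaf sets are pairwise disjoint, so condition (RP1) holds for no part; hence each part $I_t$ carries the same number $c_t\ge 1$ of leaves of every claw, these leaves appear along $I_t$ sorted by claw index or by its reverse, and $c_1+\dots+c_r=3$. I would then split the claws into $k$ consecutive triples $G_m:=\{F_{3m-2},F_{3m-1},F_{3m}\}$, $m\in\intv{1}{k}$; letting $J_{m,t}$ be the smallest subinterval of $I_t$ containing the leaves in $I_t$ of the three claws of $G_m$, the sorting property makes $J_{1,t},\dots,J_{k,t}$ pairwise disjoint for each fixed~$t$.

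The heart of the argument is, for each triple $G_m$, to produce an induced subpath $Q_m$ of $Q$ lying inside a single part together with a $Q_m$-claw $\Phi_m$ such that $V(\Phi_m)\cup V(Q_m)\subseteq\bigl(\bigcup_{F\in G_m}V(F)\bigr)\cup\bigcup_t\bigl(V(Q)\cap J_{m,t}\bigr)$. This breaks into the few cases of $(c_1,\dots,c_r)$, each an explicit construction; I would present them assuming, for readability, that every part is sorted by increasing claw index, the general case only changing which leaf of each claw is chosen. Write $v,v',v''$ for the centres of $F_{3m-2},F_{3m-1},F_{3m}$. If $r=1$, one claw suffices: with $p_1<p_2<p_3$ the leaves of $F_{3m-2}$, the claw $F_{3m-2}$ is already a $Q_m$-claw for $Q_m:=Q[p_1,p_3]\subseteq I_1$. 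If $(c_1,c_2,c_3)=(1,1,1)$, denote the leaves of $F_{3m-2},F_{3m-1},F_{3m}$ in $I_1$ by $a<a'<a''$, in $I_2$ by $b<b'<b''$, in $I_3$ by $c<c'<c''$, and set $Q_m:=Q[b,b'']\subseteq I_2$, whose interior contains $b'$. From $v'$ I would build three internally vertex-disjoint paths landing on $Q_m$ at $b'$, $b$, $b''$ respectively: the branch of $F_{3m-1}$ from $v'$ to $b'$; the branch of $F_{3m-1}$ from $v'$ to $a'$, followed by $Q[a,a']$, followed by the path within $F_{3m-2}$ from $a$ through $v$ to $b$; and the branch of $F_{3m-1}$ from $v'$ to $c'$, followed by $Q[c',c'']$, followed by the path within $F_{3m}$ from $c''$ through $v''$ to $b''$. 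Internal disjointness holds because the three branches of $F_{3m-1}$ at $v'$ meet only at $v'$, because $F_{3m-2}$ and $F_{3m}$ are vertex-disjoint, and because $Q[a,a']$, $Q[c',c'']$ and $Q_m$ lie in three different parts; and no internal vertex of these paths lies on $Q_m\subseteq I_2$. The cases $(2,1)$ and $(1,2)$ are analogous: one bridges $v$, $v'$, $v''$ through the part carrying two leaves per claw, using two distinct such leaves of $F_{3m-1}$ for the two bridges so that they stay internally disjoint, and takes $Q_m$ inside the other part.

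Granting this step, I would conclude quickly. For each $m$, \autoref{lem:threepaths1} applied to $Q_m$ and $\Phi_m$ returns, in time $\mathcal O(\abs{G}^3)$, an induced subdivision $S_m$ of the diamond with $V(S_m)\subseteq V(\Phi_m)\cup V(Q_m)$. For $m\neq m'$ the claw-vertices of $S_m$ and $S_{m'}$ are disjoint since the claws are, and their remaining vertices lie on $Q$ inside $\bigcup_t(V(Q)\cap J_{m,t})$ and $\bigcup_t(V(Q)\cap J_{m',t})$ respectively, which are disjoint because $J_{m,t}\cap J_{m',t}=\emptyset$ for all $t$; hence $S_1,\dots,S_k$ are $k$ pairwise vertex-disjoint induced subdivisions of the diamond. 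For the running time, the regular partition can be found by trying all $\binom{N}{3k}\le N(3,3k)^{3k}$ length-$3k$ subsequences, each checkable in time polynomial in $k$ since only the $\mathcal O(k)$ leaf positions of its claws matter; the constructions above and the $k$ calls to \autoref{lem:threepaths1} are polynomial in $\abs{G}$, which gives the stated bound $\mathcal O(N(3,3k)^{3k}+\abs{G}^3)$.

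I expect the delicate point to be the per-triple construction of $(Q_m,\Phi_m)$: three claws are needed precisely so that the chosen $Q_m$ has three vertices available as landing points; one must cross between parts only through the claws, never through the inter-part stretches of $Q$; and one must verify that the three paths from $v'$ to $Q_m$ are internally disjoint, all while handling the orientation of each part (increasing or decreasing claw index, according to whether the part satisfies (RP2) or (RP3)). Everything else is either bookkeeping or a direct invocation of \autoref{prop:regularpartition} and \autoref{lem:threepaths1}.
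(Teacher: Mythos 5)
Your proposal is correct and follows essentially the same route as the paper: apply \autoref{prop:regularpartition} with $n=3$ to extract $3k$ claws with a regular partition, handle order~$1$ by treating each claw as a $Q_i$-claw on its own disjoint subpath, handle order~$2$ or~$3$ by grouping into $k$ consecutive triples and, within each triple, building a claw on a short subpath of $Q$ confined to one part while bridging through the other part(s), and finally invoke \autoref{lem:threepaths1} on each. The only cosmetic difference is the choice of branching vertex (you use the centre $v'$ of the middle claw while the paper uses the leaf of the middle claw in the bridging part $I_2$) and, in the $(1,1,1)$ case, your target $Q_m$ sits in the middle part $I_2$ with bridges through $I_1$ and $I_3$, whereas the paper always targets $I_1$ and bridges only through $I_2$ — both are valid instantiations of the same construction, and your case analysis is if anything more explicit than the paper's.
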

\begin{proof}
  Let $Q=v_1v_2 \cdots v_m$, and $I=\{1, \ldots, m\}$.
Let $T_1, T_2, \ldots, T_{N(3, 3k)}$ be a given set of pairwise vertex-disjoint $Q$-claws.
For each $i\in \{1, \ldots, N(3, 3k)\}$, let $A_i:=\{j\in \mathbb{N}:v_j\in V(T_i)\}$.
By definition, $\abs{A_i}=3$.

We apply the regular partition lemma (\autoref{prop:regularpartition}) to $A_i$'s  with $n=3$.
Then there exist a subsequence $(A_{c_1}, \ldots, A_{c_{3k}})$ of $(A_1, \ldots, A_{N(3, 3k)})$ and a regular partition of $I$ with respect to $(A_{c_1}, \ldots, A_{c_{3k}})$ that has order at most $3$. 
Since given $Q$-claws are pairwise vertex-disjoint, in each part, 
restrictions of $A_{c_1}, \ldots, A_{c_{3k}}$ are not the same (i.e. item \eqref{it:rp1} of the definition of a regular partition does not hold).
Algorithmically, the regular partition can be found in $\mathcal{O}(N(3, 3k)^{3k})$-time by simply iterating over possible $3k$-tuples until one forming a regular partition is found.

Suppose first that the order of the partition is $1$. 
Then one of the following holds:
\begin{itemize}
\item For all $j,j'\in \{1, \ldots, 3k\}$ with $j<j'$, $\max(A_{c_j})<\min (A_{c_{j'}})$.
\item For all $j, j'\in \{1, \ldots, 3k\}$ with $j<j'$, $\max (A_{c_{j'}})< \min (A_{c_j})$. 
\end{itemize}
For each $i\in \{1, \ldots, 3k\}$, let $Q_i$ be the minimal subpath of $Q$ containing the vertices in $\{v_j:j\in A_{c_i}\}$.
Observe that the $Q_i$'s are disjoint and that $T_{c_i}$ is a $Q_i$-claw, for every $i\in \{1, \ldots, 3k\}$.
Using \autoref{lem:threepaths1}, one can find in time $\mathcal{O}(\abs{G}^3)$ an induced subdivision of the diamond in each $G[V(T_{c_i})\cup V(Q_i)]$, so we are done.

Suppose the order of the partition is $2$ or $3$. 
Let $I_1$ and $I_2$ be the first two parts in the partition, and we may assume that $A_{c_i}$ has an element in each of the two parts. 
For each $i\in \{1, \ldots, 3k-2\}$, let $S_i$ be the minimal subpath of $Q$ containing the vertices in 
\[\{v_j:j\in (A_{c_i}\cup A_{c_{i+1}}\cup A_{c_{i+2}})\cap I_1\},\] 
and let $R_i$ be the minimal subpath of $Q$ containing the vertices in 
\[\{v_j:j\in (A_{c_i}\cup A_{c_{i+1}}\cup A_{c_{i+2}})\cap I_2\}.\] 
Then subgraphs in 
\[\{ S_{3\ell-2}\cup R_{3\ell-2} \cup T_{c_{3\ell-2}}\cup T_{c_{3\ell-1}}\cup T_{c_{3\ell}} :  1\le \ell\le k\}\] are pairwise vertex-disjoint.

Furthermore, since $S_{3\ell-2}$ is an induced path, and there are three internally vertex-disjoint paths from the intersection of $R_{3\ell-2}$ and $T_{c_{3\ell-1}}$
to $S_{3\ell-2}$ in $T_{c_{3\ell-2}}\cup T_{c_{3\ell-1}}\cup T_{C_{3\ell}}\cup S_{3\ell-2}\cup R_{3\ell-2}$, 
one can output in time $\mathcal{O}(\abs{G}^3)$ an induced subdivision of the diamond in the subgraph, using \autoref{lem:threepaths1}.
Therefore, 
we obtain $k$ pairwise vertex-disjoint induced subdivisions of the diamond, as required.
\end{proof}

In \cite{Bruhn2017}, Bruhn, Heinlein, and Joos show that $A$-claws
have the (non-induced) \ep{} property.
Their proof is unfortunately not written algorithmically, however it
can easily be turned into a polynomial-time algorithm.
For completeness, we give an algorithmic version hereafter.

\begin{lemma}\label{lem:numberofleaves}
Let $T$ be a tree with no vertices of degree~$2$.
If $L$ is the set of all leaves of $T$, 
then $\abs{L}\ge \abs{V(T)\setminus  L}-2$.
\end{lemma}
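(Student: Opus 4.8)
The plan is to prove the slightly stronger inequality $\abs{L} \ge \abs{V(T)\setminus L} + 2$ by a single application of the handshake lemma, exploiting that in a tree the number of edges is one less than the number of vertices and that, by hypothesis, every non-leaf vertex has degree at least $3$.

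First I would dispose of the degenerate cases $\abs{V(T)} \le 2$: there $\abs{V(T)\setminus L} \le 2$, so the claimed inequality $\abs{L} \ge \abs{V(T)\setminus L} - 2$ holds trivially, its right-hand side being at most $0$. Hence assume $\abs{V(T)} \ge 3$. Then $T$ has no isolated vertex, so every vertex of $V(T)\setminus L$ has degree at least $2$, and since $T$ has no vertex of degree $2$, in fact at least $3$. Now set $\ell := \abs{L}$ and $b := \abs{V(T)\setminus L}$, so that $\abs{V(T)} = \ell + b$ and, as $T$ is a tree, $\abs{E(T)} = \ell + b - 1$. Summing degrees and splitting the sum according to whether a vertex is a leaf gives
\[
  2(\ell + b - 1) \;=\; \sum_{v \in V(T)} \deg_T(v) \;\ge\; \ell + 3b,
\]
because each of the $\ell$ leaves contributes exactly $1$ and each of the $b$ remaining vertices contributes at least $3$. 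Rearranging yields $\ell \ge b + 2$, that is $\abs{L} \ge \abs{V(T)\setminus L} + 2 \ge \abs{V(T)\setminus L} - 2$, as desired. (Equality holds precisely when every non-leaf vertex has degree exactly $3$.)

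There is no real obstacle here: the only points requiring a moment of care are the bookkeeping of which vertices contribute $1$ versus at least $3$ to the degree sum, and the separate handling of the very small trees, for which the handshake estimate is not applicable as stated (an isolated vertex has degree $0$, not at least $3$).
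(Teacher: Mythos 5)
Your proof is correct and uses the same two facts the paper does — the handshake lemma together with $\abs{E(T)}=\abs{V(T)}-1$ and the bound $\deg\ge 3$ on non-leaves — so the approach is essentially identical; you are slightly more careful in splitting off the degenerate case $\abs{V(T)}\le 2$, where a non-leaf of degree $0$ would otherwise break the degree-sum estimate. You also observe correctly that this argument actually yields the stronger bound $\abs{L}\ge\abs{V(T)\setminus L}+2$; the paper's weaker $-2$ appears to stem from a sign slip in its first bullet ($\abs{T}=\abs{E(T)}-1$ should be $\abs{T}=\abs{E(T)}+1$), and the weaker statement is all that is used downstream.
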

\begin{proof}
We note that
\begin{itemize}
\item $\abs{L}+\abs{V(T)\setminus  L}=\abs{T}=\abs{E(T)}-1$, and
\item $\abs{L}+ 3\abs{V(T)\setminus  L}\le \sum_{t\in V(T)}\deg_T(t) = 2\abs{E(T)}$.
\end{itemize}
Combining the two equations, we have that 
$\abs{A}\ge \abs{V(T)\setminus L}-2$, as required.
\end{proof}

\begin{lemma}\label{lem:countleaves}
Let $F$ be a forest of maximum degree 3 and where each component has a
degree-3 vertex.
Let $L$ be the set of its leaves, and $k$ be a positive integer.
\begin{enumerate}

\item \label{it:claws} If $F$ has at least $6k$ leaves, then in time
  $\mathcal{O}(k\abs{F})$ one can find $k$ pairwise vertex-disjoint $L$-claws in~$F$.
\item \label{it:clcov} If $F$ contains less than $6k$ leaves and less than $k$ connected components, then $F$ contains less than $14k$ vertices of degree $1$ or $3$.
\end{enumerate}
\end{lemma}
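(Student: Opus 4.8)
The plan is to prove the two parts separately; part \eqref{it:clcov} is a short degree count, so essentially all the work lies in \eqref{it:claws}. For \eqref{it:claws} I would first pass from $F$ to the forest $F^\ast$ obtained by \emph{suppressing} every degree-$2$ vertex (replacing each maximal path through degree-$2$ vertices by a single edge). Then every vertex of $F^\ast$ has degree $1$ or $3$, the leaves of $F^\ast$ are exactly the elements of $L$, paths of $F$ and of $F^\ast$ correspond to one another, and vertex-disjoint $L$-claws of $F^\ast$ (a claw being a degree-$3$ vertex together with three internally disjoint paths to three distinct leaves) re-expand to vertex-disjoint $L$-claws of $F$. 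Call a component \emph{live} if it contains a degree-$3$ vertex; by hypothesis every component of $F^\ast$ is live. I would then build the claws greedily: while a live component $T$ exists, root $T$ at a leaf, let $v$ be a \emph{deepest} degree-$3$ vertex of $T$ — its two children are then leaves $a,b$, since a degree-$3$ child would be deeper — and form an $L$-claw $S$ out of $v$, a path from $v$ to $a$, a path from $v$ to $b$, and a third path from $v$ going up through the parent $p$ of $v$: if $p$ is a leaf then $T\cong K_{1,3}$ and $S=T$; otherwise $p$ has degree $3$ and its other child $q$ is either a leaf (third path $v,p,q$) or, being at the same depth as $v$, is itself a degree-$3$ vertex with two leaf children $a',b'$ (third path $v,p,q,a'$), so in every case $S$ has at most six vertices. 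Output $S$, delete $V(S)$, re-suppress the degree-$2$ vertices this creates, discard components with no degree-$3$ vertex, and repeat. Each round produces one $L$-claw and is implementable in time $\mathcal{O}(\abs{F})$, so $k$ rounds take $\mathcal{O}(k\abs{F})$ time.

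The heart of the argument is the claim that each round destroys at most $6$ leaves lying in live components; granting this, since that count starts at $\abs{L}\ge 6k$, after any $j<k$ rounds it is still at least $6k-6j\ge 6>0$, hence a live component survives and a further round is possible, so at least $k$ rounds occur and we obtain $k$ pairwise vertex-disjoint $L$-claws. For the claim, the point of working in $F^\ast$ is that deleting $V(S)$ cannot create a new leaf: a vertex outside $S$ that is adjacent to $S$ has degree $3$, so it only drops to degree $2$ and gets suppressed, and in the forest $F^\ast$ it cannot have two neighbours in the (connected) set $S$. Consequently the only leaves lost in a round are the (exactly three) leaves of $S$, together with the few leaves of $T$ that either become isolated or end up in a residual component that has no degree-$3$ vertex; and since $T\setminus V(S)$ is attached to $S$ by at most two edges, it has at most two components, each of which, when it is not live, is a path contributing at most two leaves. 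A short case analysis over the four possibilities for the pair $(p,q)$ then bounds the total loss by $6$. I expect this leaf accounting — and in particular the verification that the chosen third path never triggers a cascade of further leaf losses — to be the main technical obstacle; the reduction to $F^\ast$ is precisely what keeps it under control.

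Part \eqref{it:clcov} follows from a degree count. In any tree with maximum degree at most $3$, writing $n_i$ for its number of vertices of degree $i$, the handshake identity $n_1+2n_2+3n_3=2(n_1+n_2+n_3-1)$ yields $n_3=n_1-2$. Summing this over the $c<k$ components of $F$, the number of degree-$3$ vertices of $F$ equals $L'-2c$, where $L'<6k$ is the number of leaves of $F$. Hence the number of vertices of $F$ of degree $1$ or $3$ equals $L'+(L'-2c)=2L'-2c\le 2L'<12k<14k$, as claimed.
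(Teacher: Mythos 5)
Your proof is correct, and in substance it follows the same path as the paper's: suppress degree-$2$ vertices, pick a deepest non-leaf (so its two children are leaves), carve off a small $L$-claw near it, account for the leaves lost, and repeat; part~(2) is the same degree count in both. The differences are cosmetic -- you run the argument as an iteration and track ``leaves remaining in live components'' whereas the paper phrases it as an induction on $k$ and tracks ``leaves of the residual good subforest,'' and your version of the subtree being removed is parametrized by the parent-and-sibling pair $(p,q)$ rather than the paper's $T_w$ -- but the leaf-loss bound of $6$ per claw is identical, as is the use of the handshake formula in~(2) (you in fact get the sharper $2L'-2c < 12k$, whereas the paper is content with $14k$). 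One small thing to tighten in a write-up: your case split really has three, not four, shapes for $(p,q)$ (namely $p$ a leaf; $p$ of degree~$3$ with $q$ a leaf; $p$ of degree~$3$ with $q$ of degree~$3$), and in the last one the two residual pieces of $T\setminus V(S)$ are the isolated $b'$ and the part above $p$, giving $3+1+2=6$, which is the worst case.
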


\begin{proof}
  Proof of \eqref{it:claws}.
In this proof we call \emph{good forest} every forest of maximum degree 3 and where each component has a
degree-3 vertex.
  We prove by induction on $k$. If $k=1$,
then there is at least one component containing an $L$-claw and
\eqref{it:claws} holds. So, we may assume that $k\ge 2$ and suppose
that the statement holds for smaller values of~$k$.
Observe if $F$ has a vertex of degree 2, one can delete it and
add an edge between its neighbors without changing the number of
leaves neither the existence of $L$-claws. Therefore we may assume
that $F$ has no degree-2 vertex.

Let us fix a root node in $F$ and call $T$ the component containing~it.
We take a furthest node $v$ from the root that is not a leaf.
By our assumption above, it has exactly two children.
Such a node can be found in linear time using Breadth First Search.
If $v$ is the root of $T$, then $T$ is an $L$-claw, 
and since $T$ has maximum degree 3, the remaining part of $F$ contains
at least $6k-3$ leaves.
Clearly $F-V(T)$ is a good forest.
Thus, by the induction hypothesis, we can obtain $k-1$ pairwise
vertex-disjoint $L$-claws in $F-V(T)$ in time $\mathcal{O}((k-1)
\abs{F})$. Together with $T$, they form a collection of $k$
vertex-disjoint $L$-claws in~$F$, that we found in time
$\mathcal{O}(k \abs{F})$.
Thus in the sequel we may assume that $v$ is not the root.

Let $w$ be the parent $v$ and let $T_w$ be the subtree of $T$ that is
induced by the set of all descendants of $w$ (including~$w$). Observe
that $T_w$ has at least 3 leaves, hence it contains an
$L$-claw. This claw can be found in linear time using Breadth First
Search.

We distinguish two cases:
\begin{enumerate}[{Case}~1:]
\item $w$ is the root. Then $T = T_w$ has at most 6 leaves. Clearly $F - T$ is a good forest with at least $6k - 6$ leaves.
\item $w$ is not the root. Then $T_w$ has at most 4 leaves.  Let $x$ be the vertex of $T - V(T_w)$ at minimum distance from
$v$. Let $P$ be the path of $T$ from $x$ to $w$ and let $T^- = T -(V(T_w) \cup V(P) \setminus
\{x\})$. Then $T^-$ is a good forest with at least $6k - 6$ leaves.
\end{enumerate}

In each case $F - T_w$ contains a good subforest with at least
$6(k-1)$ leaves (each belonging to $L$) and which can be found in
linear time.
Considering the aforementioned claw present in $T_w$ and applying the induction
hypothesis to $F-T_w$, we finally obtain as above $k$ pairwise
vertex-disjoint $L$-claws in time $\mathcal{O}(k\abs{T})$, as required.

\medskip

Proof of \eqref{it:clcov}. Let $n$ be the number of components of $F$ . By the assumption, we have $n<k$.
Let $T_1, T_2, \ldots, T_n$ be the connected components of $F$, 
and for each $i\in \{1, 2, \ldots, n\}$, 
let $m_i$ be the number of vertices of degree 3 in $T_i$
and let $\ell_i$ be the number of vertices of degree 1 in $T_i$.
By \autoref{lem:numberofleaves},
we know that $\ell_i\ge m_i-2$.
Since $\sum_{i=1}^n \ell_i< 6k$,
we have $\sum_{i = 1}^n m_i\le \sum_{i = 1}^n \ell_i+2n
< 6k + 2k = 8k$.
Thus, we have $\sum_{i = 1}^n (\ell_i + m_i) < 14k$.
\end{proof}

\begin{proposition}\label{prop:aclawalgo}
Given a graph $G$, a subset $A\subseteq V(G)$, and a positive integer $k$,
one can in time $\mathcal{O}(\abs{G}^4)$ output either $k$ pairwise vertex-disjoint $A$-claws, or a vertex set $S$ of size $14k$ hitting all $A$-claws.
\end{proposition}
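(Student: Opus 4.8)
The plan is to deduce \autoref{prop:aclawalgo} from the forest statement \autoref{lem:countleaves} by greedily building inside $G$ a subforest whose branch and leaf vertices control all $A$-claws. Call a subforest $F$ of $G$ \emph{admissible} if $\Delta(F)\le 3$, every leaf of $F$ lies in $A$, every vertex of $V(F)\cap A$ is a leaf of $F$, and every connected component of $F$ contains a vertex of degree $3$. For an admissible $F$ the leaf set is exactly $V(F)\cap A$, and every $L$-claw of $F$ (where $L$ denotes the leaf set) is an $A$-claw of $G$: its centre lies in $V(F)\setminus A$ and its legs meet $A$ only at their endpoints.

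First I would compute an admissible $F$ that cannot be enlarged by either of the following two moves: (a) adjoining, as a new component, an $(A\setminus V(F))$-claw of $G-V(F)$; (b) choosing a vertex $w\in V(F)$ with $\deg_F(w)=2$ (so $w\notin A$, as $A$-vertices of $F$ are leaves) together with a path $P$ from $w$ to a vertex of $A\setminus V(F)$ whose internal vertices all avoid $V(F)$, and adjoining $P$, so that $w$ becomes a degree-$3$ vertex and the other endpoint of $P$ a new leaf in $A$. Neither move creates a cycle or raises the maximum degree above $3$, and both preserve admissibility, so the process terminates. Move (a) is tested by looking, for each $v\in V(G)\setminus(V(F)\cup A)$, for three internally vertex-disjoint paths from $v$ to $A\setminus V(F)$ in $G-V(F)$ via a max-flow computation of value $3$ (in $\mathcal{O}(\abs{G}^2)$ per vertex); move (b) by computing the components of $G-V(F)$ once and checking whether some degree-$2$ vertex of $F$ has a neighbour in a component meeting $A\setminus V(F)$. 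Each successful move adds at least one vertex to $F$, so at most $\abs{V(G)}$ moves occur and the construction takes time $\mathcal{O}(\abs{G}^4)$.

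Then I would split into three cases according to $F$. If $F$ has at least $6k$ leaves, item \eqref{it:claws} of \autoref{lem:countleaves} returns $k$ vertex-disjoint $L$-claws in $F$, hence $k$ vertex-disjoint $A$-claws in $G$. If $F$ has fewer than $6k$ leaves but at least $k$ components, then each of $k$ components (having a degree-$3$ vertex, hence at least three leaves) contains an $L$-claw, and these $k$ claws are again vertex-disjoint $A$-claws. Otherwise $F$ has fewer than $6k$ leaves and fewer than $k$ components, so item \eqref{it:clcov} of \autoref{lem:countleaves} gives that $F$ has fewer than $14k$ vertices of degree $1$ or $3$; I would output the set $S$ of these vertices. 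To see that $S$ hits every $A$-claw, let $D$ be an $A$-claw with centre $v$ and legs ending at $a_1,a_2,a_3\in A$, and suppose $V(D)\cap S=\emptyset$. Since $S\supseteq V(F)\cap A$ we get $a_1,a_2,a_3\notin V(F)$, and every vertex of $V(D)\cap V(F)$ has degree exactly $2$ in $F$. If $V(D)\cap V(F)=\emptyset$ then $D$ is an $(A\setminus V(F))$-claw of $G-V(F)$, so move (a) applies, contradicting maximality. Otherwise some leg meets $V(F)$; taking its last vertex $w$ in $V(F)$ (reading the leg from $v$ towards its endpoint in $A$), the remainder of that leg is a path from $w$ to $A\setminus V(F)$ with internal vertices off $V(F)$, so move (b) applies at $w$ — again a contradiction. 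Hence $\abs{S}<14k$ and $S$ is a hitting set, and the running time is dominated by the construction of $F$.

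The crux, and the step I expect to require the most care, is isolating the correct notion of admissible forest — in particular that $V(F)\cap A$ be exactly the leaf set, and that maximality be taken with respect to the two enlargement moves (a) and (b) — so that any $A$-claw missing all leaves and branch vertices of $F$ is forced to provide one of these enlargements. Once that is in place, the dichotomy and the running-time bookkeeping are routine given \autoref{lem:countleaves}.
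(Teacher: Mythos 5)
Your proposal is correct and takes essentially the same route as the paper: greedily grow a maximum-degree-$3$ forest with all leaves in $A$ and each component containing a degree-$3$ vertex, take its degree-$1$ and degree-$3$ vertices as the hitting set, argue that maximality forces this set to meet every $A$-claw, and then invoke \autoref{lem:countleaves} for the dichotomy. The only tightening needed is that move~(b) should adjoin a $(w,A\setminus V(F))$-path (i.e.\ no \emph{internal} vertex in $A$); otherwise a newly added internal $A$-vertex of degree~$2$ would break the ``every $A$-vertex of $F$ is a leaf'' invariant — but the truncated max-flow search and, in your maximality argument, the fact that legs of $A$-claws are $(v,A)$-paths already give this for free.
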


\begin{proof}
Let $S_0:=\emptyset$, and $F_0$ be the empty graph.
Let us apply the following algorithm.
We start with $i=1$ and, while $G-S_{i-1}$ contains an $A$-claw, we do the following:
\begin{enumerate}
\item \label{it:findclaw} let $X$ be an $A$-claw of $G-S_{i-1}$;
\item if $X$ intersects $F_{i-1}-S_{i-1}$, define $F_i$ as the union of $F_{i-1}$ and a path from $F_{i-1}-S_{i-1}$ to $A$ (that is contained in $X$);
\item otherwise, define $F_i$ as the disjoint union of $X$ and $F_{i-1}$;
\item let $S_i$ be the set of all vertices of degree $1$ or $3$ in $F_i$;
\item increment $i$.
\end{enumerate}

Let $n$ be the maximum value for which $F_i$ and $S_i$ are defined.
Note that one can find an $A$-claw by guessing the vertex of degree
$3$ and then testing whether there are three paths from it to $A$
using Menger's theorem. Thus, checking the condition of the while loop and step
\eqref{it:findclaw} can be performed in time $\mathcal{O}(\abs{G}^3)$, and
we may construct the sequences $F_1, \dots, F_n$ and $S_1, \dots, S_n$ in time $\mathcal{O}(\abs{G}^4)$.
Let $F:=F_n$ and $S:=S_n$. By construction, we know that $G-S$ has no
$A$-claws and $F$ is a forest with all its leaves in $A$, maximum
degree~3, and where every component has a degree-3 vertex.
If $\abs{S}\le 14k$, then we are done.
We may assume that $\abs{S} > 14k$. In particular $\abs{F} > 14k$.

If $F$ contains $k$ connected components, then we can simply find one $A$-claw from each connected component, and output $k$ vertex-disjoint $A$-claws.
We may assume that $F$ contains less than $k$ connected components.
Then by \eqref{it:clcov} of \autoref{lem:countleaves}, $F$ contains at least $6k$ leaves, 
and by \eqref{it:claws} of the same lemma, one can find $k$
vertex-disjoint $A$-claws in time $\mathcal{O}(k\abs{G}) = \mathcal{O}(\abs{G}^2)$.
\end{proof}

\subsection{Structural lemmas}

For a set $A\subseteq V(G)$, a \emph{Tutte bridge} of $A$ in $G$ is a subgraph of $G$ consisting of one component $C$ of $G-A$ and all edges joining $C$ and $A$ and all vertices of $A$ incident with those edges. We discuss in this section under which conditions a Tutte bridge can be used to construct an induced subdivision of the diamond.

\begin{lemma}\label{lem:bridge3legs}
Let $Q$ be an induced path in a graph $G$ and $H$ be a Tutte bridge of $V(Q)$ in $G$ such that 
$\abs{V(H)\cap V(Q)}\ge 3$. Then $H$ contains a $Q$-claw.
\end{lemma}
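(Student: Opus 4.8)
The plan is to exhibit the $Q$-claw explicitly, centred at a vertex of the component of $G-V(Q)$ sitting inside $H$. Write $A=V(Q)$, let $C$ be the component of $G-A$ that belongs to $H$, and pick three distinct vertices $w_1,w_2,w_3\in V(H)\cap A$; these exist by the hypothesis $\abs{V(H)\cap V(Q)}\ge 3$. By the definition of a Tutte bridge, every vertex of $A$ lying in $H$ is incident with an edge joining $C$ and $A$, so for each $i\in\{1,2,3\}$ we may choose a neighbour $c_i\in V(C)$ of $w_i$.

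Next I would pass to a Steiner tree inside $C$. Since $C$ is connected, it contains a subtree $T$ with $\{c_1,c_2,c_3\}\subseteq V(T)$; take $T$ with the minimum number of edges. By minimality every leaf of $T$ lies in $\{c_1,c_2,c_3\}$, so $T$ has at most three leaves. Using the standard fact that any three vertices of a tree have a common median vertex $v$ from which the three paths to them are pairwise internally disjoint, I obtain $v\in V(T)$ and paths $R_1,R_2,R_3$ in $T$, with $R_i$ running from $v$ to $c_i$ and $V(R_i)\cap V(R_j)=\{v\}$ for $i\neq j$. The degenerate cases (some of the $c_i$ coincide, or $T$ is a single vertex or a path) are handled the same way: some $R_i$ are then trivial one-vertex paths, and $v$ is taken to be whichever $c_i$ lies between the others along $T$.

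Then I would set $P_i$ to be $R_i$ followed by the edge $c_iw_i$. Each $P_i$ is a path from $v$ to $w_i$ all of whose vertices other than $w_i$ lie in $V(C)$, hence avoid $A$, so $P_i$ is a $(v,A)$-path. Since $V(R_i)\cap V(R_j)=\{v\}$, since $w_1,w_2,w_3$ are distinct, and since each $w_i\in A$ is disjoint from $V(C)\supseteq V(R_j)$, we conclude $V(P_i)\cap V(P_j)=\{v\}$ for distinct $i,j$. All vertices and edges used belong to $H$: the $R_i$ lie in $C\subseteq H$, the edges $c_iw_i$ join $C$ to $A$ and hence belong to $H$ by the definition of a Tutte bridge, and $v\in V(C)\subseteq V(G)\setminus V(Q)$. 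Therefore $P_1\cup P_2\cup P_3$ is a $Q$-claw contained in $H$.

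The only delicate point is the bookkeeping around the degenerate shapes of the Steiner tree $T$ (a single vertex, a path, or a genuine tripod, together with possible coincidences among $c_1,c_2,c_3$); in each case one checks directly that a suitable centre $v$ and three paths meeting only at $v$ can be chosen, which is routine. No running-time analysis is needed, as the statement is purely existential.
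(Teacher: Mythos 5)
Your proof is correct and follows essentially the same route as the paper's: pick three distinct vertices in $V(H)\cap V(Q)$, choose neighbours of them in the component $C$, take a subgraph-minimal Steiner tree in $C$ spanning those neighbours, and read off the $Q$-claw. You add the explicit identification of the claw centre as the median of the Steiner tree, which is exactly the ``easy to check'' step that the paper elides.
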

\begin{proof}
Let $a,b,c$ be three distinct vertices in $V(H)\cap V(Q)$, and let $a',b',c'$ be their (not necessarily distinct) neighbors in $H-V(Q)$, respectively. 
It is easy to check that a subgraph-minimal tree of $H-V(Q)$ spanning
$a'$, $b'$, and $c'$
form (together with $a,b$, and $c$) a $Q$-claw. Such a subgraph exists
since $H-V(Q)$ is connected. 
\end{proof}

Recall that our intermediate goal is to prove an induced \ep{} type result
for subdivisions of the diamond intersecting a given path $Q$ of a
graph $G$. We saw in the previous section that $Q$-claws can be used
to construct induced subdivisions of the diamond (\autoref{lem:threepaths1}), and how to deal with
those (\autoref{prop:aclawalgo}).
So we may now focus on the case where there is no $A$-claw. In particular,
\autoref{lem:bridge3legs} allows us to assume that there is no Tutte
bridge $H$ of $V(Q)$ such that $\abs{V(H)\cap V(Q)}\ge 3$.
The following lemma shows that if a Tutte bridge containing a cycle
$C$ connected to $Q$ via two disjoint paths (with some additional properties), then
one can also find an induced subdivision of the diamond. Note that first two conditions do not always imply the existence of an induced subdivision of the diamond: if $ab$ is an edge that is a component of $G-V(Q)$ and $N(a)\cap V(Q)=N(b)\cap V(Q)=\{p,q\}$ for some two consecutive vertices $p,q$ in $Q$, then this Tutte bridge induces a $K_4$, that does not contain an induced subdivision of the diamond. 
To avoid this, we require that the two vertices in $V(H)\cap V(Q)$ are not consecutive.

\begin{lemma}\label{lem:connectedcycle}
Let $Q$ be an induced path in a graph $G$ and $H$ be a Tutte bridge of
$V(Q)$ in $G$ such that:
\begin{enumerate}
\item $V(H)\cap V(Q)=\{v_1, v_2\}$ for some non-adjacent vertices
  $v_1, v_2$ in $Q$; and
\item \label{it:cycleC} some cycle of $H$ is connected to $\{v_1, v_2\}$ via two vertex-disjoint (possibly empty) paths,
\end{enumerate}
then $G[V(H)\cup V(Q')]$ contains an induced subdivision of the diamond, where $Q'$ is the subpath of $Q$ from $v_1$ to $v_2$.
\end{lemma}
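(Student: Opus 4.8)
The plan is to reduce to an induced subdivision of the diamond that "lives" near the cycle $C$ and the path $Q'$, by repeatedly shortening the configuration whenever something fails to be induced. Let me set up notation: let $C$ be the cycle of $H$ guaranteed by hypothesis~\eqref{it:cycleC}, and let $R_1$ (from $v_1$) and $R_2$ (from $v_2$) be the two vertex-disjoint paths connecting $\{v_1,v_2\}$ to $C$, with $R_i$ meeting $C$ at a vertex $c_i$ (possibly $R_i$ is trivial, i.e. $v_i=c_i$, in which case $v_i\in V(C)$). First I would argue that we may take the whole object $C\cup R_1\cup R_2\cup Q'$ to be as small as possible: among all Tutte bridges $H'\subseteq H$ and all such configurations inside them, pick one minimizing $\abs{V(C)\cup V(R_1)\cup V(R_2)}$ (note $Q'$ is fixed). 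This minimality will let me assume $R_1,R_2$ are induced paths, $C$ is an induced cycle, and there are no "chords" between the pieces that would allow a shortcut — any such chord yields a strictly smaller valid configuration (a shorter path or a shorter cycle, still connected to $\{v_1,v_2\}$ appropriately), contradicting minimality. The one subtlety, already flagged in the statement, is that $C$ could degenerate to a triangle producing a $K_4$ when $c_1,c_2$ are adjacent on a short $C$ and $v_1,v_2$ were consecutive — but that is exactly ruled out by hypothesis~(1) demanding $v_1,v_2$ non-adjacent, so after the reduction $Q'$ has length at least $2$ and contributes an internal vertex that breaks the $K_4$.

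The key structural step is then to exhibit three internally disjoint induced paths between two vertices of degree $3$, with no extra edges among their interiors — i.e. a model of the diamond. The natural pair of degree-$3$ vertices are $c_1$ and $c_2$ on the cycle $C$: the cycle $C$ itself splits into two arcs $A_1,A_2$ from $c_1$ to $c_2$, giving two of the three paths, and the third path is $R_1$ (from $c_1$ up to $v_1$) followed along $Q'$ from $v_1$ to $v_2$ followed by $R_2$ backwards to $c_2$. Since $Q$ is an induced path, $R_1,R_2$ lie in the Tutte bridge $H$ (so they have no vertices on $Q$ other than their endpoints $v_1,v_2$), and the three pieces $A_1$, $A_2$, $A_3:=R_1\cup Q'\cup R_2$ are pairwise internally disjoint. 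By the minimality reduction there is no edge joining the interior of one piece to another piece except possibly... this is where I expect to do the careful case analysis: chords within an arc $A_i$ of $C$, or edges between $A_1$ and $A_2$ not at $c_1,c_2$, or between $Q'$-part / $R_i$-part and an arc. Each such edge either contradicts "$Q$ induced", or contradicts "$C$ induced", or lets me replace $C$ by a shorter cycle still connected to $\{v_1,v_2\}$, or shortens $R_1$ or $R_2$ — in every case contradicting the minimal choice. When $R_i$ is trivial ($v_i\in V(C)$) the argument simplifies: then $v_i$ itself serves as the degree-$3$ vertex and one path is just a subpath of $Q'$; I would handle this as a (easier) sub-case rather than a separate proof.

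The main obstacle I anticipate is bookkeeping the interaction between $Q'$ and $C$: a vertex of $Q'$ might be adjacent to many vertices of $C$ (there is no claw-freeness or degree hypothesis on $G$ here), so I cannot naively say "the third path avoids $C$." The clean way around this is \emph{not} to insist on minimality of $C$ alone but on minimality of the combined size, \emph{and} to choose, among shortest such configurations, one where additionally $\abs{V(Q')\cap N_G(V(C))}$ is minimized or where the arcs $A_1,A_2$ are chosen to be shortest $c_1$–$c_2$ paths using edges of $C$; alternatively, invoke \autoref{lem:bridge3legs}-style reasoning after first deleting from $H$ all edges except those needed, reducing to a Tutte bridge that \emph{is} exactly a cycle plus two pendant paths. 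Once the configuration is genuinely minimal, every potential violating edge falls into one of the finitely many cases above and is dispatched by a one-line "shorten and contradict minimality" argument, so the proof is a reduction followed by a routine (if slightly tedious) case check, with no deep ingredient needed beyond \autoref{lem:threepaths1}, which in fact can be invoked directly: after the reduction, $C\cup R_1\cup R_2$ forms a $Q$-claw-like object (three paths from a branch vertex of $C$), so one may also just cite \autoref{lem:threepaths1} applied to $G[V(H)\cup V(Q')]$ with the claw centered appropriately, avoiding redoing the case analysis by hand.
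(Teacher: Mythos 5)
Your global strategy (minimize the configuration, then argue no bad edges remain) matches the paper's opening moves, but two of your anchoring claims are off the mark, and one of them is exactly where the proof actually needs work.

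First, the ``main obstacle'' you flag---that internal vertices of $Q'$ could be adjacent to many vertices of the cycle $C$---cannot arise. $H$ is a Tutte bridge of $V(Q)$, so $H-V(Q)$ is a single component of $G-V(Q)$ whose neighbourhood in $Q$ is exactly $\{v_1,v_2\}$. Hence no internal vertex of $Q'$ has a neighbour in $H-V(Q)$, which contains all of $C$ except possibly the endpoints of the legs. The extra minimization criteria you propose (minimizing $\abs{V(Q')\cap N_G(V(C))}$, rerouting arcs, pre-deleting edges) solve a problem that the hypotheses already rule out; this is a sign that the Tutte-bridge structure was not fully absorbed.

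Second, and more seriously, the case analysis you wave away as ``routine if slightly tedious'' is where the real content is, and the shortcut you suggest in its place does not work. The configuration $C\cup R_1\cup R_2$ together with $Q'$ is a potential diamond model between $c_1$ and $c_2$; it is not itself a $Q$-claw (nor a $Q'$-claw), because the only degree-$3$ vertices $c_1,c_2$ lie on every candidate base path, so \autoref{lem:threepaths1} cannot be invoked ``directly'' on it. Where \autoref{lem:threepaths1} is genuinely needed is in handling the obstructing edges that minimality alone does not kill: a chord between the two arcs of $C$, and an edge from an internal vertex of a leg $R_i$ to an internal vertex of an arc. In both of these cases a plain ``shorten and contradict minimality'' move does not obviously produce a strictly smaller valid configuration (cycle plus two disjoint legs to $\{v_1,v_2\}$); what one does instead is extract an honest claw (a $Q_2$-claw in the first case, a $P_i$-claw centred at the offending arc vertex in the second) and feed it to \autoref{lem:threepaths1}. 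Your write-up neither carries this out nor correctly identifies that it is required, so as it stands the argument has a genuine gap in precisely the cases that make the lemma nontrivial.
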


\begin{proof}
Towards a contradiction, we assume that the statement does not hold
and we consider, among all graphs $G$, all induced paths $Q$ of $G$ and all Tutte
bridges $H$ of $V(Q)$, a triple $(G,Q,H)$ such that $H$ has minimum number
of vertices.

Let $C$ be a cycle of $H$ as in the statement and let $P_1$ and $P_2$ be the two paths from $V(H)\cap V(Q)$ to $C$
such that $v_1\in V(P_1)$ and $v_2\in V(P_2)$.
Recall that $v_1$ is not adjacent to $v_2$.
By minimality, we may assume that $V(H)=V(C)\cup V(P_1)\cup V(P_2)$ and that $P_1$ and $P_2$ are induced, otherwise we could remove vertices or take shorter paths.
Let $w_1$ and $w_2$ be the end vertices of $P_1$ and $P_2$ on $C$, respectively, and
let $Q_1$ and $Q_2$ be the two subpaths from $w_1$ to $w_2$ in $C$.

If the subgraph $C \cup P_1 \cup P_2$ is induced in $G$, then $G[V(H)\cup V(Q')]$ is an induced subdivision of the diamond and we are done.
So we now consider all possible ways this subgraph can be non-induced.

We first show that $C$ is an induced cycle.
If $w_1$ is adjacent to $w_2$, then $C$ is induced; because otherwise 
we could take an induced cycle in $G[V(C)]$ containing the edge
$w_1w_2$, which would be shorter than $C$, a contradiction.
In the case where $w_1w_2\notin E(G)$, we may assume that each of $Q_1$ and $Q_2$ is induced,
otherwise, we could find a shorter cycle.
Now, observe than if $C$ has a chord $e$ from an internal vertex $z_1$ of $Q_1$ to an internal vertex $z_2$ of $Q_2$, then $Q_1$ together with $e$ forms a $Q_2$-claw and thus contain an induced subdivision of the diamond, according to \autoref{lem:threepaths1}.
Therefore $C$ is indeed induced.

Observe that there are no edges between $H-V(Q)$ and $Q-\{v_1, v_2\}$, because of the condition that $\abs{V(H)\cap V(Q)}=2$.
Also, $P_i$ has no neighbors in $P_{3-i}-w_{3-i}$ for each $i \in \{1,2\}$, otherwise we could find a smaller cycle such as $C$.
Therefore, we may assume that there is an edge between some vertex $y\in V(P_i-w_i)$ 
and some internal vertex of $Q_j$, for some $i,j\in \{1,2\}$.
Then $H$ contains a $P_i$-claw constructed as follows. The degree-3 vertex of the claw is $y$; it is connected to $P_i$ using the aforementioned edge (first path), the subpath of $Q_j$ connecting $y$ to $w_1$ (second path) and the rest of $Q_j$ together with $P_{3-i}$ and $Q'$ (third path).
By to \autoref{lem:threepaths1} $G[V(H) \cup V(Q')]$ then contains an induced subdivision of the diamond, as required.
\end{proof}

Te following lemma handles a remaining case that is not covered by \autoref{lem:connectedcycle}.

\begin{lemma}\label{lem:acyclicbridge}
Let $Q$ be an induced path in a graph $G$ and $H_1, H_2$ be two Tutte bridges of $V(Q)$ in $G$ such that:
\begin{itemize}
\item $\abs{V(H_i)\cap V(Q)}=2$ for each $i$,
\item $Q_1$ and $Q_2$ share at least one edge, where $Q_i$ is the minimal subpath of $Q$ containing $V(H_i)\cap V(Q)$, for every $i\in \{1,2\}$.
\end{itemize}
Then $G[V(H_1)\cup V(H_2)\cup V(Q_1)\cup V(Q_2)]$ contains an induced subdivision of the diamond.
\end{lemma}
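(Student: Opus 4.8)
The plan is to exhibit the required induced subdivision explicitly as a \emph{theta subgraph}, i.e.\ a subgraph consisting of two vertices linked by three internally vertex-disjoint paths. In a simple graph at most one of those three paths can be a single edge (two would create a double edge), so any theta subgraph $T$ with $G[V(T)]=T$ is automatically an induced subdivision of the diamond. To build one, for $i\in\{1,2\}$ write $C_i=V(H_i)\setminus V(Q)$, which is a connected component of $G-V(Q)$, and $V(H_i)\cap V(Q)=\{a_i,b_i\}$, where $a_i,b_i$ are the endpoints of $Q_i$; by the definition of a Tutte bridge, $a_i$ and $b_i$ are the only vertices of $Q$ having a neighbour in $C_i$, and each of them has such a neighbour. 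I would then fix a shortest path $L_i$ from $a_i$ to $b_i$ all of whose internal vertices lie in $C_i$, with at least one internal vertex (this exists since $C_i$ is connected and nonempty). Then $L_i$ has length at least $2$, it is an induced path of $G$ except possibly for the edge $a_ib_i$, and if $a_ib_i\in E(G)$ then, since $Q$ is induced, $a_i$ and $b_i$ are consecutive on $Q$ and $Q_i$ is exactly this edge. Consequently $D_i:=G[V(Q_i)\cup V(L_i)]$ is an induced cycle of $G$ of length at least $3$.

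Next, after exchanging the names of $H_1$ and $H_2$ and, within each pair, the names of the two attachment vertices, the hypothesis that $Q_1$ and $Q_2$ share an edge leaves only two possibilities for the relative order of $a_1,b_1,a_2,b_2$ along $Q$: the \emph{nested} one $a_1\preceq a_2\prec b_2\preceq b_1$, and the \emph{crossing} one $a_1\preceq a_2\prec b_1\prec b_2$. In the nested case $a_2$ and $b_2$ both lie in $V(Q_1)\subseteq V(D_1)$, and I take $P$ to be the path $L_2$. In the crossing case $a_2$ and $b_1$ both lie in $V(Q_1)\subseteq V(D_1)$, and I take $P$ to be the path that follows $L_2$ from $a_2$ to $b_2$ and then the subpath of $Q_2$ from $b_2$ to $b_1$; here every internal vertex of $P$ lies in $C_2$ or in $V(Q_2)\setminus V(Q_1)$, hence is disjoint from $V(D_1)$.

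I then claim that in both cases $G[V(D_1)\cup V(P)]$ is a theta subgraph of $G$ whose branch vertices are the endpoints of $P$: these two vertices lie on the cycle $D_1$, splitting it into two arcs, and those two arcs together with $P$ form three internally vertex-disjoint paths between them. The only point needing care is to verify that $G$ induces no edge between the interior of $P$ and $V(D_1)$, and no edge on $V(P)$ beyond those of $P$; this follows mechanically from the facts gathered above — internal vertices of $L_i$ have neighbours only inside $C_i\cup\{a_i,b_i\}$, the distinct components $C_1\ne C_2$ of $G-V(Q)$ have no edge between them, $Q$ is induced, and $L_i$ is a shortest path — and I expect the crossing case to be the main (though still routine) obstacle, since there $P$ additionally runs along a stretch of $Q$ sticking out of $D_1$. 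Finally, since $D_1$ is a cycle of length at least $3$ its two arcs are not both single edges and, as $P$ has length at least $2$, at most one of the three paths of the theta has length $1$; hence this theta subgraph is a subdivision of the diamond. It is an induced subgraph of $G$ by construction, and all its vertices lie in $V(H_1)\cup V(H_2)\cup V(Q_1)\cup V(Q_2)$, which completes the proof.
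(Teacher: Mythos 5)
Your proof is correct, but it takes a genuinely different route from the paper. The paper does not build the subdivision by hand: it picks neighbours $x_1,x_2$ of the left attachment vertices inside $H_1-V(Q)$ and $H_2-V(Q)$, observes that $x_1$, the subpath of $Q$ between the two left attachments, and $x_2$ form an induced path, and then exhibits a claw from $v_{\min\{b_1,b_2\}}$ to that path (one branch along $Q$, one through each bridge); \autoref{lem:threepaths1} then delivers the induced diamond subdivision. That lemma absorbs all the ``unwanted chord'' bookkeeping into its recursive argument and treats your nested and crossing configurations uniformly, which is why the paper's proof is short. You instead construct an induced theta explicitly: an induced cycle $D_1$ through $Q_1$ and a shortest path through $C_1$, plus a path $P$ through $C_2$ (and, in the crossing case, a stretch of $Q_2\setminus Q_1$) between two vertices of $D_1$, and you check inducedness directly from the shortest-path choice of $L_i$, the fact that $a_i,b_i$ are the only attachments of $C_i$, that $C_1\neq C_2$ are distinct components of $G-V(Q)$, and that $Q$ is induced. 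Those facts do suffice: for instance, in the crossing case the only worrying chord $a_2b_2$ of $L_2$ cannot occur, since $a_2<b_1<b_2$ prevents $a_2,b_2$ from being consecutive on $Q$, and the possible edges $a_2b_1$ or $b_1b_2$ are genuine edges of the theta (an arc of $D_1$, resp.\ the last edge of $P$), so your ``mechanical'' verification goes through. The trade-off is clear: your argument is self-contained and pins down the diamond subdivision explicitly (including the observation that any induced theta in a simple graph is automatically a diamond subdivision), at the price of the nested/crossing case split and the chord-checking that the paper sidesteps by invoking its claw lemma.
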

\begin{proof}
Let $Q=v_1v_2 \cdots v_m$, and let $v_{a_i}, v_{b_i}$ be the end vertices of $Q_i$ such that $a_i<b_i$,  for every $i\in \{1,2\}$.
Without loss of generality, we may assume that $a_1\le a_2$.
As $Q_1$ and $Q_2$ share an edge, $b_1\ge a_2+1$.

Let $x_1$ and $y_1$ be (possibly identical) neighbors of $v_{a_1}$ and $v_{b_1}$ in $H_1-V(Q)$, respectively,
and let $R_1$ be a path from $x_1$ to $y_1$ in $H_1-V(Q)$.
Similarly, let $x_2$ and $y_2$ be neighbors of $v_{a_2}$ and $v_{b_2}$ in $H_2-V(Q)$, respectively, 
and let $R_2$ be a shortest path from $x_2$ to $y_2$ in $H_2-V(Q)$.
Let $R$ be the subpath of $Q$ from $v_{a_1}$ to $v_{a_2}$.

Observe that $G[V(R)\cup \{x_1, x_2\}]$ is an induced path from $x_1$ to $x_2$, 
because $V(H_i)\cap V(Q)$ is exactly $\{a_i, b_i\}$ for every $i\in \{1,2\}$.
Let $j=\min \{b_1, b_2\}$. It is easy to see that there are three paths from $v_j$ to the induced path $G[V(R)\cup \{x_1, x_2\}]$, 
namely, a subpath of $Q$ from $v_{a_2}$ to $v_j$, and two paths along $R_1$ and $R_2$.
Thus, by \autoref{lem:threepaths1}, 
$G[V(R_1)\cup V(R_2)\cup V(Q_1)\cup V(Q_2)]$ contains an induced subdivision of the diamond.
\end{proof}

\subsection{The main proof}
We can now describe the main proof of this section.
The following proposition asserts that the subdivisions of the diamond intersecting a given induced path have the induced \ep{} property.

\begin{lemma}\label{lem:twodisjointpaths}
Let $G$ be a connected graph and $v,w\in V(G)$ be non-adjacent vertices such that $G-\{v,w\}$ is connected.
One can in time $\mathcal{O}(|G|^2)$ whether there is a cycle $C$ with two vertex-disjoint paths from $\{v,w\}$ to $C$.
\end{lemma}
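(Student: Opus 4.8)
The plan is to reduce the problem to a purely local question about one block of an auxiliary graph. Set $G' := G + vw$; the edge $vw$ is new since $v \not\sim w$ in $G$. Because $G$ is connected it has a $v$--$w$ path, which together with the edge $vw$ is a cycle of $G'$ through $vw$; hence $vw$ is never a bridge of $G'$, and the \emph{block} (maximal $2$-connected subgraph) $B$ of $G'$ that contains $vw$ is $2$-connected. I will prove the characterisation: \emph{there is a cycle $C$ of $G$ joined to $\{v,w\}$ by two vertex-disjoint (possibly empty) paths if and only if $B$ is not a cycle}, equivalently if and only if $\abs{E(B)} \ne \abs{V(B)}$ (recall that a $2$-connected graph $D$ is a cycle precisely when $\abs{E(D)} = \abs{V(D)}$). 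Granting this, the algorithm is immediate: form $G'$, run one depth-first search to compute its blocks in time $\mathcal{O}(\abs{V(G')}+\abs{E(G')}) = \mathcal{O}(\abs{G}^2)$, locate $B$, and compare $\abs{E(B)}$ with $\abs{V(B)}$.

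For the ``only if'' direction, given $C$, a path $P_v$ from $v$ to $a\in V(C)$ and a path $P_w$ from $w$ to $b\in V(C)$ as desired: since $P_v,P_w$ are vertex-disjoint, $a\ne b$, so the two arcs $A_1,A_2$ of $C$ between $a$ and $b$ are genuine paths, and for each $i$ the set $\{vw\}\cup E(P_v)\cup E(A_i)\cup E(P_w)$ is a cycle of $G'$ through $vw$. These two cycles are distinct (different arcs of $C$) and both lie in $B$, so $B$ contains two distinct cycles through one edge and therefore is not a cycle.

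For the ``if'' direction I would start from a cycle $C_1$ of $B$ through $vw$, written $C_1 = vw + Q_1$ with $Q_1$ a $v$--$w$ path of $G$. Since $B$ is $2$-connected and $B\ne C_1$, either some edge of $B$ joins two vertices of $C_1$ without lying on $C_1$, or $B-V(C_1)$ has a component $K$; in the latter case $2$-connectedness forbids $K$ from attaching to $C_1$ at a single vertex, so $K$ has two distinct neighbours on $C_1$. In all cases I extract a path $R$ with distinct ends $p,q$ on $C_1$, internally disjoint from $C_1$, and not using the edge $vw$ (if $R$ has an internal vertex it lies off $C_1$, so $R$ cannot traverse the edge $vw$ whose ends are on $C_1$; if $R$ is a single edge it is a chord of $C_1$, hence $\ne vw$). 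Let $A$ be the arc of $C_1$ between $p$ and $q$ that contains the edge $vw$ and $A'$ the other one; after possibly swapping $p$ and $q$, the vertices along $A$ occur in the order $p,\dots,v,w,\dots,q$ (as $A$ is a subpath of a cycle, $p=w$ and $q=v$ are impossible, though $p=v$ or $q=w$ is allowed). Then $C := A'\cup R$ is a cycle of $G$ avoiding the edge $vw$, and deleting $vw$ from $A$ leaves a path $P_v$ from $v$ to $p$ and a path $P_w$ from $w$ to $q$ (one or both possibly a single vertex). One checks that $C,P_v,P_w$ lie in $G$, that $P_v$ and $P_w$ are vertex-disjoint, and that each meets $C$ only in its endpoint, $p$ and $q$ respectively, which are distinct — exactly the required configuration. (This is precisely the existential hypothesis \eqref{it:cycleC} of \autoref{lem:connectedcycle}.)

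The verifications in the first two directions are routine; the step I expect to be delicate is the extraction of $R$ in the ``if'' direction, together with the bookkeeping that $R$ avoids the edge $vw$ and that the decomposition of $A$ into the two legs is legitimate after the relabelling of $p$ and $q$ — these are the spots where a careless argument would silently assume $p\notin\{v,w\}$ or $q\notin\{v,w\}$. I would also carefully verify the clean fact ``$2$-connected and not a cycle $\iff \abs{E}>\abs{V}$'' underlying the final test, which rests on a $2$-connected graph having minimum degree at least $2$ and on a $2$-connected unicyclic graph being a cycle.
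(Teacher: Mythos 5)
Your proof is correct, and it takes a genuinely different route from the paper's. The paper works directly with the block-cut tree of $G$: it locates the unique path of blocks $B_v = B_1 - \cdots - B_m = B_w$ between (the blocks containing) $v$ and $w$, and declares the answer ``yes'' exactly when some $B_i$ on this path contains a cycle; the two-disjoint-paths in the ``yes'' case are obtained by applying Menger inside the $2$-connected block $B_i$ and then threading through the chain of cut vertices out to $v$ and $w$. You instead augment the graph by the edge $vw$ and reduce everything to a single local test: the block $B$ of $G' = G + vw$ containing $vw$ is a cycle or not, i.e.\ $|E(B)| = |V(B)|$ or $|E(B)| > |V(B)|$. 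These criteria are equivalent --- adding $vw$ precisely fuses the chain $B_1, \ldots, B_m$ into one block, which degenerates to a cycle exactly when every $B_i$ was an edge --- but your version sidesteps any navigation of the block-cut tree and replaces it with one depth-first search and a count. The only nontrivial part of your argument is the ``if'' direction, where you take an ear or chord $R$ of $C_1$ avoiding $vw$ and split the $vw$-containing arc into two legs; you have correctly noted and handled the degenerate cases $p \in \{v,w\}$ or $q \in \{v,w\}$, where one or both legs collapse to a single vertex (the paper's ``possibly empty'' paths). Both approaches run in $\mathcal{O}(|V|+|E|) = \mathcal{O}(|G|^2)$ time, and, notably, neither your proof nor the paper's actually uses the hypothesis that $G - \{v,w\}$ is connected; the lemma holds without it (it is likely stated that way only because of how it is invoked in \autoref{prop:basedpath}, where the graph in question is a Tutte bridge).
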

\begin{proof}
We take a block-cut decomposition of $G$. 
Let $B_v$ and $B_w$ be the blocks of $G$ containing $v$ and $w$, respectively.
Let $B_v=B_1-B_2- \cdots - B_m=B_w$ be the sequence of blocks of $G$ in the block-cut decomposition of $G$.

Suppose there is $i\in \{1, \ldots, m\}$ such that $B_i$ contains a cycle $C$.
Note that for any two vertices $a,b$ in $B_i$, there are two vertex-disjoint paths from $C$ to $\{a,b\}$, as $B_i$ is $2$-connected.
If $B_v=B_w$, then there are two vertex-disjoint paths from $C$ to $\{v,w\}$ directly.
Otherwise, along the cut vertices connecting blocks of $B_1, \ldots, B_m$, we may find two vertex-disjoint paths from $C$ to $\{v,w\}$ in $G$.

Thus, we may assume that there is no $i\in \{1, \ldots, m\}$ such that $B_i$ contains a cycle.
In this case, for every cycle $C$, there is a cut vertex separating $C$ and $\{v,w\}$.
So, we can deduce that there is no cycle $C$ with two vertex-disjoint paths from $\{v,w\}$ to $C$.
\end{proof}

\begin{proposition}\label{prop:basedpath}
  There exists a polynomial function $g_2:\mathbb{N}\rightarrow \mathbb{N}$ satisfying the following.
Given a graph $G$, an induced path $P$ of $G$, and a positive integer $k$, one can in time $\mathcal{O}(N(3, 3k)^{3k}+k\abs{G}^7)$ output either $k$ vertex-disjoint induced subdivisions of the diamond, or 
a vertex set of size at most $g_2(k)$ hitting all the induced subdivisions of the diamond that intersect~$P$.
\end{proposition}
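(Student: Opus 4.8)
The overall strategy is to iteratively carve off induced diamond subdivisions that meet $P$ until what remains can be handled by the structural lemmas of this section. First I would try to pull off $k$ vertex-disjoint induced subdivisions of the diamond that are attached to $P$, analogously to the pan proofs: repeatedly apply \autoref{prop:detectdia} inside $G[V(P) \cup R]$ for the appropriate remainder $R$ to build a maximal sequence of such subdivisions. If the sequence reaches length $k$ we are done, so assume it has length $\ell < k$; then $G$ minus the $\ell$ extracted subdivisions (and minus $P$'s vertices that we keep track of) has no induced diamond subdivision intersecting $P$ except through a boundedly bad interaction with the extracted ones. The real work is to bound, for a single step, the number of vertices needed to hit all induced diamond subdivisions meeting $P$ when there is no ``fresh'' disjoint one; this step, applied $\ell \le k$ times and combined with the $O(|H|)$ vertices of each extracted subdivision, will give a bounding function of order $k \cdot g_2'(k)$, which is still polynomial.

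For the single-step bound I would proceed as follows. Set $A = V(P)$ and apply \autoref{prop:aclawalgo}: either we get $k$ vertex-disjoint $A$-claws — feed $N(3,3k)$ of them (guaranteed once $k$ is replaced by $\max(k, N(3,3k))$, or rather iterate the claw-packing at the right scale) into \autoref{lem:clawtodiamond} to obtain $k$ vertex-disjoint induced diamond subdivisions and stop — or we get a hitting set $S_1$ of size $14 \cdot N(3,3k)$ for all $A$-claws. Delete $S_1$. Now by \autoref{lem:bridge3legs} every Tutte bridge $H$ of $V(P)$ in $G - S_1$ satisfies $|V(H) \cap V(P)| \le 2$. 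Tutte bridges meeting $P$ in a single vertex cannot by themselves create a diamond subdivision using $P$ as a ``through-path'' (a diamond subdivision needs two degree-3 vertices joined by three internally disjoint paths, so it must use $P$ or two bridges); the dangerous Tutte bridges are those with $|V(H)\cap V(P)| = 2$. For such a bridge $H$ with attachment set $\{v_1,v_2\}$: if $v_1,v_2$ are non-adjacent on $P$ and $H$ contains a cycle connected to $\{v_1,v_2\}$ by two disjoint paths, \autoref{lem:connectedcycle} already produces an induced diamond subdivision inside $H \cup Q'$, so after extraction such bridges are forbidden; by \autoref{lem:twodisjointpaths} this condition is checkable in $O(|G|^2)$. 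Similarly, two Tutte bridges $H_1, H_2$ with attachment-subpaths sharing an edge give a diamond subdivision by \autoref{lem:acyclicbridge}. So after we have exhausted the extraction, the surviving Tutte bridges of $V(P)$ are ``thin'': each meets $P$ in exactly two vertices whose subpath on $P$ is a single edge (i.e. $v_1 v_2 \in E(P)$), or in $\le 1$ vertex, and no two of the two-vertex ones have overlapping subpaths on distinct edges in a conflicting way.

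The final step is to convert this thinness into a bounded hitting set. Each surviving two-vertex Tutte bridge sits ``over'' one edge of $P$; along a single edge of $P$ several such bridges may lie, but a diamond subdivision formed using $P$ together with one such bridge must use that bridge as one of the three paths between the two degree-3 vertices — here I use \autoref{lem:acyclicbridge} (in the negated form) and the fact that such a bridge together with the single edge it spans forms at worst a $K_4$, which contains no induced diamond subdivision, so in fact two bridges over the \emph{same} edge are needed, and these are governed by a Gallai-type $A$-path argument (\autoref{t:gallaiapath}) with $A = V(P)$ in $G - S_1 - E(P)$: either $k$ disjoint $V(P)$-paths, giving $k$ disjoint bridges and hence (pairing them appropriately over edges, via \autoref{lem:acyclicbridge}) $k$ disjoint induced diamond subdivisions, or a hitting set $S_2$ of size $2k-2$ for all $V(P)$-paths. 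Once all $V(P)$-paths of $G - S_1 - S_2 - E(P)$ are killed, every component of $G - S_1 - S_2 - V(P)$ attaches to at most one vertex of $P$, so it cannot help build any diamond subdivision meeting $P$ at all; hence $S_1 \cup S_2$ (together with the vertices of the $\le k$ extracted subdivisions) is the desired hitting set, of size $O(N(3,3k)) = g_2(k)$ which is polynomial in $k$ since $N(3,\cdot)$ is polynomial. The running time is dominated by the $O(k)$ calls to \autoref{prop:detectdia} at $O(|G|^7)$ each, the one call to \autoref{lem:clawtodiamond} at $O(N(3,3k)^{3k} + |G|^3)$, and the polynomial-time applications of \autoref{prop:aclawalgo}, \autoref{t:gallaiapath}, and \autoref{lem:twodisjointpaths}, for a total of $O(N(3,3k)^{3k} + k|G|^7)$ as claimed.

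\textbf{Main obstacle.} The delicate point I expect to fight is the case analysis of how an induced diamond subdivision can intersect $P$ after the claw hitting set and the Tutte-bridge extractions: a diamond subdivision could weave through several Tutte bridges and several subpaths of $P$ without any single bridge carrying a whole ``branch,'' so one must argue carefully — presumably by looking at where its two degree-3 vertices lie (inside $P$, inside one bridge, or in two different bridges) and invoking \autoref{lem:connectedcycle}/\autoref{lem:acyclicbridge}/\autoref{lem:threepaths1} in each sub-case — that after extraction no such configuration survives except for those localized over one or two edges of $P$, which the Gallai argument then finishes. Getting the bookkeeping of the bounding function right across the $\ell \le k$ outer iterations (so that it stays polynomial, matching the $g(k) = O(N(4,3k))$ remark) is the other place where care is needed.
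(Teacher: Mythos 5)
You are circling the right objects (claws, Tutte bridges with two attachments, \autoref{lem:connectedcycle}, \autoref{lem:acyclicbridge}), but the argument you substitute for the paper's middle steps has concrete gaps and at least one false claim.

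\textbf{The ``at worst a $K_4$'' claim is wrong.} A Tutte bridge $H$ with attachment set $\{v_1,v_2\}$ a single edge of $P$ can certainly contain an induced subdivision of the diamond. Take $H-V(P)$ to be an induced cycle $c_1\dots c_m$, with $v_1$ adjacent only to $c_1$ and $v_2$ adjacent only to $c_j$ ($j\ne 1$). Then $c_1$ and $c_j$ are joined by three internally disjoint induced paths (the two arcs of the cycle and $c_1 v_1 v_2 c_j$), giving an induced diamond subdivision living over the single edge $v_1v_2$. So ``two bridges over the same edge are needed'' is false. The paper's remark before \autoref{lem:connectedcycle} exhibits $K_4$ only as the reason the \emph{adjacent-attachment} case had to be excluded from that lemma, not as a characterization of what such a bridge can contain. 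Subdivisions over a single edge are what the paper's final step ($X_4$, with the two-consecutive-vertices guarantee of \autoref{claim:remaningdia}) is there to hit; in your plan they are neither counted towards a packing nor hit by $S_1\cup S_2$.

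\textbf{The Gallai packing case does not yield $k$ diamond subdivisions.} If \autoref{t:gallaiapath} returns $k$ vertex-disjoint $V(P)$-paths in $G-S_1-E(P)$, a single such path plus the $P$-segment between its endpoints is only a cycle, not a diamond subdivision. Your plan proposes pairing them ``appropriately over edges'' via \autoref{lem:acyclicbridge}, but that lemma needs two Tutte bridges whose attachment subpaths \emph{share an edge}; there is nothing forcing two of your $k$ $V(P)$-paths to have overlapping attachment intervals, and the paths could be spread along $P$ with all intervals pairwise edge-disjoint. Nor does a single such path force a bridge satisfying the cycle condition of \autoref{lem:connectedcycle}. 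This is exactly the control that the paper buys with the regular partition lemma: it greedily collects special bridges ($\mathcal{U}$) and bridge-pairs ($\mathcal{W}$) whose positions on $P$ are then organized by \autoref{prop:regularpartition} into either $k$ disjoint diamond subdivisions or a small hitting set; a direct Gallai count is too coarse for this.

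\textbf{Killing $V(P)$-paths does not kill all diamond subdivisions meeting $P$.} You claim that once every Tutte bridge of $V(P)$ attaches in at most one vertex, no component can ``help build any diamond subdivision meeting $P$ at all.'' That is not so: a diamond subdivision can be contained in a single Tutte bridge $H$ together with its unique attachment vertex, intersecting $P$ in exactly one vertex. One does get a strong localization statement in this regime (any surviving subdivision meets $P$ in at most one vertex, and lives in one bridge plus that vertex), which is the analogue of \autoref{claim:remaningdia}; but it still needs a final greedy step, and your plan does not provide one.

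In short, to make your single-step bound go through you need (a) a correct way to pack when there are many ``bad'' bridges or bridge-pairs --- the paper achieves this via regular partitions of the attachment intervals rather than via Gallai; (b) the explicit localization claim for the residual subdivisions (both the one-vertex and the two-consecutive-vertices cases); and (c) a final greedy that hits those localized subdivisions with $O(k)$ vertices. The outer iteration you wrap around all of this is also unnecessary for \autoref{prop:basedpath}: it belongs in \autoref{prop:basedpath2} and \autoref{t:diamond}.
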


\begin{proof}
Let $P=v_1v_2 \cdots v_m$, and $I=\{1, \ldots, m\}$.

We first apply the $A$-claw lemma to $P$. 
By \autoref{prop:aclawalgo}, one can in time $\mathcal{O}(\abs{G}^4)$ output either  
$N(3, 3k)$ pairwise vertex-disjoint $P$-claws, or a vertex set of size at most $14 N(3,3k)$ hitting all $P$-claws. 
In the former case, we use \autoref{lem:clawtodiamond} to obtain $k$ pairwise vertex-disjoint induced subdivisions of the diamond in time $\mathcal{O}(N(3, 3k)^{3k} + |G|^3)$ and we are done.

So we may assume that $G$ contains a vertex subset $X_1$ of size at most $14N(3, 3k)$
such that $G-X_1$ has no $P$-claws.
Let $G_1:=G-X_1$ and $P_1=P-X_1$.
By \autoref{lem:bridge3legs},
$G_1$ contains no Tutte bridge of $V(P_1)$ such that $\abs{V(H)\cap V(P_1)}\ge 3$.

Now, we greedily construct a maximal set $\mathcal{U}$ of pairwise vertex-disjoint Tutte bridges $H$ of $V(P_1)$ in $G_1$ such that:
\begin{itemize}
\item $\abs{V(H)\cap V(P_1)}=2$, 
\item $V(H)\cap V(P_1)$ are not consecutive vertices of $P$, and 
\item $H$ contains a cycle $C$ and two vertex-disjoint paths from $C$ to $V(H)\cap V(P_1)$ in $H$.
\end{itemize}
Since there are at most $\abs{G}$ connected components of $G_1-V(P_1)$, 
one can find such a set by considering each connected component of $G_1-V(P_1)$ and then testing whether the corresponding Tutte bridge
satisfies the conditions. The last condition can be checked in time $\mathcal{O}(|H|^2)$ using \autoref{lem:twodisjointpaths}. 

Suppose $\abs{\mathcal{U}}\ge N(2, 3k)$.
In this case, we apply the regular partition lemma (\autoref{prop:regularpartition}) with $n=2$ and obtain $k$ vertex-disjoint induced subdivisions of the diamond, with the whole path $P$. 
Following the same line of proof as in  \autoref{lem:clawtodiamond}, 
if the order of the resulting partition is $2$, 
then we can output  $k$ pairwise vertex-disjoint induced subdivisions of the diamond in time $\mathcal{O}(\abs{G}^3)$.
When the order of the partition is $1$, 
\autoref{lem:connectedcycle} implies that there is an induced subdivision of the diamond in $H$ together with the minimal subpath of $P$ containing $V(H)\cap V(P)$, for every Tutte bridge $H$ of the subset of $\mathcal{U}$ given by the regular partition lemma.
So in this case, using \autoref{prop:detectdia}, one can construct $k$ vertex-disjoint induced subdivisions of the diamond in time $\mathcal{O}(k\abs{G}^7)$.

Otherwise, let $X_2:=\bigcup_{H\in \mathcal{U}}(V(H)\cap V(P))$. 
Then we have that $\abs{X_2}\le 2N(2, 3k)$ and $X_2$ hits all Tutte bridges of $V(P_1)$ in $G_1$ satisfying the three conditions above.
Let $G_2:=G_1-X_2$ and $P_2:=P_1-X_2$.

In the next step, we greedily build a maximal set $\mathcal{W}$ of pairwise vertex-disjoint pairs of Tutte bridges $(H_1, H_2)$ of $V(P_2)$ in $G_2$ such that:
\begin{itemize}
\item $\abs{V(H_i)\cap V(P_2)}=2$, 
\item $Q_1$ and $Q_2$ share an edge, where $Q_i$ is a minimal subpath of $P$ containing $V(H_i)\cap V(P)$. 
\end{itemize}
We can construct $\mathcal{W}$ by considering all pairs of connected components of $G_2-V(P_2)$.

Suppose $\abs{\mathcal{W}}\ge 4N(4, 3k)$.
Let $v_{a_i}, v_{b_i}$ be the vertices of $V(H_i)\cap V(P)$ such that $a_i<b_i$. 
Note that $H_1$ might intersect $H_2$, and therefore, there are four types of a pair $(H_1, H_2)$, 
depending on whether $v_{a_1}=v_{a_2}$ and $v_{b_1}=v_{b_2}$.
As $\abs{\mathcal{W}}\ge 4N(4, 3k)$, there is a subset $\mathcal{W}_1$ of $\mathcal{W}$ of size at least $N(4, 3k)$
which consists of pairs of the same type.

We apply the regular partition lemma with $n$ equal to the size of $(V(H_1)\cup V(H_2))\cap V(P)$ for pairs in $\mathcal{W}_1$. 
Similar to the previous case, when the order of the partition is larger than $1$, 
then by the same line of proofs as in  \autoref{lem:clawtodiamond}, one can find in polynomial time $k$ pairwise vertex-disjoint induced subdivisions of the diamond.
When the order of the partition is $1$, \autoref{lem:acyclicbridge} implies that $H_1$, $H_2$, together with the minimal subpath of $P$ containing $(V(H_1)\cup V(H_2))\cap V(P)$ contains an induced subdivision of the diamond. 
Thus, using \autoref{prop:detectdia}, one can construct $k$ vertex-disjoint induced subdivisions of the diamond in time $\mathcal{O}(k\abs{G}^7)$.

Otherwise, let $X_3:=\bigcup_{(H_1, H_2)\in \mathcal{W}}((V(H_1)\cup V(H_2))\cap V(P))$. 
Then we have that $\abs{X_3}\le 16N(4, 3k)$ and $X_3$ hits all pairs of Tutte-bridges satisfying the above conditions. 
Let $G_3:=G_2-X_3$ and $P_3:=P_2-X_3$.

Now, we claim that the remaining induced subdivisions of the diamond have restricted positions.

\begin{claim}\label{claim:remaningdia}
Let $X$ be an induced subdivision of the diamond in $G_3$ that intersects $P_3$.
Then we have 
\begin{itemize}
	\item $X-V(P_3)$ has one component and $\abs{V(X)\cap V(P_3)}\le 2$, and 
	\item if $\abs{V(X)\cap V(P_3)}=2$, then the two vertices in $V(X)\cap V(P_3)$ are adjacent.
	\end{itemize}
\end{claim}
\begin{proofclaim}
  Suppose that $\abs{V(X)\cap V(P_3)}>1$. 
  Observe that if $X-V(P_3)$ intersects some Tutte bridge $F$ of $V(P_3)$, 
  then $\abs{V(F)\cap V(P_3)}=2$. The upper-bound holds because
  otherwise $H$ would have been considered when constructing $X_1$, while the lower-bound holds as we assume
  $\abs{V(X)\cap V(P_3)}>1$, and $X$ is connected.

  Also, if $X-V(P_3)$ intersects exactly one Tutte bridge $F$ of $V(P_3)$, 
  then $F$ contains a cycle where there are two vertex-disjoint paths from the cycle to $V(F)\cap V(P_3)$.
  So, in that case, the two vertices of $V(F)\cap V(P_3)$ are
  consecutive (otherwise we would have considered $F$ when
  constructing~$X_2$) and we are done.

  Thus, we may assume that there are at least two Tutte bridges of
  $V(P_3)$ intersecting $X$ outside $P_3$. We obtain a contradiction as follows. Let $F_1, F_2, \ldots, F_n$ be the set of Tutte bridges of $V(P_3)$ such that 
  $F_i-V(P_3)$ contains a vertex of $X$.
  For each $i
  \in \intv{1}{n}$, let $Q_i$ be the minimal subpath of $P_3$ containing the two vertices of $V(F_i)\cap V(P_3)$.
  Observe that no two such paths share an edge, because the corresponding bridges have been handled when constructing $X_3$.
  On the other hand if no two paths in $Q_1, \ldots, Q_n$ share an edge, 
  then it is easy to see that $X$ has a cut vertex, a
  contradiction.
\end{proofclaim}

By \autoref{claim:remaningdia} if $G_3$ contains an induced subdivision $X$ of the diamond, then 
$V(X)\cap V(P)$ consists of at most two consecutive vertices of $P$.
Therefore, we can find in polynomial time either $k$ pairwise vertex-disjoint induced subdivisions of the diamond, 
or a vertex set $X_4$ of size at most $2k$ hitting all remaining induced subdivisions of the diamond.
In the latter case, we obtain a hitting set of size at most 
\[\abs{X_1\cup X_2\cup X_3\cup X_4}\le 14 N(3, 3k)+N(2, 3k)+16N(4, 3k)+2k.\]
So, the function $g_2(k)=14 N(3, 3k)+N(2, 3k)+16N(4, 3k)+2k$ satisfies the statement.
\end{proof}

	Now, we prove the second intermediate proposition.
\begin{proposition}\label{prop:basedpath2}
  There exists a polynomial function $g_1:\mathbb{N}\rightarrow \mathbb{N}$ satisfying the following. 
  Given a graph $G$, an induced subdivision $H$ of the diamond in $G$, and a positive integer $k$ such that 
  \begin{itemize}
  \item $G-V(H)$ has no induced subdivision of the diamond,
  \end{itemize}
  then one can in time $\mathcal{O}(N(3, 3k)^{3k}+k\abs{G}^7)$ output either $k$ vertex-disjoint induced subdivisions of the diamond, or a vertex set of size at most $g_1(k)$ hitting every induced subdivision of the diamond.
\end{proposition}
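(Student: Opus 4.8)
The plan is to reduce to \autoref{prop:basedpath} by decomposing $H$ into three induced paths. Recall that a subdivision of the diamond consists of three internally disjoint paths between two distinct vertices, its \emph{branch vertices}. Write $Q^1, Q^2, Q^3$ for these three paths of $H$, so that $V(H) = V(Q^1)\cup V(Q^2)\cup V(Q^3)$. Since $H$ is an induced subgraph of $G$, and in a diamond-subdivision every interior vertex of a $Q^i$ has degree $2$ while each branch vertex has exactly one neighbour inside each $Q^i$, the subgraph of $H$ (equivalently, of $G$) induced by $V(Q^i)$ is exactly the path $Q^i$. In particular each $Q^i$ is an induced path of $G$.

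The crucial observation is that, by the hypothesis that $G - V(H)$ has no induced subdivision of the diamond, every induced subdivision $S$ of the diamond in $G$ must satisfy $V(S)\cap V(H)\neq\emptyset$, hence $V(S)\cap V(Q^i)\neq\emptyset$ for some $i\in\{1,2,3\}$. We therefore apply \autoref{prop:basedpath} to the three triples $(G, Q^1, k)$, $(G, Q^2, k)$, $(G, Q^3, k)$. If one of these applications returns $k$ pairwise vertex-disjoint induced subdivisions of the diamond, we output them and stop. Otherwise, for each $i$ we obtain a vertex set $X_i$ with $\abs{X_i}\le g_2(k)$ hitting all induced subdivisions of the diamond that intersect $Q^i$. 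By the observation above, $X := X_1\cup X_2\cup X_3$ hits \emph{every} induced subdivision of the diamond in $G$, and $\abs{X}\le 3g_2(k)$; so we output $X$. Setting $g_1(k) := 3g_2(k)$, which is polynomial since $g_2$ is, completes the construction.

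As for the running time: computing a decomposition of $H$ into the three paths $Q^i$ is immediate, and each of the three calls to \autoref{prop:basedpath} runs in time $\mathcal{O}(N(3,3k)^{3k} + k\abs{G}^7)$, so the total is still $\mathcal{O}(N(3,3k)^{3k} + k\abs{G}^7)$. There is no serious obstacle in this argument: the only points requiring care are checking that each $Q^i$ is induced in $G$ (which is where the structure of a diamond-subdivision is used) and the trivial union bound on the sizes of the hitting sets; everything else is a direct appeal to \autoref{prop:basedpath}.
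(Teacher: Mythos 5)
Your proof follows exactly the same route as the paper's: decompose $H$ into its three branch-paths, apply \autoref{prop:basedpath} to each, and union the three hitting sets (or return a packing if one of the calls finds one), with $g_1(k)=3g_2(k)$. One small caveat, which the paper's own proof also leaves unstated: your claim that each $Q^i$ is induced in $G$ fails precisely when the two branch vertices $a,b$ of $H$ are adjacent (for example when $H$ is the diamond itself). In that case one of the paths is the single edge $ab$, and the other two have $ab$ as a chord, so they are not induced. The fix is cosmetic: take the three induced paths to be $Q^1$, $Q^2-a$, and $Q^3-a$ (where $Q^1=ab$); their union still covers $V(H)$, so every diamond subdivision hitting $H$ still hits one of them, and the rest of your argument goes through verbatim.
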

\begin{proof}
We set $g_1(k):=3g_2(k)$.
Let $P_1, P_2, P_3$ be the three paths forming $H$.
Since every induced subdivision of the diamond in $G$ intersects $H$, 
it intersects at least one of $P_1, P_2, P_3$.
So, by applying \autoref{prop:basedpath} to each of $P_1, P_2, P_3$, 
we can output in time $\mathcal{O}(N(3, 3k)^{3k}+k\abs{G}^7)$ 
either $k$ vertex-disjoint induced subdivisions of the diamond, or a vertex set of size at most $g_1(k)=3g_2(k)$ hitting every induced subdivision of the diamond.
\end{proof}

\begin{proof}[Proof of \autoref{t:diamond}]
We assign $g(k):=k g_1(k)$.
We can prove \autoref{t:diamond} using \autoref{prop:basedpath2}, with exactly the same argument in the proof for $1$-pan (\autoref{t:1panep}).
\end{proof}

\section{Concluding remarks and open problems}
\label{sec:concl}

In this paper, we investigated the induced \ep{} property of subdivisions beyond known results about cycles and obtained both positive and negative results.
We note that our positive results for pans come with polynomial-time algorithms that output either a large packing of induced subdivisions of the considered graph $H$, or a small hitting set. These can be directly used to design approximation algorithms for computing the maximum size of a packing of induced subdivisions of $H$ and the minimum size of a hitting set (as in \cite{Chatzidimitriou2017logopt} for instance). For 1-pans and 2-pans, this gives a polynomial-time $\mathcal{O}(\textsf{OPT}\log \textsf{OPT})$-approximation.
On the other hand, our negative results cover a vast class of graphs.

The most general open problem on the topic discussed in this paper is to characterize the graphs $H$ whose subdivisions have the induced \ep{} property.
According to \autoref{thm:badcases}, every graph $H$ for which the question is open satisfies the following:
\begin{itemize}
\item $H$ is planar and has a cycle $C$;
\item every induced cycle of $H$ is a $C_3$ or a $C_4$;
\item let $\bar{N}(C)$ denote the vertices of $H$ that are not
  adjacent to $C$ (equivalently, vertices at distance at least 2 from
  $C$), then $|\bar{N}(C)| \leq 2$ and in the case of equality, the
  two vertices of $\bar{N}$ are independent.
\end{itemize}
The study of subdivisions of specific graphs is an intermediate step towards this goal.
A direction of research towards the aforementioned characterisation
would be to investigate whether the \ep{} property is inherited by
induced subdivisions. Formally, is it true that if the induced \ep{} property holds
for subdivisions of some graph~$H$, then it also holds for the
subdivisions of every graph $H'$ contained as an induced subdivision
in~$H$?

Observe that the constructions we used in our counterexamples
contain arbitrarily large complete subgraphs. Therefore the landscape
of the induced \ep{} property of subdivisions might be much different
if one restricts their attention to graphs excluding a dense
subgraph. In this direction, Wei{\ss}auer recently proved
that for every $s,\ell\in \N$, subdivisions of $C_\ell$ have the
induced \ep{} property in $K_{s,s}$-subgraph-free
graphs~\cite{2018arXiv180302703W}. This contrasts with the
general case where subdivisions of $C_\ell$ stop having the
induced \ep{} property from $\ell = 5$ (see the note below \autoref{th:kim17}).

Another line of research in the study of the \ep{} property of graph classes is to optimize the bounding function.
We note that all our positive results hold with a polynomial bounding function. On the other hand, we obtained in \autoref{thm:omegaklogk} a lower bound of $\Omega(k\log k)$ for non-acyclic subcubic graphs. We do not expect our upper-bounds to be tight and it is an open question to find the correct order of magnitude of the bounding functions for the graphs we considered. In this direction it is also open to determine the correct order of magnitude of the bounding function in \autoref{th:kim17}.

\newcommand{\etalchar}[1]{$^{#1}$}

\end{document}